\newcommand{\clu}{\textcolor[rgb]{0.00,0.00,0.00}}
\newtheorem{theorem}{Theorem}[section]
\newtheorem{lem}[theorem]{Lemma}
\newtheorem{remark}[theorem]{Remark}
\newtheorem{Def}[theorem]{Definition}
\newtheorem{prop}[theorem]{Proposition}
\newtheorem{corollary}[theorem]{Corollary}
\newtheorem{Properties}[theorem]{Properties}
\newtheorem{RH}{Riemann-Hilbert problem}
\newenvironment{proofof}{{\textit{Proof}}}{\hfill$\scriptstyle\square$\vskip1cm}
\renewcommand{\(}{\left(}
\renewcommand{\)}{\right)}
\newcommand{\ol}{\overline}
\renewcommand{\i}{\mathit{\mathbf\mathit{i}}}
\newcommand{\e}{\mathrm{e}}
\newcommand{\dsfrac}{\displaystyle\frac}
\renewcommand{\d}{\mathrm{d}}
\title{On a class of compact perturbations of the special pole-free joint solution of KdV and $P_I^2.$}
\author{B.~Dubrovin and A.~Minakov}
\begin{document}

\maketitle

\begin{abstract}
We consider perturbations of the special pole-free joint solution $U(x,t)$ of the Korteweg--de Vries equation $u_t+uu_x+\frac{1}{12}u_{xxx}=0$ and $P_I^2$ equation
$u_{xxxx}+10u_x^2+20uu_{xx}+40(u^3-6tu+6x)=0$
under the action of the KdV flow.
We show that if the perturbation is compact and of bounded variation, then the initial value problem for the KdV equation has 
a classical solution. Our method is the inverse scattering transform method in the form of the Riemann-Hilbert problem method.
Namely, we construct the corresponding spectral functions $a(\lambda), r(\lambda),$ and give characterization of the 
compact perturbations in terms of $a(\lambda), r(\lambda).$
\end{abstract}

\tableofcontents

\section{Introduction}
We construct a class of classical smooth solutions of the Korteweg--de Vries equation (KdV), which are unbounded as $x\to\pm\infty.$

There are several papers concerning unbounded classical solutions of KdV. 

\medskip
In \cite{Menikoff} the author showed that if an initial datum $u_0(x)$ satisfies $$\dfrac{\partial^n }{\partial x^n}u_0(x)={\ol{_\mathcal{O}}}(|x|^{1-n})\quad\mbox{ as }\quad x\to\pm\infty,\qquad 0\leq n\leq7,$$
where ${\ol{_\mathcal{O}}}$ is the standard Bachmann--Landau little-$o$ notation,
then the initial value problem for KdV has a unique solution that exists for all times $-\infty<t<+\infty.$
Solutions which admit asymptotics with power growth at $x\to\pm\infty$ were considered in \cite{BSh}. 
Initial data with a polynomial growth as $x\to\pm\infty$ were considered by \cite{KPV}.

The Cauchy problem for KdV with an initial datum, which is a perturbation of different (bounded) quasi-periodic solutions as $x\to\pm\infty$
was investigated by \cite{Egorova_Teschl}.  We refer also to \cite{BS}, \cite{MaS}, where KdV equation with vanishing datum is treated in different Sobolev spaces.

We develop formalism of direct and inverse scattering transform for the Schr\"{o}dinger operator with a potential from a class of unbounded at $x\to\pm\infty$ functions. Such a study for perturbations of finite-gap potentials (bounded at $x\pm\infty$) was done in \cite{BET08}.

The case of an initial datum, which decays exponentially as $x\to+\infty,$ and might be unbounded as $x\to-\infty,$ was studied by
\cite{Rybkin}.

In \cite{Claeys Vanlessen} the authors proved that one special solution $U(x,t)$ to the Korteweg--de Vries equation
\begin{equation}\label{KdV}u_t+uu_x+\frac{1}{12}u_{xxx}=0\end{equation}
 which behaves as $\sqrt[3]{-6x}$ as $x\to\pm\infty,$
and is simultaneously a solution to the second member of the first Painlevé hierarchy ($P_I^2$)
\begin{equation}\label{PI2} x=tu-\(\frac16 u^3+\frac{1}{24}(u_x^2+2uu_{xx})+\frac{1}{240}u_{xxxx}\)\end{equation}
 is pole free for all $(x,t)\in\mathbb{R}^2.$
This is a part of the so-called {\textit{Universality Conjecture}} \cite{Dubrovin}, that states that the function $U(x,t)$ describes the behavior of a generic solution to the general perturbed Hamiltonian equation
$$u_t+a(u)u_x+
\varepsilon^2\left[b_3(u)u_{xxx}+b_4(u)u_xu_{xx}+
b_5(u)u_x^3\right]+\ldots=0$$
 near the point of gradient catastrophe of the unperturbed solution
$$u_t+a(u)u_x=0.$$
\clu{In \cite{CG09} this was proved for the equations from the KdV hierarchy with some restrictions on the initial data.}
It also was conjectured by \cite{Dubrovin} that $U(x,t)$ is the {\textit{unique real smooth}} for all $x,t\in\mathbb{R}$ solution to \eqref{PI2}. To the best of our knowledge, the part of the Dubrovin's {\textit{Universality Conjecture}} that there are no other real smooth for all $x,t\in\mathbb{R}$ solutions to equation \eqref{PI2} remains open. Class of degenerate {\textit{tritronquee}} solutions of \eqref{PI2} was studied in \cite{Grava Kapaev Klein}.

Equation \eqref{PI2} for the particular value of the parameter $t=0$ was studied by \cite{Kapaev}, and it also appeared in the study of the double scaling limit for the matrix model with the multicritical index $m=3$ \cite{BrezinMarinariParisi}. 

In \cite{Claeys10} the long-time asymptotics as $t\to+\infty$ of $U(x,t)$ were studied, and the complete asymptotic expansion for large $x$ was obtained in \cite{Sul}.
Further, similar \clu{common} smooth solutions of KdV \eqref{KdV} and \clu{equations of} the first Painlevé hierarchy were constructed in \cite{Claeys12}. They behave like $c\sqrt[2m+1]{-x}$ as $x\to\pm\infty,$ $m\geq 1.$

In this paper we take a compactly supported perturbation of $U(x,t_0)$ at a given time $t_0,$ and study the Cauchy problem for KdV equation \eqref{KdV} with this initial datum.


To formulate our main results, we recall  that the special solution $U(x,t)$ of  the equations \eqref{KdV} and \eqref{PI2} can be constructed in the following way \cite{Claeys Vanlessen}, \cite{Grava Kapaev Klein}, 
 \cite{Dubrovin}, \cite{Kapaev}:
consider the Riemann-Hilbert problem (RHP) (see Figure \ref{Fig_E}):
\begin{RH}\label{RHE}
 Find a $2\times2$ matrix-valued function $\mathbb{E}(x,t;\lambda)$, which 
\begin{enumerate}
 \item 
is analytic in $\lambda\in\mathbb{C}\setminus\Sigma$ and continuous up to the boundary, where $\Sigma$ is the contour
\begin{equation}\label{Sigma}\Sigma = \mathbb{R} \cup \left(e^{6\pi i/7}\infty,0\right) \cup \left(e^{-6\pi i/7}\infty,0\right),\end{equation} with the orientation as is written (we denote by $(0, \e^{\i\alpha})$  the ray emanating from the origin and coming to infinity at an angle $\alpha,$ and $(\e^{\i\alpha},0)$ means the ray with the opposite orientation);
\item satisfies the jump conditions $\mathbb{E}_+ =\mathbb{E}_-  J_{\mathbb{E}}$ on the contour $\Sigma,$
 where
$$J_{\mathbb{E}}=\begin{pmatrix}
     1&i\\0&1
    \end{pmatrix},\ \lambda\in \gamma_3:=\left(e^{6\pi i/7}\infty,0\right) \mbox{ and }\ \lambda\in\gamma_{-3}:=\left(e^{-6\pi i/7}\infty,0\right),
$$$$
J_{\mathbb{E}}=\begin{pmatrix}
     1&0\\-i&1
    \end{pmatrix},\ \lambda\in \gamma_0:=(0,\infty),
\quad
J_{\mathbb{E}}=\begin{pmatrix}
     0&-i\\-i&0
    \end{pmatrix},\ \lambda\in \rho:=(-\infty,0),
$$
Here $\mathbb{E}_{\pm}$ stands for the limiting values of $\mathbb{E}$ on the contour $\Sigma.$ The positive side of the contour is from the left, the negative one is from the right; these relations define segments $\gamma_{0,3,-3},\rho;$
\item has the following asymptotics as $\lambda\to\infty:$
 \begin{equation}\label{eq_3a_E_asymp}\mathbb{E}(\lambda)=
\frac{\lambda^{-\sigma_3/4}}{\sqrt{2}}\begin{pmatrix}
                                      1&1\\1&-1
                                     \end{pmatrix}(I+\frac{\widetilde b\sigma_3}{\sqrt{\lambda}}+\mathcal{O}(\lambda^{-1}))
\e^{\theta \sigma_3},\textrm{ where } \theta(x,t;\lambda):=\frac{1}{105}\lambda^{7/2}-\frac{t}{3}\lambda^{3/2}+x\lambda^{1/2},
\end{equation}
where $\widetilde b=\widetilde b(x,t)$ is a scalar (which is not fixed, but is introduced to fix the form of the asymptotics). 
We take the standard branch along $-\infty,0$ for roots of $\lambda.$
Here $$I=\begin{pmatrix}1&0\\0&1\end{pmatrix},\quad \sigma_3=\begin{pmatrix}1&0\\0&-1\end{pmatrix}\quad \mbox{ and }\quad \lambda^{-\sigma_3/4}=\begin{pmatrix}\lambda^{-1/4}&0\\0&\lambda^{1/4}\end{pmatrix},\quad \e^{\theta\sigma_3}=\begin{pmatrix}\e^{\theta}&0\\0&\e^{-\theta}\end{pmatrix}$$ are the matrix exponents.
\end{enumerate}
\end{RH}

\noindent {\textbf{Reconstruction of $U(x,t)$ from $\mathbb{E}(x,t;\lambda).$}} Having the solution $\mathbb{E}(x,t;\lambda)$ of the RHP \ref{RHE}, the function $U(x,t)$ can be defined as 
\begin{equation}\label{U_RHE}U(x,t):=2\lim\limits_{\lambda\to\infty}\lambda\left[\frac{1}{\sqrt{2}}(\sigma_3+\sigma_1)\lambda^{\sigma_3/4}\mathbb{E}(x,t;\lambda)\e^{-\theta\sigma_3}\right]_{12},\quad\mbox{where}\ \sigma_1=\begin{pmatrix}0&1\\1&0\end{pmatrix},\end{equation}
and the subscript $_{12}$ denotes the element staying in the intersection of the first row and the second column.

\noindent {\textbf{Jost solutions associated with $U(x,t).$}}
Function $\mathbb{E}(x,t;\lambda)$ satisfies the differential equation
$$\mathbb{E}_x(x,t;\lambda)=\begin{pmatrix}0&1\\\lambda-2U(x,t)&0\end{pmatrix}\mathbb{E}(x,t;\lambda),$$
where the subscript $_x$ denotes the differentiation  with respect to (w.r.t.) $x.$ Hence, the elements of the matrix-valued function  $\mathbb{E}(x,t;\lambda)$ have the following structure in different parts of the complex $\lambda$ plane:
\begin{equation}\label{Eluldr_from_E}\mathbb{E}(x,t;\lambda)=:\begin{cases}
(E_{lu}(x,t;\lambda),E_{r}(x,t;\lambda))=:\begin{pmatrix}e_{lu}(x,t;\lambda)&e_{r}(x,t;\lambda)
\\
e_{lu,x}(x,t;\lambda)&e_{r,x}(x,t;\lambda)\end{pmatrix},\arg\lambda\in(0,\frac{6\pi}{7}),
\\
(E_{ld}(x,t;\lambda),E_{r}(x,t;\lambda))=:\begin{pmatrix}e_{ld}(x,t;\lambda)&e_{r}(x,t;\lambda)
\\
e_{ld,x}(x,t;\lambda)&e_{r,x}(x,t;\lambda)\end{pmatrix},\arg\lambda\in(-\frac{6\pi}{7},0),
\\
(E_{lu}(x,t;\lambda),-\i
E_{ld}(x,t;\lambda))=:\begin{pmatrix}e_{lu}(x,t;\lambda)&-\i
e_{ld}(x,t;\lambda)
\\
e_{lu,x}(x,t;\lambda)& -\i
e_{ld,x}(x,t;\lambda)\end{pmatrix},\arg\lambda\in(\frac{6\pi}{7},\pi),
\\
(E_{ld}(x,t;\lambda),\i
E_{lu}(x,t;\lambda))=:\begin{pmatrix}e_{ld}(x,t;\lambda)&\i
e_{lu}(x,t;\lambda)
\\
e_{ld,x}(x,t;\lambda)&\i
e_{lu,x}(x,t;\lambda)\end{pmatrix},\arg\lambda\in(-\pi,-\frac{6\pi}{7}).
\end{cases}\end{equation}
The above formula \eqref{Eluldr_from_E} defines scalar functions $e_{lu},$ $e_{ld},$ $e_r,$ and their vector counterparts $E_{lu,} $ $E_{ld,}$ $E_r,$ 
where the subscripts $_r,$ $_{lu},$ $_{ld}$ mean "right", "left, upper half-plane", "left, lower half-plane", 
and the subscript $_x$ means the differentiation w.r.t. $x.$
Further, out of functions $E_{lu},$ $E_{ld}$ we construct the piece-wise analytic functions $E_l,$ $e_l$
$$
E_l(x,t;\lambda)=\begin{pmatrix}e_l(x,t;\lambda)\\e_{l,x}(x,t;\lambda)\end{pmatrix}:=\begin{cases}E_{lu}(x,t;\lambda),\ \Im\lambda>0,
\\
E_{ld}(x,t;\lambda),\ \Im\lambda<0,
\end{cases}$$
which are discontinuous across the real line $\lambda\in\mathbb{R}$ and continuous up to the boundary.  We call the functions $E_l(x,t;\lambda),$ $e_l(x,t;\lambda)$
the vector and scalar \textit{left Jost solutions,} associated with $U(x,t),$ respectively,
and the functions $E_r(x,t;\lambda),$ $e_r(x,t;\lambda)$ the vector and scalar \textit{ right Jost solutions,} associated with $U(x,t).$
Functions $e_r, e_l$ are solutions to the associated spectral problem
\begin{equation}\label{exx}e_{xx}(x,t;\lambda)+2U(x,t)e(x,t;\lambda)=\lambda e(x,t;\lambda),\end{equation}
whence the definition (here the subscript $_{xx}$ denotes the second derivative w.r.t. to $x$).


\noindent Our first preliminary result describes the properties of the associated with $U(x,t)$ Jost solutions:

\begin{lem}\label{lem_prel}
For any $\lambda\in\mathbb{C},$ $t\in\mathbb{R},$
function $E_r(x,t;\lambda)$ vanishes exponentially as $x\to+\infty.$

\noindent For any $\lambda\in\mathbb{C}\setminus\mathbb{R},$ $t\in\mathbb{R},$ function $E_l(x,t;\lambda)$ vanishes exponentially as $x\to-\infty.$

\noindent  For $\lambda\in\mathbb{R}\pm\i 0,$ $t\in\mathbb{R},$ function $E_l(x,t;\lambda)$ might have at most polynomial growth in $x$ as $x\to-\infty.$
\end{lem}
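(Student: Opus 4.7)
The plan is to prove each of the three claims by a Liouville--Green (WKB) analysis of the spectral equation \eqref{exx}, i.e.\ $e_{xx} = q(x,\lambda)\,e$ with $q(x,\lambda):=\lambda-2U(x,t)$, using the rigorously established asymptotics $U(x,t)=\sqrt[3]{-6x}+o(1)$ as $x\to\pm\infty$. The RHP-defined columns $E_r$, $E_l$ will be matched with specific WKB branches at each end via the constancy of $\det\mathbb{E}$ and the $\lambda\to\infty$ normalization \eqref{eq_3a_E_asymp}.

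For $E_r$ as $x\to+\infty$: here $2U\to-\infty$, so $q$ is positive, real, and of order $2\sqrt[3]{6x}$. The Liouville--Green ansatz $e(x,\lambda)=q^{-1/4}\exp\!\bigl(\pm\int^{x}\!\sqrt{q(s,\lambda)}\,ds\bigr)(1+\varepsilon(x,\lambda))$ reduces \eqref{exx} to a Volterra integral equation for $\varepsilon$ whose kernel is $L^1$ under mild rate assumptions on $U-\sqrt[3]{-6x}$, so that the Volterra equation is a contraction and produces two honest solutions of \eqref{exx}, one decaying like $\exp(-c\,x^{7/6})$ and one growing at the same rate. Since $\det\mathbb{E}$ is forced to be a $\lambda$-independent constant by \eqref{eq_3a_E_asymp}, the two columns of $\mathbb{E}$ in any given sector cannot both be growing at $+\infty$; hence $E_r$ must be (a scalar multiple of) the decaying WKB solution, proving the first claim.

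For $E_l$ as $x\to-\infty$ with $\lambda\notin\mathbb{R}$: now $2U\to+\infty$, so $q$ stays close to the negative real axis with the persistent imaginary offset $\Im\lambda$. Expanding the principal square root one finds $\Re\sqrt{q(x,\lambda)}=|\Im\lambda|/\bigl(2\sqrt{2U(x,t)-\Re\lambda}\bigr)+O(|q|^{-3/2})\sim c\,|\Im\lambda|\,(-x)^{-1/6}$, so that $\int^{x}\Re\sqrt{q(s,\lambda)}\,ds\sim c\,|\Im\lambda|\,(-x)^{5/6}$. The same Volterra bootstrap then yields two true solutions of \eqref{exx}, one super-polynomially decaying like $\exp(-c|\Im\lambda|(-x)^{5/6})$ and one super-polynomially growing. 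The determinantal identity together with \eqref{eq_3a_E_asymp} identifies $E_{lu}$ (for $\Im\lambda>0$), resp.\ $E_{ld}$ (for $\Im\lambda<0$), with the decaying branch, and hence $E_l$ with it. For $\lambda\in\mathbb{R}\pm i0$ the imaginary offset vanishes, $\sqrt{q}$ is to leading order purely imaginary, and both Liouville--Green branches are oscillatory with amplitude $|q|^{-1/4}\sim(-x)^{-1/12}$; every solution of \eqref{exx} is thus bounded in $x$, which is certainly at most polynomial growth, giving the third claim.

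The main technical obstacle throughout is the rigorous control of the Volterra remainder $\varepsilon(x,\lambda)$: one needs sharp subleading asymptotics of $U(x,t)-\sqrt[3]{-6x}$ as $x\to\pm\infty$ (available from \cite{Claeys Vanlessen}, \cite{Sul}) in order to guarantee that the control function governing the Volterra kernel is in $L^1$, uniformly in $\lambda$ on the compact subsets relevant to each claim. The secondary obstacle is the identification of the RHP columns with the correct WKB branch across the various sectors of \eqref{Eluldr_from_E}, which is additional bookkeeping but is dictated solely by \eqref{eq_3a_E_asymp} together with the constancy of $\det\mathbb{E}$.
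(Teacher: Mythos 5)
Your WKB exponents are correct, and they reproduce exactly the phase function $g$ of \eqref{g} that the paper itself uses: $\int^x\sqrt{\lambda-2U}\sim c\,x^{7/6}$ at $+\infty$, $\Re\int^x\sqrt{\lambda-2U}\sim c\,|\Im\lambda|\,|x|^{5/6}$ at $-\infty$, and oscillation of amplitude $\sim|x|^{-1/12}$ for real $\lambda$. Constructing a recessive/dominant basis of \eqref{exx} by a Liouville--Green Volterra scheme is also unobjectionable. The genuine gap is the identification step, which you dismiss as bookkeeping. The two facts you invoke do not determine which linear combination of your WKB branches the RHP columns are. The constancy of $\det\mathbb{E}$ only says $W\{e_l,e_r\}=-1$; it does not prevent both columns from carrying a dominant component (compare $\cosh$ and $\sinh$: both grow, yet their Wronskian equals $1$), so your assertion that ``the two columns cannot both be growing'' is simply false. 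And the normalization \eqref{eq_3a_E_asymp} is an asymptotic as $\lambda\to\infty$ at fixed $x$; it carries no information about the limit $x\to\pm\infty$ at fixed $\lambda$, which is the regime of the lemma.

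What is actually needed --- and what the paper supplies --- is a large-$|x|$ asymptotic analysis of the Riemann--Hilbert problem itself: shift the contour to the turning point $\lambda_0(x,t)$ solving $\lambda_0^3-24t\lambda_0+48x=0$, rescale $\lambda=\zeta\sqrt[3]{|x|}$, build an Airy local parametrix near $\zeta_0$ together with the outer parametrix $(\lambda-\lambda_0)^{-\sigma_3/4}\tfrac{1}{\sqrt2}(\sigma_3+\sigma_1)\e^{g\sigma_3}$, and show the error matrix solves a small-norm problem; this is the proof of Lemma \ref{lemma_E_rl}, from which Lemma \ref{lem_prel} follows. That analysis produces the asymptotics of the specific RHP-defined columns, i.e.\ it performs the identification for you. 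If you wish to keep the scalar-ODE route, you must replace your identification argument by something substantive that ties $E_r$, $E_l$ to the $x\to\pm\infty$ regime (an integral representation, or an a priori recessiveness characterization), at which point you will essentially have re-derived the content of Lemma \ref{lemma_E_rl}.
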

This lemma justifies the names ''left'', ''right'' in the definitions of the Jost solutions. More detailed behavior of Jost \clu{solutions is given} in Lemma \ref{lemma_E_rl} below.

Define the following classes of functions.
\begin{Def}\label{def_BVloc}
Denote by $BV_{loc}=BV_{loc}(\mathbb{R})$ the class of functions $u:\mathbb{R}\mapsto\mathbb{R},$ which on every compact subset of $\mathbb{R}$ are functions of bounded variation.

Denote by $BV_{loc}^{(n)}=BV_{loc}^{(n)}(\mathbb{R}),$ $n=0,1,2,\ldots,$ the class of functions $u:\mathbb{R}\mapsto\mathbb{R},$ which are $n$ times differentiable, and the $n^{th}$ derivative $u^{(n)}$ belongs to $BV_{loc}.$ We set $BV_{loc}^{(0)}\equiv BV_{loc}.$
Finally, the class $BV_{loc}^{(\infty)}(\mathbb{R})$ of functions, which belong to $BV_{loc}^{(n)}$ for any $n\geq0,$ is $C^{\infty}(\mathbb{R}).$
\end{Def}

Now we are ready to formulate our first main result.

\begin{theorem}\label{teor_main} (forward scattering)
Let $t_0,$ $A<B$ are real, and let function $u_{t_0}(x)$ be a function of the class $BV_{loc}(\mathbb{R})$, which equals $U(x,t_0)$ for $x\in(-\infty,A)\cup(B,+\infty).$ Then

\textbf{I.} The associated spectral problem $$\partial_x^2{f_{t_0}}(x;\lambda)+2u_{t_0}(x) f_{t_0}(x;\lambda)=\lambda f_{t_0}(x;\lambda)$$
has two solutions $f_r,$ $f_l$ (we call them Jost solutions associated with $u_{t_0}(x)$), which are determined by the conditions
$$f_r(x,t_0;\lambda)=e_r(x,t_0;\lambda),\ \forall x>B,\qquad f_l(x,t_0;\lambda)=e_l(x,t_0;\lambda),\ \forall x<A$$
on their domains of definition, and which are analytic respectively in $\lambda\in\mathbb{C}$ and $\lambda\in\mathbb{C}\setminus\mathbb{R}.$

\textbf{II.} Define functions $a(\lambda)\equiv a(\lambda;t_0),$ $b(\lambda)\equiv b(\lambda; t_0),$ $r(\lambda)\equiv r(\lambda; t_0)$ (we call them the {\textbf{spectral functions}} associated with $u_{t_0}(x)$) as follows:
$$a(\lambda; t_0):=\left\{f_r(x,t_0;\lambda), f_l(x,t_0;\lambda)\right\},$$
where the bracket $\left\{g,h\right\}\equiv W\left\{g,h\right\}=gh_x-g_xh$ denotes the Wronskian of two functions, and
$$b(\lambda;t_0):=\(f_l(x,t_0;\lambda)-a(\lambda;t_0)e_l(x,t_0;\lambda)\)\cdot (e_r(x,t_0;\lambda))^{-1},x>B, \quad r(\lambda;t_0)=b(\lambda;t_0)a^{-1}(\lambda;t_0).$$ The spectral functions possess the following properties:
\begin{enumerate}\item\label{prop_ruan} the restrictions of $a(\lambda), b(\lambda), r(\lambda)$ to the upper half-plane $\Im\lambda>0,$ we call them $a_u(\lambda), $  $b_u(\lambda),$ $r_u(\lambda),$ can be extended analytically to $\mathbb{C};$ 
\\
the restrictions of $a(\lambda),$ $b(\lambda),$ $r(\lambda)$ to the lower half-plane $\Im\lambda<0$, we call them $a_d(\lambda),$ $b_d(\lambda),$ $r_d(\lambda),$ are related with $a_u, b_u, r_u$ as 
$\ol{a_u(\ol{\lambda})}=a_d(\lambda),$ $\ol{b_u(\ol{\lambda})}=b_d(\lambda),$ $\ol{r_u(\ol{\lambda})}=r_d(\lambda);$
\item \label{prop_Imru} $\Im r_u(s)=\mathcal{O}(|s|^{-1})$ for $s\in\mathbb{R},$ $s\to\pm\infty$ (here $\mathcal{O}$ is the  Bachmann--Landau big-O notation);
\item\label{prop_Imru12} $\Im r_u(s)<\frac12$ and $\Im r_d(s)>-\frac12$ for $s\in\mathbb{R};$
\item \label{prop_rurdneqi}$r_u(\lambda)-\ol{r_u(\ol{\lambda})}\neq \i$ for all $\lambda\in\mathbb{C};$
\item \label{prop_aasymp} $a(\lambda)=1+\frac{1}{\sqrt{\lambda}}\int\limits_A^B(U(x,t_0)-u_{t_0}(x))\d x+{\ol{_\mathcal{O}}}(\frac{1}{\sqrt{\lambda}})$ as $\lambda\to\infty$ uniformly w.r.t. $\arg\lambda\in[-\pi,\pi];$
\item \label {prop_ru_vanishing}$r_u(\lambda)=\mathcal{O}(\frac{1}{\lambda})\cdot\e^{2\theta(B,t_0;\lambda)}$ as $\lambda\to\infty$ uniformly w.r.t. $\arg\lambda\in[-\pi,\pi];$
\\ $6^{(N)}$. Furthermore, if $u_{t_0}\in BV^{(N)}_{loc},$ then
$r_u(\lambda)=\mathcal{O}({\lambda}^{-\frac{N}{2}-1})\cdot\e^{2\theta(B,t_0;\lambda)}.$
\\$6^{(\infty)}$. Furthermore, if $u_{t_0}\in BV^{(\infty)}_{loc},$ then
for any $N,$
$r_u(\lambda)=\mathcal{O}({\lambda}^{-\frac{N}{2}-1})\cdot\e^{2\theta(B,t_0;\lambda)}.$
\item \label{prop_auad}
$a_u(\lambda;t_0)-\ol{a_u(\ol{\lambda};t_0)}=\mathcal{O}(\frac{1}{\lambda})\cdot \e^{-2\theta(A,t_0;\lambda)}$ as $\lambda\to\infty$ uniformly w.r.t. $\arg\lambda\in[-\pi,\pi];$
\item \label{prop_ra}$r_u(\lambda)-r_d(\lambda)=\i\(1-\dfrac{1}{a_u(\lambda)a_d(\lambda)}\);$
\item function $a(\lambda)$ does not vanish nowhere, i.e. $a_u(\lambda)\neq 0$ for $\Im\lambda\geq 0;$
\item \label{prop_ainr} functions $a_u,$ $a_d$ can be expressed in terms of $r_u$ in the following way:
\begin{equation}\label{au_ru}\begin{split}a_u(\lambda)&=\exp\left\{\frac{-1}{2\pi\i}
\int\limits_{-\infty}^{+\infty}\frac{\ln(1-2\Im r_u(s)) \d s}{s-\lambda}\right\},
\quad\mbox{ for } \Im\lambda>0,
\\
&=\frac{1}{1+\i \(r_u(\lambda)-\ol{r_u(\ol\lambda)}\)}\cdot
\exp\left\{\frac{-1}{2\pi\i}\int\limits_{-\infty}^{+\infty}\frac{\ln(1-2\Im r_u(s)) \d s}{s-\lambda}\right\},
\ \mbox{ for } \Im\lambda<0,
\end{split}\end{equation}
and 
\begin{equation}\label{ad_ru}\begin{split}
a_d(\lambda)&=\frac{1}{1+\i \(r_u(\lambda)-\ol{r_u(\ol\lambda)}\)}\cdot
\exp\left\{\frac{1}{2\pi\i}\int\limits_{-\infty}^{+\infty}\frac{\ln(1-2\Im r_u(s)) \d s}{s-\lambda}\right\},
\mbox{ for } \Im\lambda>0,
\\
&=\exp\left\{\frac{1}{2\pi\i}\int\limits_{-\infty}^{+\infty}\frac{\ln(1-2\Im r_u(s)) \d s}{s-\lambda}\right\},
\mbox{ for } \Im\lambda<0,\end{split}\end{equation}
and hence, not only $a_u(\lambda)\neq 0$ for $\Im\lambda\geq 0,$ but $a_u(\lambda)\neq 0$ for all $\lambda\in\mathbb{C}$.
\end{enumerate}
\end{theorem}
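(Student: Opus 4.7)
The argument proceeds constructively, exploiting that $u_{t_0}$ agrees with $U(\cdot,t_0)$ outside $[A,B]$, so the perturbed Jost solutions are continuations of the unperturbed ones through a compact defect. For Part I, set $f_r(x,t_0;\lambda):=e_r(x,t_0;\lambda)$ for $x\geq B$ and extend to $x<B$ as the unique solution of the Cauchy problem $\partial_x^2 f+2u_{t_0}(x)f=\lambda f$ with initial data $(e_r(B,t_0;\lambda),e_{r,x}(B,t_0;\lambda))$ at $x=B$. Local bounded variation of $u_{t_0}$ gives a classical solution, and analytic dependence of ODE solutions on parameters, combined with the entirety of $e_r(B,\cdot)$ and $e_{r,x}(B,\cdot)$ from Lemma \ref{lem_prel}, makes $f_r$ entire in $\lambda$. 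The same procedure starting from $x\leq A$ yields $f_l$, whose analyticity in $\mathbb{C}\setminus\mathbb{R}$ inherits that of $e_l$.

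For Part II, constancy of $W\{f_r,f_l\}$ in $x$ allows evaluation at any convenient point; choosing $x=A$, where $f_l=e_l$, gives $a(\lambda)=W\{f_r(A,\cdot),e_l(A,\cdot)\}$, and treating the passage from $B$ down to $A$ via the Volterra integral equation driven by the perturbation $V(y):=U(y,t_0)-u_{t_0}(y)$ (supported on $[A,B]$) yields
\[
a(\lambda)=1+\int_A^B V(y)\,e_r(y,t_0;\lambda)\,e_l(y,t_0;\lambda)\,\d y\;+\;\text{higher Neumann iterates},
\]
and an analogous representation for $b(\lambda)$ with one factor $e_l$ replaced by $e_r$. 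Item \ref{prop_ruan} is then immediate from the analyticity and Schwarz-type symmetry of $e_r,e_l$ combined with the reality of $V$. Items \ref{prop_aasymp}, \ref{prop_ru_vanishing}, and \ref{prop_auad} follow by inserting the large-$\lambda$ WKB behavior of $e_r e_l$ and $e_r^2$ extracted from the RHP normalization \eqref{eq_3a_E_asymp}: the leading factor $e_r e_l\sim\lambda^{-1/2}$ produces the scalar $\int V$ in item \ref{prop_aasymp}, while $e_r^2\sim\lambda^{-1/2}\e^{-2\theta(y,t_0;\lambda)}$ generates the prefactor $\e^{2\theta(B,t_0;\lambda)}$ in item \ref{prop_ru_vanishing} via the endpoint estimate at $y=B$; integration by parts $N$ times, allowed by $u_{t_0}\in BV_{loc}^{(N)}$, upgrades the decay to $\mathcal{O}(\lambda^{-N/2-1})$ as required in $6^{(N)}$ and $6^{(\infty)}$.

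The remaining items exploit the jump of $e_l$ across $\mathbb{R}$ prescribed in RHP \ref{RHE}: this transfers directly into a jump relation for $f_l$ on $\mathbb{R}$, and matching its upper and lower extensions against the right Jost basis (valid for $x>B$) produces the identity $r_u-r_d=\i(1-1/(a_u a_d))$ of item \ref{prop_ra}. Combined with $\ol{a_u(\ol\lambda)}=a_d(\lambda)$ this gives $|a_u(s)|^2(1-2\,\Im r_u(s))=1$ on $\mathbb{R}$, from which items \ref{prop_Imru12} and \ref{prop_rurdneqi} follow; item \ref{prop_Imru} is a direct consequence of item \ref{prop_ru_vanishing}. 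Non-vanishing of $a_u$ in the open upper half-plane is the standard spectral argument that a zero would yield a square-integrable eigenfunction of the formally self-adjoint Schr\"odinger operator $-\partial_x^2-2u_{t_0}$ at a non-real eigenvalue; non-vanishing on $\mathbb{R}$ is then forced by $|a_u|^2(1-2\,\Im r_u)=1$. Finally, item \ref{prop_ainr} is obtained by solving a scalar additive Riemann--Hilbert problem for $\log a_u$ with jump $\log a_u-\log a_d=-\log(1-2\,\Im r_u)$ on $\mathbb{R}$ via Sokhotski--Plemelj, using the normalization $a_u\to 1$ guaranteed by item \ref{prop_aasymp}.

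The main obstacle is the extraction of uniform WKB asymptotics of $e_r$ and $e_l$ needed for items \ref{prop_aasymp}, \ref{prop_ru_vanishing}, and \ref{prop_auad}, including the precise exponential prefactors $\e^{\pm 2\theta}$ across all sectors of $\arg\lambda$: these are implicit in RHP \ref{RHE} but justifying them rigorously — together with the associated integration-by-parts identities along rays where the dominance of $\e^{\theta}$ and $\e^{-\theta}$ switches — requires careful Deift--Zhou style steepest-descent bookkeeping and is the technical heart of the argument.
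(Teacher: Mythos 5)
Your overall strategy coincides with the paper's — Jost solutions matched to $e_r,e_l$ across the compact defect, $a$ as a Wronskian, the entire-function identity $a_db_u-a_ub_d=\i(a_ua_d-1)$ of \eqref{b_a_rel} as the single source of items 3, 4 and 8, the self-adjointness argument for item 9, and a scalar Sokhotski--Plemelj problem for item 10 — but the technical engine for the asymptotics (items 5, 6, 7 and $6^{(N)}$) is different. The paper does not iterate the Volterra equation; it introduces interior solutions $h_1,h_2$ on $[A,B]$ with transformation-operator representations \eqref{h1}--\eqref{h2x}, whose kernels $R,L$ solve \eqref{RL_IE}, and reads $a$, $b$ and $a_d-a_u$ off the closed Wronskian formulas \eqref{a_h12}--\eqref{adu_h12}. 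The uniformity of the expansions up to $\arg\lambda=\pm\pi$ (which you correctly identify as the hard point) is obtained there by substituting $e_r=\i e_{lu}-\i e_{ld}$ in the sector around the negative real axis and checking term by term which of $\e^{\pm2\theta(A)}$, $\e^{\pm2\theta(B)}$, $\e^{2(A-B)\sqrt\lambda}$ stays bounded (proof of Lemma \ref{lem_b_a_prop}). Also, in $6^{(N)}$ the gain of $\lambda^{-N/2}$ is not produced merely by being \emph{allowed} to integrate by parts $N$ times: each integration by parts leaves boundary terms at $y=A,B$, and one must check that these cancel because $u_{t_0}-U$ and its first $N-1$ derivatives vanish at the endpoints; the paper verifies this explicitly (\eqref{term1}, \eqref{term32} and the recursion \eqref{alpha_der} in Lemma \ref{lem_r_refined}).

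Two concrete slips. (i) Item 2 does not follow from item 6 on the positive half-line: for $s\to+\infty$ the exponent $\theta(B,t_0;s)$ is real and $\e^{2\theta(B,t_0;s)}$ grows super-exponentially, so the bound $r_u=\mathcal{O}(\lambda^{-1})\e^{2\theta(B)}$ gives no control of $\Im r_u(s)$ there (indeed $r_u(s)$ itself is exponentially large on $\mathbb{R}_+$); only for $s\to-\infty$, where $|\e^{2\theta(B,t_0;s)}|=1$, does item 6 imply item 2. The decay of $\Im r_u$ on all of $\mathbb{R}$ has to be read off the relation $2\Im r_u(s)=1-|a_u(s)|^{-2}$ (your own consequence of item 8) together with the expansion of $a$ from item 5, which is how the paper argues. (ii) You have interchanged the two quadratic combinations of Jost solutions: the coefficient $b$ of $e_r$ in $f_l$ for $x>B$ is driven by $\int_A^B(\cdot)\,e_l^2\,\d y$ with $e_l^2\sim\lambda^{-1/2}\e^{+2\theta(y)}$, and it is the endpoint $y=B$ of \emph{this} integral that produces the factor $\e^{2\theta(B)}$ of item 6; the combination $\int_A^B(\cdot)\,e_r^2\,\d y\sim\int(\cdot)\,\lambda^{-1/2}\e^{-2\theta(y)}\,\d y$ that you invoke is dominated by the opposite endpoint $y=A$ and is what governs $a_d-a_u=\mathcal{O}(\lambda^{-1})\e^{-2\theta(A)}$ in item 7 (compare \eqref{adu_h12}). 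As written, your mechanism would assign $\e^{-2\theta(A)}$ to item 6. Both points are repairable with ingredients you already have, but they are wrong as stated.
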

\vskip5mm

Let us observe, that the property \ref{prop_ru_vanishing} of the part II of Theorem \ref{teor_main} implies that there is no more than finite number of points $\lambda$  in
the \clu{sector} $\arg\lambda\in(\frac{6\pi}{7},\pi),$  at which $r_u(\lambda)=\i;$
and 
the property \ref{prop_Imru12} implies that $r_u(s)\neq \i$ for real $s.$ 
 Define the \clu{ray} $\gamma_{3}^{\mathbb{F}}$ as follows: if there are no roots of $r_u(\lambda)=\i$ on $\gamma_3=(\e^{6\pi\i/7}\infty,0),$ then $\gamma_3^{\mathbb{F}}:=\gamma_3.$ If there are some roots of $r_u(\lambda)=\i$ on $\gamma_3,$ then we move locally the \clu{ray} $\gamma_3$ in such a way, that it would not contain roots of $r_u(\lambda)=\i$ anymore, and call the resulting contour by $\gamma_{3}^{\mathbb{F}}.$ The contour $\gamma_{-3}^{\mathbb{F}}=\ol{\gamma_3^{\mathbb{F}}}$ is symmetric to $\gamma_3^{\mathbb{F}}$ w.r.t the real line, with orientation imposed by this symmetry. Let us call the domain \clu{included} between $\gamma_0$ and $\gamma_3^{\mathbb{F}}$ by $\Omega_I,$ the one between $\gamma_{3}^{\mathbb{F}}$ and $\rho$ by $\Omega_{II},$ the one between $\rho$ and $\gamma_{-3}^{\mathbb{F}}$ by $\Omega_{III},$ and the one between $\gamma_{-3}^{\mathbb{F}}$ and $\gamma_0$ by $\Omega_{IV}.$ Denote by $\lambda_j,$ $j=1,\ldots,J$ the points in $\Omega_{II}$ at which $r_u(\lambda_j)=\i.$

Given the spectral functions we can define the following RHP (see Figure \ref{Fig_F}):

\begin{RH}\label{RHFhat}
To find a $2\times2$ matrix-valued function $\widehat{\mathbb{F}}(x,t;\lambda),$ which
\begin{enumerate}
 \item is meromorphic in $\lambda\in\mathbb{C}\setminus\Sigma_{\mathbb{F}},$ with finite number of poles at $\lambda_j,$ $\ol{\lambda_j},$ $j=i,\ldots, J.$ Here $\Sigma_{\mathbb{F}}=\gamma_0\cup\gamma_3^{\mathbb{F}} \cup \gamma_{-3}^{\mathbb{F}} \cup \rho;$
\item has the following jump $\widehat{\mathbb{F}}_+=\widehat{\mathbb{F}}_-J_{\mathbb{F}}$ across $\Sigma:$
$$J_{\mathbb{F}}=\begin{cases}
                  \begin{pmatrix}
                   \frac{-\i r_u}{1-\i r_d} & \frac{-\i}{(1+\i r_u)(1-\i r_d)} \\ \frac{-\i}{a_ua_d} & \frac{\i r_d}{1+\i r_u}
                  \end{pmatrix},\ \lambda\in\rho,
\qquad
\begin{pmatrix}
 1 & 0 \\ \frac{-\i}{a_ua_d} & 1
\end{pmatrix}, \ \lambda\in\gamma_0,
\\\\
\qquad
\begin{pmatrix}
 1 & \frac{\i}{1+\i r_u} \\ 0 & 1
\end{pmatrix}, \ \lambda\in\gamma_3^{\mathbb{F}},
\qquad 
\begin{pmatrix}
 1 & \frac{\i}{1-\i r_d} \\ 0 & 1
\end{pmatrix}, \ \lambda\in\gamma_{-3}^{\mathbb{F}},
\end{cases},
$$
\item \label{pole_cond} has the following pole conditions at the points $\lambda_j,$ $\ol{\lambda_j},$ $j=1,\ldots,J:$
$$\widehat{\mathbb{F}}_{[2]}(\lambda)+\frac{\i}{1+\i r_u(\lambda)}\widehat{\mathbb{F}}_{[1]}(\lambda)=\mathcal{O}(1)\quad \mbox{ and }\quad  \widehat{\mathbb{F}}_{[1]}(\lambda)=\mathcal{O}(1) \quad \mbox{ for }\quad \lambda\to\lambda_j,$$
$$\widehat{\mathbb{F}}_{[2]}(\lambda)-\frac{\i}{1-\i r_d(\lambda)}\widehat{\mathbb{F}}_{[1]}(\lambda)=\mathcal{O}(1) \quad \mbox{ and }\quad \widehat{\mathbb{F}}_{[1]}(\lambda)=\mathcal{O}(1) \quad \mbox{ for }\quad\lambda\to\ol{\lambda_j},$$
\item has the following asymptotics as $\lambda\to\infty,$ which are uniform w.r.t. $\arg\lambda\in[-\pi, \pi]:$
  $$\widehat{\mathbb{F}}(x,t;\lambda)=\(I+{\ol{_\mathcal{O}}}(1)\)\dfrac{\lambda^{-\sigma_3/4}}{\sqrt{2}}
  \begin{pmatrix}
   1&1\\1&-1
  \end{pmatrix}\e^{\theta\sigma_3}.$$
\end{enumerate}
\end{RH}

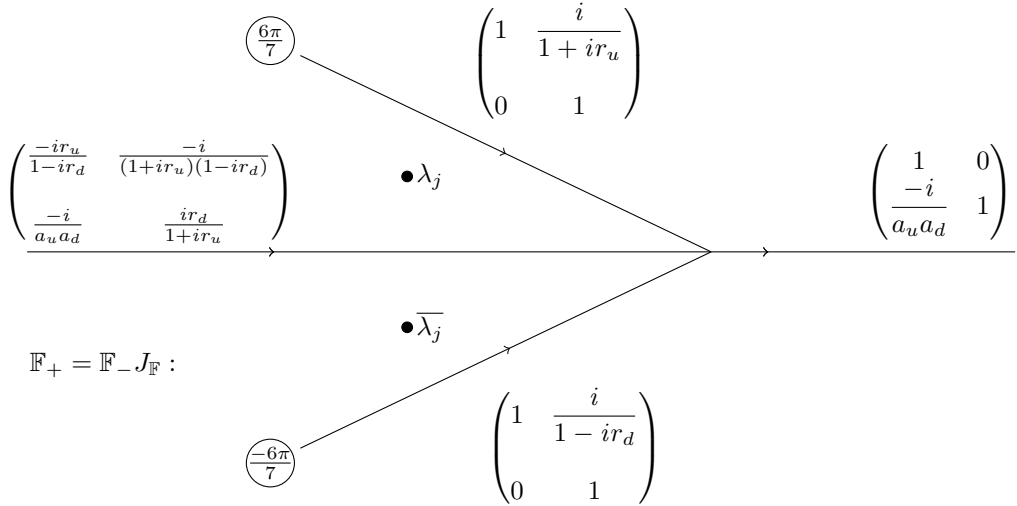
\begin{figure}[ht]
\center
\begin{tikzpicture}
\node at (-8,-1.5) {$\mathbb{F}_+=\mathbb{F}_-J_{\mathbb{F}}:$};

\draw 
[decoration={markings, mark=at position  0.5 with {\arrow{<}}}, postaction={decorate}]
(0,0) -- (-5.40581, -2.6033);
\draw [decoration={markings, mark=at position  0.5 with {\arrow{>}}}, postaction={decorate}] 
(-5.40581, 2.6033) -- (0,0);

\draw[decoration={markings, mark=at position  0.75 with {\arrow{>}}}, postaction={decorate}]
[decoration={markings, mark=at position  0.25 with {\arrow{>}}}, postaction={decorate}](-9,0) -- (4,0);

\node at (-5.8,2.8) {$\frac{6\pi}{7}$};
\draw (-5.8, 2.8) circle [radius=9pt];
\node at (-5.8,-2.8) {$\frac{-6\pi}{7}$};
\draw (-5.8,-2.8) circle [radius=9pt];

\node at (3,.8){$\begin{pmatrix}1 & 0 \\ \dsfrac{-\i }{a_ua_d} & 1\end{pmatrix}$};
\node at (-1.8,-2.6){$\begin{pmatrix}1 & \dsfrac{\i }{1-\i r_d}\\\\ 0 & 1\end{pmatrix}$};
\node at (-2.,2.5){$\begin{pmatrix}1 & \dsfrac{\i}{1+\i r_u}\\\\ 0 & 1\end{pmatrix}$};
\node at (-7.4,0.8){$\begin{pmatrix}\frac{-\i r_u }{1-\i r_d} & \frac{-\i}{(1+\i r_u)(1-\i r_d)} \\\\ \frac{-\i}{a_ua_d} & \frac{\i r_d}{1+\i r_u}\end{pmatrix}$};

\filldraw (-4,1) circle [radius=2pt];
\node at (-3.7,1) {$\lambda_j$};
\filldraw (-4,-1) circle [radius=2pt];
\node at (-3.7,-1) {$\ol{\lambda_j}$};

\end{tikzpicture}

\caption{RHP \ref{RHFhat} for the function $\mathbb{F}(x,t;\lambda).$}
\label{Fig_F}
\end{figure}

\begin{remark}
Let us notice that the set of data of the RHP \ref{RHFhat} is the function $r_u(\lambda),$ which is an entire function. Thus, it contains also all the information about the points $\lambda_j,$ $\Im\lambda_j>-0,$ where $r_u(\lambda_j)=\i.$ 

\end{remark}

In the theorem below we assume more regularity on the initial function $u_{t_0}(x),$ namely, we will take it from $C^{\infty}(\mathbb{R}).$
\begin{theorem}\label{teor_main2}
Let $t_0,$ $A<B,$ $u_{t_0}(x)$ be a $C^{\infty}(\mathbb{R})$ function, which coincides with $U(x,t_0)$ for $x\in(-\infty,A)\cup(B,+\infty).$ Then RHP \ref{RHFhat} has a unique solution $\widehat{\mathbb{F}}(x,t;\lambda),$ and $\widehat{\mathbb{F}}(x,t;\lambda)$ satisfies \clu{a} stronger asymptotic condition
  $$\widehat{\mathbb{F}}(x,t;\lambda)=\(I+\mathcal{O}(\frac{1}{\lambda})\)\dfrac{\lambda^{-\sigma_3/4}}{\sqrt{2}}
  \begin{pmatrix}
   1&1\\1&-1
  \end{pmatrix}\e^{\theta\sigma_3},
\lambda\to\infty,\quad\mbox{ uniformly w.r.t. }\arg\lambda\in[-\pi, \pi].
$$
\clu{Furthermore}, there exists \clu{a} limit when $\lambda\to\infty$ along non-transverse directions with the \clu{rays} $\gamma_{3}, \gamma_{-3}, \rho$
$$\mathbb{A}(x,t):=\lim\limits_{\lambda\to\infty}\(\widehat{\mathbb{F}}(x,t;\lambda)\e^{-\theta(x,t;\lambda)\sigma_3}\frac{1}{\sqrt{2}}(\sigma_3+\sigma_1)\lambda^{\sigma_3/4}-I\),$$
and the function \begin{equation}\label{u_F}u(x,t):=2\mathbb{A}_{11}(x,t)-\mathbb{A}_{12}^2(x,t)\end{equation}
satisfies the following properties:
\begin{itemize}\item $u(x,t)$ is $C^{\infty}$ smooth in $x,t$ for all $x,t\in\mathbb{R};$ 
\item $u(x,t)$ satisfies KdV equation \eqref{KdV} for all $t\in\mathbb{R};$
\item $u(x, t_0)=u_{t_0}(x).$ 
\end{itemize}
\end{theorem}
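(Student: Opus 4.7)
My plan splits the theorem into four pieces: uniqueness of $\widehat{\mathbb{F}}$, its existence (which I expect to be the main technical hurdle), the upgrade of the large-$\lambda$ expansion from ${\ol{_\mathcal{O}}}(1)$ to $\mathcal O(\lambda^{-1})$ together with the extraction of $u(x,t)$, and finally the verification that $u$ solves \eqref{KdV} with $u(x,t_0)=u_{t_0}$ and is $C^{\infty}$. For \textbf{uniqueness} I would first check that $\det\widehat{\mathbb{F}}\equiv 1$: each jump matrix in RHP \ref{RHFhat} has unit determinant (the only non-triangular one, on $\rho$, uses precisely identity~\ref{prop_ra} of Theorem~\ref{teor_main}), the pole condition~\ref{pole_cond} makes $\det\widehat{\mathbb{F}}$ regular at the $\lambda_j,\ol{\lambda_j}$, and the asymptotic pins the value at infinity. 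The ratio of two solutions then has no jumps, no poles, and tends to $I$, so it equals $I$ by Liouville.

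For \textbf{existence} I would apply Zhou's vanishing-lemma strategy. After reformulating the RHP as a Plemelj--Sokhotski singular integral equation on $\Sigma_{\mathbb{F}}$, with the pole conditions at each $\lambda_j$ absorbed by small circles around $\lambda_j,\ol{\lambda_j}$ carrying equivalent triangular jumps, one obtains a Fredholm operator of index zero on a contour with self-intersection at the origin. To show its kernel is trivial I would exploit the Schwarz symmetry of the spectral functions (property~\ref{prop_ruan}), the positivity $1-2\Im r_u(s)>0$ on $\mathbb R$ (property~\ref{prop_Imru12}), and the non-degeneracy~\ref{prop_rurdneqi}: for a homogeneous solution one forms the bilinear combination $\widehat{\mathbb{F}}_{\mathrm{hom}}(\lambda)\widehat{\mathbb{F}}_{\mathrm{hom}}(\ol\lambda)^{*}$ and integrates it along the real line; the jump on $\rho$ forces the resulting quadratic form to have a definite sign while its value at infinity vanishes, so $\widehat{\mathbb{F}}_{\mathrm{hom}}\equiv 0$. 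The delicate points are the local model analysis at the self-intersection of $\Sigma_{\mathbb{F}}$ and the careful bookkeeping of the discrete contributions from the $\lambda_j$.

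The \textbf{improved asymptotic} exploits the $C^{\infty}$ upgrade: property $6^{(\infty)}$ gives super-polynomial decay of $r_u$ along $\gamma_3^{\mathbb{F}}$, property~\ref{prop_aasymp} pins down the leading $\lambda^{-1/2}$ behaviour of $a$, and property~\ref{prop_auad} controls the non-triangular entries of the jump on $\rho$. A small-norm Deift--Zhou comparison of $\widehat{\mathbb{F}}$ with the model solution $\mathbb{E}$ of RHP \ref{RHE} then yields the full asymptotic series
$$\widehat{\mathbb{F}}(x,t;\lambda)=\Bigl(I+\frac{\mathbb{A}(x,t)}{\lambda}+\frac{\mathbb{A}_2(x,t)}{\lambda^2}+\ldots\Bigr)\frac{\lambda^{-\sigma_3/4}}{\sqrt{2}}\begin{pmatrix}1&1\\1&-1\end{pmatrix}e^{\theta\sigma_3},\qquad \lambda\to\infty,$$
uniformly in $\arg\lambda\in[-\pi,\pi]$. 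This defines $\mathbb{A}(x,t)$ and gives the claimed $\mathcal O(\lambda^{-1})$ remainder.

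To close the proof (\textbf{KdV equation, smoothness, initial datum}), I observe that the $(x,t)$-dependence in RHP \ref{RHFhat} enters only through $\theta$, which is polynomial in $\lambda^{1/2}$. Hence $\widehat{\mathbb{F}}_x\widehat{\mathbb{F}}^{-1}$ and $\widehat{\mathbb{F}}_t\widehat{\mathbb{F}}^{-1}$ have no jumps and only polynomial growth at infinity, so they equal polynomials $L(\lambda;x,t),M(\lambda;x,t)$ in $\lambda$ whose coefficients are built from $\mathbb{A},\mathbb{A}_2,\ldots$; their compatibility $L_t-M_x+[L,M]=0$ is exactly KdV for $u=2\mathbb{A}_{11}-\mathbb{A}_{12}^2$. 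Iterating this differentiation argument provides $C^{\infty}$ regularity in $(x,t)$. For the initial condition, at $t=t_0$ the spectral data entering RHP \ref{RHFhat} are by construction those produced from $u_{t_0}$ by the forward map of Theorem~\ref{teor_main}, so an explicit solution of RHP \ref{RHFhat} can be assembled from the Jost functions $f_l,f_r$ of $u_{t_0}$; reading off $\mathbb{A}$ from this representative reproduces $u_{t_0}(x)$ via \eqref{u_F}, and the uniqueness established above closes the argument.
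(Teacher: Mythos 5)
Your plan tracks the paper's own proof very closely: uniqueness by the determinant--Liouville argument, existence via Zhou's scheme reduced to a singular integral equation (this is Theorem \ref{teor_ExistF}), the full expansion at infinity read off from the Cauchy-integral representation once $r_u$ decays super-polynomially (property $6^{(\infty)}$), KdV via the Zakharov--Shabat compatibility of $\widehat{\mathbb{F}}_x\widehat{\mathbb{F}}^{-1}$ and $\widehat{\mathbb{F}}_t\widehat{\mathbb{F}}^{-1}$ with smoothness obtained by repeatedly differentiating the SIE (Theorem \ref{teor_u_F}), and the initial condition recovered from the explicit Jost-function solution of the RHP at $t=t_0$ together with uniqueness. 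All of these are exactly the steps carried out in Sections 6.

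The one place where your argument, as written, would not go through is the vanishing lemma. The quadratic form produced by integrating $A_+(\mu)\ol{A_+^T(\ol\mu)}$ over $\mathbb{R}$ is only positive \emph{semi}-definite: here $J_A+\ol{J_A^T}$ has the single nonzero entry $2/(a_ua_d)$ in the $(2,2)$ position, so the symmetrization kills only one column of $A_\pm$. In the standard setting one then propagates the vanishing to the remaining column through the jump relation because the off-diagonal jump entries decay at infinity; in this problem the jump on $\rho$ carries the factors $\e^{\pm(\theta_{\lambda_0,+}-\theta_{\lambda_0,-})}$ and does \emph{not} tend to the identity, so the surviving entries only assemble into an entire function $f(\mu)=\mathcal{O}(\mu^{-3/4})\e^{\theta_{\lambda_0}(\mu)}$ which grows exponentially in most directions. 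Concluding $f\equiv 0$ requires the additional Carlson/Phragm\'en--Lindel\"of argument of Proposition \ref{prop_vanish} (after the change of variable $z=\mu^{7/16}$), and the authors explicitly single this out as the main departure from the cases treated in \cite{DKMVZ} and \cite{Claeys Vanlessen}. Your sketch asserts definiteness of the form and stops there; this final step must be supplied. A smaller caveat: whichever device you use to remove the pole conditions (small circles around the $\lambda_j$, or the paper's shift of the intersection point of $\Sigma$ to a large negative $\lambda_0$), you must also arrange the cyclic product of the jump matrices at the junction to equal $I$ before invoking Fredholm theory on the self-intersecting contour.
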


\noindent The following theorem is in some sense inverse to Theorems \ref{teor_main}, \ref{teor_main2}.
\begin{theorem}\label{teor_main3}
Let a function $r_u(.)$ be any function, that satisfies the following set of conditions:
\begin{enumerate}
\item $r_u(\lambda)$ is an entire function in $\lambda\in\mathbb{C};$ define $r_d(\lambda)=\ol{r_u(\ol\lambda)};$
\item properties \ref{prop_Imru}--\ref{prop_rurdneqi} of the part II of Theorem \ref{teor_main} are satisfied;
\item define functions $a_u,$ $a_d$ by formulas \eqref{au_ru}, \eqref{ad_ru}. Then property \ref{prop_ra} of the part II of Theorem \ref{teor_main} is satisfied;
\item there exist real $t_0,$ $A<B$ such that properties \ref{prop_ru_vanishing}$^{(\infty)}$, \ref{prop_auad} of the part II of Theorem \ref{teor_main} are satisfied.
\end{enumerate}
Then RHP \ref{RHFhat} has a unique solution $\widehat{\mathbb{F}}(x,t;\lambda),$ and $\widehat{\mathbb{F}}(x,t;\lambda)$ satisfies \clu{a} stronger asymptotic condition
  $$\widehat{\mathbb{F}}(x,t;\lambda)=(I+\mathcal{O}(\frac{1}{\lambda}))\dfrac{\lambda^{-\sigma_3/4}}{\sqrt{2}}
  \begin{pmatrix}
   1&1\\1&-1
  \end{pmatrix}\e^{\theta\sigma_3},
\lambda\to\infty,\mbox{ uniformly w.r.t. }\arg\lambda\in[-\pi, \pi].
$$
\clu{Furthermore}, there exist limit when $\lambda\to\infty$ along non-transverse directions with the segments $\gamma_{3}, \gamma_{-3}, \rho$
$$\mathbb{A}(x,t):=\lim\limits_{\lambda\to\infty}\(\widehat{\mathbb{F}}(x,t;\lambda)\e^{-\theta(x,t;\lambda)\sigma_3}\frac{1}{\sqrt{2}}(\sigma_3+\sigma_1)\lambda^{\sigma_3/4}-I\),$$
and the function \begin{equation}\label{u_F}u(x,t):=2\mathbb{A}_{11}(x,t)-\mathbb{A}_{12}^2(x,t)\end{equation}
satisfies the following properties:
\begin{itemize}\item $u(x,t)$ is $C^{\infty}$ smooth in $x,t$ for all $x,t \in\mathbb{R};$
\item $u(x,t)$ satisfies KdV equation \eqref{KdV} for all $x,t\in\mathbb{R};$
\item $u(x, t_0)=U(x, t_0)$ for any $x\leq A$ and any $x\geq B.$
\end{itemize}
\clu{Furthermore}, functions $r_u, r_d, a_u, a_d$ are the spectral functions associated with $u(x, t_0).$
\end{theorem}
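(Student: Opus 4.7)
The plan is to run the inverse scattering transform in its Riemann-Hilbert formulation, using the solution $\mathbb{E}$ of RHP \ref{RHE} as a model (global parametrix) to dress away the oscillatory background. Specifically, I would set
$$\mathbb{G}(x,t;\lambda) := \widehat{\mathbb{F}}(x,t;\lambda)\,\mathbb{E}^{-1}(x,t;\lambda),$$
which transforms RHP \ref{RHFhat} into an equivalent problem whose asymptotic normalization at infinity is $\mathbb{G}\to I$ with $I+\mathcal{O}(\lambda^{-1})$ expected correction. The jumps of $\mathbb{G}$ on $\gamma_0,\gamma_3^{\mathbb{F}},\gamma_{-3}^{\mathbb{F}}$ are obtained by conjugating the jumps of $\widehat{\mathbb{F}}$ by $\mathbb{E}$, and in the process the constant $\sigma_1$-type jump of $\mathbb{E}$ on $\rho$ is canceled against the corresponding block in $J_{\mathbb{F}}|_\rho$. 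The residual jumps on $\mathbb{G}$ are then small because of the precise decay hypotheses: property \ref{prop_aasymp} controls $(a_ua_d)^{-1}-1$ on $\gamma_0$, property \ref{prop_auad} controls the off-diagonal perturbation on $\rho$, and property \ref{prop_ru_vanishing}$^{(\infty)}$ combined with the super-exponential decay of $\e^{2\theta}$ on $\gamma_{\pm 3}^{\mathbb{F}}$ (driven by the leading $\lambda^{7/2}/105$ term in $\theta$) makes the off-diagonal jump entries $\mathcal{O}(\lambda^{-\infty})$ there. Schwarz symmetry (property \ref{prop_ruan}) together with the consistency identity (property \ref{prop_ra}) then allows invoking Zhou's vanishing lemma, upgraded to accommodate the finitely many pole conditions as in item \ref{pole_cond} of RHP \ref{RHFhat}; this gives unique solvability of $\mathbb{G}$, hence of $\widehat{\mathbb{F}}$, and the stronger $I+\mathcal{O}(\lambda^{-1})$ asymptotic in the theorem follows from the small-norm estimates.

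Next, smooth dependence of the small-norm singular integral equation on $(x,t)$ propagates to smoothness of $\mathbb{A}(x,t)$, hence of $u(x,t)=2\mathbb{A}_{11}-\mathbb{A}_{12}^{2}$. To see that $u$ solves KdV I would invoke the standard dressing argument: the $x$- and $t$-dependences of the jumps enter only through $\e^{\theta(x,t;\lambda)\sigma_3}$, so $\widehat{\mathbb{F}}\,\e^{-\theta\sigma_3}$ has $x,t$-independent jumps on $\Sigma_{\mathbb{F}}$ (as functions of $\lambda$), and the logarithmic derivatives $L=\widehat{\mathbb{F}}_x\widehat{\mathbb{F}}^{-1}$ and $M=\widehat{\mathbb{F}}_t\widehat{\mathbb{F}}^{-1}$ are polynomials in $\lambda$ whose coefficients are read off from the asymptotic expansion of $\widehat{\mathbb{F}}$ in powers of $\lambda^{-1/2}$. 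Expanding these two identities to the order required to extract $\mathbb{A}_{11},\mathbb{A}_{12}$, the compatibility $L_t-M_x+[L,M]=0$ reduces (by the same algebraic manipulations as for $U$ in RHP \ref{RHE}) to the KdV equation \eqref{KdV} for the combination $2\mathbb{A}_{11}-\mathbb{A}_{12}^{2}$.

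The delicate step is the identification $u(x,t_0)=U(x,t_0)$ for $x\le A$ and $x\ge B$, which is where the exponents $\e^{2\theta(B,t_0;\lambda)}$ and $\e^{-2\theta(A,t_0;\lambda)}$ in properties \ref{prop_ru_vanishing}$^{(\infty)}$ and \ref{prop_auad} play their essential role. Working with $\mathbb{G}=\widehat{\mathbb{F}}\mathbb{E}^{-1}$, the off-diagonal perturbation on $\gamma_3^{\mathbb{F}}$ involves the factor $r_u(\lambda)\,\e^{2\theta(x,t_0;\lambda)}$, which by hypothesis is $\mathcal{O}(\lambda^{-N/2-1})\e^{2\theta(x,t_0;\lambda)+2\theta(B,t_0;\lambda)}$; for $x\ge B$ the exponent $2(x-B)\lambda^{1/2}$ brought in by the shift is dominated by the $\lambda^{7/2}$ term in $\theta(B,t_0;\lambda)$ along $\gamma_3^{\mathbb{F}}$, so the combined jump entry remains $\mathcal{O}(\lambda^{-\infty})$. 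A symmetric estimate using property \ref{prop_auad} handles the $\rho$-jump for $x\le A$, where the factor $\e^{-2\theta(A,t_0;\lambda)}$ is what allows one to absorb the $x$-shift. I would then argue, by uniqueness of the small-norm RHP and comparison with the trivial problem $\mathbb{G}\equiv I$, that $\mathbb{G}=I+\mathcal{O}(\lambda^{-\infty})$ for such $x$, which upon extraction of $u$ gives $u(x,t_0)=U(x,t_0)$; equivalently, the $12$-entry of $\widehat{\mathbb{F}}\mathbb{E}^{-1}\cdot\e^{-\theta\sigma_3}\tfrac{1}{\sqrt 2}(\sigma_3+\sigma_1)\lambda^{\sigma_3/4}$ in the expansion coincides with the corresponding entry for $\mathbb{E}$.

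Finally, to close the circle I would construct Jost solutions of the potential $u(x,t_0)$ out of appropriate linear combinations of the columns of $\widehat{\mathbb{F}}$ in the sectors $\Omega_{I},\ldots,\Omega_{IV}$ (mirroring \eqref{Eluldr_from_E}), match them with $e_r,e_l$ via the identification of step three on the complements of $[A,B]$, and compute $\{f_r,f_l\}$ together with the $b$-coefficient directly from the jump data of $\widehat{\mathbb{F}}$. Properties \ref{prop_ainr} of Theorem \ref{teor_main} show that the resulting $a_u,a_d$ are exactly those reconstructed from the input $r_u$, closing the loop. The principal obstacle will be the uniform control in step three: establishing the $\mathcal{O}(\lambda^{-\infty})$ vanishing of $\mathbb{G}-I$ requires combining the sharp decay in properties \ref{prop_ru_vanishing}$^{(\infty)}$ and \ref{prop_auad} with a careful contour-deformation argument, since naive estimates along $\gamma_3^{\mathbb{F}}$ and $\rho$ are tight and the exponentials $\e^{\theta}$ fail to decay along the competing direction.
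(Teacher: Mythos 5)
Your overall architecture (normalize by a global parametrix, prove solvability by a vanishing-lemma argument, obtain KdV by dressing, and use the exponents $\e^{2\theta(B,t_0;\lambda)}$, $\e^{-2\theta(A,t_0;\lambda)}$ in the decay hypotheses to localize the perturbation) is the same as the paper's, which derives this theorem from Theorems \ref{teor_ExistF}, \ref{teor_u_F} and \ref{teor_charact}. But your repeated appeal to ``small-norm'' reasoning hides a genuine gap. The RHP for $\mathbb{G}=\widehat{\mathbb{F}}\,\mathbb{E}^{-1}$ is not a small-norm problem: on compact parts of the contour the jump is an $O(1)$ distance from $I$ (on $\rho$ the diagonal entry $\frac{1}{a_ua_d}=\frac{1}{|a_u(s)|^2}$ is a nontrivial function of $s$), and even at infinity the jump on $\rho$ approaches $I$ only at the rate at which $r_u(s)$ and $\Im r_u(s)$ decay as $s\to-\infty$, never exponentially. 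The paper flags exactly this point and is forced to run the full Zhou machinery: Fredholmness via rational approximation of the jump, index zero by deformation, and a vanishing lemma whose last step is to show that an entire function with uniform asymptotics $\mathcal{O}(\mu^{-3/4})\e^{\theta_{\lambda_0}(\mu)}$ vanishes identically, which requires Carlson's theorem after a change of variables because $\e^{\theta}$ grows in some directions. Relatedly, you place the burden of hypothesis \ref{prop_ru_vanishing}$^{(\infty)}$ on the rays $\gamma_{\pm3}^{\mathbb{F}}$, where it is not needed (the factor $\e^{2\theta(x,t;\lambda)}$ already decays super-exponentially there); the place where the all-order decay of $r_u$ is indispensable is the ray $\rho$, where $|\e^{\pm2\theta}|=1$, and it is what permits differentiating the singular integral equation arbitrarily often in $x$ and $t$ (each $x$-derivative costs $\lambda^{1/2}$, each $t$-derivative $\lambda^{3/2}$), hence the $C^{\infty}$ smoothness and the $I+\mathcal{O}(1/\lambda)$ expansion.

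The characterization step fails as stated. You claim that for $t=t_0$ and $x\ge B$ (or $x\le A$) one has $\mathbb{G}=I+\mathcal{O}(\lambda^{-\infty})$ ``by comparison with the trivial problem''. This cannot be true: $\mathbb{G}$ carries a genuinely non-identity jump at finite $\lambda$ (again the $\frac{1}{a_ua_d}$ entry on $\rho$), and a function that were $I+\mathcal{O}(\lambda^{-\infty})$ yet had those jumps would have to be identically $I$ and jump-free. What the paper does instead (Theorem \ref{teor_charact}) is an exact, not asymptotic, identification: it multiplies $\mathbb{F}\e^{-\theta\sigma_3}$ on the right by explicit piecewise triangular matrices built from $r_u\e^{-2\theta}$ (for $x\ge B$) or from $a_u,a_d$ and $(a_d-a_u)\e^{2\theta}$ (for $x\le A$), checks that the transformed function has precisely the jumps of RHP \ref{RHE}, verifies that the asymptotic normalization is not destroyed because $r_u\e^{-2\theta(x,t_0;\lambda)}$ (resp.\ $(a_d-a_u)\e^{2\theta(x,t_0;\lambda)}$) carries the factor $\e^{2(B-x)\sqrt{\lambda}}$ (resp.\ $\e^{2(x-A)\sqrt{\lambda}}$) and hence stays bounded exactly in that range of $x$, and then invokes uniqueness of RHP \ref{RHE} to conclude equality with $\mathbb{E}(x,t_0;\lambda)$ on the nose. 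Your step three must be replaced by this exact undressing argument for the proof to go through.
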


The structure of the paper is as follows. In Section \ref{sect_U} we study the properties of the special pole-free joint solution  $U(x,t),$ and associated with it \textit{Jost} solutions of the associated Schr\"{o}dinger equation.
In Section \ref{section_Jost} we study the properties of perturbed Jost solutions, associated with a perturbation of $U(x,t_0)$ at a given time $t_0.$ In Section \ref{sect_abr} we introduce the spectral functions $a, b, r,$ associated with a perturbation of $U(x,t_0),$ and study their properties.
The statements of \clu{Theorem 1.3, i.e. the properties of the spectral functions, follow from the material given in Sections 3, 4, namely Lemmas 3.1, 3.2, 3.3, 3.5, 3.6, 4.1, 4.2, 4.4, 4.7, 4.11.}
In Sections \ref{sect_RH}, \ref{sect_RHP_exist} we solve an \textit{inverse scattering problem}, by constructing an appropriate Riemann-Hilbert problem (Section \ref{sect_RH}), and proving its solvability (Section \ref{sect_RHP_exist}). The latter also proves the solvability of the initial value problem for the KdV and give a way to link the solution of KdV with a solution of a corresponding RHP. 
\clu{Theorems 1.4 follows from Theorems 6.3, 6.8, and Theorem 1.5 follows from Theorems 6.3, 6.8, 6.9}.

Finally, in Appendix we prove a spectral decomposition of the Schr\"{o}dinger operator with the potential $U(x,t)$. In other words we show that the corresponding Jost solutions are orthogonal to each other. Let us notice that for bounded potentials (finite-gap or vanishing) a spectral decomposition (which is orthogonality of the exponents, or Fourier inversion formula) plays a central role in constructing of the integral representation for the Jost solutions, and studying their properties. In our work, since our potential is a compactly supported perturbation of $U(x, t_0),$ we managed to study the properties of the Jost solutions \textit{by hands}, without referring to a spectral theorem. However, \clu{should} one try to construct Jost solutions for more general class of perturbations, she or he or they might need to use that theorem.

The properties at $x\to\pm\infty$ or $t\to\infty$ of the contrsucted solution $u(x,t)$ of the KdV equation will be studied somewhere else.


\noindent Schematically, we can represent the contents of Theorems \ref{teor_main}, \ref{teor_main2}, \ref{teor_main3} by the following diagrams:
\[\begin{split}&\mbox{Theorem \ref{teor_main}}: \quad u_{t_0}(x)\mapsto \left\{a(\lambda), b(\lambda), r(\lambda)\right\}.
\\
&\mbox{Theorem \ref{teor_main2}}: \quad u_{t_0}(x)\mapsto \left\{a(\lambda), b(\lambda), r(\lambda)\right\}\ \ \, \mapsto 
\mbox{RHP} \ref{RHFhat} \ \mbox{ for } \  \widehat{\mathbb{F}}(x,t;\lambda)\mapsto u(x,t).
\\
&\mbox{Theorem \ref{teor_main3}}: \qquad \qquad \quad r(\lambda)\mapsto\left\{a(\lambda), b(\lambda)\right\}\mapsto 
\mbox{RHP} \ref{RHFhat}\ \mbox{ for } \ \widehat{\mathbb{F}}(x,t;\lambda)\mapsto u(x,t)\mapsto u_{t_0}(x).
\end{split}
\]

\section{Special pole-free joint solution $U(x,t)$ of the KdV equation and the $P_I^2$ equation.}\label{sect_U}
Here we list some further properties of the function $U(x,t)$ and the corresponding RHP \ref{RHE}. 
Along with the RHP \ref{RHE} we consider another RHP for a function $\widehat{\mathbb E}(x,t;\lambda),$ with the same analyticity and jump conditions as in RHP \ref{RHE}, but with the condition \clu{at} $\lambda\to\infty$ altered.
\begin{RH}\label{RHEhat}
 Find a $2\times2$ matrix-valued function $\mathbb{E}(x,t;\lambda)$, which satisfies analyticity and jump conditions of RHP \ref{RHE}, and has the following asymptotics: \begin{enumerate}
\setcounter{enumi}{2} \item[3a.] asymptotics as $\lambda\to\infty$ $$\widehat{\mathbb{E}}(\lambda)=(I+\mathcal{O}(\lambda^{-1}))
\frac{\lambda^{-\sigma_3/4}}{\sqrt{2}}\begin{pmatrix}
                                      1&1\\1&-1
                                     \end{pmatrix}
\e^{\theta \sigma_3}.
$$

\end{enumerate}
\end{RH}
\begin{remark}Both conditions at infinity fix uniquely the solution of the RHP. This is obvious for the condition 3a. in RHP \ref{RHEhat}. For the condition 3. in RHP \ref{RHE}, we first conclude that the ratio of any 2 solutions $\widetilde{\mathbb{E}}$ and $\mathbb{E}$  is constant in $\lambda$ and lower triangular:
$$\widetilde{\mathbb{E}}\cdot \mathbb{E}^{-1}=\begin{pmatrix}1 & 0 \\ c(x,t) & 1\end{pmatrix},$$ and then multiplying the above ratio by $\mathbb{E}$ from the right and checking the 
asymptotics for $\widetilde{\mathbb{E}}$ at $\lambda\to\infty,$
we see that this triangular matrix should be the identity.
\end{remark}

There is a simple relation between the solutions of the RHPs \ref{RHE} and \ref{RHEhat}, and each admits a full asymptotic expansion at infinity of the following form:
\begin{figure}[ht]
\center
\begin{tikzpicture}
\node at (-9.2,1){$\mathbb{E}=\begin{pmatrix}1&0\\b_1&1\end{pmatrix}\widehat{\mathbb{E}},$};
\node at (-9,0) {$\mathbb{E}_+=\mathbb{E}_-J_{\mathbb{E}}:$};

\draw 
[decoration={markings, mark=at position  0.5 with {\arrow{<}}}, postaction={decorate}]
(0,0) -- (-5.40581, -2.6033);
\draw [decoration={markings, mark=at position  0.5 with {\arrow{>}}}, postaction={decorate}] 
(-5.40581, 2.6033) -- (0,0);
\draw[decoration={markings, mark=at position  0.75 with {\arrow{>}}}, postaction={decorate}]
[decoration={markings, mark=at position  0.25 with {\arrow{>}}}, postaction={decorate}](-6,0) -- (4,0);

\node at (1,1) {\color{blue}$(E_{lu},E_r)$};
\node at (1,-1) {\color{blue}$(E_{ld},E_r)$};
\node at (-3.5,0.5) {\color{blue}$(E_{lu}, -\i E_{ld}=E_r-\i E_{lu})$};
\node at (-3.5,-0.5) {\color{blue}$(E_{ld}, \i E_{lu}=E_{r}+\i E_{ld})$};
\node at (-5.8,2.8) {$\frac{6\pi}{7}$};
\draw (-5.8, 2.8) circle [radius=9pt];
\node at (-5.8,-2.8) {$\frac{-6\pi}{7}$};
\draw (-5.8,-2.8) circle [radius=9pt];

\node at (3,0.5){$\begin{pmatrix}1 & 0 \\ -\i & 1\end{pmatrix}$};
\node at (-3.2,-2.4){$\begin{pmatrix}1 & \i \\ 0 & 1\end{pmatrix}$};
\node at (-3,2.3){$\begin{pmatrix}1 & \i \\ 0 & 1\end{pmatrix}$};
\node at (-6.5,0.5){$\begin{pmatrix}0 & -\i \\ -\i & 0\end{pmatrix}$};
\end{tikzpicture}
\caption{RHP \ref{RHE} for the function $\mathbb{E}(x,t;\lambda).$ }
\label{Fig_E}
\end{figure}
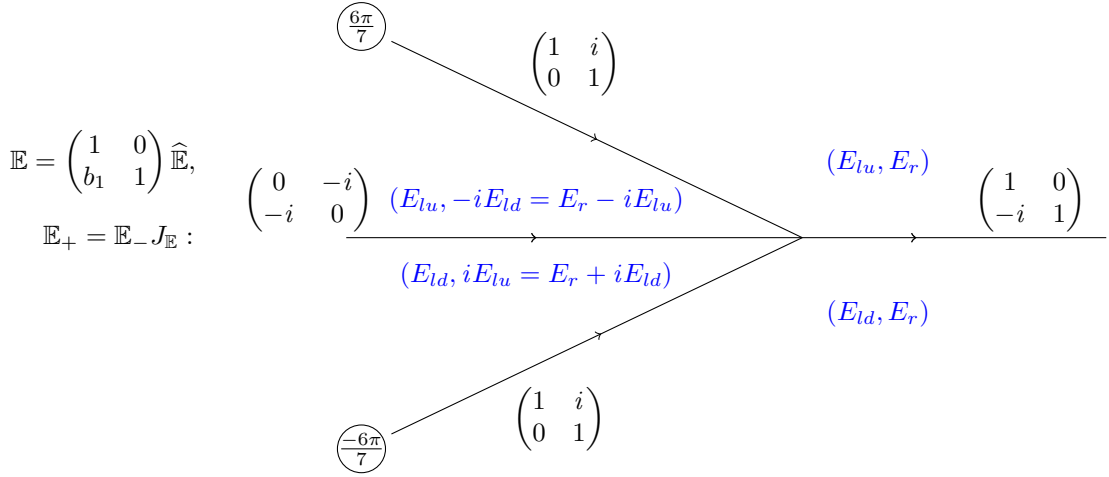

\begin{theorem}[\cite{Claeys Vanlessen}]
 For any $x,t\in\mathbb{R}$ there exists a unique solution $\widehat{\mathbb{E}}(x,t;\lambda)$ to the RH problem \ref{RHEhat}, p.\pageref{RHEhat}.
It is smooth (infinitely many times differentiable) w.r.t. $x,t,$ and 
admits the following uniform asymptotic expansion as $\lambda\to\infty:$ for any integer $J\geq 1$
\begin{equation}\label{asserEhat}\widehat{\mathbb{E}}(x,t;\lambda)=\(I+\sum\limits_{j=1}^{J-1}\begin{pmatrix}a_j(x,t)&b_j(x,t)\\c_j(x,t)&d_j(x,t)\end{pmatrix}\lambda^{-j}+\mathcal{O}(\lambda^{-J})\)\frac{\lambda^{-\sigma_3/4}}{\sqrt{2}}\begin{pmatrix}
                                      1&1\\1&-1
                                     \end{pmatrix}\e^{\theta(x,t;\lambda)\sigma_3}.\end{equation}
\end{theorem}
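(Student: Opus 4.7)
The proof is essentially carried out in \cite{Claeys Vanlessen}; I sketch how I would organize it. The plan is to split the statement into three parts: (a) uniqueness, (b) existence (solvability of RHP \ref{RHEhat} for every $(x,t)\in\mathbb{R}^2$) together with smoothness in $(x,t)$, and (c) the full asymptotic expansion \eqref{asserEhat}.

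\emph{Uniqueness.} I first verify that each jump matrix in RHP \ref{RHE} has determinant one: $\det\begin{pmatrix}1&\i\\0&1\end{pmatrix}=\det\begin{pmatrix}1&0\\-\i&1\end{pmatrix}=1$, and $\det\begin{pmatrix}0&-\i\\-\i&0\end{pmatrix}=-(\i)^2=1$. Since $\mathrm{tr}\,(\theta\sigma_3)=0$, the function $\det \widehat{\mathbb{E}}$ has no jumps across $\Sigma$, extends entirely to $\mathbb{C}$, and tends to $1$ at infinity, hence is identically $1$. If $\widehat{\mathbb{E}}^{(1)}$ and $\widehat{\mathbb{E}}^{(2)}$ are two solutions, then $\widehat{\mathbb{E}}^{(1)}(\widehat{\mathbb{E}}^{(2)})^{-1}$ has trivial jumps, is bounded, and tends to $I$, so Liouville gives equality.

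\emph{Existence.} This is the substantive point. The plan is to use the Deift-Zhou steepest descent. After the change of variable $\lambda=\zeta^2$ (which removes the branch at $0$) one performs successive conjugations designed to open lenses around $\gamma_{\pm 3}$ and $\rho$ and to exploit the sign chart of $\Re\theta$ cut out by these rays; the resulting jumps become exponentially close to the identity off a bounded neighbourhood of the origin. Near the origin I install a local model parametrix built from the relevant Airy/Painlevé-type model problem, and away from it I install the outer parametrix $\tfrac{1}{\sqrt 2}\lambda^{-\sigma_3/4}\begin{pmatrix}1&1\\1&-1\end{pmatrix}\e^{\theta\sigma_3}$ itself. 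The ratio of $\widehat{\mathbb{E}}$ with this combined global parametrix satisfies a small-norm RHP, equivalent to a singular integral equation of the form $(I-\mathcal{C})\mu=f$ whose operator norm is strictly less than one on bounded sets of $(x,t)$, so the Neumann series converges. The main obstacle, which is precisely the deep content of \cite{Claeys Vanlessen}, is the \emph{vanishing lemma}: one must show, uniformly for all $(x,t)\in\mathbb{R}^2$, that the homogeneous singular integral operator has trivial kernel; this is the pole-freeness statement. The argument exploits the Schwarz-reflection symmetry of the jumps, which allows one to pair $\widehat{\mathbb{E}}$ with its reflected dual and to show via a contour integration along $\Sigma$ that the $L^2$ pairing reduces to a manifestly non-negative quadratic form, vanishing only on the trivial solution.

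\emph{Asymptotic expansion and smoothness.} Once solvability holds for all $(x,t)$, the Plemelj representation of $\widehat{\mathbb{E}}$ depends on $(x,t)$ only through $\e^{\theta\sigma_3}$, and $\theta(x,t;\lambda)$ is polynomial (hence entire) in $(x,t)$. Formal differentiation of the integral equation in $x$ or $t$ yields an inhomogeneous equation with the same small-norm operator on the left, producing $C^\infty$ dependence of $\widehat{\mathbb{E}}$ on $(x,t)$. For \eqref{asserEhat}, I iterate the Neumann series and expand the Cauchy kernel geometrically, $\tfrac{1}{s-\lambda}=-\sum_{j\geq 1}s^{j-1}\lambda^{-j}$, truncating at order $J$; each coefficient $a_j,b_j,c_j,d_j$ is an explicit moment of the density and is $C^\infty$ in $(x,t)$, and the uniformity in $\arg\lambda\in[-\pi,\pi]$ comes from the exponential decay of the deformed jumps off the branch cuts together with the polynomial small-norm bound on the remaining compact piece.
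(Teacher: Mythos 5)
First, note that the paper does not prove this theorem at all: it is imported verbatim from \cite{Claeys Vanlessen}, and the closest thing to an in-house proof is Theorem \ref{teor_ExistF} in Section \ref{sect_RHP_exist}, which establishes solvability of the more general RHP \ref{RHFhat} (reducing to $\widehat{\mathbb{E}}$ when $r\equiv 0$, $a\equiv 1$). Measured against that proof, your uniqueness argument is the standard one and is fine (one small slip: the asymptotic prefactor $\frac{1}{\sqrt2}\lambda^{-\sigma_3/4}(\sigma_3+\sigma_1)\e^{\theta\sigma_3}$ has determinant $-1$, so $\det\widehat{\mathbb{E}}\equiv-1$, not $1$; this does not affect the conclusion). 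Your treatment of smoothness and of the expansion \eqref{asserEhat} via differentiating the SIE and expanding the Cauchy kernel also matches what the paper does for $\widehat P$, $\widehat N$ and $M$.

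The genuine gap is in your existence step. You propose to deform the contour, install an outer parametrix and a local model at the origin, and then close a small-norm Neumann series ``on bounded sets of $(x,t)$.'' There is no small parameter for fixed finite $(x,t)$: the deformed jumps are not uniformly close to $I$ (in particular the jump on $\rho$ is the constant matrix $\begin{pmatrix}0&-\i\\-\i&0\end{pmatrix}$, which never decays to $I$ as $\lambda\to-\infty$), and the ``relevant Painlev\'e-type model problem'' at the origin is essentially RHP \ref{RHEhat} itself, so the construction is circular. Solvability for \emph{all} $(x,t)$ is exactly the pole-free statement and cannot be obtained by a contraction argument; the route used in \cite{Claeys Vanlessen} and reproduced in the proof of Theorem \ref{teor_ExistF} is Zhou's scheme: show $I-\mathcal{C}_{J}$ is Fredholm (via a pseudo-inverse and rational approximation of the jump), show its index is zero (continuity of the index along $I-s\,\mathcal{C}_J$), and then kill the kernel by a vanishing lemma. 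You do name the vanishing lemma, but you subordinate it to the small-norm framework (where it would be unnecessary) and you omit its second half: after the positivity argument $\int A_+\,\ol{A_+^T}+\mathrm{h.c.}=0$ forces one column to vanish, one is left with an \emph{entire} scalar function with uniform decay of the type \eqref{f_vanishing}, and this must be shown to vanish identically via Carlson's theorem (Proposition \ref{prop_vanish}); this step is unavoidable precisely because the jump across $\rho$ is not exponentially close to the identity.
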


\begin{corollary}
The solution $\mathbb{E}$ of the RHP \ref{RHE}, p. \pageref{RHE} is related to $\widehat{\mathbb{E}}$ as follows:
\begin{equation}\label{E_Ehat_relation}\mathbb{E}(x,t;\lambda)=\begin{pmatrix}1 & 0 \\ b_1(x,t) & 1\end{pmatrix}\widehat{\mathbb{E}}(x,t;\lambda),\end{equation}
and hence admits the uniform asymptotic expansion
\begin{equation}\label{asserE}\begin{split}&\mathbb{E}(x,t;\lambda)\hskip-0.5mm=\hskip-0.5mm\frac{\lambda^{-\sigma_3/4}}{\sqrt{2}}\begin{pmatrix}1 & 1\\1 & -1 \hskip-0.5mm \end{pmatrix}
\hskip-2mm\(\hskip-0.5mmI\hskip-0.5mm+\hskip-0.5mm\frac{b_1(x,t)\sigma_3}{\sqrt{\lambda}}+
\sum\limits_{j\geq 1}\frac{a_j\hskip-0.5mm+\hskip-0.5mmd_j\hskip-0.5mm+\hskip-0.5mm
b_1b_j}{2\lambda^j}\begin{pmatrix}1 & 0\\0 & 1\end{pmatrix}
\hskip-0.5mm+\hskip-0.5mm
\sum\limits_{j\geq 1}\frac{a_j\hskip-0.5mm-\hskip-1mmd_j\hskip-1mm-\hskip-1mmb_1b_j}{2\lambda^j}\begin{pmatrix}0 & 1\\1 & 0\end{pmatrix}
\right.
\\
&
\qquad \qquad \left.
+\sum\limits_{j\geq 1}\frac{b_1a_j+c_j+b_{j+1}}{2\lambda^{j+\frac12}}\begin{pmatrix}1 & 0\\0 & -1\end{pmatrix}
+\sum\limits_{j\geq 1}\frac{b_1a_j+c_j-b_{j+1}}{2\lambda^{j+\frac12}}\begin{pmatrix}0 & 1\\-1 & 0\end{pmatrix}
\)\cdot\e^{\theta(x,t;\lambda)\sigma_3}=
\\
&
\frac{\lambda^{-\frac{\sigma_3}{4}}}{\sqrt{2}}\hskip-0.5mm\begin{pmatrix}
                                          1+\dfrac{1}{\sqrt{\lambda}}\sum\limits_{j\geq1}\dfrac{b_j}{\lambda^{j-1}}+\sum\limits_{j\geq 1}\dfrac{a_j}{\lambda^j} & 
1-\dfrac{1}{\sqrt{\lambda}}\sum\limits_{j\geq 1}\dfrac{b_j}{\lambda^{j-1}}+\sum\limits_{j\geq 1}\dfrac{a_{j}}{\lambda^{j}} 
\\ 
\\
\hskip-1mm1\hskip-1mm+\hskip-1mm\dfrac{1}{\sqrt{\lambda}}
\hskip-1mm
\(
\hskip-1mm
b_1
\hskip-1mm+
\hskip-1mm
\sum\limits_{j\geq 1}
\hskip-1mm\dfrac{b_1a_j\hskip-0.5mm+\hskip-0.5mmc_j}{\lambda^j}
\hskip-1mm
\)\hskip-1mm+
\hskip-1mm
\sum\limits_{j\geq 1}
\hskip-1mm
\dfrac{b_1b_j\hskip-0.5mm+\hskip-0.5mm
d_j}{\lambda^j} 
& 
-1\hskip-0.5mm+\hskip-1mm\dfrac{1}{\sqrt{\lambda}}\hskip-0.5mm
\(b_1\hskip-0.5mm+\hskip-1mm
\sum\limits_{j\geq 1}\hskip-1mm\dfrac{b_1a_j+c_j}{\lambda^j}
\hskip-1mm\)
\hskip-1mm-\hskip-1mm
\sum\limits_{j\geq 1}\hskip-1mm\dfrac{d_j+b_1b_j}{\lambda^j}
\end{pmatrix}\hskip-0.5mm\e^{\theta\sigma_3}.
\end{split}\end{equation}

\end{corollary}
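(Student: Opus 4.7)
The central observation is that $\mathbb{E}$ and $\widehat{\mathbb{E}}$ satisfy exactly the same jump conditions on $\Sigma.$ Setting $R(x,t;\lambda):=\mathbb{E}(x,t;\lambda)\widehat{\mathbb{E}}(x,t;\lambda)^{-1},$ we have $R_+=\mathbb{E}_-J_{\mathbb{E}}J_{\mathbb{E}}^{-1}\widehat{\mathbb{E}}_-^{-1}=R_-$ on every component of $\Sigma,$ so $R$ extends analytically across $\Sigma$ to an entire matrix-valued function of $\lambda.$ The whole task reduces to identifying $R$ from the behaviour at infinity.

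To carry out that identification, I would combine the asymptotics \eqref{eq_3a_E_asymp} of $\mathbb{E}$ with the expansion \eqref{asserEhat} of $\widehat{\mathbb{E}}.$ The exponential factors $\e^{\pm\theta\sigma_3}$ cancel and one obtains
$$R(\lambda)=\frac{\lambda^{-\sigma_3/4}}{\sqrt{2}}\begin{pmatrix}1 & 1\\ 1 & -1\end{pmatrix}\(I+\frac{\widetilde b(x,t)\,\sigma_3}{\sqrt{\lambda}}+\mathcal{O}(\lambda^{-1})\)\frac{1}{\sqrt{2}}\begin{pmatrix}1 & 1\\ 1 & -1\end{pmatrix}\lambda^{\sigma_3/4}\(I+\mathcal{O}(\lambda^{-1})\).$$
Two algebraic identities now do all the work: the matrix $\frac{1}{\sqrt{2}}\begin{pmatrix}1 & 1\\ 1 & -1\end{pmatrix}$ squares to $I$ and conjugates $\sigma_3$ to $\sigma_1,$ while $\lambda^{-\sigma_3/4}\sigma_1\lambda^{\sigma_3/4}=\sigma_1\lambda^{\sigma_3/2}=\begin{pmatrix}0 & \lambda^{-1/2}\\ \lambda^{1/2} & 0\end{pmatrix}.$ Substituting and discarding terms that vanish at infinity yields
$$R(\lambda)=\begin{pmatrix}1 & 0\\ \widetilde b(x,t) & 1\end{pmatrix}+\mathcal{O}(\lambda^{-1/2})\quad\mbox{as }\lambda\to\infty.$$
Since $R-\begin{pmatrix}1 & 0\\ \widetilde b & 1\end{pmatrix}$ is entire and tends to $0$ at infinity, Liouville's theorem forces it to be identically zero, so $R\equiv\begin{pmatrix}1 & 0\\ \widetilde b & 1\end{pmatrix}.$ Denoting $b_1(x,t):=\widetilde b(x,t),$ this is precisely \eqref{E_Ehat_relation}.

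For the expansion \eqref{asserE} I would substitute \eqref{asserEhat} into \eqref{E_Ehat_relation} and carry out two bookkeeping computations. First, the coefficient matrices combine as
$$\begin{pmatrix}1 & 0\\ b_1 & 1\end{pmatrix}\begin{pmatrix}a_j & b_j\\ c_j & d_j\end{pmatrix}=\begin{pmatrix}a_j & b_j\\ b_1a_j+c_j & b_1b_j+d_j\end{pmatrix}.$$
Second, to move the resulting left factor past $\frac{\lambda^{-\sigma_3/4}}{\sqrt{2}}\begin{pmatrix}1 & 1\\ 1 & -1\end{pmatrix},$ one applies the universal rule, valid for any constant $\begin{pmatrix}\alpha & \beta\\ \gamma & \delta\end{pmatrix},$
$$\begin{pmatrix}\alpha & \beta\\ \gamma & \delta\end{pmatrix}\cdot\frac{\lambda^{-\sigma_3/4}}{\sqrt{2}}\begin{pmatrix}1 & 1\\ 1 & -1\end{pmatrix}=\frac{\lambda^{-\sigma_3/4}}{\sqrt{2}}\begin{pmatrix}1 & 1\\ 1 & -1\end{pmatrix}\cdot\frac{1}{2}\begin{pmatrix}\alpha+\delta+\beta\lambda^{1/2}+\gamma\lambda^{-1/2} & \alpha-\delta-\beta\lambda^{1/2}+\gamma\lambda^{-1/2}\\ \alpha-\delta+\beta\lambda^{1/2}-\gamma\lambda^{-1/2} & \alpha+\delta-\beta\lambda^{1/2}-\gamma\lambda^{-1/2}\end{pmatrix}.$$
Applied entrywise to the matrix coefficients listed above and collecting the integer and half-integer powers of $\lambda^{-1},$ this reproduces the four sums in \eqref{asserE} and, in particular, identifies the leading $\lambda^{-1/2}$ correction as $b_1\sigma_3/\sqrt{\lambda}.$

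\textbf{Main obstacle.} The only delicate step is the asymptotic identification of $R,$ because the conjugation by $\lambda^{\pm\sigma_3/4}$ inflates off-diagonal entries by $\lambda^{\pm 1/2}:$ one must verify that the $\mathcal{O}(\lambda^{-1})$ remainders coming from both $\mathbb{E}$ and $\widehat{\mathbb{E}}$ do not produce dangerous $\mathcal{O}(\lambda^{1/2})$ contributions after conjugation, and do decay as claimed. Once this check is made, the remainder of the argument is straightforward matrix algebra.
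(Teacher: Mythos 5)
Your proof is correct and follows essentially the same route as the paper: the paper's own justification is the remark after RHP \ref{RHEhat}, which observes that the ratio of two solutions with the same jumps is entire and, by the asymptotics at infinity, a constant lower-triangular unipotent matrix, after which \eqref{asserE} is pure bookkeeping with the conjugation identity you state. The only presentational slip is "denoting $b_1:=\widetilde b$": since $b_1$ is already fixed as the $(1,2)$-entry of the first coefficient in \eqref{asserEhat}, the equality $\widetilde b=b_1$ must be \emph{verified}, which your second computation in fact does (requiring the $\lambda^{-1/2}$ term of the recombined expansion to be proportional to $\sigma_3$ forces the lower-left entry of $R$ to equal $b_1$).
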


\begin{remark}
{\textbf{Symmetry.}} Let us notice that the jump matrix $J_{\mathbb{E}}$ satisfies the symmetry
$$J_{\mathbb{E}}(\lambda) = \ol{J^{-1}_{\mathbb{E}}(\ol\lambda)},$$
which implies
\begin{equation}\label{E_sym} \mathbb{E}(x,t;\lambda)=\ol{\mathbb{E}(x,t;\ol\lambda)},
 \quad 
 \widehat{\mathbb{E}}(x,t;\lambda)=\ol{\widehat{\mathbb{E}}(x,t;\ol\lambda)},
\end{equation}
and hence all the coefficients $a_j(x,t),$ $b_j(x,t),$ $c_j(x,t),$ $d_j(x,t)$ in expansion \eqref{asserEhat}
are real.
\end{remark}

 \vskip 0.5 cm
\noindent \clu{Furthermore}, since the jumps of $\mathbb{E}$ are independent of $x,t,\lambda,$
we have  
\begin{equation}\label{x-eq_t-eq_lambda-eq}{\mathbb{E}}_x{\mathbb{E}}^{-1}=:\mathfrak{U}(x,t;\lambda),
\qquad {\mathbb{E}}_t{\mathbb{E}}^{-1}=:\mathfrak{V}(x,t;\lambda),
\qquad {\mathbb{E}}_{\lambda}{\mathbb{E}}^{-1} =\mathfrak{W}(x,t;\lambda),
\end{equation}
where $\mathfrak{U}(x,t;\lambda),$ $\mathfrak{V}(x,t;\lambda),$ $\mathfrak{W}(x,t;\lambda)$ are polynomials in $\lambda.$
Substituting asymptotic series \eqref{asserEhat}, \eqref{E_Ehat_relation} into \eqref{x-eq_t-eq_lambda-eq}, and taking into account that 
$\det\mathbb{E}=-1,$ one obtains
\begin{equation}\label{UfrakVfrak}
 \mathfrak{U}(x,t;\lambda)=:\begin{pmatrix}0 & 1 \\ \lambda - 2 U(x,t) & 0\end{pmatrix},
 \quad
\mathfrak{V}(x,t;\lambda) = \begin{pmatrix}
 \frac{U_x}{6} & \frac{-\lambda-U}{3} \\
 \frac{-\lambda^2}{3}+\frac{U\lambda}{3}+\frac{4U^2+U_{xx}}{6} & \frac{-U_x}{6}
\end{pmatrix}
\end{equation}
\begin{equation}\label{Wfrak}
 \mathfrak{W}(x,t;\lambda)=\begin{pmatrix}
  \frac{-U_x\lambda}{60}-\(\frac{1}{20}UU_x+\frac{1}{240}U_{xxx}\) & \hskip-3mm\frac{\lambda^2}{30}+\frac{U\lambda}{30}+
  \frac{-60t+6U^2+U_{xx}}{120}
  \\\\
  \frac{\lambda^3}{30}-\frac{U\lambda^2}{30}+\frac{\(-60t-2U^2-U_{xx}\)\lambda}{120}+\(x+\frac{U^3}{15}-\frac{U_x^2}{120}
  +\frac{UU_{xx}}{60}\) & \hskip-3mm\frac{U_x\lambda}{60}+\frac{12 UU_x+U_{xxx}}{240}
 \end{pmatrix}
\end{equation}
and some of the coefficients $a_j, b_j, c_j, d_j$ in \eqref{asserEhat}  are related to the function $U=U(x,t)$ in the following way:
\begin{equation}\label{U_a1b1}U(x,t)=2a_1-b_1^2=-b_{1x},\quad U_x:=2(3a_1b_1-b_1^3-b_2+c_1),\end{equation}
\begin{equation}\label{uvpq2}
x+\frac{U^3}{15}-\frac{U_x^2}{120}+\frac{UU_{xx}}{60}=\frac{-2a_1^2+a_2+4a_1b_1^2-b_1^4-2b_1b_2+b_1c_1-d_2+3b_{1t}}{3},
\end{equation}
\begin{equation}\label{b1}
 b_1=\frac{1}{480}\(240t U^2-20U^4-20U(24x+U_x^2)+U_{xx}^2-2U_xU_{xxx}\).
\end{equation}

\noindent 
Hence, asymptotic series \eqref{asserEhat}, \eqref{asserE} develop into
\begin{equation}\label{asserEhat_dev}
\widehat{\mathbb{E}}(x,t;\lambda)=\(I+\begin{pmatrix}\frac12\(U+b_1^2\)&b_1(x,t)\\c_1(x,t)&-\frac12\(U+b_1^2\)
\end{pmatrix}\lambda^{-1}+\mathcal{O}(\lambda^{-2})\)\frac{\lambda^{-\sigma_3/4}}{\sqrt{2}}\begin{pmatrix}
                                      1&1\\1&-1
                                     \end{pmatrix}\e^{\theta(x,t;\lambda)\sigma_3},\end{equation}
\begin{equation}\label{asserE_dev}\mathbb{E}(x,t;\lambda)=\frac{\lambda^{-\sigma_3/4}}{\sqrt{2}}\begin{pmatrix}1 & 1\\1 & -1\end{pmatrix}
\(I+\frac{b_1(x,t)\sigma_3}{\sqrt{\lambda}}+
\frac{1}{2\lambda}\begin{pmatrix}b_1^2 & U\\U & b_1^2\end{pmatrix}
+\mathcal{O}(\lambda^{-3/2})\)\cdot\e^{\theta(x,t;\lambda)\sigma_3},
\end{equation}
which allows one to reconstruct $U(x,t),$ once $\mathbb{E}$ or $\widehat{\mathbb{E}}$ are known.

The consistency condition of the system
$$\begin{cases}
 \mathbb{E}_x=\mathfrak{U}\mathbb{E},\\\mathbb{E}_t=\mathfrak{V}\mathbb{E},
\end{cases}
$$ i.e. $$\mathfrak{U}_t-\mathfrak{V}_x+[\mathfrak{U},\mathfrak{V}]=0,$$
gives that $U(x,t)$ satisfies the KdV equation \eqref{KdV},
and the 
consistency condition of the system
$$\begin{cases}
 \mathbb{E}_x=\mathfrak{U}\mathbb{E},\\\mathbb{E}_{\lambda}=\mathfrak{W}\mathbb{E},
\end{cases}
$$ i.e. $$\mathfrak{U}_{\lambda}-\mathfrak{W}_x+[\mathfrak{U},\mathfrak{W}]=0,$$
gives that $U(x,t)$ satisfies the P$_I^2$ equation \eqref{PI2}.

\vskip1cm 
\begin{remark}
In the notations of \cite{Grava Kapaev Klein}, $H_1=-b_1,$ $(H_1(x,t))_x=U(x,t),$
\[\mathbb{E}(\lambda) = \begin{pmatrix}1&0\\-H_1&1\end{pmatrix}\left(I+\frac{1}{\lambda}\begin{pmatrix}
                   \frac{H_1^2+U}{2}&-H_1\\-\frac{H_0}{3}\hskip-1mm+\hskip-1mm\frac{H_1^3}{3}\hskip-1mm+\hskip-0.5mm
                   H_1 U \hskip-0.5mm +\hskip-0.5mm \frac{U_x}{4}&\frac{H_1^2+U}{-2}
                  \end{pmatrix}\hskip-1mm+\hskip-1mm
                  \mathcal{O}(\lambda^{-2})\right)
\hskip-1mm\lambda^{-\frac{\sigma_3}{4}}
\hskip-1mm\begin{pmatrix}
 1&1\\1&-1
\end{pmatrix}\hskip-1mm
\frac{e^{\theta\sigma_3}}{\sqrt{2}}.
\]
\end{remark}
\vskip1cm 
\noindent It was obtained in \cite{Claeys Vanlessen} that
\begin{theorem}\label{theor_U_xlarge}\cite{Claeys Vanlessen}
 \begin{itemize}
  \item $U(x,t)$ is real-valued and pole-free for $x,t\in\mathbb{R},$
  \item for fixed $t\in\mathbb{R},$ $U(x,t)$ has the following asymptotic behavior:
  $$U(x,t)=\frac{z_0(x,t)}{2}\sqrt[3]{|x|}
  \hskip-0.5mm
  +\hskip-0.5mm
  \mathcal{O}(x^{-2})\hskip-0.5mm
  =\hskip-0.5mm
  \sqrt[3]{-6x}
  \hskip-0.5mm+\hskip-0.5mm\dfrac{2t}{\sqrt[3]{-6x}}
  \hskip-0.5mm+
  \hskip-0.5mm
  \dfrac{8}{3\hskip-0.5mm\cdot\hskip-0.5mm\(6x\)^{5/3}}
  \hskip-0.5mm
  +
  \hskip-0.5mm
  \mathcal{O}(|x|^{-2}),\  x\to\pm\infty,$$
 where $z_0(x,t)$ is the real solution of 
 $$z_0^3=-48\mathrm{sgn}(x)+24\frac{z_0 t}{|x|^{2/3}}.$$
  \end{itemize}
\end{theorem}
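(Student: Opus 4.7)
My plan is to split the statement into two parts: the global regularity claim (that $U(x,t)$ is real and pole-free on all of $\mathbb{R}^2$) and the explicit asymptotic expansion as $|x|\to\infty$. Both will be obtained from RHP \ref{RHE} together with the reconstruction formula encoded in the expansion \eqref{asserE_dev}, in the spirit of the steepest-descent analysis of Claeys--Vanlessen.

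Reality is immediate from the Schwarz symmetry $\mathbb{E}(x,t;\lambda)=\overline{\mathbb{E}(x,t;\bar\lambda)}$ noted in \eqref{E_sym}: it forces the expansion coefficients $a_j,b_j,c_j,d_j$ in \eqref{asserEhat} to be real, and in particular $U=2a_1-b_1^2$ is real via \eqref{U_a1b1}. Pole-freeness is equivalent to unique solvability of RHP \ref{RHE} (equivalently of RHP \ref{RHEhat}) for every $(x,t)\in\mathbb{R}^2$. I would prove this by a Zhou-type vanishing lemma: reduce to the associated homogeneous problem with decay at infinity and pair a candidate solution $\mathbb{E}^{\mathrm{hom}}$ with its Schwarz reflection $\overline{\mathbb{E}^{\mathrm{hom}}(\bar\lambda)}^{T}$. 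The symmetry $J_{\mathbb{E}}(\lambda)=\overline{J_{\mathbb{E}}^{-1}(\bar\lambda)}$ together with the specific off-diagonal structure of $J_{\mathbb{E}}$ on $\rho=(-\infty,0)$ promotes the pairing to a non-negative quadratic form on $\Sigma$, which must vanish identically, forcing $\mathbb{E}^{\mathrm{hom}}\equiv 0$.

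For the asymptotic expansion I would apply Deift--Zhou steepest descent in the regime $|x|\to\infty$, $t$ fixed. Critical points of the phase $\theta(\lambda)=\tfrac{1}{105}\lambda^{7/2}-\tfrac{t}{3}\lambda^{3/2}+x\lambda^{1/2}$ solve $\theta_\lambda=0$, which, after clearing $\lambda^{1/2}$, reads $\lambda^{3}-15 t\lambda + 15 x=0$; at leading order $\lambda_\ast\sim(-15x)^{1/3}$. The natural rescaling $\lambda=|x|^{1/3}\mu$ identifies $\mu_\ast$ with (a multiple of) the real root $z_0$ of the cubic in the theorem, and the corrections in the cubic $z_0^3=-48\,\mathrm{sgn}(x)+24\, z_0 t/|x|^{2/3}$ are precisely the next terms in the Newton iteration of the full critical-point equation. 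The programme is then: (i) introduce a $g$-function that interchanges the growing and decaying sectors of $\e^{\theta\sigma_3}$ so that the jumps on the deformed contour become uniformly bounded; (ii) open lenses along the steepest-descent curves through $\lambda_\ast$ and $\bar\lambda_\ast$; (iii) construct the outer parametrix, a constant multiple of $\tfrac{1}{\sqrt{2}}(\sigma_3+\sigma_1)\lambda^{-\sigma_3/4}\e^{\theta\sigma_3}$; (iv) build a local parametrix near each saddle in terms of standard special functions; (v) show that the residual RHP has small norm with error $\mathcal{O}(|x|^{-2})$. Extracting $b_1(x,t)$ from the $\lambda^{-1/2}$ term in \eqref{asserE_dev} and using $U=-b_{1,x}$ from \eqref{U_a1b1} then yields the stated expansion.

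The main obstacle is step (iv): constructing a local parametrix that, jointly with the outer parametrix, reproduces the subleading coefficients $\tfrac{2t}{\sqrt[3]{-6x}}$ and $\tfrac{8}{3(6x)^{5/3}}$ exactly, and matching it to the outer solution with precision $\mathcal{O}(|x|^{-2})$. Equivalently, one has to iterate the saddle-point equation $z_0^3=-48\,\mathrm{sgn}(x)+24\,z_0 t/|x|^{2/3}$ to sufficient order in $|x|^{-2/3}$ and track how these corrections propagate through the matching; this is where the specific $P_I^{2}$ structure of $\theta$---as opposed to a pure cubic phase of Airy type---enters most delicately, and where I would expect to have to rely most heavily on the explicit computations of \cite{Claeys Vanlessen}.
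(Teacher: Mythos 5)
Your overall architecture is the same as the paper's: Schwarz symmetry for reality, a vanishing-lemma argument for unique solvability (the pole-free part is anyway taken from \cite{Claeys Vanlessen}), and a Deift--Zhou analysis with a $g$-function, the rescaling $\lambda=|x|^{1/3}\zeta$, an Airy local parametrix and a small-norm estimate, exactly as in the sketch the paper gives after Lemma \ref{lemma_E_rl}. However, there is a concrete error in the step that produces the cubic for $z_0$. You identify that cubic with the stationary-phase equation $\theta_\lambda=0$, which after clearing $\lambda^{1/2}$ reads $\lambda^3-15t\lambda+15x=0$. The cubic in the theorem is a different one: $\lambda_0^3-24t\lambda_0+48x=0$, i.e.\ equation \eqref{lambda0}. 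In the paper's construction $\lambda_0$ is \emph{not} a saddle point of $\theta$; it is the branch point of the one-cut $g$-function \eqref{g}, and it is fixed by the normalization condition that $g(\lambda)-\theta(\lambda)=h_1\lambda^{-1/2}+\mathcal{O}(\lambda^{-3/2})$ as $\lambda\to\infty$, i.e.\ that the coefficients of $\lambda^{7/2},\lambda^{5/2},\lambda^{3/2},\lambda^{1/2}$ in $g$ agree with those of $\theta$; the $\lambda^{1/2}$ matching is precisely $\lambda_0^3-24t\lambda_0+48x=0$. The distinction is not cosmetic: the reconstruction gives $U=\tfrac{\lambda_0}{2}+2\mathbb{E}_{err}^{1,11}-(\mathbb{E}_{err}^{1,12})^2=\tfrac{\lambda_0}{2}+\mathcal{O}(|x|^{-2})$, so with your $\lambda_*\sim(-15x)^{1/3}$ the leading term would come out as $\tfrac12(-15x)^{1/3}\neq(-6x)^{1/3}$. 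Consequently your claim that the corrections in the theorem's cubic are ``the next terms in the Newton iteration of the full critical-point equation'' is false: the two cubics have genuinely different roots, and no iteration turns one into the other.

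Two further points. For fixed $t$ and $x\to\pm\infty$ there is a single \emph{real} turning point $\lambda_0$ and a single Airy-type local parametrix (the function $Z$ of \eqref{Z} conjugated by $B(z)$); the pair of complex-conjugate saddles you propose to dress with local parametrices does not occur in this regime --- the complex zeros of $g'$ lie off the deformed contour and play no role. And to reach the stated $\mathcal{O}(x^{-2})$ remainder you cannot stop at a generic small-norm bound of size $\mathcal{O}(|x|^{-7/6})$: the paper exploits the refined entrywise structure of the Airy asymptotics, which yields the jump estimate \eqref{JEerr} with diagonal error $\mathcal{O}(|x|^{-7/3})$ and off-diagonal errors $\mathcal{O}(|x|^{-4/3})$, $\mathcal{O}(|x|^{-1})$, hence $\mathbb{E}_{err}^{1,11}=\mathcal{O}(|x|^{-2})$ and $\mathbb{E}_{err}^{1,12}=\mathcal{O}(|x|^{-1})$, which is exactly what makes $U-\tfrac{\lambda_0}{2}=\mathcal{O}(|x|^{-2})$. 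Your step (v) as written would only deliver an $\mathcal{O}(|x|^{-7/6})$ error for $U$, which is weaker than the theorem.
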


\subsection{Jost solutions associated with $U(x,t).$}\label{sect_Elr}
It follows from the RHP \ref{RHE} that the Jost solutions $E_l, E_r$ defined in \eqref{Eluldr_from_E} satisfy the relation
\begin{equation}\label{Eluldr_relation}E_{lu}(x,t;\lambda)=E_{ld}(x,t;\lambda)-\i E_r(x,t\lambda),\quad E_{r}(x,t;\lambda)=\i E_{lu}(x,t;\lambda)-\i E_{ld}(x,t\lambda).\end{equation}

 Large $\lambda$ behavior of $E_l,$ $E_r$ follows from formulas \eqref{asserEhat}, \eqref{E_Ehat_relation}, \eqref{asserE}, \eqref{asserEhat_dev}, \eqref{asserE_dev}.
The following lemma, which \clu{refines} Lemma \ref{lem_prel}, gives us large $x$ behavior of $E_l, E_r.$


\begin{lem}\label{lemma_E_rl}{[Properties of $E_r$, $E_l$ as $x\to\pm\infty.$]}
\begin{enumerate}\item Let $t\in\mathbb{R}, \lambda\in\mathbb{C}$ be fixed, and $x\to+\infty.$ Then 
$$(E_l, E_r)=\frac{1}{\sqrt{2}}\begin{pmatrix}1+\mathcal{O}(|x^{\frac{-7}{3}}|) & \mathcal{O}(|x^{\frac{-4}{3}}|) \\ \mathcal{O}(|x^{-1}|) & 1+\mathcal{O}(|x^{\frac{-7}{3}}|)\end{pmatrix}
(\lambda-\lambda_0)^{\frac{-\sigma_3}{4}}\begin{pmatrix}1 & 1 \\ 1 & -1\end{pmatrix}\e^{g\sigma_3}$$
$$=\frac{1}{\sqrt{2}}(\lambda-\lambda_0)^{\frac{-\sigma_3}{4}}\begin{pmatrix}1 & 1 \\ 1 & -1\end{pmatrix}\(I+\mathcal{O}(|x|^{\frac{-7}{6}})\)\e^{g\sigma_3}.$$

\item Let $t\in\mathbb{R}, \lambda\in(\mathbb{C}\setminus\mathbb{R})\cup(\mathbb{R}+\i 0)\cup(\mathbb{R}-\i 0)$ be fixed, and $x\to-\infty.$ Then 

$$(E_l, -\i\cdot \mathrm{sgn}\Im\lambda\cdot\ol{E_l(\ol{\lambda})})=
\left\{\begin{array}{ccc}(E_{lu}, -\i E_{ld}),\ \Im\lambda>0,\\(E_{ld}, \i E_{lu}), \Im\lambda<0\end{array}\right\}
$$
$$=
\frac{1}{\sqrt{2}}\begin{pmatrix}1+\mathcal{O}(|x^{\frac{-7}{3}}|) & \mathcal{O}(|x^{\frac{-4}{3}}|) \\ \mathcal{O}(|x^{-1}|) & 1+\mathcal{O}(|x^{\frac{-7}{3}}|)\end{pmatrix}
(\lambda-\lambda_0)^{\frac{-\sigma_3}{4}}\begin{pmatrix}1 & 1 \\ 1 & -1\end{pmatrix}\e^{g\sigma_3}$$
$$
=
\frac{1}{\sqrt{2}}(\lambda-\lambda_0)^{\frac{-\sigma_3}{4}}\begin{pmatrix}1 & 1 \\ 1 & -1\end{pmatrix}\(I+\mathcal{O}(|x|^{\frac{-7}{6}})\)\e^{g\sigma_3}.$$

\item $\ol{E_{l}(x,t;\ol\lambda)}=E_l(x,t;\lambda);$
$\ol{E_{r}(x,t;\ol\lambda)}=E_r(x,t;\lambda).$
\item $\det(E_l,E_r)=W\left\{e_l, e_r\right\}=-1,\quad \det(E_{lu}, E_{ld})=W\left\{e_{lu}, e_{ld}\right\}=-\i.$

\end{enumerate}
\noindent Here
\begin{equation}\label{g} g=g(x,t;\lambda) =
\frac{1}{105}\left(\lambda-\lambda_0\right)^{7/2}+\frac{\lambda_0}{30}\(\lambda-\lambda_0\)^{5/2}+
\frac{\lambda_0^2-8t}{24}\(\lambda-\lambda_0\)^{3/2},\end{equation}
where $\lambda_0=\lambda_0(x,t)$ is the solution of the equation
\begin{equation}\label{lambda0}\lambda_0^3-24t\lambda_0+48x=0,\end{equation} which is fixed for
$x\to\pm\infty$ by the condition that $\lambda_0(x,t)$ is real for
real $x,t.$

\end{lem}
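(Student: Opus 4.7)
The proof naturally splits into direct consequences of the RHP structure (parts 3 and 4) and large-$|x|$ asymptotic analysis of RHP \ref{RHE} (parts 1 and 2). For parts 3 and 4, every jump matrix in RHP \ref{RHE} has unit determinant and the leading term of \eqref{asserE_dev} has determinant $-1$, so $\det\mathbb{E}\equiv-1$ throughout the plane. Combined with the column identifications \eqref{Eluldr_from_E}, this yields $\det(E_l,E_r)=-1$, and the relation $E_r=\i E_{lu}-\i E_{ld}$ from \eqref{Eluldr_relation} converts this into $\det(E_{lu},E_{ld})=-\i$. The Schwarz symmetries $\ol{E_r(x,t;\ol\lambda)}=E_r(x,t;\lambda)$ and $\ol{E_l(x,t;\ol\lambda)}=E_l(x,t;\lambda)$ follow by reading \eqref{E_sym} sector by sector and noting that $\lambda\mapsto\ol\lambda$ swaps the upper- and lower-sector identifications of \eqref{Eluldr_from_E}.

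For parts 1 and 2 I plan a Deift--Zhou nonlinear steepest descent on RHP \ref{RHE}. The key object is the $g$-function \eqref{g}: its shape is dictated by the requirements that (i) its jumps on a deformed contour (with $\rho$ replaced by $(-\infty,\lambda_0]$ and the rays $\gamma_{\pm 3}$ emanating from $\lambda_0$) coincide with the original jumps of $\theta$, and (ii) $g-\theta=\mathcal{O}(\lambda^{-1/2})$ as $\lambda\to\infty$. Expanding $(\lambda-\lambda_0)^{k/2}$ in inverse half-integer powers of $\lambda$, both requirements reduce to the cubic \eqref{lambda0}, which selects $\lambda_0(x,t)$ uniquely with $|\lambda_0|\sim|x|^{1/3}$ ($\lambda_0\to-\infty$ as $x\to+\infty$, $\lambda_0\to+\infty$ as $x\to-\infty$). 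Performing the transformation $\widetilde{\mathbb{E}}:=\mathbb{E}\,\e^{(g-\theta)\sigma_3}$ produces a new RHP with the cleaner normalization $\widetilde{\mathbb{E}}\sim\tfrac{1}{\sqrt{2}}\lambda^{-\sigma_3/4}\bigl(\begin{smallmatrix}1 & 1\\1 & -1\end{smallmatrix}\bigr)\e^{g\sigma_3}$ at infinity.

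A signature analysis of $\Re g$ shows that on the deformed $\gamma_0,\gamma_{\pm 3}$ the new jumps are exponentially close to $I$, while on $(-\infty,\lambda_0]$ one retains the oscillating model jump $\bigl(\begin{smallmatrix}0 & -\i\\-\i & 0\end{smallmatrix}\bigr)$. The global (outer) parametrix
\[P_\infty(x,t;\lambda):=\tfrac{1}{\sqrt{2}}(\lambda-\lambda_0)^{-\sigma_3/4}\begin{pmatrix}1 & 1\\1 & -1\end{pmatrix}\e^{g\sigma_3}\]
solves this model jump exactly and matches the normalization at infinity. Near the turning point $\lambda=\lambda_0$ the WKB ansatz fails and a local Airy parametrix $P_{\mathrm{Airy}}$ must be glued in on a small disk, with local variable $\zeta\propto(\lambda_0^2-8t)^{2/3}(\lambda-\lambda_0)$; standard Airy asymptotics produce a matching $P_{\mathrm{Airy}}P_\infty^{-1}-I$ on the disk boundary which, combined with the exponentially small jumps on the remaining contours, yields the overall size $\mathcal{O}(|x|^{-7/6})$ after optimizing the disk radius against $|\lambda_0|$. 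Setting $R:=\widetilde{\mathbb{E}}\mathscr{P}^{-1}$ for the composite parametrix $\mathscr{P}$, small-norm RHP theory then yields $R=I+\mathcal{O}(|x|^{-7/6})$ uniformly, and unwinding produces the second form of the asymptotic in part 1; the entry-wise rates $\mathcal{O}(|x|^{-7/3}),\mathcal{O}(|x|^{-4/3}),\mathcal{O}(|x|^{-1})$ in the first form are then recovered by re-conjugating the isotropic error by the orthogonal factor $\tfrac{1}{\sqrt{2}}\bigl(\begin{smallmatrix}1 & 1\\1 & -1\end{smallmatrix}\bigr)$ and the diagonal $(\lambda-\lambda_0)^{\sigma_3/4}$, whose entries scale like $|x|^{\pm 1/12}$.

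Part 2 is handled by the same scheme with $\lambda_0>0$: the oscillatory region of $\Re g$ now lies on the opposite side of the turning point, so the column of $\widetilde{\mathbb{E}}$ which decays as $x\to-\infty$ is no longer $E_r$ but the Schwarz-reflected $-\i\,\mathrm{sgn}(\Im\lambda)\,\ol{E_l(\ol\lambda)}$, identified via the block structure \eqref{Eluldr_from_E} in the relevant sectors $\arg\lambda\in(6\pi/7,\pi)\cup(-\pi,-6\pi/7)$. The main obstacle I anticipate is the careful construction of the Airy parametrix at the moving turning point $\lambda_0(x,t)$ and the verification that the matching error is precisely $\mathcal{O}(|x|^{-7/6})$ uniformly in $\lambda\in\mathbb{C}$; this requires tracking how the Airy model's natural error parameter interacts with the $|\lambda_0|\to\infty$ scaling in a way that yields the sharp exponent stated in the lemma.
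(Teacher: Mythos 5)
Your proposal is correct and follows essentially the same route as the paper: a $g$-function steepest-descent analysis of RHP \ref{RHE} with the contour shifted to $\lambda_0$, an outer parametrix $(\lambda-\lambda_0)^{-\sigma_3/4}\frac{\sigma_3+\sigma_1}{\sqrt 2}\e^{g\sigma_3}$, an Airy local parametrix at the turning point, and small-norm/SIE estimates producing exactly the entry-wise rates $\mathcal{O}(|x|^{-7/3}),\mathcal{O}(|x|^{-4/3}),\mathcal{O}(|x|^{-1})$ (the paper handles the $g$-versus-$\theta$ normalization mismatch with the triangular factor $\chi$ rather than the conjugation $\e^{(g-\theta)\sigma_3}$, but this is cosmetic). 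Parts 3 and 4 are likewise obtained, as you do, from the symmetry \eqref{E_sym} and $\det\mathbb{E}=-1$.
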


\begin{remark}
It is convenient to expand $g(x,t;\lambda)$ for large $x\to\pm\infty.$ We have:
\begin{enumerate}
 \item as $x\to-\infty,$ $\lambda_0(x,t)\to+\infty,$ $g=\Re g + \i \Im g,$  where
\[\begin{split}&\Re g=-\mathrm{sgn}\,\Im\lambda\cdot\(\frac{1}{80}\Im(\lambda) \lambda_0^{5/2} +
\frac{1}{192}\Im(\lambda^2)
\lambda_0^{3/2}+\frac{1}{128}\Im(\lambda^3-64t\lambda)\lambda_0^{1/2}+\mathcal{O}(\lambda_0^{\frac{-1}{2}})\),
\\
&\Im g
\hskip-1mm
=
\hskip-1mm
 -\mathrm{sgn}\,\Im\lambda
 \hskip-1mm
 \cdot\hskip-1mm
  \(\frac{1}{56}\lambda_0^{7/2}
  \hskip-1mm
  -
  \hskip-1mm
  \frac{1}{80}\Re\lambda
  \hskip-1mm
  \cdot \hskip-1mm
  \lambda_0^{5/2}
  \hskip-1mm
   -
   \hskip-1mm
    \frac{1}{192}\Re(\lambda^2+64t) \lambda_0^{3/2}\hskip-1mm+\hskip-1mm
\frac{1}{128}\Re(64t\lambda-\lambda^3) 
\lambda_0^{1/2}
\hskip-1mm
+\hskip-1mm
\mathcal{O}(\lambda_0^{-\frac{1}{2}})\),\end{split}\]
where we set $\mathrm{sgn}\Im\lambda=\pm1$ for $\lambda\in\mathbb{R}\pm\i 0,$ 
since $g$ has discontinuity across $\lambda\in(-\infty,\lambda_0];$
\item as $x\to+\infty,$  $\lambda_0(x,t)\to-\infty$ and
$$g= {\frac{(-\lambda_0)^{\frac72}}{56}+\frac{\lambda}{80}\(-\lambda_0\)^{5/2}-\(\frac{t}{3}+
\frac{\lambda^2}{192}\)(-\lambda_0)^{\frac32}+\frac{(\lambda^3-64t\lambda)}{128}(-\lambda_0)^{\frac12}}+\mathcal{O}(\lambda_0^{\frac{-1}{2}}).$$
\end{enumerate}
Lemma \ref{lemma_E_rl} implies Lemma \ref{lem_prel} and shows  that $E_r(x,t;\lambda)$ is rapidly vanishing for any $\lambda\in\mathbb{C}$ as $x\to+\infty.$ On the other hand, 
$$E_{l}=\begin{cases}E_{lu},\Im\lambda>0,\\E_{ld},\Im\lambda<0\end{cases}$$ is rapidly vanishing 
for all $\lambda\in\mathbb{C}\setminus\mathbb{R}$ as $x\to-\infty,$ while 
for $\lambda\in\mathbb{R}\pm\i0,$ the function $E_{l}$ has oscillatory behavior of finite amplitude. 
This observation justifies referring to $E_{l},$ $E_r$ as Jost solutions, 
and the subscripts $_r,$ $_l,$
$_{lu},$ $_{ld}.$
\end{remark}

\begin{proofof}{ \textit{of Lemmas \ref{lem_Fl}, \ref{lem_Fl_exp}, \ref{lem_Fl_comp}.}}
Large $x$ behavior of $E_l,$ $E_r$ can be obtained in the same manner, as large $x$ behavior of $U(x,t),$ see
\cite{Claeys Vanlessen}, \cite{Claeys10}, \cite{Grava Kapaev Klein}). However, for convenience of the reader, we will 
sketch the derivation.

In order to study the asymptotics RHP \ref{RHE}'s solution  for fixed $\lambda$, $t$, and
$x\to\pm\infty,$ first we shift the contours 
$$\gamma_3=(\e^{\frac{6\pi\i}{7}}\infty,0), \ \gamma_{-3}=(\e^{\frac{-6\pi\i}{7}}\infty,0) \quad  \mbox{ to }\quad 
\gamma_3+\lambda_0=(\e^{\frac{6\pi\i}{7}}\infty,\lambda_0),
\ 
\gamma_{-3}+\lambda_0=(\e^{-6\frac{\pi\i}{7}}\infty, \lambda_0),$$ where $\lambda_0$ is defined in \eqref{lambda0}. We call the solution of such a shifted RHP by ${\mathbb{E}}_{\lambda_0}(x,t;\lambda).$

In other words, by $\mathbb{E}_{\lambda_0}$ we denote the solution of
the Riemann-Hilbert problem, which can be obtained from the
Riemann-Hilbert problem for $\mathbb{E}$ by moving the intersection
point of the contour $\Sigma$ from $0$ to $\lambda_0$.

\noindent Next we make a scaling change of the variables:
$$\lambda=:\zeta\sqrt[3]{|x|},\quad \lambda_0=:\zeta_0\sqrt[3]{|x|},
\qquad\zeta_0^3-24\frac{t}{\sqrt[3]{x^2}}\zeta_0+48\mathrm{sgn}(x)=0,$$
then the function $g$ defined in \eqref{g} and the function $\theta$ can be written as
\[g(\lambda) = |x|^{7/6}\(\frac{1}{105}\left(\zeta-\zeta_0\right)^{7/2}+\frac{\zeta_0}{30}\(\zeta-\zeta_0\)^{5/2}+
\frac{\zeta_0^2-\frac{8t}{|x|^{2/3}}}{24}\(\zeta-\zeta_0\)^{3/2}\),
\]
\[\theta(\lambda)=|x|^{\frac76}\(\frac{1}{105}\zeta^{\frac72}-\frac{t}{3|x|^{\frac23}}\zeta^{\frac32}+\mathrm{sgn}\,x\cdot\zeta^{\frac12}\).\]
The signature table (distribution of signs) of the function $\Im g$ is shown in Figure \ref{Fig_Im_g_lambda}.



\begin{figure}[ht]
\center
 \begin{tikzpicture}
  \draw[red,very thick] (-7,0) -- (-3,0);
  \draw[red,very thick] (-3,0) to [out=60, in=205.714] (-1,2);
  \draw[red,very thick] (-7,3) to [out=308.571,in =180](-4.5,1.5) to [out=0, in=257.143](-3,3);
  \draw[blue, dashed] [out=308.571](-7,2.7) to [in=140](-4.35, 0.67) to [out=-40,in=205.714](-1,2.1);
  \draw[blue, dashed] (-7,0.1) to [out=0, in=230](-4.35, 0.67) to [out=50, in=257.143](-2.9,3);
  \draw (-4.35, 0.67) circle [radius=1pt];
  \node at (-2.3,0) {$\zeta_0>0$};  
  \node at (-1.3,0) {{\color{red}\textbf{+}}};  \node at (-4.3,2) {{\color{red}\textbf{+}}};  \node at (-6,1) {{\color{red}\textbf{--}}};  \node at (-2,2) {{\color{red}\textbf{--}}};
\node at (-0.6,2.2){$\frac{\pi}{7}$};\draw (-0.6,2.2) circle[radius=9pt];
\node at (-3.4,3.2){$\frac{3\pi}{7}$};\draw (-3.4,3.2) circle[radius=9pt];
\node at (-6.6,3.2){$\frac{5\pi}{7}$};\draw (-6.6,3.2) circle[radius=9pt];

\draw[red,very thick] (-3,0) to [out=-60, in=-205.714] (-1,-2);
  \draw[red,very thick] (-7,-3) to [out=-308.571,in =-180](-4.5,-1.5) to [out=0, in=-257.143](-3,-3);
  \draw[blue, dashed] [out=-308.571](-7,-2.7) to [in=-140](-4.35, -0.67) to [out=40,in=-205.714](-1,-2.1);
  \draw[blue, dashed] (-7,-0.1) to [out=0, in=-230](-4.35, -0.67) to [out=-50, in=-257.143](-2.9,-3);
  \draw (-4.35, -0.67) circle [radius=1pt];
  \node at (-4.3,-2) {{\color{red}\textbf{+}}};  \node at (-6,-1) {{\color{red}\textbf{--}}};  \node at (-2,-2) {{\color{red}\textbf{--}}};
\node at (-0.6,-2.2){$\frac{-\pi}{7}$};\draw (-0.6,-2.2) circle[radius=9pt];
\node at (-3.4,-3.2){$\frac{-3\pi}{7}$};\draw (-3.4,-3.2) circle[radius=9pt];
\node at (-6.6,-3.2){$\frac{-5\pi}{7}$};\draw (-6.6,-3.2) circle[radius=9pt];

  \draw[red, very thick] (1,0) -- (2.5,0);
\draw [red, very thick](2.5,0) to [out=60, in =308.571] (1,3);
\draw[blue, dashed] (1.1,3) [out=308.571, in =160 ] to (3.6,1.1) [out=-20, in=205.714] to (7,1);
\draw (3.45,1.15) circle[radius=1pt];
\draw [blue, dashed] (4,3) [out=257.143, in=90] to (3.3,0) [out=-90, in=102.857]to (4,-3);
\draw[red, very thick] (4.2,3) [out=257.143, in=205.714] to (7,1.2);
\node at (2,0.2) {$\zeta_0<0$};
\node at (6,2) {{\color{red}\textbf{--}}}; \node at (6,0) {{\color{red}\textbf{+}}};\node at (3,2.5) {{\color{red}\textbf{+}}}; \node at (1.5,1.3) {{\color{red}\textbf{--}}};
\node at (7.4,1.2){$\frac{\pi}{7}$};\draw (7.4,1.2) circle[radius=9pt];
\node at (4,3.4){$\frac{3\pi}{7}$};\draw (4,3.4) circle[radius=9pt];
\node at (0.9, 3.4){$\frac{5\pi}{7}$};\draw (0.9,3.4) circle[radius=9pt];

\draw [red, very thick](2.5,0) to [out=-60, in =-308.571] (1,-3);
\draw[blue, dashed] (1.1,-3) [out=-308.571, in =-160 ] to (3.6, -1.1) [out=20, in=-205.714] to (7,-1);
\draw (3.45,-1.15) circle[radius=1pt];
\draw[red, very thick] (4.2, -3) [out=-257.143, in=-205.714] to (7, -1.2);
\node at (6,-2) {{\color{red}\textbf{--}}}; \node at (3, -2.5) {{\color{red}\textbf{+}}}; \node at (1.5,-1.3) {{\color{red}\textbf{--}}};
\node at (7.4, -1.2){$\frac{-\pi}{7}$};\draw (7.4, -1.2) circle[radius=9pt];
\node at (4, -3.4){$\frac{-3\pi}{7}$};\draw (4, -3.4) circle[radius=9pt];
\node at (0.9, -3.4){$\frac{-5\pi}{7}$};\draw (0.9, -3.4) circle[radius=9pt];

 \end{tikzpicture}



\caption{
Distribution of signs of $\Re g(\lambda):$
for $x\to-\infty$ (on the left),
for $x\to+\infty$ (on the right).
}
Blue dashed lines are separatrices of $\Re g=const\neq0.$ Red lines correspond to $\Re g=0.$
\label{Fig_Im_g_lambda}
\end{figure}

\noindent We see that the position of $\zeta_0=\zeta_0(x,t)$ tends to fixed limits when $x\to\pm\infty.$ 

The next step is to introduce a new local variable $z(\zeta)$ in the vicinity of the point $\zeta_0,$
\[\frac{2}{3}z^{3/2}=g(\lambda),\quad |\zeta-\zeta_0|<R,\quad \mbox{ where } R>0 \quad \mbox{is sufficiently small,}\]
which allows us to define an approximation for 
$$\chi\mathbb{E}_{\lambda_0},$$ where the matrix $\chi$ is to be determined.
Namely, define
\[\mathbb{E}_{\infty}(\zeta) = \begin{cases}
                           \(\zeta-\zeta_0\)^{-\sigma_3/4}|x|^{-\sigma_3/12}\frac{\sigma_3+\sigma_1}{\sqrt{2}}e^{g(\lambda)\sigma_3},\quad & 
                          |\zeta-\zeta_0|>R,\\
               B(z)Z(z(\zeta)),\quad & |\zeta-\zeta_0|<R,
                          \end{cases}
\]
where the matrix-valued function $Z(z)$ is constructed as follows: denote
$$v_1(z)=\sqrt{2\pi}Ai(z),\quad v_2(z)=\sqrt{2\pi}e^{\pi
i/6}Ai(e^{2\pi i/3}\ z),\quad v_3(z)=\sqrt{2\pi}e^{-\pi
i/6}Ai(e^{-2\pi i/3}\ z),$$ then \begin{equation}\label{Z} Z(z)=\begin{cases}
            \begin{pmatrix}
             v_3&v_1\\v_3'&v_1'
            \end{pmatrix},\quad \arg z\in(0,2\pi i/3),\ \qquad
            \begin{pmatrix}
             v_3&-iv_2\\v_3'&-iv_2'
            \end{pmatrix},\quad \arg z\in(2\pi i/3,\pi i),\\
            \begin{pmatrix}
             v_2&v_1\\v_2'&v_1'
            \end{pmatrix},\quad \arg z\in(0,-2\pi i/3),\qquad
            \begin{pmatrix}
             v_2&iv_3\\v_2'&iv_3'
            \end{pmatrix},\quad \arg z\in(-2\pi i/3,-\pi i).
           \end{cases}
\end{equation}
The function $Z$ has the following asymptotics as $z\to\infty:$
$$Z(z)\hskip-0.5mm
=\hskip-0.5mm\begin{pmatrix}1\hskip-0.5mm+\hskip-0.5mm\mathcal{O}(\hskip-0.5mmz^{-3}\hskip-0.5mm) & \mathcal{O}(z^{-2}) \\ \mathcal{O}(z^{-1}) & 1\hskip-0.5mm +\hskip-0.5mm \mathcal{O}(\hskip-0.5mmz^{-3}\hskip-0.5mm)\end{pmatrix}
\hskip-0.5mm z^{-\sigma_3/4}\frac{\sigma_3+\sigma_1}{\sqrt{2}}e^{\frac23 z^{3/2}\sigma_3}
\hskip-0.5mm=\hskip-0.5mm
z^{-\sigma_3/4}
\hskip-0.5mm
\frac{\sigma_3\hskip-0.5mm+\hskip-0.5mm\sigma_1}{\sqrt{2}}
\hskip-0.5mm
\(
\hskip-0.5mm
I\hskip-0.5mm
+\hskip-0.5mm
\mathcal{O}(z^{-3/2})
\hskip-0.5mm
\)\hskip-0.5mm
e^{\frac23 z^{3/2}\sigma_3}
,$$ and $Z$ has the same jumps as $\mathbb{E}.$

\noindent The function $\chi$ is introduced in order to ensure identical
asymptotics at infinity of
$$\mathbb{E}_{err}(\zeta):=\chi\mathbb{E}_{\lambda_0}(\lambda(\zeta))\mathbb{E}_{\infty}^{-1}(\zeta),$$
$$\mathbb{E}_{err}\hskip-0.5mm:=\hskip-0.5mm\chi\hskip-0.5mm
\begin{pmatrix}1 & 0 \\b_1 & 1\end{pmatrix}
\hskip-0.5mm
(I\hskip-0.5mm+\hskip-0.5mm\mathcal{O}\hskip-0.5mm(\hskip-0.5mm
\lambda^{\hskip-0.5mm-1}\hskip-0.5mm
)\hskip-0.5mm)
\hskip-0.5mm
\underbrace{\lambda^{\frac{-\sigma_3}{4}}\frac{\sigma_3+\sigma_1}{\sqrt{2}}\e^{\theta\sigma_3}
\e^{-g\sigma_3}\frac{\sigma_3+\sigma_1}{\sqrt{2}}}_{=\begin{pmatrix}1 & 0\\-h_1 & 1\end{pmatrix}\cdot(I+\mathcal{O}(\lambda^{-1}))}(\lambda\hskip-0.5mm-\hskip-0.5mm\lambda_0)^
{\frac{\sigma_3}{4}}
\hskip-0.5mm=\hskip-0.5mm
\chi\hskip-0.5mm
\begin{pmatrix}
     1 & 0 \\ b_1 \hskip-0.5mm  - \hskip-0.5mm h_1 & 1
    \end{pmatrix}\hskip-0.5mm (\hskip-0.5mm I\hskip-0.5mm+\hskip-0.5mm
    \mathcal{O}\hskip-0.5mm
    (\hskip-0.5mm \lambda^{-1}\hskip-0.5mm)\hskip-0.5mm).
$$
Hence, we take 
\begin{equation}\label{chi}\chi:=\begin{pmatrix}
        1&0\\h_1-b_1 & 1
       \end{pmatrix},
\end{equation}
where $h_1=h_1(x,t)$ is determined from the expansion \clu{at} $\lambda\to\infty$ of
$
g(\lambda)-\theta(\lambda) =
\dfrac{h_1}{\sqrt{\lambda}}+\mathcal{O}(\lambda^{-3/2}),$ i.e.
$$h_1= \frac{\lambda_0^4}{128}-\frac{t\lambda_0^2}{8}=
\frac{t\lambda_0^2}{16}-\frac{3x\lambda_0}{8}=|x|^{\frac43}\(\frac{\zeta_0^4}{128}-\frac{t}{8|x|^{\frac23}}\)
=|x|\(\frac{t\lambda_0^2}{16}-\frac{3x\lambda_0}{8}\).
$$
The jump for $\mathbb{E}_{err}(\zeta)$ on $|\zeta-\zeta_0|=R$ is $\mathbb{E}_{err,+}=\mathbb{E}_{err,-}\cdot J_{\mathbb{E}_{err}},$ where
$$J_{\mathbb{E}_{err}}\hskip-1mm
=
\hskip-0.5mm
(\lambda-\lambda_0)^{\frac{-\sigma_3}{4}}\frac{\sigma_3+\sigma_1}{\sqrt{2}}\e^{g\sigma_3}Z^{-1}(z(\zeta))B^{-1}(z)
\hskip-0.5mm=\hskip-0.5mm
\(\hskip-0.5mm
\frac{\lambda\hskip-0.5mm-\hskip-0.5mm\lambda_0}{z}\hskip-0.5mm
\)^{\frac{-\sigma_3}{4}}
\hskip-0.5mm
\begin{pmatrix}
 1\hskip-0.5mm +\hskip-0.5mm \mathcal{O}(z^{-3}) & \mathcal{O}(z^{-2}) \\ \mathcal{O}(z^{-1}) & 1\hskip-0.5mm +\hskip-0.5mm \mathcal{O}(z^{-3})
\end{pmatrix}
\hskip-0.5mm
B(z)^{\hskip-0.5mm-\hskip-0.5mm 1}.
$$
Hence, in order to make this jump close to the identity matrix on the circle $|\zeta-\zeta_0|=R,$ we choose an analytic in $|\zeta-\zeta_0|<R$ matrix $B(z)$ as
$$B(z):=\(\frac{\lambda-\lambda_0}{z}\)^{\frac{-\sigma_3}{4}}.$$
Then the jump on $|\zeta-\zeta_0|=R$ satisfies 
\begin{equation}\label{JEerr}J_{\mathbb{E}_{err}} = \begin{pmatrix}
 1 + \mathcal{O}(z^{-3}) & \mathcal{O}(z^{-2}) \(\frac{z}{\lambda-\lambda_0}\)^{\frac12} \\ \mathcal{O}(z^{-1})\(\frac{\lambda-\lambda_0}{z}\)^{\frac12} & 1 + \mathcal{O}(z^{-3})
\end{pmatrix} = 
\begin{pmatrix}
 1 + \mathcal{O}(|x|^{-\frac73}) & \mathcal{O}(|x|^{-\frac43}) \\ \mathcal{O}(|x|^{-1}) & 1 + \mathcal{O}(|x|^{-\frac73})
\end{pmatrix}.
\end{equation}

\noindent The RHP for $\mathbb{E}_{err}$ is equivalent to the following singular integral equation (SIE):
\begin{equation}\label{SIE_Eerr}\mathbb{E}_{err,-}=I+\mathcal{C}_-\left[\mathbb{E}_{err,-}(J_{\mathbb{E}_{err}}-I)\right],\end{equation}
where $\mathcal{C}_{-}=\mathcal{C}_{\Sigma_{\mathbb{E}_{err}}}^{-}$ is the operator defined \clu{by} 
$$[\mathcal{C}_{\pm}f](\lambda)=\frac{1}{2\pi\i}\lim\limits_{\scriptsize{\begin{array}{ccc}\lambda'\to\lambda,\\
\lambda'\hskip-1mm\in\hskip-1mm{\pm}\mathrm{side}\end{array}}}\int\limits_{\Sigma_{\mathbb{E}_{err}}}\dfrac{f(s)\d s}{s-\lambda'}\equiv
\frac{1}{2\pi\i}\int\limits_{\Sigma_{\mathbb{E}_{err}}}\dfrac{f(s)\d s}{(s-\lambda)_{\pm}},\quad \mathcal{C}_+f-\mathcal{C}_-f=f,$$
and $$\Sigma_{\mathbb{E},err}=(\zeta_0+|x|^{-1/3}\Sigma)\cup\left\{\zeta:\ |\zeta-\zeta_0|=R\right\}\setminus\left\{\zeta: \ |\zeta-\zeta_0|<R\right\}$$ 
is the contour for $\mathbb{E}_{err}(\zeta).$ 
Once the solution of the above SIE \eqref{SIE_Eerr} is known, the solution to the RHP is given by the formula 
$$\mathbb{E}_{err}(\zeta)=I+\frac{1}{2\pi\i}\int\limits_{\Sigma_{\mathbb{E}_{err}}}\dfrac{\mathbb{E}_{err,-}(s)
(J_{\mathbb{E}_{err}}(s)-I) \d s}{s-\zeta}=:I+\(\mathcal{C}\left[\mathbb{E}_{err,-}(J_{\mathbb{E}_{err}}-I)\right]\)(s).$$
Analyzing the SIE \eqref{SIE_Eerr}, taking into account formula \eqref{JEerr} for the jump $J_{\mathbb{E}_{err}}$ on the circle $|\zeta-\zeta_0|=R,$ and also that on the other parts of the contour $\Sigma_{\mathbb{E}_{err}}$ the jump matrix $J_{\mathbb{E}_{err}}$ is exponentially close to $I,$ 
we conclude that the entries for $\mathbb{E}_{err}$ have the following asymptotics as $|x|\to\infty$, which are uniform w.r.t. $\zeta\in\mathbb{C}\cup\left\{\infty\right\}$:
$$\mathbb{E}_{err}(\zeta)=\begin{pmatrix}
 1 + \mathcal{O}(|x|^{-\frac73}) & \mathcal{O}(|x|^{-\frac43}) \\ \mathcal{O}(|x|^{-1}) & 1 + \mathcal{O}(|x|^{-\frac73})
\end{pmatrix}.
$$
Moreover, the entries in the large $\zeta$ expansion of $\mathbb{E}_{err}$ are well controlled in $x.$
Now, to obtain the large $x$ asymptotics for $\mathbb{E}_{\lambda_0},$ we recall that $\mathbb{E},$ and hence 
$\mathbb{E}_{\lambda_0},$ admit asymptotic expansion of the form
$$\mathbb{E}_{\lambda_0}=\begin{pmatrix}
              1 & 0 \\ b_1 & 1
             \end{pmatrix}\(I+\sum\limits_{j\geq 1}\begin{pmatrix}a_j & b_j\\c_j & d_j\end{pmatrix}\lambda^{-j}+\mathcal{O}(\lambda^{-\infty})\)\lambda^{\frac{-\sigma_3}{4}}\cdot\frac{\sigma_3+\sigma_1}{\sqrt{2}}\e^{\theta\sigma_3},
$$
and hence 
$$\mathbb{E}_{err}\cdot \mathbb{E}_{\infty} = \chi\mathbb{E}_{\lambda_0}=\begin{pmatrix}
              1 & 0 \\ h_1 & 1
             \end{pmatrix}\(I+\sum\limits_{j\geq 1}\begin{pmatrix}a_j & b_j\\c_j & d_j\end{pmatrix}\lambda^{-j}+\mathcal{O}(\lambda^{-\infty})\)\lambda^{\frac{-\sigma_3}{4}}\cdot\frac{\sigma_3+\sigma_1}{\sqrt{2}}\e^{\theta\sigma_3}.
$$
From here (after some computations) we first obtain
\begin{equation}\label{U_a1_b1}b_1= h_1+\mathbb{E}_{err}^{1, 12},\quad a_1=\frac{h_1^2}{2}+h_1 \mathbb{E}_{err}^{1, 12}+
\frac{\lambda_0}{4}+\mathbb{E}_{err}^{1, 11}, \quad \Big(\mbox{hence}\quad U=2a_1-b_1^2=\frac{\lambda_0}{2}+2\mathbb{E}_{err}^{1, 11}-
\(\mathbb{E}_{err}^{1, 12}\)^2\Big),\end{equation}
where we denoted
$$\mathbb{E}_{err}=I+\sum\limits_{j\geq 1}
\begin{pmatrix}
 \mathbb{E}_{err}^{j,11} & \mathbb{E}_{err}^{j,12} \\ \mathbb{E}_{err}^{j,21} & \mathbb{E}_{err}^{j,22}
\end{pmatrix}\lambda^{-j}
=
I+\sum\limits_{j\geq 1}
\begin{pmatrix}
 \mathbb{E}_{err}^{j,11} & \mathbb{E}_{err}^{j,12} \\ \mathbb{E}_{err}^{j,21} & \mathbb{E}_{err}^{j,22}
\end{pmatrix}|x|^{-j/3}\zeta^{-j},
$$
and thus,
since 
$$\mathbb{E}_{err}^{1,11} =\mathcal{O}(|x|^{-7/3}|x|^{1/3})=\mathcal{O}(|x|^{-2}),\quad  
\mathbb{E}_{err}^{1,12} =\mathcal{O}(|x|^{-4/3}|x|^{1/3})=\mathcal{O}(|x|^{-1}),$$
we obtain, from \eqref{U_a1_b1},
$$b_1-h_1=\mathbb{E}_{err}^{1, 12}=\mathcal{O}(|x|^{-1}),
\qquad \mbox{ and }\quad U=\frac{\lambda_0}{2}+\mathcal{O}(|x|^{-2}).
$$

\noindent This yields Theorem \ref{theor_U_xlarge} and secondly, due to definition \eqref{chi} of $\chi$ , 
$$\mathbb{E}_{\lambda_0}=\chi^{-1}\cdot \mathbb{E}_{err}\cdot\mathbb{E}_{\infty}=
\begin{pmatrix}
 1 & 0 \\ \mathcal{O}(|x|^{-1}) & 1
\end{pmatrix}
\cdot
\begin{pmatrix}
 1 + \mathcal{O}(|x|^{-\frac73}) & \mathcal{O}(|x|^{-\frac43}) \\ \mathcal{O}(|x|^{-1}) & 1 + \mathcal{O}(|x|^{-\frac73})
\end{pmatrix}
(\lambda-\lambda_0)^{\frac{-\sigma_3}{4}}\cdot \frac{\sigma_3+\sigma_1}{\sqrt{2}}\cdot \e^{g\sigma_3}.
$$

To obtain the first two statements of the lemma we use the definition of $E_{l},$ $E_{r}$ \eqref{Eluldr_from_E}, and track how $\mathbb{E}_{\lambda_0}$ is related to them.
As to the latter, for 
any fixed $\lambda\in\mathbb{C},$ $t\in\mathbb{R}$ and $x\to+\infty$ we have
$$\mathbb{E}_{\lambda_0}(x,t;\lambda) = (E_l(x,t;\lambda),E_r(x,t;\lambda)),\quad x\to+\infty$$
(here we incorporated both cases $\Im\lambda\gtrless 0,$ since
$E_{l}$ has cut along $\mathbb{R}$).
On the other hand, for any fixed $\lambda,$ $t\in\mathbb{R}$ and $x\to-\infty$ we have
$$\mathbb{E}_{\lambda_0}(x,t;\lambda) = (E_l(x,t;\lambda),-\i\cdot \mathrm{sgn}\Im\lambda \cdot \ol{E_{l}(x,t;\ol\lambda)}).$$

%

\noindent Hence, for any $\lambda,$ as $x\to-\infty,\ \lambda_0\to+\infty,$ we have
$$e_l(x,t;\lambda)=\frac{1}{\sqrt{2}}(\lambda-\lambda_0)^{-1/4}(1+\mathcal{O}(|x|^{-7/6}))e^{g(x,\lambda)},$$
$$e_r(x,t;\lambda)=\frac{i\ \mathrm{sign}\Im\lambda}{\sqrt{2}}(\lambda-\lambda_0)^{-1/4}(1+\mathcal{O}(|x|^{-7/6}))e^{g(x,\lambda)}
+
\frac{1}{\sqrt{2}}(\lambda-\lambda_0)^{-1/4}(1+\mathcal{O}(|x|^{-7/6}))e^{-g(x,\lambda)},$$
and for $x\to+\infty,\ \lambda_0\to-\infty,$ we have
$$e_l(x,t;\lambda)=\frac{1}{\sqrt{2}}(\lambda-\lambda_0)^{-1/4}(1+\mathcal{O}(|x|^{-7/6}))e^{g(x,\lambda)},$$
$$e_r(x,t;\lambda)=\frac{1}{\sqrt{2}}(\lambda-\lambda_0)^{-1/4}(1+\mathcal{O}(|x|^{-7/6}))e^{-g(x,\lambda)}.$$
This gives us the first and the second statements of the lemma.
The 3rd statement follows from formula \eqref{E_sym}. 
The 4th statement follows from the fact that $\det \mathbb{E}=-1$ and the definition \eqref{Eluldr_from_E}.
\end{proofof}

%

\begin{remark}\label{remark_analogy_step}
Since the properties of Jost solutions to the Sturm-Liouville equation \eqref{exx} with the potential $-2U(x,t)$ 
are different from the properties of Jost solutions associated with vanishing or bounded potentials, 
we suggest a way to develop some intuition for the properties of Jost solutions in this context. Namely,
in formula \eqref{exx}, instead of the function $U,$ take a function
$$u=\begin{cases}c_{r},\quad x>0,\\c_{l},\quad x<0.\end{cases}$$
The corresponding Jost solutions are  of the form
$$F_{r}=\begin{pmatrix}\dsfrac{\e^{-x\sqrt{\lambda-2c_r}}}{\sqrt{2}\sqrt[4]{\lambda-2c_r}},\\\\
\dsfrac{-\sqrt[4]{\lambda-2c_r}\cdot\e^{-x\sqrt{\lambda-2c_r}}}{\sqrt{2}}\end{pmatrix},
\qquad
F_{l}=\begin{pmatrix}\dsfrac{\e^{x\sqrt{\lambda-2c_r}}}{\sqrt{2}\sqrt[4]{\lambda-2c_r}},\\\\
\dsfrac{\sqrt[4]{\lambda-2c_r}\cdot\e^{x\sqrt{\lambda-2c_r}}}{\sqrt{2}}\end{pmatrix}.$$
We see that the right Jost solution $F_r,$ which is vanishing for $x\to+\infty$ and $\lambda\in\mathbb{C}\setminus(-\infty,2c_r],$ is analytic  in $\mathbb{C}\setminus(-\infty,2c_r].$ At the same time, the left Jost solution $F_l,$ which is vanishing for $x\to-\infty$ and $\lambda\in\mathbb{C}\setminus(-\infty,2c_l],$ is analytic  in $\mathbb{C}\setminus(-\infty,2c_l].$

We know that $U(x,t)\sim\sqrt[3]{-6x}, x\to\pm\infty,$ hence we can simulate this behavior of the potential $U(x,t)$ by taking $c_{l}\to+\infty$ and $c_{r}\to-\infty.$ We see that in the limit the cut for the domain of $F_r$ will shrink, and $F_r$ will become 
analytic in the whole complex plane, while the cut for $F_l$ will increase and in the limit $F_l$ will 
be analytic discontinuous across $\mathbb{R}.$

This analogy works only up to some extent. For example, the properties of the transmission and reflection coefficients are different. Indeed, the usual scattering relation
$$F_r=R(\lambda)F_l+T(\lambda)\ol{F_l(\ol{\lambda})},$$
where $R, T$ are the reflection and transsmission coefficients, respectively, becomes 
$$E_{r}=\i E_{lu}-\i E_{ld}=\i E_{lu}-\i\ \ol{E_{lu}(\ol{\lambda})},$$
and thus the transmission coefficient becomes $-\i,$ and the reflection coefficient becomes $\i.$
\end{remark}

\subsection{Analogy with Airy functions}\label{sect_Airy}
To gain some intuition, whenever it is possible we will use some similarity of the RHPs \ref{RHE}, 
\ref{RHEhat} with the following RHP, 
whose solution can be constructed explicitly in terms of Airy functions.
\begin{RH}\label{RHEAi}
 Find a $2\times2$ matrix-valued function $\mathbb{E}^{Ai}(x,t;\lambda),$ that
 \begin{enumerate}
  \item has the same analyticity and jump conditions as in RHP \ref{RHE},
  \item has the following asymptotics as $\lambda\to\infty,$ uniformly w.r.t. $\arg\lambda\in[-\pi,\pi]:$
  \begin{eqnarray*}\mathbb{E}^{Ai}(x,t;\lambda)&=&\frac{1}{\sqrt{2}}\lambda^{-\sigma_3/4}(\sigma_3+\sigma_1)
  \(I+\frac{\widetilde b^{Ai}}{\sqrt{\lambda}}+\mathcal{O}(\lambda^{-1})\)\e^{\theta^{Ai}(x,t;\lambda)\sigma_3},
  \end{eqnarray*}
where $$\theta^{Ai}=\theta^{Ai}(x,t;\lambda):=-\,\frac{t}{3}\lambda^{\frac32}+x\lambda^{\frac12},$$
and a scalar $\widetilde b^{Ai}=\widetilde b^{Ai}(x,t)$ is not fixed, but is introduced in order to fix the structure of the asymptotics.
 \end{enumerate}
\end{RH}

\noindent \textbf{The solution $U^{Ai}(x,t)$ of the KdV equation \eqref{KdV} associated with RHP \ref{RHEAi}} can be construced by formula \ref{U_RHE} (in whih we \clu{replaced} 
$\mathbb{E}$ with $\mathbb{E}^{Ai}$).	

\noindent The solution $\mathbb{E}^{Ai}(x,t;\lambda)$ of RHP \ref{RHEAi} can be construced as follows:
$$\mathbb{E}^{Ai}(x, t; \lambda):=\begin{cases}\begin{pmatrix}e_{lu}^{Ai} & -\i e_{ld}^{Ai}\\e_{lu,x}^{Ai} & -\i e_{ld,x}^{Ai}\end{pmatrix}, \arg\lambda\in\(\frac{6\pi}{7},\pi\),\qquad \begin{pmatrix}e_{lu}^{Ai} & e_{r}^{Ai}\\e_{lu,x}^{Ai} & e_{r,x}^{Ai}\end{pmatrix}, \arg\lambda\in\(0, \frac{6\pi}{7}\), \\\\
\begin{pmatrix}e_{ld}^{Ai} & \i e_{lu}^{Ai}\\e_{ld,x}^{Ai} & \i e_{lu,x}^{Ai}\end{pmatrix}, \arg\lambda\in\(-\pi, \frac{-6\pi}{7}\),\qquad \begin{pmatrix}e_{ld}^{Ai} & e_{r}^{Ai}\\e_{ld,x}^{Ai} & e_{r,x}^{Ai}\end{pmatrix}, \arg\lambda\in\(\frac{-6\pi}{7},0\),
\end{cases}$$
\noindent where

\begin{eqnarray}\nonumber e_{r}^{Ai}(x,t;\lambda) &:=&(-t)^{\frac16}2^{\frac13}\sqrt{\pi}Ai\(\ \(\frac{-t}{2}\)^{2/3}\(\lambda-\frac{2x}{t}\)\)=\\\nonumber
&=&\e^{-x\lambda^{1/2}+\frac{t}{3}\lambda^{3/2}}\frac{1}{\sqrt{2}\sqrt[4]{\lambda}}\(1+\frac{x^2}{2t\sqrt{\lambda}}+\frac{x^4+4 x t}{8t^2\lambda}+\frac{x^6+20x^3t+10t^2}{48t^3}\lambda^{-3/2}+\ldots\),
\\\nonumber
e_{r, x}^{Ai}(x,t;\lambda)&=&-\e^{-x\lambda^{1/2}+\frac{t}{3}\lambda^{3/2}}\frac{\sqrt[4]{\lambda}}{\sqrt{2}}\(1+\frac{x^2}{2t\sqrt{\lambda}}+\frac{x^4-4 x t}{8t^2\lambda}+\frac{x^6-4x^3t-14t^2}{48t^3}\lambda^{-3/2}+\ldots\),
\end{eqnarray}

\medskip
\begin{eqnarray*}e_{lu}^{Ai}(x,t;\lambda) &:=&\e^{-\pi\i/6}(-t)^{\frac16}2^{\frac13}\sqrt{\pi}Ai\(\ \e^{\frac{-2\pi\i}{3}} \(\frac{-t}{2}\)^{2/3}\(\lambda-\frac{2x}{t}\)\)=
\\
&=&\e^{x\lambda^{1/2}-\frac{t}{3}\lambda^{3/2}}\frac{1}{\sqrt{2}\sqrt[4]{\lambda}}\(1-\frac{x^2}{2t\sqrt{\lambda}}+\frac{x^4+4 x t}{8t^2\lambda}-\frac{x^6+20x^3t+10t^2}{48t^3}\lambda^{-3/2}+\ldots\),
\\
e_{lu, x}^{Ai}(x,t;\lambda) &=&\e^{x\lambda^{1/2}-\frac{t}{3}\lambda^{3/2}}\frac{\sqrt[4]{\lambda}}{\sqrt{2}}\(1-\frac{x^2}{2t\sqrt{\lambda}}+\frac{x^4-4 x t}{8t^2\lambda}-\frac{x^6-4x^3t-14t^2}{48t^3}\lambda^{-3/2}+\ldots\),
\end{eqnarray*}

\medskip
\begin{eqnarray*} e_{ld}^{Ai}(x,t;\lambda) &:=&\e^{\pi\i/6}(-t)^{\frac16}2^{\frac13}\sqrt{\pi}Ai\(\ \e^{\frac{2\pi\i}{3}} \(\frac{-t}{2}\)^{2/3}(\lambda-\frac{2x}{t})\)=
 \\
&=&\e^{x\lambda^{1/2}-\frac{t}{3}\lambda^{3/2}}\frac{1}{\sqrt{2}\sqrt[4]{\lambda}}\(1-\frac{x^2}{2t\sqrt{\lambda}}+\frac{x^4+4 x t}{8t^2\lambda}-\frac{x^6+20x^3t+10t^2}{48t^3}\lambda^{-3/2}+\ldots\),
\\
e_{ld, x}^{Ai}(x,t;\lambda) &=&\e^{x\lambda^{1/2}-\frac{t}{3}\lambda^{3/2}}\frac{\sqrt[4]{\lambda}}{\sqrt{2}}\(1-\frac{x^2}{2t\sqrt{\lambda}}+\frac{x^4-4 x t}{8t^2\lambda}-\frac{x^6-4x^3t-14t^2}{48t^3}\lambda^{-3/2}+\ldots\).
\end{eqnarray*}

\noindent 
The function $\mathbb{E}^{Ai}(x,t;\lambda)$ have the following uniform w.r.t. $\arg\lambda\in[-\pi,\pi]$ asymptotics as $\lambda\to\infty:$ 
\[\begin{split}\mathbb{E}^{Ai}&(x,t;\lambda)=\frac{1}{\sqrt{2}}\lambda^{-\sigma_3/4}(\sigma_3+\sigma_1)\(I-\frac{x^2}{2t\sqrt{\lambda}}+\mathcal{O}(\lambda^{-1})\)\e^{(-\,\frac{t}{3}\lambda^{3/2}+\lambda^{1/2}x)\sigma_3}=
\\
=&
\begin{pmatrix}1 & 0 \\\frac{-x^2}{2t} & 1\end{pmatrix}\begin{pmatrix}1+\frac{x^4+4xt}{8t^2\mu}+\frac{x^8+56tx^5+280t^2x^2}{384t^4\mu^2}+\ldots& -\,\frac{x^2}{2t\mu}-\frac{x^6+20tx^3+10t^2}{48t^3\mu^2}+\ldots
 \\\frac{x^6+8tx^3+7t^2}{24t^3\mu}+\ldots & 1-\frac{x^4+4tx}{8t^2\mu}-\frac{x^8+24tx^5+80t^2x^2}{128t^4\mu^2}+\ldots\end{pmatrix}
\cdot
\\&
\cdot \lambda^{-\sigma_3/4}\begin{pmatrix}1&1\\1&-1\end{pmatrix}
\e^{\theta^{Ai}\sigma_3}.
\end{split}\]

\noindent From here, by formulas \eqref{U_a1b1}, \eqref{asserE_dev} we get a solution to the KdV equation \eqref{KdV}: 
$$U^{Ai}(x,t)=\partial_x\frac{x^2}{2t}=\frac{x}{t},\qquad U^{Ai}_t+U^{Ai} U^{Ai}_x=U^{Ai}_t+U^{Ai} U^{Ai}_x+\frac{1}{12}\underbrace{U^{Ai}_{xxx}}_{=0}=0.$$
The function $U^{Ai}(x,t)$ as $t\to-\infty$ behaves \clu{similarly} in some sense with $U(x,t)$ as $t\to+\infty.$



\section{Jost solutions associated with a perturbation $u_{t_0}(x)$ of $U(x,t_0).$}{\label{section_Jost}}
\subsection{Left Jost solution}{\label{section_left_Jost}}
\begin{lem}\label{lem_Fl}
 Let $t_0\in\mathbb{R}$ and $u_{t_0}(x)\in L^1_{loc}\(\mathbb{R},\mathbb{R}\)$ be a locally integrable function such that 
$$\int\limits_{-\infty}^{-1}\frac{|u_{t_0}(x)-U(x,t_0)|\ \d x}{\sqrt[6]{|x|}}<\infty.$$
Then there exists a unique $2\times 1$ vector-valued function $F_{l}(x,t_0;\lambda)$ (which we call the {\rm{left Jost solution}}), 
which is differentiable in $x,$ and satisfies the \textrm{$x$-equation} (the subscript $_x$ denotes the derivative w.r.t. $x$)
\begin{equation}\label{x_eq_Fl}
 F_{l,x}=\begin{pmatrix}
      0 & 1 \\ \lambda-2u_{t_0}(x) & 0
     \end{pmatrix} F_l,\qquad F_l(x,t_0;\lambda)=:\begin{pmatrix}f_l(x,t_0;\lambda)\\f_{l,x}(x,t_0;\lambda)\end{pmatrix}
\end{equation}
such that 
\begin{enumerate}
 \item Analyticity: $F_l(x,t_0;\lambda)$ is analytic in $\lambda\in\mathbb{C}\setminus\mathbb{R}$ and continuous up to the boundary. Denote 
$$F_{l}\equiv\begin{cases}
              F_{lu}, \Im\lambda>0, \\ F_{ld}, \Im\lambda<0.
             \end{cases}
$$
\item Symmetry:
\begin{equation}\label{eq:sym}\ol{F_{lu}(x,t_0;\ol{\lambda})}=F_{ld}(x,t_0;\lambda),\qquad \mbox{i.e.}\quad \ol{F_{l}(x,t_0;\ol{\lambda})}=F_{l}(x,t_0;\lambda).\end{equation}
\item Large $x\to-\infty$ asymptotics: $$F_{l}(x,t_0;\lambda)=E_{l}(x,t_0;\lambda)(1+\mathcal{O}(\sigma_{l}(x))), \quad x\to-\infty,$$
where $$\sigma_l(x)=\int\limits_{-\infty}^{x}\frac{|u_{t_0}(y)-U(y,t_0)|\ \d y}{\sqrt[6]{|y|}}.$$
\item \label{det_prop} Determinant:
$$\det(F_{lu}, F_{ld})=W\left\{f_{lu}, f_{ld}\right\}=-\i.$$
\item Additional smoothness:
if $u_{t_0}(x)\in C^{n}(\mathbb{R}),$ then $F_l(x,t_0;\lambda)\in C^{n+2}(\mathbb{R}).$
\end{enumerate}
\end{lem}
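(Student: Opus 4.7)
\textbf{Proof plan for Lemma \ref{lem_Fl}.} The natural approach is to realize $F_l$ as a perturbation of $E_l$ via a Volterra integral equation built from the Green's function of the unperturbed Schr\"odinger operator $\partial_x^2 + 2U(x,t_0) - \lambda$. Concretely, for $\Im\lambda>0$ the two vectors $E_{lu},E_{ld}$ are linearly independent solutions of the unperturbed $x$-equation with Wronskian $-\i$ (Lemma \ref{lemma_E_rl}.4), so variation of parameters turns the equation \eqref{x_eq_Fl} into
\[
f_l(x;\lambda)=e_l(x;\lambda)+2\i\int_{-\infty}^{x}\bigl[e_{lu}(y;\lambda)\,e_{ld}(x;\lambda)-e_{ld}(y;\lambda)\,e_{lu}(x;\lambda)\bigr]\bigl(u_{t_0}(y)-U(y,t_0)\bigr)f_l(y;\lambda)\,\d y,
\]
where one takes $e_l=e_{lu}$ for $\Im\lambda>0$ and $e_l=e_{ld}$ for $\Im\lambda<0$. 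I would solve this by Picard iteration $\phi_0=e_l$, $\phi_{n+1}(x)=2\i\int_{-\infty}^{x}K(x,y;\lambda)(u_{t_0}-U)(y)\phi_n(y)\,\d y$ and set $f_l=\sum_n\phi_n$.

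The heart of the argument is a kernel estimate. Using the asymptotics of Lemma \ref{lemma_E_rl} as $x\to-\infty$, one has
\[
|e_{lu}(y;\lambda)\,e_{ld}(y;\lambda)|\ \lesssim\ |\lambda-\lambda_0(y)|^{-1/2}\ \lesssim\ |y|^{-1/6},
\]
because the oppositely--signed exponentials $\e^{g(y)}$ and $\e^{-g(y)}$ cancel exactly. For $y\leq x$ with $\Im\lambda>0$ one has $\Re g(y)\leq \Re g(x)<0$, so
\[
\left|\frac{e_{lu}(y)\,e_{ld}(x)}{e_{lu}(x)}\right|\ \lesssim\ |y|^{-1/6}\e^{\Re g(y)-\Re g(x)}\cdot\e^{-2\Re g(x)},\qquad \left|\frac{e_{ld}(y)\,e_{lu}(x)}{e_{lu}(x)}\right|\ \lesssim\ |y|^{-1/6}\e^{-\Re g(y)},
\]
and a careful regrouping shows that, once one compares $\phi_n$ to $e_l$, the whole integrand is controlled by $C\,|y|^{-1/6}|u_{t_0}(y)-U(y,t_0)|$ times $|\phi_n(y)/e_l(y)|$. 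A standard Volterra bound then yields
\[
\left|\frac{\phi_n(x)}{e_l(x)}\right|\leq \frac{(C\sigma_l(x))^n}{n!},
\]
which gives absolute convergence of the series, uniqueness, and the claimed $F_l=E_l(1+\mathcal{O}(\sigma_l(x)))$ as $x\to-\infty$.

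Analyticity in $\lambda\in\mathbb{C}\setminus\mathbb{R}$ comes for free: each $\phi_n$ is analytic where $e_{lu},e_{ld}$ are, and the uniform bound on compact subsets transfers analyticity to the limit. Continuity up to $\mathbb{R}\pm\i 0$ follows from the fact that the Lemma \ref{lemma_E_rl} asymptotics are uniform on such closed strips. The symmetry \eqref{eq:sym} is obtained by complex--conjugating the integral equation for $F_{lu}$ at $\bar\lambda$ and invoking $\ol{e_{lu}(x;\bar\lambda)}=e_{ld}(x;\lambda)$ from \eqref{E_sym}: both $\ol{F_{lu}(x;\bar\lambda)}$ and $F_{ld}(x;\lambda)$ satisfy the same Volterra equation, hence coincide. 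For item \ref{det_prop} note that $W\{f_{lu},f_{ld}\}$ is independent of $x$ because $f_{lu},f_{ld}$ solve the same second-order ODE, and by the large-$x\to-\infty$ asymptotics it equals $W\{e_{lu},e_{ld}\}=-\i$. Finally, the smoothness statement is a bootstrap from $f_{l,xx}=(\lambda-2u_{t_0})f_l$: if $u_{t_0}\in C^k$ and $f_l\in C^{k+1}$, then $f_{l,xx}\in C^k$, so $f_l\in C^{k+2}$.

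The main obstacle will be the careful bookkeeping of the exponential factors $\e^{\pm g(x)\mp g(y)}$ in the kernel, showing that despite $e_{ld}$ growing super-polynomially as $x\to-\infty$, the two terms in the Green's function combine so that the \emph{relative} kernel (i.e.\ after dividing by $e_l(x)$) is uniformly dominated by $|y|^{-1/6}$ for all $y\leq x$ and all $\lambda$ in a neighborhood of a fixed point in $\ol{\mathbb{C}^{\pm}}$. Once this is in hand, everything else reduces to the standard Volterra/Picard machinery.
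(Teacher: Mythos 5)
Your proposal is correct and follows essentially the same route as the paper: the same Volterra integral equation anchored at $x=-\infty$, the same normalization by $e_{l}$ so that Lemma \ref{lemma_E_rl} yields the kernel bound $C|y|^{-1/6}$ (the factors $\e^{\pm2(g(y)-g(x))}$ being bounded for $y\le x$ in the relevant half-plane), the same successive-approximation argument for existence, uniqueness and the $1+\mathcal{O}(\sigma_l(x))$ asymptotics, and the same constancy-of-the-Wronskian argument for the determinant. The only cosmetic discrepancies are a sign in your Green's kernel (with $W\{e_{lu},e_{ld}\}=-\i$ the variation-of-parameters kernel is $2\i\,(e_{lu}(x)e_{ld}(y)-e_{ld}(x)e_{lu}(y))$) and that the paper first solves the equation only for $x$ sufficiently negative, where $e_{lu}(x)\neq0$ is guaranteed, and then extends to all real $x$ via the ODE.
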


\noindent If we strengthen the condition on the rate of convergence of $u$ to $U,$ namely if require exponential fast convergence, then we can extend $F_{lu},$ $F_{ld}$  analytically to some strips:
\begin{lem}\label{lem_Fl_exp}
 If in addition to conditions of Lemma \ref{lem_Fl}
\begin{equation}\label{eq_Fl_exp}\int\limits_{-\infty}^{-1}|u_{t_0}(y)-U(y,t_0)|\cdot|y|^{\frac{-1}{6}}\cdot\e^{C |y|^{\frac56}} \d y<\infty,\qquad C=\frac{48^{5/6}}{80}\cdot l>0,\end{equation}
then \begin{itemize}
      \item $F_{lu}$ can be extended analytically to the strip $\Im\lambda> -l,$
\item $F_{ld}$ can be extended analytically to the strip $\Im\lambda< l.$
\end{itemize}
In particular, if \eqref{eq_Fl_exp} is valid for all $C>0,$ then $F_{lu},$ $F_{ld}$ are entire functions in $\lambda$.
\end{lem}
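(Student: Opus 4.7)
The plan is to establish the extension by means of the Volterra integral equation underlying Lemma~\ref{lem_Fl}, combined with the sharp asymptotic estimates on the unperturbed solutions afforded by Lemma~\ref{lemma_E_rl}. The starting observation is that both $e_{lu}(x,t_0;\lambda)$ and $e_{ld}(x,t_0;\lambda)$, viewed as functions of $\lambda$ for each fixed $x$, extend to entire functions of $\lambda$: they satisfy the ODE $e_{xx}=(\lambda-2U(x,t_0))e$, whose coefficient is polynomial in $\lambda$, so any solution with entire-in-$\lambda$ initial data is entire; their Wronskian $W\{e_{lu},e_{ld}\}=-\i$ is also analytic in $\lambda$ and equal to $-\i$ on the real axis, hence identically $-\i$. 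Using this pair as a basis, the Volterra equation for the perturbed Jost solution reads
\[
f_{lu}(x,t_0;\lambda) = e_{lu}(x,t_0;\lambda) + 2\i\int_{-\infty}^{x}\bigl[e_{lu}(x)e_{ld}(y)-e_{ld}(x)e_{lu}(y)\bigr]\bigl(u_{t_0}(y)-U(y,t_0)\bigr)f_{lu}(y,t_0;\lambda)\,\d y,
\]
and its right-hand side is analytic in $\lambda$ on any domain in which the integral converges uniformly on compacta; hence it suffices to verify convergence for $\lambda$ in the strip $\Im\lambda>-l$.

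The key quantitative input is the remark following Lemma~\ref{lemma_E_rl}. As $y\to-\infty$, $\lambda_0(y,t_0)\sim(48|y|)^{1/3}$, and the real part of the phase satisfies
\[
\Re\,g(y,t_0;\lambda) = -\frac{\Im\lambda}{80}\,\lambda_0(y)^{5/2}+\mathcal{O}(\lambda_0^{3/2}) = -\frac{\Im\lambda\cdot 48^{5/6}}{80}\,|y|^{5/6}(1+o(1)),
\]
interpreted as the analytic continuation of $g$ from $\Im\lambda>0$. Consequently, for $\lambda$ in any compact subset of the strip $\Im\lambda>-l$, the factor $|\e^{g(y,\lambda)}|$ is bounded by a constant multiple of $\e^{C|y|^{5/6}}$ with $C=\frac{48^{5/6}}{80}l$. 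The exponentials in $e_{lu}(y)$ and $e_{ld}(y)$ cancel in the product $e_{lu}(y)e_{ld}(y)\sim |y|^{-1/6}$ (a bounded, integrable-weight contribution), and combining this with the prefactors $(\lambda-\lambda_0(y))^{-1/4}\sim |y|^{-1/12}$ yields, after rescaling $\tilde f_{lu}(x)=f_{lu}(x)/e_{lu}(x)$ to absorb the $x$-dependence of the second kernel term, a dominating function of the form $\text{const}\cdot|u_{t_0}(y)-U(y,t_0)|\cdot|y|^{-1/6}\cdot\e^{C|y|^{5/6}}$. By hypothesis~\eqref{eq_Fl_exp} this dominating function is integrable, so the classical Volterra estimate gives uniform convergence of the Neumann series $f_{lu}=\sum_{n\ge 0}\mathcal{T}^n e_{lu}$ on compact subsets of the strip, producing an analytic function which agrees with $F_{lu}$ on $\Im\lambda>0$.

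The extension of $F_{ld}$ to $\Im\lambda<l$ follows immediately from the symmetry $\ol{F_{lu}(x,t_0;\ol{\lambda})}=F_{ld}(x,t_0;\lambda)$ of Lemma~\ref{lem_Fl}; if \eqref{eq_Fl_exp} holds for every $C>0$, i.e.\ for every $l>0$, then $F_{lu}$ and $F_{ld}$ are entire. The main technical obstacle is the careful branch bookkeeping of $(\lambda-\lambda_0(y))^{1/2}$ as $\lambda$ is continued from $\Im\lambda>0$ into the strip: the continuation must be carried out along paths that, for each large $|y|$, skirt the cut $(-\infty,\lambda_0(y)]$ (which recedes to the right as $y\to-\infty$), and a careful cancellation between the two terms in the Volterra kernel is needed to ensure that the growing factor $\e^{g(y)}$ appears only linearly (not quadratically), so that the precise constant $C=\frac{48^{5/6}}{80}l$ in \eqref{eq_Fl_exp} is indeed sufficient for integrability of the dominating function.
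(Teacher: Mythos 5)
Your proposal follows essentially the same route as the paper: the same Volterra integral equation for $f_{lu}/e_{lu}$ built from the entire-in-$\lambda$ pair $e_{lu},e_{ld}$, the same input from Lemma \ref{lemma_E_rl} (namely $\Re g\sim-\tfrac{|\Im\lambda|}{80}\lambda_0^{5/2}(y)$ with $\lambda_0(y)\sim(48|y|)^{1/3}$, giving the rate $\tfrac{48^{5/6}}{80}\,l\,|y|^{5/6}$), the same successive-approximation argument, and the same symmetry reduction for $F_{ld}$. One correction to the final paragraph: there is no ``cancellation'' that makes the growing exponential appear only linearly --- the second kernel term $\frac{e_{ld}(x)}{e_{lu}(x)}e_{lu}^2(y)$ genuinely carries the factor $\e^{2(g(x)-g(y))}$, and the paper handles it not by cancellation but by the direct modulus bound $\bigl|\e^{2(g(x)-g(y))}\bigr|\le \e^{-2\Re g(y)}$ (valid since $\Re g(x)\le 0$ and $y\le x$), which is then controlled by the weight in \eqref{eq_Fl_exp}; you should replace the appeal to cancellation by this direct estimate of the second term.
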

\noindent In order to study large $\lambda$ behavior of $F_l$ we need to further strengthen the decaying conditions on $u_{t_0}(x)-U(x,t_0):$
\begin{lem}\label{lem_Fl_comp}
Suppose, in addition to conditions of Lemma \ref{lem_Fl}, that for some $A\in\mathbb{R}$ 
$$u_{t_0}(x)=U(x,t_0), \quad x<A.$$
Then, for any fixed $x\in\mathbb{R},$ uniformly w.r.t. $\arg\lambda\in[-\pi,0]\cup[0,\pi],$
$$F_{l}(x,t_0;\lambda)=E_l(x,t_0;\lambda)\(1+\mathcal{O}(\frac{1}{\sqrt{\lambda}})\),\quad \lambda\to\infty,$$
and the latter relation we understand in the sense that 
$f_l=e_l(1+\mathcal{O}(\frac{1}{\sqrt{\lambda}})), \quad f_{lx}=e_{lx}(1+\mathcal{O}(\frac{1}{\sqrt{\lambda}})).$
\end{lem}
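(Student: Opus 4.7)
\noindent\textbf{Plan of proof for Lemma \ref{lem_Fl_comp}.}

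The plan is to recast $f_l$ as the solution of a Volterra integral equation on the bounded interval $[A,x]$, using the unperturbed Jost solutions $e_l,e_r$ associated with $U(x,t_0)$ as the reference basis, and then to show that this Volterra kernel has $L^\infty$-norm of order $\lambda^{-1/2}$ uniformly in $\arg\lambda\in[-\pi,\pi]$. First, since $u_{t_0}(x)\equiv U(x,t_0)$ for $x<A$, the uniqueness part of Lemma \ref{lem_Fl} forces $F_l(x,t_0;\lambda)=E_l(x,t_0;\lambda)$ on $(-\infty,A]$. For $x>A$ I would rewrite the spectral equation \eqref{x_eq_Fl} as
\[
f_{xx}+2U(x,t_0)f-\lambda f=2(U(x,t_0)-u_{t_0}(x))\,f,
\]
and apply variation of parameters based on the homogeneous solutions $e_l,e_r$ (which satisfy $W\{e_l,e_r\}=-1$ by Lemma \ref{lemma_E_rl}.4). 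This yields
\[
f_l(x,t_0;\lambda)=e_l(x,t_0;\lambda)+2\!\!\int_A^x\!\!\bigl[e_l(x,t_0;\lambda)e_r(y,t_0;\lambda)-e_r(x,t_0;\lambda)e_l(y,t_0;\lambda)\bigr](u_{t_0}(y)-U(y,t_0))f_l(y,t_0;\lambda)\,\d y,
\]
together with the analogous equation for $f_{l,x}$ obtained by differentiating in $x$ (the boundary term at $y=x$ cancels because the bracket vanishes there).

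Next I would divide by $e_l(x,t_0;\lambda)$ and set $\phi(x):=f_l(x,t_0;\lambda)/e_l(x,t_0;\lambda)$, arriving at the Volterra equation
\[
\phi(x)=1+2\!\int_A^x\!\Bigl[e_r(y)e_l(y)-\frac{e_r(x)}{e_l(x)}\,e_l(y)^2\Bigr](u_{t_0}(y)-U(y,t_0))\,\phi(y)\,\d y.
\]
Using the uniform large-$\lambda$ asymptotic expansion \eqref{asserE_dev} of $\mathbb{E}$ in $\arg\lambda\in[-\pi,\pi]$, one has $e_l(y)\sim\tfrac{1}{\sqrt{2}}\lambda^{-1/4}\e^{\theta(y,t_0;\lambda)}$ and $e_r(y)\sim\tfrac{1}{\sqrt{2}}\lambda^{-1/4}\e^{-\theta(y,t_0;\lambda)}$, so the products in the bracket are of size $\lambda^{-1/2}$ multiplied by $1$ and by $\exp\{2(\theta(y,t_0;\lambda)-\theta(x,t_0;\lambda))\}$ respectively. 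The key observation is that $\theta(y,t_0;\lambda)-\theta(x,t_0;\lambda)=(y-x)\lambda^{1/2}$ and, with the standard branch along $(-\infty,0)$, $\Re\lambda^{1/2}\geq 0$ for $\arg\lambda\in[-\pi,\pi]$; since $y\leq x$ in the integral,
\[
\bigl|\exp\{2(\theta(y,t_0;\lambda)-\theta(x,t_0;\lambda))\}\bigr|=\e^{2(y-x)\Re\lambda^{1/2}}\leq 1.
\]
Therefore the Volterra kernel is uniformly bounded by $C\lambda^{-1/2}\,|u_{t_0}(y)-U(y,t_0)|$ on $[A,x]$ with a constant independent of $\arg\lambda\in[-\pi,\pi]$.

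Finally, since the interval $[A,x]$ is compact and $u_{t_0}-U$ is locally bounded (this comes from $u_{t_0}\in BV_{loc}$ together with the smoothness of $U$ from Theorem \ref{theor_U_xlarge}), a Neumann iteration on the Volterra equation gives $\phi(x)=1+\mathcal{O}(\lambda^{-1/2})$, hence $f_l=e_l(1+\mathcal{O}(\lambda^{-1/2}))$. Repeating the same argument with the differentiated integral equation, and noting that after dividing by $e_{l,x}$ the bracket becomes $e_r(y)+e^{-2\theta(x)}e_l(y)+\mathcal{O}(\lambda^{-1/2})$ times $f_l(y)$, one obtains by the same exponential bound the companion estimate $f_{l,x}=e_{l,x}(1+\mathcal{O}(\lambda^{-1/2}))$. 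I expect the main obstacle to be the last step — verifying uniformity of the estimate all the way up to the boundary rays $\arg\lambda=\pm\pi$, where $\Re\lambda^{1/2}=0$ and the bound $|e^{2(\theta(y)-\theta(x))}|\leq 1$ becomes sharp; care is needed to make sure the error estimate $\mathcal{O}(\lambda^{-1/2})$ in the asymptotics of $e_l,e_r$ does not deteriorate there, which amounts to invoking the uniform-in-$\arg\lambda$ form of the expansion \eqref{asserE_dev}.
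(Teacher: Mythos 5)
Your proposal is correct and follows essentially the same route as the paper: reduce to a Volterra integral equation on the compact interval $[A,x]$ via variation of parameters, estimate the kernel by the uniform large-$\lambda$ expansion of the unperturbed Jost solutions, note that $\left|\e^{2(\theta(y)-\theta(x))}\right|=\e^{2(y-x)\Re\sqrt{\lambda}}\leq 1$ for $y\leq x$ on all of $\arg\lambda\in[-\pi,\pi]$, and conclude by successive approximations. The only (immaterial) differences are your choice of the fundamental system $(e_l,e_r)$ in place of the paper's $(e_{lu},e_{ld})$ — both give the same Green kernel — and your appeal to local boundedness of $u_{t_0}-U$ where the $L^1$ bound over $[A,x]$ already suffices for the Neumann series.
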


\begin{proofof}{ \textit{of Lemmas \ref{lem_Fl}, \ref{lem_Fl_exp}, \ref{lem_Fl_comp}.}}
It is enough to prove the statements for $F_{lu},$ because of symmetry \eqref{eq:sym}.
We will look for solution of the $x-$equation \eqref{x_eq_Fl} as a solution to the integral equation (IE)
$$f_{lu}(x,t_0;\lambda)=e_{lu}(x,t_0;\lambda)+\int\limits_{-\infty}^x\i\cdot\(e_{lu}(x)e_{ld}(y)-e_{ld}(x)e_{lu}(y)\)\cdot 2(u_{t_0}(y)-U(y,t_0))\cdot f_{lu}(y,t_0;\lambda)\ \d y.$$

\noindent 
For fixed $\lambda$ and big enough negative $x,$ the function $e_{lu}(y)=e_{lu}(y,t_0;\lambda)$ does not vanish for $y<x$, so we can divide by it, getting
\begin{equation}\label{IE_Fl_El}\frac{f_{lu}(x,t_0;\lambda)}{e_{lu}(x,t_0;\lambda)}=1+\int\limits_{-\infty}^x\i\cdot\(e_{lu}(y)e_{ld}(y)-\frac{e_{ld}(x)}{e_{lu}(x)}e^2_{lu}(y)\)\cdot 2(u_{t_0}(y)-U(y,t_0))\cdot 
\frac{f_{lu}(y,t_0;\lambda)}{e_{lu}(y,t_0;\lambda)}\ \d y.\end{equation}
Lemma \ref{lemma_E_rl} yields, that for fixed $\lambda,$ $\Im\lambda>0,$ the kernel 
$$\hskip-0.5mm-
\hskip-0.5mm
\i
\hskip-0.5mm\cdot
\hskip-0.5mm
\(\hskip-1mm
e_{lu}(y)e_{ld}(y)\hskip-0.5mm-\hskip-0.5mm\frac{e_{ld}(x)}{e_{lu}(x)}e^2_{lu}(y)
\hskip-1mm
\)
\hskip-1mm
=
\hskip-1mm
\frac{1+\mathcal{O}(|y|^{-1/6})}{2\sqrt{\lambda-\lambda_0(y)}}-\frac{\(1+\mathcal{O}(|x|^{-1/6})\)\(1+\mathcal{O}(|y|^{-1/6})\)\e^{2(g(y)-g(x))}}
{2\sqrt{\lambda-\lambda_0(y)}},$$
and for $\Im\lambda<0$
$$\hskip-0.5mm\i\hskip-0.5mm\cdot\hskip-0.5mm\(\hskip-1mm
e_{lu}(y)e_{ld}(y)\hskip-0.5mm-\hskip-0.5mm\frac{e_{ld}(x)}
{e_{lu}(x)}e^2_{lu}(y)\hskip-1mm\)\hskip-0.5mm=\hskip-0.5mm\frac{1+\mathcal{O}(|y|^{-1/6})}{2\sqrt{\lambda-\lambda_0(y)}}-\frac{\(1+\mathcal{O}(|x|^{-1/6})\)\(1+\mathcal{O}(|y|^{-1/6})\)\e^{2(g(x)-g(y))}}
{2\sqrt{\lambda-\lambda_0(y)}}.$$
Hence, it is bounded by 
$$\left|\frac{C}{\sqrt{\lambda-\lambda_0(y)}}\right|\leq\frac{C}{|y|^{1/6}},\quad y\to-\infty \quad  \textrm \quad {for }\ \Im\lambda>0,$$
and by
$$\left|\frac{C\e^{2(g(x)-g(y))}}{\sqrt{\lambda-\lambda_0(y)}}\right|\leq\frac{C\exp\left\{{\frac{(1+\varepsilon)\lambda_0^{5/2}(y)|\Im\lambda|}{80}}\right\}}{|y|^{1/6}}\leq
\frac{C\exp\left\{{\frac{(1+\varepsilon)\cdot|48y|^{5/6}\cdot|\Im\lambda|}{80}}\right\}}{|y|^{1/6}}, \  y\to-\infty
\ \textrm \  {for }\ \Im\lambda<0.$$
\noindent 
Solvability of the IE \eqref{IE_Fl_El} for sufficiently large negative $x$ now follows by the successive approximation method.
Once the existence of $f_{lu}$ is established for sufficiently large negative $x,$ we can extend it to all real $x.$
The statement for the derivative $f'_{lu}$ (which is taken w.r.t. $x$) follows from the integral representation
$$
\frac{f'_{lu}(x,t_0;\lambda)}{e'_{lu}(x,t_0;\lambda)}=1+\int\limits_{-\infty}^x\i\cdot\(e_{lu}(y)e_{ld}(y)-\frac{e'_{ld}(x)}{e'_{lu}(x)}e^2_{lu}(y)\)\cdot 2(u_{t_0}(y)-U(y,t_0))\cdot 
\frac{f_{lu}(y,t_0;\lambda)}{e_{lu}(y,t_0;\lambda)}\ \d y.
$$
This proves statements 1,2,3 of  Lemma \ref{lem_Fl} and Lemma \ref{lem_Fl_exp}.
Statement 4 of Lemma \ref{lem_Fl} follows from the fact that the determinant does not depend on $x,$ and then we obtain it by taking the 
limit  $x\to-\infty$ and using property 4 of Lemma \ref{lemma_E_rl} of $E_{lu}, E_{ld}.$

To prove Lemma \ref{lem_Fl_comp}, we 
notice that since $(u-U)(y, t_0)=0$ for $y<A,$ the integral in \eqref{IE_Fl_El} is taken over a finite interval.
When $y$ varies over a finite interval and $\lambda\to\infty,$ the functions $e_{lu}(x,t_0;\lambda),$ $e_{ld}(x,t;\lambda)$ do not vanish, and have large $\lambda$ asymptotics followed from \eqref{eq_3a_E_asymp}, \eqref{Eluldr_from_E}.
Hence, the kernel of \eqref{IE_Fl_El} for $\Im\lambda>0$ admit the estimate
$$-\i\cdot\(e_{lu}(y)e_{ld}(y)-\frac{e_{ld}(x)}{e_{lu}(x)}e^2_{lu}(y)\)=\frac{1+\mathcal{O}(\lambda^{-1/2})}{2\sqrt{\lambda}}-\frac{\(1+\mathcal{O}(\lambda^{-1/2})\)\e^{2(\theta(y)-\theta(x))}}
{2\sqrt{\lambda}},$$
where $\theta(y)-\theta(x)=(y-x)\sqrt{\lambda}$ is bounded for $y<x$ and $\Im\lambda\geq 0,$ and by successive approximation method we obtain Lemma \ref{lem_Fl_comp}.
\end{proofof}

\begin{remark}
It is not trivial to extend the result of Lemma \ref{lem_Fl_comp} beyond the case of compactly supported perturbation. This is  due to the presence of the term $\frac{1}{\sqrt{\lambda-\lambda_0(y)}},$ in which both $\lambda$ and $\lambda_0(y)$ might be large, but their difference might be small.
\end{remark}

\subsection{Right Jost solution.}
\label{section_right_Jost}

\begin{lem}\label{lem_Fr}
 Let $t_0\in\mathbb{R}$ and $u_{t_0}(x)\in L^1_{loc}\(\mathbb{R},\mathbb{R}\)$ be a locally integrable function such that 
$$\int\limits^{+\infty}_{1}\frac{|u_{t_0}(x)-U(x,t_0)|\ \d x}{\sqrt[6]{x}}<\infty.$$
Then there exists a unique $2\times 1$ vector-valued function $F_{r}(x,t_0;\lambda)$ 
(which we call the {\rm{right Jost solution}}), which is differentiable in $x,$ and satisfies the \textrm{$x$-equation}
\begin{equation}\label{x_eq_Fr}
 F_{r,x}=\begin{pmatrix}
      0 & 1 \\ \lambda-2u_{t_0}(x) & 0
     \end{pmatrix} F_r,\qquad F_r(x,t_0;\lambda)=:\begin{pmatrix}f_r(x,t_0;\lambda)\\f_{r,x}(x,t_0;\lambda)\end{pmatrix}
\end{equation}
such that 
\begin{enumerate}
 \item Analyticity: $F_r(x,t_0;\lambda)$ is analytic in the whole complex plane $\lambda\in\mathbb{C}.$ 
\item Symmetry:
$$\ol{F_{r}(x,t_0;\ol{\lambda})}=F_{r}(x,t_0;\lambda).$$
\item Large $x\to+\infty$ asymptotics: $$F_{r}(x,t_0;\lambda)=E_{r}(x,t_0;\lambda)(1+\mathcal{O}(\sigma_{r}(x))), \quad x\to+\infty,$$
where $$\sigma_r(x)=\int\limits^{+\infty}_{x}\frac{|u_{t_0}(y)-U(y,t_0)|\ \d y}{\sqrt[6]{y}}.$$
\item Additional smoothness:
if $u_{t_0}(x)\in C^{n}(x\in\mathbb{R}),$ then $F_r(x,t_0;\lambda)\in C^{n+2}(x\in\mathbb{R}).$
\end{enumerate}
\end{lem}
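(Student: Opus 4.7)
The proof will follow almost verbatim the strategy used for Lemma \ref{lem_Fl}, with the left endpoint $-\infty$ replaced by the right endpoint $+\infty$ and $E_l$ replaced by $E_r$ as the seed. Concretely, I will construct $f_r$ as the unique solution of a Volterra integral equation on $[x,+\infty)$, solve it by successive approximations using the integrability hypothesis on $(u_{t_0}-U)/y^{1/6}$ to control the tail, and then extend from large $x$ to all $x\in\mathbb{R}$ via the ODE \eqref{x_eq_Fr}. Symmetry will come from the real--valuedness of $u_{t_0}$ combined with Lemma \ref{lemma_E_rl}(3), and the smoothness bootstrap from \eqref{x_eq_Fr}.

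\textbf{The integral equation and its analyticity.} The natural Volterra IE is
\begin{equation*}
f_r(x,t_0;\lambda)=e_r(x,t_0;\lambda)+\int_{x}^{+\infty}\bigl[e_r(x)e_l(y)-e_l(x)e_r(y)\bigr]\cdot 2\bigl(u_{t_0}(y)-U(y,t_0)\bigr)\cdot f_r(y)\,\d y,
\end{equation*}
where $e_l$ stands for $e_{lu}$ in $\Im\lambda>0$ and for $e_{ld}$ in $\Im\lambda<0$; the normalisation is fixed by $W\{e_l,e_r\}=-1$ (Lemma \ref{lemma_E_rl}(4)). The new point compared to the left case is that $F_r$ has to be analytic in \emph{all} of $\mathbb{C}$, whereas each of $e_{lu}, e_{ld}$ is only analytic in a half-plane. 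This works because the Green's-function kernel $K(x,y;\lambda):=e_r(x)e_l(y;\lambda)-e_l(x;\lambda)e_r(y)$ has no jump across $\mathbb{R}$: by \eqref{Eluldr_relation} we have $e_{lu}-e_{ld}=-\i e_r$, and hence
\begin{equation*}
K_+-K_-=e_r(x)\bigl(e_{lu}(y)-e_{ld}(y)\bigr)-\bigl(e_{lu}(x)-e_{ld}(x)\bigr)e_r(y)=-\i\bigl[e_r(x)e_r(y)-e_r(x)e_r(y)\bigr]=0.
\end{equation*}
Being analytic in both half-planes and continuous across $\mathbb{R}$, the kernel $K$ extends to an entire function of $\lambda$, and so do all Neumann iterates.

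\textbf{Estimates and convergence.} Dividing by $e_r(x)$ (which is nonzero for all $\lambda\in\mathbb{C}$ and sufficiently large $x>0$ by Lemma \ref{lemma_E_rl}), one rewrites the IE as an equation for $f_r/e_r$, with kernel $e_l(y)e_r(y)-\frac{e_l(x)}{e_r(x)}e_r^2(y)$. The asymptotics of Lemma \ref{lemma_E_rl} as $y\to+\infty$ give $|e_l(y)e_r(y)|=\mathcal{O}(y^{-1/6})$ (the exponentials $e^{\pm g(y)}$ cancel), while $|e_r(y)^2|$ is bounded by $C e^{-c\,y^{7/6}}$ with constants uniform on compact subsets of $\lambda$, thanks to the positivity of the leading $\frac{(-\lambda_0)^{7/2}}{56}$ term in $\Re g$ at $x\to+\infty$. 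Successive approximation then yields $|f_r/e_r-1|(x)\leq \sum_{n\geq 1}\frac{(C\sigma_r(x))^n}{n!}\to 0$ as $x\to+\infty$, which gives both existence/uniqueness on a half-line $[x_0,+\infty)$ and the claimed large-$x$ asymptotics. Extension to all of $\mathbb{R}$ is immediate since \eqref{x_eq_Fr} has smooth coefficients with no finite singularities. Uniform-on-compacta convergence of the Neumann series, together with entirety of each iterate, delivers analyticity of $F_r$ on $\mathbb{C}$; the symmetry $\overline{F_r(x,t_0;\overline\lambda)}=F_r(x,t_0;\lambda)$ is inherited from Lemma \ref{lemma_E_rl}(3) via uniqueness; and the smoothness statement $F_r\in C^{n+2}$ under $u_{t_0}\in C^n$ follows by bootstrap from \eqref{x_eq_Fr}.

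\textbf{Main obstacle.} The nontrivial step, which has no analogue in the proof of Lemma \ref{lem_Fl}, is the verification that the kernel $K(x,y;\lambda)$ is entire in $\lambda$. This rests precisely on the RHP jump relation encoded in \eqref{Eluldr_relation}, and is exactly what distinguishes the right-Jost case (analytic in all of $\mathbb{C}$, because the cut along $\mathbb{R}$ for $e_l$ cancels inside $K$) from the left-Jost case (analytic only in $\mathbb{C}\setminus\mathbb{R}$). Apart from this, the technical estimates are a direct adaptation of those used for the left Jost solution, the only change being that the "bad" factor $(\lambda-\lambda_0(y))^{-1/2}$ here sits in a regime $\lambda_0(y)\to-\infty$ for which $e_r^2(y)$ decays super-exponentially uniformly in $\lambda$, rather than blowing up as in the left case.
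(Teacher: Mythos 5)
Your proposal is correct and follows essentially the same route as the paper: the same Volterra integral equation with kernel $e_r(x)e_l(y)-e_l(x)e_r(y)$, the same observation via \eqref{Eluldr_relation} that this kernel has no jump across $\mathbb{R}$ and is therefore entire, the same division by $e_r$ with the kernel bound $C|y|^{-1/6}$ coming from Lemma \ref{lemma_E_rl} and the boundedness of $\e^{2(g(x)-g(y))}$ for $y\geq x$, and the same successive-approximation and extension arguments. The only cosmetic differences are that the paper obtains the derivative statement from a second integral representation rather than a bootstrap, and your bound on the second kernel term should strictly be stated for the full ratio $\frac{e_l(x)}{e_r(x)}e_r^2(y)$ rather than for $e_r^2(y)$ alone, but the monotonicity of $g$ that you invoke is exactly what makes this work.
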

\noindent In order to study large $\lambda$ behavior of $F_l$ we need to strengthen more the decaying conditions on $u_{t_0}(x)-U(x,t_0):$
\begin{lem}\label{lem_Fr_comp}
\clu{Assume} in addition to conditions of Lemma \ref{lem_Fr} that for some $B\in\mathbb{R}$ 
$$u_{t_0}(x)=U(x,t_0), \quad x>B.$$
Then for any fixed $x\in\mathbb{R}$ and small enough  $\varepsilon>0$ uniformly  w.r.t.  $\arg\lambda\in[-\pi+\varepsilon, \pi-\varepsilon]$
$$F_{r}(x,t_0;\lambda)=E_r(x,t_0;\lambda)\(1+\mathcal{O}(\frac{1}{\sqrt{\lambda}})\),\quad \lambda\to\infty,$$
and the latter relation we understand in the sense that 
$f_r=e_r(1+\mathcal{O}(\frac{1}{\sqrt{\lambda}})), \quad f_{rx}=e_{rx}(1+\mathcal{O}(\frac{1}{\sqrt{\lambda}})).$
\end{lem}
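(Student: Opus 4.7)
The strategy is to mirror the argument used in the proof of Lemma \ref{lem_Fl_comp}, with the roles of $-\infty$ (and $A$) and $+\infty$ (and $B$) exchanged. First I would convert the $x$-equation \eqref{x_eq_Fr} together with the boundary condition $F_r\sim E_r$ at $+\infty$ into an integral equation. Using the two linearly independent unperturbed solutions $e_r$ and $e_{l\sharp}$ (where $\sharp=u$ for $\Im\lambda>0$ and $\sharp=d$ for $\Im\lambda<0$), whose Wronskian is constant in $x$ by Lemma \ref{lemma_E_rl}, variation of parameters gives
\begin{equation*}
f_r(x,t_0;\lambda)=e_r(x,t_0;\lambda)+\int\limits_x^{+\infty}\bigl[e_r(x)\,e_{l\sharp}(y)-e_{l\sharp}(x)\,e_r(y)\bigr]\cdot 2\bigl(u_{t_0}(y)-U(y,t_0)\bigr)f_r(y,t_0;\lambda)\,\d y.
\end{equation*}
Since $u_{t_0}(y)=U(y,t_0)$ for $y>B$, the effective integration range collapses to the \emph{finite} interval $[x,B]$ when $x<B$, and the equation trivially gives $f_r=e_r$ for $x\geq B$.

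Next I would divide by $e_r(x)$ (which is nonvanishing in the relevant sector for large $\lambda$) to obtain a Volterra equation on $[x,B]$:
\begin{equation*}
\frac{f_r(x)}{e_r(x)}=1+\int\limits_x^{B}\Bigl[e_{l\sharp}(y)\,e_r(y)-\frac{e_{l\sharp}(x)}{e_r(x)}\,e_r(y)^2\Bigr]\cdot 2(u_{t_0}-U)(y)\cdot\frac{f_r(y)}{e_r(y)}\,\d y,
\end{equation*}
and estimate its kernel as $\lambda\to\infty$ by plugging in the uniform large-$\lambda$ asymptotics \eqref{eq_3a_E_asymp}, \eqref{asserE_dev} together with the identifications \eqref{Eluldr_from_E}: for bounded $x$, $e_r\sim(\sqrt{2}\lambda^{1/4})^{-1}\e^{-\theta}$ and $e_{l\sharp}\sim(\sqrt{2}\lambda^{1/4})^{-1}\e^{\theta}$, modulo $(1+\mathcal{O}(\lambda^{-1/2}))$. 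Because the pieces $\lambda^{7/2}/105$ and $t\lambda^{3/2}/3$ of $\theta$ do not depend on $x$, one has $\theta(x)-\theta(y)=(x-y)\sqrt{\lambda}$, and hence
\begin{equation*}
e_{l\sharp}(y)\,e_r(y)=\frac{1+\mathcal{O}(\lambda^{-1/2})}{2\sqrt\lambda},\qquad \frac{e_{l\sharp}(x)}{e_r(x)}\,e_r(y)^2=\frac{\e^{2(x-y)\sqrt\lambda}\bigl(1+\mathcal{O}(\lambda^{-1/2})\bigr)}{2\sqrt\lambda}.
\end{equation*}
For $y\in[x,B]$ one has $y-x\geq 0$, and for $\arg\lambda\in(-\pi,\pi)$ the principal branch gives $\Re\sqrt\lambda\geq 0$, so $|\e^{2(x-y)\sqrt\lambda}|\leq 1$ and the whole kernel is $\mathcal{O}(\lambda^{-1/2})$.

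With this bound the Volterra structure on the bounded interval $[x,B]$ allows me to close the argument by successive approximations exactly as at the end of the proof of Lemma \ref{lem_Fl_comp}, yielding $f_r(x)/e_r(x)=1+\mathcal{O}(\lambda^{-1/2})$; the corresponding statement for $f_{r,x}/e_{r,x}$ follows by differentiating the IE in $x$ and repeating the kernel estimate. The main obstacle I anticipate, and the reason for the restriction $\arg\lambda\in[-\pi+\varepsilon,\pi-\varepsilon]$, is making the kernel bound \emph{uniform} up to $\arg\lambda=\pm\pi$. In the sectors $(6\pi/7,\pi)$ and $(-\pi,-6\pi/7)$ the second column of $\mathbb{E}$ is no longer $E_r$ but $-\i E_{ld}$ or $\i E_{lu}$ respectively, so by \eqref{Eluldr_relation} the scalar $e_r$ admits a Stokes decomposition of the form $(\i\,\e^{\theta}+\e^{-\theta})/(\sqrt 2\,\lambda^{1/4})$ there. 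As $\arg\lambda\to\pm\pi$, $\Re\theta\to 0$, the two exponentials become comparable in magnitude, and $|e_r(x)|$ may shrink through small values by destructive cancellation; consequently the factor $e_{l\sharp}(x)/e_r(x)$ in the kernel loses its uniform bound and the dominant-balance estimate fails. Bounding $\arg\lambda$ away from $\pm\pi$ by a fixed $\varepsilon>0$ is precisely what preserves the clean dominant balance $e_r\sim\e^{-\theta}/(\sqrt 2\,\lambda^{1/4})$ and lets the Neumann-series estimate go through.
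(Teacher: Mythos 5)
Your proposal follows essentially the same route as the paper's proof: reduce to the Volterra integral equation for $f_r/e_r$ over the finite interval $[x,B]$, bound the kernel by $\mathcal{O}(\lambda^{-1/2})$ using $\theta(x)-\theta(y)=(x-y)\sqrt{\lambda}$ with $\Re\sqrt{\lambda}\geq 0$, close by successive approximations, and handle the derivative via the analogous representation with $e'_{l}(x)/e'_{r}(x)$; your explanation of the $\varepsilon$-restriction (near $\arg\lambda=\pm\pi$ the Stokes decomposition $e_r\propto \i\e^{\theta}+\e^{-\theta}$ permits cancellation, so $e_l(x)/e_r(x)$ loses its uniform bound) is precisely the reason the paper confines the estimate to $\arg\lambda\in[-\pi+\varepsilon,\pi-\varepsilon]$. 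The only blemish is a sign on the inhomogeneity (the paper's IE carries $2(U-u_{t_0})$, not $2(u_{t_0}-U)$, for the rightward integration), which is immaterial to the asymptotic estimate.
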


\begin{proofof}{ \textit{Lemmas \ref{lem_Fr}, \ref{lem_Fr_comp}.}}
The proof is very similar to the case of the \textit{left Jost solution}, but with slight differences, which we point out.
 We will look for solution of the $x-$equation \eqref{x_eq_Fr} as a solution to the integral equation (IE)
$$f_{r}(x,t_0;\lambda)=e_{r}(x,t_0;\lambda)+\int\limits^{+\infty}_x \(e_{r}(x)e_{l}(y)-e_{l}(x)e_{r}(y)\)\cdot 2(U(y,t_0)-u_{t_0}(y))\cdot f_{r}(y,t_0;\lambda)\ \d y.$$
Here the kernel 
$\(e_{r}(x)e_{l}(y)-e_{l}(x)e_{r}(y)\)$
does not have discontinuity across the real line $\lambda\in\mathbb{R},$
since in view of \eqref{Eluldr_relation}
$$\(e_{r}(x)e_{lu}(y)-e_{lu}(x)e_{r}(y)\)=\(e_{r}(x)e_{ld}(y)-e_{ld}(x)e_{r}(y)\).$$
For a fixed $\lambda$ and big enough positive $x,$ the function $e_{r}(y)=e_{r}(y,t_0;\lambda)$ does not vanish for $y>x$, so we can divide by it.
\begin{equation}\label{IE_Fr_Er}\frac{f_{r}(x,t_0;\lambda)}{e_{r}(x,t_0;\lambda)}=1+\int\limits^{+\infty}_x \(e_{l}(y)e_{r}(y)-\frac{e_{l}(x)}{e_{r}(x)}e^2_{r}(y)\)\cdot 2(U(y,t_0)-u_{t_0}(y))\cdot 
\frac{f_{r}(y,t_0;\lambda)}{e_{r}(y,t_0;\lambda)}\ \d y,\end{equation}
Lemma \ref{lemma_E_rl} yield that for a fixed $\lambda$ the kernel equals
$$\(e_{l}(y)e_{r}(y)-\frac{e_{l}(x)}{e_{r}(x)}e^2_{r}(y)\)=\frac{1+\mathcal{O}(|y|^{-1/6})}{2\sqrt{\lambda-\lambda_0(y)}}-\frac{\(1+\mathcal{O}(|x|^{-1/6})\)\(1+\mathcal{O}(|y|^{-1/6})\)\e^{2(g(y)-g(x))}}
{2\sqrt{\lambda-\lambda_0(y)}}.$$
Hence, it is bounded by 
$$\left|\frac{C}{\sqrt{\lambda-\lambda_0(y)}}\right|\leq\frac{C}{|y|^{1/6}},\quad y\to+\infty.$$
Solvability of the IE \eqref{IE_Fr_Er} for sufficiently large positive $x$ now follows by successive approximation method.
Once existence of $f_{r}$ is established for sufficiently large positive $x,$ we can extend it for all real $x.$
The statement for the derivative $f'_{r}$ (which is taken  w.r.t.  $x$) follows from the integral representation
$$
\frac{f'_{r}(x,t_0;\lambda)}{e'_{r}(x,t_0;\lambda)}=1+\int\limits^{+\infty}_x\(e_{l}(y)e_{r}(y)-\frac{e'_{l}(x)}{e'_{r}(x)}e^2_{r}(y)\)\cdot 2(U(y,t_0)-u_{t_0}(y))\cdot 
\frac{f_{r}(y,t_0;\lambda)}{e_{r}(y,t_0;\lambda)}\ \d y.
$$
This proves Lemma \ref{lem_Fr}.

To prove Lemma \ref{lem_Fr_comp}, we 
notice that since $(u-U)(y, t_0)=0$ for $y>B,$ the integral in \eqref{IE_Fr_Er} is taken over a finite interval.
When $y$ varies over a finite interval and $\lambda\to\infty,$ $\arg\lambda\in[-\pi+\varepsilon,\pi-\varepsilon],$ function $e_{r}(x,t_0;\lambda)$ does not vanish, and has large $\lambda$ asymptotics followed by \eqref{eq_3a_E_asymp}, \eqref{Eluldr_from_E}.
Hence, the kernel of \eqref{IE_Fr_Er} \clu{admits} an estimate
$$\(e_{l}(y)e_{r}(y)-\frac{e_{l}(x)}{e_{r}(x)}e^2_{r}(y)\)=\frac{1+\mathcal{O}(\lambda^{-1/2})}{2\sqrt{\lambda}}-\frac{\(1+\mathcal{O}(\lambda^{-1/2})\)\e^{2(\theta(x)-\theta(y))}}
{2\sqrt{\lambda}},$$
where $\theta(x)-\theta(y)=(x-y)\sqrt{\lambda}$ is bounded for $y>x,$ and by successive \clu{approximations} method we obtain the statement of Lemma \ref{lem_Fr_comp}.
\end{proofof}

\begin{remark}
It is not trivial to extend the result of Lemma \ref{lem_Fr_comp} beyond the case of compactly supported perturbation. This is  due to the presence of the term $\frac{1}{\sqrt{\lambda-\lambda_0(y)}},$ in which both $\lambda$ and $\lambda_0(y)$ might be large, but their difference might be small.
\end{remark}

\section{Spectral functions $a(\lambda)$ and $b(\lambda).$}\label{sect_abr}
\subsection{Function $a(\lambda).$}\label{sect_a}

\begin{lem}\label{lem_a}
\textit{\textbf{Scattering relation.}} Let $t_0\in\mathbb{R},$ and $u_{t_0}(x)\in L^1_{loc}(\mathbb{R})$ such that 
\begin{equation}\label{int_uU}\int\limits_{-\infty}^{+\infty}\dfrac{|u_{t_0}(x)-U(x,t_0)|\ \d x}{1+\sqrt[6]{|x|}}<\infty.\end{equation}
By Lemmas \ref{lem_Fl}, \ref{lem_Fr} there exist Jost solutions $F_l(x,t_0;\lambda),$ $F_r(x,t_0;\lambda).$ 
Define an analytic in $\lambda\in\mathbb{C}\setminus\mathbb{R}$
function $a(\lambda)=a(\lambda;t_0)$ by the formula 
\begin{equation}\label{a_def}a(\lambda):=\det(F_r,F_l),\quad a_u(\lambda):=\det(F_r,F_{lu}),\ a_d(\lambda):=\det(F_r,F_{ld}),
\end{equation}
\begin{equation}\label{a_auad} a(\lambda)\equiv\begin{cases}a_u(\lambda),\Im \lambda>0,\\a_d(\lambda),\Im\lambda<0.\end{cases}\end{equation}
Then, for $\lambda\in\mathbb{R},$
\begin{equation}\label{scat_rel}F_{r}(x,t_0;\lambda)=\i a_d(\lambda)F_{lu}(x,t_0;\lambda)-\i a_u(\lambda)F_{ld}(x,t_0;\lambda)\quad =\quad \i a(\lambda-\i 0)F_l(\lambda+\i 0)-\i
a(\lambda+\i 0) F_l(\lambda-\i 0).\end{equation}
\end{lem}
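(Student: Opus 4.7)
The plan is to prove the scattering relation by a linear-algebra argument in the two-dimensional solution space of the $x$-equation, followed by Cramer-type determinant computations.

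First I would observe that for any fixed $\lambda\in\mathbb{R}$, the three vectors $F_r(x,t_0;\lambda)$, $F_{lu}(x,t_0;\lambda)$, $F_{ld}(x,t_0;\lambda)$ all satisfy the same first-order linear system
\[
F_x=\begin{pmatrix}0&1\\ \lambda-2u_{t_0}(x)&0\end{pmatrix}F,
\]
whose solution space is two-dimensional. Here $F_{lu},F_{ld}$ are understood as boundary values from $\Im\lambda>0$, $\Im\lambda<0$ respectively; their existence on $\mathbb{R}$, together with their satisfying the $x$-equation on the real axis, is guaranteed by Lemma \ref{lem_Fl}. By item 4 of Lemma \ref{lem_Fl} the Wronskian $\det(F_{lu},F_{ld})=W\{f_{lu},f_{ld}\}=-\i\neq 0$, so $\{F_{lu},F_{ld}\}$ forms a basis of the solution space on the real line. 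Consequently there exist scalars $\alpha(\lambda),\beta(\lambda)$, depending only on $\lambda$, such that
\[
F_r(x,t_0;\lambda)=\alpha(\lambda)F_{lu}(x,t_0;\lambda)+\beta(\lambda)F_{ld}(x,t_0;\lambda).
\]

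Next I would recover $\alpha$ and $\beta$ by taking $2\times 2$ determinants with $F_{ld}$ and with $F_{lu}$. Using bilinearity and $\det(F_{lu},F_{ld})=-\i$, I get
\[
a_d(\lambda)=\det(F_r,F_{ld})=\alpha\det(F_{lu},F_{ld})=-\i\,\alpha,
\qquad
\det(F_r,F_{lu})=\beta\det(F_{ld},F_{lu})=\i\,\beta,
\]
so $\alpha=\i a_d$ and $\beta=-\i a_u$. Substituting back gives the first equality
\[
F_r=\i a_d F_{lu}-\i a_u F_{ld}.
\]
The second equality in \eqref{scat_rel} is then immediate from the definitions \eqref{a_auad} and from the fact that, for real $\lambda$, $F_l(\lambda+\i 0)=F_{lu}(\lambda)$, $F_l(\lambda-\i 0)=F_{ld}(\lambda)$, $a(\lambda+\i 0)=a_u(\lambda)$, $a(\lambda-\i 0)=a_d(\lambda)$.

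The proof is almost entirely algebraic, so there is no serious obstacle; the only point that must be checked carefully is that the three Jost vectors have well-defined, continuous boundary values on $\mathbb{R}$ and that the Wronskian identity $\det(F_{lu},F_{ld})=-\i$ persists up to the real axis. Both points are supplied by the analyticity/continuity statements of Lemma \ref{lem_Fl} (which ensures $F_{lu},F_{ld}$ extend continuously to $\mathbb{R}$) and by Lemma \ref{lem_Fr} (which makes $F_r$ even entire in $\lambda$), together with the $x$-independence of the Wronskian of any two solutions of the same second-order equation, so that the value $-\i$ obtained at $x=-\infty$ via Lemma \ref{lemma_E_rl}, item 4, propagates to all $x\in\mathbb{R}$.
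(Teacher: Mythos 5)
Your proof is correct and follows essentially the same route as the paper: both expand $F_r$ in the basis $\{F_{lu},F_{ld}\}$ of the two-dimensional solution space and identify the coefficients using the Wronskian $\det(F_{lu},F_{ld})=-\i$ from Lemma \ref{lem_Fl}. Your Cramer-rule computation of $\alpha=\i a_d$, $\beta=-\i a_u$ is in fact slightly more direct than the paper's version, which first posits four unknown coefficients $a(\lambda\pm\i 0)$, $d(\lambda\pm\i 0)$ and relates them via the symmetry $\ol{F_{lu}(\ol\lambda)}=F_{ld}(\lambda)$ and the single-valuedness of $F_r$ across $\mathbb{R}$ before identifying them with the determinants.
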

\begin{proof}
\noindent For any $\lambda\in\mathbb{R},$ the functions $F_{r}(\lambda),$ $F_l(\lambda+\i 0)=F_{lu}(\lambda),$ $F_l(\lambda-\i 0)=F_{ld}(\lambda)$ are the solutions of the first equation in \eqref{x-eq_t-eq_lambda-eq} ({\textit{$x-$equation}}). 
Hence, there exist 4 functions $a(\lambda+i 0),$ $a(\lambda-i 0),$ $d(\lambda+i 0),$ $d(\lambda-i 0)$ such that for
$\lambda\in\mathbb{R}$
\begin{equation}\nonumber
F_r(\lambda+\i 0)=-\i a(\lambda+\i 0)F_l(\lambda-\i 0)+\i
d(\lambda+\i 0) F_l(\lambda+\i 0),
\end{equation}
\begin{equation}\nonumber
F_r(\lambda-\i 0)=\i a(\lambda-\i 0)F_l(\lambda+\i 0)-\i
d(\lambda-\i 0) F_l(\lambda-\i 0).
\end{equation}
Due to symmetry properties of Lemmas \ref{lem_Fl}, \ref{lem_Fr}, and since
$F_r(\lambda+\i 0)=F_r(\lambda-\i 0)=F_r(\lambda),$ we conclude that 
$$a(\lambda+\i 0)=d(\lambda-\i 0),\  d(\lambda+\i
0)=a(\lambda-\i 0),  \ol{a(\lambda-\i 0)}=a(\lambda+\i 0),
\ \ol{a(\lambda-\i 0)}=a(\lambda+\i 0),\quad
\lambda\in\mathbb{R}.$$
Formula \eqref{a_def} for $a$ can now be obtained from \eqref{scat_rel} using property \ref{det_prop} of Lemma \ref{lem_Fl}. Formula \eqref{a_def} extends the domain of definition of $a(\lambda)$ from $\lambda\in\mathbb{R}\pm\i 0$ to $\lambda\in\(\mathbb{C}\setminus\mathbb{R}\)\cup\(\mathbb{R}+\i 0\)\cup(\mathbb{R}-\i 0).$
\end{proof}

\begin{lem}\label{lem_a_prop} \textbf{Properties of $a(\lambda).$}
Let $t_0$ and $u_{t_0}(x)$ be as in Lemma \ref{lem_a}, i.e. \eqref{int_uU} holds.
Then $a(\lambda)=a(\lambda;t_0)$ satisfies the following properties: 
\begin{enumerate}
\item \label{prop_sym_a} Symmetry:
$\ol{a(\ol\lambda)}=a(\lambda).$ 
\item Nonvanishing: $a(\lambda)\neq 0$ for
$\lambda\in\(\mathbb{C}\setminus\mathbb{R}\)\cup(\mathbb{R}+\i 0)\cup(\mathbb{R}-\i 0).$
\item\label{prop_ident_asymp_a} If in addition $$(u-U)(x, t_0)=0\quad \mbox{ for }\quad x<A  \quad \mbox{and }\quad x>B$$ 
for some real $A<B,$ then 
\begin{equation}\label{a_asymp}a(\lambda)=1+\frac{1}{\sqrt{\lambda}}\int\limits_{A}^{B}(U(x,t_0)-u_{t_0}(x))\d x+{\ol{_\mathcal{O}}}(\frac{1}{\sqrt{\lambda}})\end{equation} as
$\lambda\to\infty,$ uniformly in $\arg\lambda\in[-\pi+\varepsilon,0]\cup[0,\pi-\varepsilon],$
for any $\varepsilon>0.$

\item For compactly supported perturbation  $u_{t_0}(x)$ of $U(x,t_0),$ the functions $a_u(\lambda),$ $a_d(\lambda)$ can be extended analytically to the whole complex plane.
\end{enumerate}
\end{lem}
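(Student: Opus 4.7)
To prove the lemma I would handle the four items in order of increasing depth, using the Jost solution estimates of Section~\ref{section_Jost}.

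The symmetry (item~\ref{prop_sym_a}) and the analytic extension (item~4) are immediate. For the symmetry I take complex conjugates inside the determinant defining $a$ and use the conjugation symmetries of $F_r$ and $F_l$ from Lemmas~\ref{lem_Fr} and~\ref{lem_Fl}, obtaining $\overline{a_u(\overline\lambda)}=a_d(\lambda)$. For item~4, Lemma~\ref{lem_Fl_exp} applies with every $C>0$ when $u_{t_0}-U$ is compactly supported, so $F_{lu}$ and $F_{ld}$ extend to entire functions of $\lambda$; combined with $F_r$ being entire (Lemma~\ref{lem_Fr}), the Wronskians $a_u,a_d$ are entire.

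For the nonvanishing (item~2) I would treat the open half-planes and the real boundary separately. If $a_u(\lambda_0)=0$ for some $\lambda_0$ with $\Im\lambda_0>0$, then $F_r(\cdot,\lambda_0)$ and $F_{lu}(\cdot,\lambda_0)$ are proportional, so the common solution $f$ of $f_{xx}+2u_{t_0}f=\lambda_0 f$ decays exponentially, together with its derivative, at both $\pm\infty$ (Lemmas~\ref{lem_Fr},~\ref{lem_Fl},~\ref{lemma_E_rl}). Multiplying the equation by $\overline f$, subtracting the conjugate equation times $f$, and integrating over $\mathbb R$ gives $2\i\,\Im\lambda_0\,\|f\|_{L^2}^2=0$, a contradiction. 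For real $s$, the already-established symmetry $a_d(s)=\overline{a_u(s)}$ shows that any zero of $a_u$ is also a zero of $a_d$, whereupon the scattering relation~\eqref{scat_rel} collapses to $F_r(\cdot,s)\equiv 0$, contradicting $F_r\sim E_r\neq 0$ at $x\to+\infty$.

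The asymptotic (item~\ref{prop_ident_asymp_a}) I would obtain from a Wronskian identity. Since $f_r$ solves the Schr\"odinger equation with potential $u_{t_0}$ and $e_l$ with potential $U$, a direct computation yields $[W(f_r,e_l)]_x=2(u_{t_0}-U)\,f_r\,e_l$. Integrating from $x_1<A$ to $x_2>B$, using $f_l=e_l$ on $(-\infty,A)$ (so the boundary term at $x_1$ equals $W(f_r,f_l)(x_1)=a(\lambda)$) and $f_r=e_r$ on $(B,+\infty)$ together with $W(e_r,e_l)=1$ from Lemma~\ref{lemma_E_rl}, part~4 (so the boundary term at $x_2$ equals $1$), produces the exact identity
\[
a(\lambda)=1-2\int_A^B(u_{t_0}-U)\,f_r\,e_l\,dx.
\]
Then Lemma~\ref{lem_Fr_comp} gives $f_r=e_r\bigl(1+\mathcal O(\lambda^{-1/2})\bigr)$, while the expansion~\eqref{asserE_dev} of $\mathbb E$ gives $e_l\,e_r=\tfrac{1}{2\sqrt\lambda}\bigl(1+\mathcal O(\lambda^{-1/2})\bigr)$ (the factors $e^{\pm\theta}$ cancel), and substituting yields the claimed asymptotic. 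The main obstacle here is the uniformity claim: to substitute these pointwise asymptotics inside the integral I need the remainder $f_r/e_r-1=\mathcal O(\lambda^{-1/2})$ uniformly in $x\in[A,B]$ and in $\arg\lambda\in[-\pi+\varepsilon,0]\cup[0,\pi-\varepsilon]$, which forces a careful re-examination of the successive-approximation scheme of Lemma~\ref{lem_Fr_comp} to check that its kernel bounds are uniform in both the compact $x$-interval and the chosen $\arg\lambda$ sector.
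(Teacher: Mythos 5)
Your treatment of items 1, 2 and 4 coincides with the paper's: symmetry from the conjugation symmetries of $F_r,F_l$, nonvanishing by the same two-case argument (the energy/self-adjointness identity off the real axis, and the scattering relation \eqref{scat_rel} forcing $F_r\equiv 0$ on it), and entirety of $a_u,a_d$ from Lemma \ref{lem_Fl_exp}. For item 3 you take a genuinely different and more self-contained route. The paper's proof inside Lemma \ref{lem_a_prop} merely invokes the multiplicative asymptotics $F_{r,l}=E_{r,l}(1+\mathcal{O}(\lambda^{-1/2}))$ of Lemmas \ref{lem_Fl_comp}, \ref{lem_Fr_comp}, which by themselves only yield $a(\lambda)=1+\mathcal{O}(\lambda^{-1/2})$; the explicit coefficient $\frac{1}{\sqrt\lambda}\int_A^B(U-u_{t_0})\,\d x$ is actually extracted only later, in Lemma \ref{lem_b_a_prop}, from the transfer-matrix representation \eqref{a_h12} together with the Volterra kernels $R,L$ of \eqref{h1}--\eqref{h2x} and the identities $R(C,C)=\int_C^{+\infty}\widetilde u$, $L(C,C)=-\int_{-\infty}^{C}\widetilde u$, $\partial_x b_1=-U$. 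Your exact identity $a(\lambda)=1-2\int_A^B(u_{t_0}-U)f_r e_l\,\d x$, obtained from $[W\{f_r,e_l\}]_x=2(u_{t_0}-U)f_r e_l$ and the boundary values $W\{f_r,f_l\}=a$, $W\{e_r,e_l\}=1$, delivers the coefficient in one step once $f_r e_l=\frac{1}{2\sqrt\lambda}(1+\mathcal{O}(\lambda^{-1/2}))$ uniformly on $[A,B]$ is known, and it makes transparent that the remainder is in fact $\mathcal{O}(\lambda^{-1})$ for $u_{t_0}\in L^1$, stronger than the stated little-$o$. What it does not buy is the uniformity up to $\arg\lambda=\pm\pi$ and the refined decay rates tied to the smoothness of $u_{t_0}$, for which the paper's $R,L$-kernel machinery of Lemmas \ref{lem_b_a_prop}, \ref{lem_r_refined} is still needed. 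Your closing caveat about verifying that the successive-approximation bounds are uniform in $x\in[A,B]$ and in the sector is exactly the right point to check; it is supplied by the kernel estimate displayed in the proof of Lemma \ref{lem_Fr_comp}, where $\theta(x)-\theta(y)=(x-y)\sqrt{\lambda}$ stays bounded for $y>x$ in the relevant range.
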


\begin{remark} Later on, in Lemma \ref{lem_b_a_prop}, we will see that the asymptotics \eqref{a_asymp} are valid not only outside of a cone around the negative real axis, but uniformly in the whole complex plane.
\end{remark}
\begin{proof}
Symmetry \ref{prop_sym_a} follows from the symmetry properties of Lemmas \ref{lem_Fl}, \ref{lem_Fr} and the definition \eqref{a_def} of $a(\lambda).$

To prove that $a(\lambda)\neq0$ everywhere, we suppose \clu{that, on the contrary},
there exists $\lambda^*$ such that $a(\lambda^*)=0.$ We have two possibilities: either
$\lambda^*\in\mathbb{R}$, or $\Im\lambda^*\neq0.$ If $\Im(\lambda^*)=0,$
then by symmetry \ref{prop_sym_a} we have
$a(\lambda^*+\i 0)=a(\lambda^*-\i 0)=0$ and hence by \eqref{scat_rel}
$F_r(x,t_0;\lambda^*)=0$ for any $x$, which contradicts the
asymptotics of $F_r(x,t_0;\lambda)$ for $x\to+\infty.$

Suppose that $\lambda^*\in\mathbb{C}\setminus\mathbb{R}$, then we may assume $\Im\lambda^*>0,$ without loss of
generality. In this case 
$F_r(x,t_0;\lambda^*)=cF_l(x,t_0;\lambda^*)$ for some constant $c.$
Thus, $F_r(x,t_0;\lambda^*)$ vanishes exponentially fast for both
$x\to\pm\infty,$ therefore by the usual scheme
$$\int\limits_{-\infty}^{+\infty}|f_r|^2+\int\limits_{-\infty}^{+\infty}2u|f_r|^2=\lambda\int\limits_{-\infty}^{+\infty}|f^2_r|$$
and hence $\lambda$ must be real.

Property 3 follows from the definition \eqref{a_def} of $a(\lambda)$ and large $\lambda$ asymptotics of $F_r, F_l$ from Lemmas \ref{lem_Fl_comp}, \ref{lem_Fr_comp}. Property 4 follows from the corresponding property of $F_l$ from Lemma \ref{lem_Fl_exp}.
\end{proof}

\subsection{Function $b(\lambda).$}\label{sect_b}
\begin{lem}\label{lem_com_sup}{\textit{\textbf{Form of Jost solutions for compactly supported perturbations.}}}
Let $t_0$ be real, $u_{t_0}(x)$ be a locally integrable function, and $$u_{t_0}(x)-U(x,t_0)=0\quad \mbox {for}\quad x<A\quad \mbox{ and }\quad x>B$$
for some real $A<B.$
Let the functions
$h_1(x, t_0; \lambda)$, $h_2(x, t_0; \lambda)$ be solutions of \eqref{exx}, i.e.
$$h_{xx}+2u_{t_0}(x) h=\lambda h,\quad A<x<B,$$
and let their Wronskian
$$W(\lambda)\equiv W(\lambda; t_0)=\left\{h_1, h_2\right\}\equiv h_1h_{2x}-h_2h_{1x}$$
not be identically 0. 

\noindent Then the Jost solutions have the form
\begin{equation}\label{fl_repr}f_{l}(x,t_0,\lambda)=\begin{cases}e_l(x,t_0,\lambda),\quad & x<A,
\\
\frac{1}{W(\lambda)}\left[\left\{e_l,h_2\right\}_A h_1(x,t_0,\lambda)-\left\{e_l,h_1\right\}_A h_2(x,t_0,\lambda)\right],\quad
& A<x<B,
\\b(\lambda; t_0)e_r(x,t_0,\lambda)+a(\lambda; t_0)e_l(x,t_0,\lambda),\quad
& x>B,
\end{cases}\end{equation}
and
\begin{equation}\label{fr_repr}f_{r}(x,t_0,\lambda)=\begin{cases}\i(a_d-a_u)e_{l}(x,t_0,\lambda)+a(\lambda; t_0)e_{r}(x,t_0,\lambda),\quad
&x<A,
\\
\frac{1}{W(\lambda)}\left[\left\{e_r, h_2\right\}_Bh_1(x,t_0,\lambda)-\left\{e_r,h_1\right\}_B h_2(x,t_0,\lambda)\right],\quad
&A<x<B,
\\e_r(x,t_0,\lambda),\quad & x>B.
\end{cases}\end{equation}


\noindent Here the function $a(\lambda;t_0)\equiv a(\lambda)$ is defined in \eqref{a_def},
 and an analytic in $\mathbb{C}\setminus\mathbb{R}$ and continuous up to the boundary $\mathbb{R}$ function $b(\lambda; t_0)\equiv b(\lambda),$
\begin{equation}\label{b_bubd}b(\lambda)\equiv\begin{cases}b_u(\lambda), \Im\lambda>0,\\ 
b_d(\lambda), \Im\lambda<0,\end{cases}\end{equation}
is determined by the representation \eqref{fl_repr}. \clu{Furthermore}, functions $b_u(\lambda),$ $b_d(\lambda),$ $a_u(\lambda),$ $a_d(\lambda)$ can be extended to entire functions, satisfying
\begin{equation}\label{b_a_rel}\frac{b_u}{a_u}-\frac{b_d}{a_d}=\frac{\i(a_ua_d-1)}{a_ua_d},\quad \mbox{or equivalently}\quad a_d b_u-a_u b_d=\i(a_ua_d-1),\quad \mbox{ for }\quad\lambda\in\mathbb{C}.\end{equation}
\end{lem}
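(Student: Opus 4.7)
The plan is to analyse $f_l$ and $f_r$ on each of the three intervals $x<A$, $A<x<B$, $x>B$ in turn, and then to establish \eqref{b_a_rel} by extending the scattering relation \eqref{scat_rel} to the complex plane by analytic continuation. On the exterior intervals, $u_{t_0}=U(\cdot,t_0)$, so the ODE \eqref{exx} with potential $u_{t_0}$ agrees with the ODE satisfied by $e_l$ and $e_r$. The uniqueness statements in Lemmas \ref{lem_Fl} and \ref{lem_Fr} then force $f_l\equiv e_l$ on $x<A$ and $f_r\equiv e_r$ on $x>B$, which is the first line of \eqref{fl_repr} and the last line of \eqref{fr_repr}.

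For $A<x<B$, the Jost solution $f_l$ is necessarily a linear combination $c_1h_1+c_2h_2$. Local integrability of the potential $u_{t_0}$ makes $f_l$ and $f_{l,x}$ continuous across $x=A$, so matching against $e_l(A),e_{l,x}(A)$ gives the $2\times 2$ linear system $c_1h_1(A)+c_2h_2(A)=e_l(A)$, $c_1h_{1,x}(A)+c_2h_{2,x}(A)=e_{l,x}(A)$. Cramer's rule together with the definition of the Wronskian yields $c_1=W\{e_l,h_2\}_A/W(\lambda)$ and $c_2=-W\{e_l,h_1\}_A/W(\lambda)$, which is the middle line of \eqref{fl_repr}. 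Matching $f_r$ at $x=B$ against $e_r,e_{r,x}$ delivers the middle line of \eqref{fr_repr} by the identical argument.

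To identify the coefficients on $x>B$, I write $f_l=\alpha(\lambda)e_l+b(\lambda)e_r$, which is legitimate since $W\{e_l,e_r\}=-1$ by Lemma \ref{lemma_E_rl}(4) makes $(e_l,e_r)$ a basis. Taking the Wronskian with $f_r=e_r$ and using antisymmetry gives $a(\lambda)=W\{f_r,f_l\}=\alpha(\lambda)$, so $\alpha=a$ and $b$ is defined as the coefficient of $e_r$. On $x<A$, the mirror argument writes $f_r=\gamma(\lambda)e_l+\delta(\lambda)e_r$; the Wronskian forces $\delta=a$, while $\gamma$ is pinned by substituting the relation $e_{ld}=e_{lu}+\i e_r$ (for $\Im\lambda>0$) or $e_{lu}=e_{ld}-\i e_r$ (for $\Im\lambda<0$) from \eqref{Eluldr_relation} into \eqref{scat_rel}; both cases collapse to $\gamma=\i(a_d-a_u)$.

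Finally, for \eqref{b_a_rel}: under compact support the hypothesis of Lemma \ref{lem_Fl_exp} holds for every constant $C$, so $F_{lu},F_{ld}$ continue to entire functions of $\lambda$; Lemma \ref{lem_a_prop}(4) gives the same for $a_u,a_d$; and $e_{lu},e_{ld},e_r$ are entire in $\lambda$ for each fixed $x$ by analytic dependence of ODE solutions on the spectral parameter. Hence $b_u=W\{f_{lu},e_{lu}\}$ and $b_d=W\{f_{ld},e_{ld}\}$ are entire as Wronskians of entire functions. Substituting $f_{lu}=b_ue_r+a_ue_{lu}$, $f_{ld}=b_de_r+a_de_{ld}$, and $f_r=e_r$ into \eqref{scat_rel} on $x>B$ and simplifying with $e_{lu}-e_{ld}=-\i e_r$ yields $1=\i(a_db_u-a_ub_d)+a_ua_d$ on $\mathbb{R}$, which rearranges to \eqref{b_a_rel}; analytic continuation then extends the identity to all of $\mathbb{C}$. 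The main obstacle is this last bookkeeping step: the scattering relation is available only on the real line, so one must first promote every ingredient in \eqref{b_a_rel} to an entire function before invoking analytic continuation.
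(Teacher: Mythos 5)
Your proof is correct and follows essentially the same route as the paper: continuity of $f_{l,r}$ and their $x$-derivatives at $x=A,B$ plus Cramer's rule for the middle interval, identification of the exterior coefficients via the Wronskian definition of $a$ and the scattering relation \eqref{scat_rel} combined with \eqref{Eluldr_relation}, and finally entireness of all ingredients (Lemmas \ref{lem_Fl_exp} and \ref{lem_a_prop}) to propagate $a_db_u-a_ub_d=\i(a_ua_d-1)$ from $\mathbb{R}$ to all of $\mathbb{C}$. The only cosmetic weak spot is your justification that $e_{lu},e_{ld}$ are entire "by analytic dependence of ODE solutions on the spectral parameter" --- these are defined through the Riemann--Hilbert problem rather than an initial-value problem, so their entireness really comes from the structure \eqref{Eluldr_from_E} and the jump relations for $\mathbb{E}$, but the conclusion you need is true and is used implicitly by the paper as well.
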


\begin{proof}
Let $u_{t_0}(x)-U(x,t_0)=0$ for $x>B$ and $x<A.$ Let
$h_1(x,t_0,\lambda)$, $h_2(x,t_0,\lambda)$ be solutions of
$$h_{xx}+2u h=\lambda h,\quad A<x<B$$
being normalized as above.
Then $h_{1,2}$ are analytic in
$\lambda\in\mathbb{C}\setminus(-\infty,0]$
and
$$f_{l}(x,t_0,\lambda)=\begin{cases}e_l(x,t_0,\lambda),\quad x<A,
\\
\alpha_1(\lambda; t_0)h_1(x,t_0,\lambda)+\alpha_2(\lambda; t_0)h_2(x,t_0,\lambda),\quad
A<x<B,
\\ b(\lambda; t_0)e_r(x,t_0,\lambda)+\beta_2(\lambda; t_0)e_l(x,t_0,\lambda),\quad
x>B,
\end{cases}$$
and
$$f_{r}(x,t_0,\lambda)=\begin{cases}\delta_1(\lambda; t_0)e_{l}(x,t_0,\lambda)+\delta_2(\lambda; t_0)e_{r}(x,t_0,\lambda),\quad
x<A,
\\
\gamma_1(\lambda; t_0)h_1(x,t_0,\lambda)+\gamma_2(\lambda; t_0)h_2(x,t_0,\lambda),\quad
A<x<B,
\\e_r(x,t_0,\lambda),\quad x>B,
\end{cases}$$
with some coefficients $\alpha_{1,2}$, $b,$ $\beta_{2}$, $\gamma_{1,2}$ and
$\delta_{1,2}$ which are determined by the condition that the Jost
solutions $f_{l,r}$ are continuously differentiable at the points
$x=A, B.$

It follows from the scattering relation \eqref{scat_rel} 
$$\(F_{r}(\lambda)=\i a_d(\lambda)F_{lu}(\lambda)-\i a_u(\lambda)F_{ld}(\lambda),\quad \det(F_{ld}, F_{lu})=\i,\quad a(\lambda)=\det(F_r,F_l), \)$$ that 
$$\beta_2(\lambda)=\delta_2(\lambda)= a(\lambda),\quad \delta_1(\lambda; t_0)=\i(a_d-a_u),\quad$$
and that 
the function $b(\lambda)$ \eqref{b_bubd} satisfies the following conjugation relation on the real axis $\lambda\in\mathbb{R}$:
\begin{equation}\nonumber a_d b_u-a_u b_d=\i(a_ua_d-1),\quad \frac{b_u}{a_u}-\frac{b_d}{a_d}=\frac{\i(a_ua_d-1)}{a_ua_d}.\end{equation}
Since for \clu{a} compactly supported perturbation $u(x, t_0)$ the Jost solutions $f_l, f_r$ are entire, the functions $a_u,$  $a_d,$ $b_u,$ $b_d$ are also entire functions, and the latter relation is valid for all complex $\lambda.$

Finally, it is straightforward to express the coefficients $\alpha_{1,2}$ $\gamma_{1,2}$ as in \eqref{fl_repr}, \eqref{fr_repr}.
\end{proof}

\begin{remark} Later in Section \ref{sect_RH} we will see that quantity $b(\lambda)$ plays as fundamental role in the formulation of a Riemann-Hilbert problem as does $a(\lambda).$
However, $a(\lambda)$ can be defined by \eqref{a_def} for any perturbation, not necessarily compactly supported (we used compact support of the perturation only to study large $\lambda$ asymptotics of $a(\lambda)$), while $b(\lambda)$ is so far defined only for compactly supported perturbations.
\end{remark}
\begin{remark}It follows from \eqref{b_a_rel} that the function $$\mathcal{E}(\lambda)=\frac{b}{a}-\underbrace{\frac{1}{2\pi}\int\limits_{-\infty}^{+\infty}\frac{1-\frac{1}{a_u(s)a_d(s)}}{s-\lambda}d s}_{
\mathcal{O}(\frac{1}{\sqrt{\lambda}})
}
$$ is an entire function in the whole complex plane.
It satisfies the symmetry condition $$\ol{\mathcal{E}\(\,\ol{\lambda}\,\)}=\mathcal{E}(\lambda),$$
and, if $u_{t_0}(x)=c$ for $A<x<B,$ then it has the uniform w.r.t. $\arg\lambda$ asymptotics as $\lambda\to\infty$
\begin{equation}\label{E_function}\hskip0cm \mathcal{E}\hskip-0.5mm(\hskip-0.5mm\lambda\hskip-0.5mm)\hskip-0.5mm=\hskip-1mm
\e^{2\hskip-0.2mm\theta\hskip-0.2mm(\hskip-0.2mmB\hskip-0.2mm)\hskip-0.2mm}\hskip-1mm\left(
\hskip-2mm\frac{c\hskip-0.5mm-\hskip-1mm 2\hskip-0.1mm a_{\hskip-0.1mm 1}\hskip-0.5mm(\hskip-0.5mmB\hskip-0.5mm)\hskip-0.5mm
\hskip-0.5mm+\hskip-0.5mm b_{\hskip-0.2mm 1}^{\hskip-0.2mm 2}(\hskip-0.5mmB\hskip-0.5mm)}{2\lambda}\hskip-0.5mm
+
\hskip-0.5mm\mathcal{O}\hskip-0.2mm(\hskip-0.2mm\lambda^{\hskip-0.2mm-\hskip-0.2mm 3 \hskip-0.2mm / \hskip-0.2mm2}\hskip-0.2mm)
-\hskip-8mm\underbrace{\e^{2(\hskip-0.5 mm A\hskip-0.5mm-\hskip-0.5mmB\hskip-0.5mm)\hskip-0.5mm\sqrt{\hskip-0.5mm\lambda}}
\hskip-0.5mm\left[\hskip-0.5mm\frac{c\hskip-0.5mm-\hskip-0.5mm2\hskip-0.1mm a_{1}\hskip-0.5mm(\hskip-0.5mm A \hskip-0.5mm)\hskip-0.5mm+\hskip-0.5mm 
b_1^2(A)}{2\lambda}
\hskip-0.5mm+\hskip-0.5mm
\mathcal{O}\hskip-0.5mm(\hskip-0.5mm\lambda^{\hskip-0.5mm\frac{-3}{2}}\hskip-0.5mm)\hskip-0.5mm\right]}_
{\hskip-10mm\textrm{exp small term which becomes oscillatory for }\arg\lambda=\pm\pi\hskip10mm}\hskip-9mm\)\hskip-1mm+\hskip-1mm
\mathcal{O}\hskip-0.5mm\(\hskip-0.5mm\frac{\ln|\lambda|}{\lambda}\hskip-0.5mm\right).\end{equation}
The existence of an entire function with such uniform asymptotics at infinity is quite remarkable.
\end{remark}

\begin{lem}\label{lem_b_a_prop}
Let $u_{t_0}(x)$ be as in Lemma \ref{lem_com_sup}.
Denote $r(\lambda):=\frac{b(\lambda)}{a(\lambda)},$
\begin{equation}\label{r_rurd}r(\lambda)=\begin{cases}r_u(\lambda), \Im\lambda>0,\\r_d(\lambda), \Im\lambda<0.\end{cases}\end{equation}
Then
\begin{enumerate}
\item \label{prlem_symbr} Symmetry: $\ol{b(\ol{\lambda})}=b(\lambda),$ 
$\ol{r(\ol{\lambda})}=r(\lambda),$ i.e. $\ol{b_d(\ol{\lambda})}=b_u(\lambda),$ $\ol{r_d(\ol{\lambda})}=r_u(\lambda).$
\item \label{prlem_a1} 
As $\lambda\to\infty,$ uniformly  w.r.t. $\arg\lambda\in[-\pi,0]\cup[0,\pi],$
$$a(\lambda)=1+\frac{1}{\sqrt{\lambda}}\int\limits_{A}^{B}(U(x,t_0)-u_{t_0}(x))\d x+{\ol{_\mathcal{O}}}(\frac{1}{\sqrt{\lambda}}).$$
\item \label{prlem_r0} $r(\lambda)={\ol{_\mathcal{O}}}(\frac{1}{\sqrt{\lambda}})\e^{2\theta(B,t_0;\lambda)}.$
\item \label{prlem_adu0}$a_d(\lambda)-a_u(\lambda)={\ol{_\mathcal{O}}}(\frac{1}{\sqrt{\lambda}})\e^{-2\theta(A,t_0;\lambda)}.$
\item \label{prlem_ri} Property \ref{prop_Imru12} of the part II of Theorem \ref{teor_main} is satisfied, and hence 
$r_u(s)\neq \i$ and $r_d(s)\neq -\i$  for $s\in\mathbb{R};$
\\the roots of the equation $r_u(\lambda)=\i$ in the upper halfplane $\Im\lambda\geq 0$ can accumulate only along the rays $\arg\lambda=\frac{5\pi}{7}, \frac{3\pi}{7}, \frac{\pi}{7};$
\\the roots of the equation $r_d(\lambda)=-\i$ in the lower halfplane $\Im\lambda\leq 0$ are symmetric to the roots of $r_u=\i,$ and can accumulate only along the rays $\arg\lambda=\frac{-5\pi}{7}, \frac{-3\pi}{7}, \frac{-\pi}{7}.$

\item \label{prlem_Imr0} For $s\in\mathbb{R},$ $\Im r_u(s)={\ol{_\mathcal{O}}}(\frac{1}{\sqrt{s}})$ as $s\to\pm\infty.$
\item \label{prlem_rudi} Property  \ref{prop_rurdneqi} of the part II of Theorem \ref{teor_main} is satisfied.


\item \label{prlem_ainr} The functions $a_u,$ $a_d$ can be expressed in terms of $r_u$ by formulas \eqref{au_ru}, \eqref{ad_ru}.
\end{enumerate}
\end{lem}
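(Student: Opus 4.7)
\textbf{Plan for Lemma \ref{lem_b_a_prop}.}
The symmetry Property \ref{prlem_symbr} is immediate: the representation $f_l(x,t_0;\lambda)=a(\lambda)e_l+b(\lambda)e_r$ for $x>B$, combined with the symmetries $\overline{F_l(x,t_0;\overline\lambda)}=F_l(x,t_0;\lambda)$, $\overline{E_r(x,t_0;\overline\lambda)}=E_r(x,t_0;\lambda)$, $\overline{E_l(x,t_0;\overline\lambda)}=E_l(x,t_0;\lambda)$, and property \ref{prop_sym_a} of Lemma \ref{lem_a_prop}, forces $\overline{b(\overline\lambda)}=b(\lambda)$; dividing yields the symmetry of $r$. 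Properties \ref{prlem_ri}--\ref{prlem_rudi} then follow by purely algebraic manipulations from the functional equation \eqref{b_a_rel}: dividing by $a_ua_d$ gives $r_u-r_d=\i\bigl(1-1/(a_ua_d)\bigr)$, and on $\lambda=s\in\mathbb{R}$ we have $a_d(s)=\overline{a_u(s)}$, $r_d(s)=\overline{r_u(s)}$, so $2\i\,\Im r_u(s)=\i\bigl(1-1/|a_u(s)|^2\bigr)$. This makes $\Im r_u(s)=\tfrac12\bigl(1-1/|a_u(s)|^2\bigr)\in(-\infty,\tfrac12)$, giving Property \ref{prlem_ri} and incidentally $r_u(s)\neq\i$ on $\mathbb{R}$; Property \ref{prlem_Imr0} follows by inserting the asymptotic of Lemma \ref{lem_a_prop}\ref{prop_ident_asymp_a} into the same identity. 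The inequality $r_u(\lambda)-\overline{r_u(\overline\lambda)}\neq\i$ of Property \ref{prlem_rudi} amounts to $1+\i(r_u-r_d)\neq 0$, which by the identity $r_u-r_d=\i(1-1/(a_ua_d))$ is $(a_ua_d)^{-1}\neq 0$, holding by Lemma \ref{lem_a_prop}.

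For Property \ref{prlem_a1}, Lemma \ref{lem_a_prop}\ref{prop_ident_asymp_a} already yields the asymptotic uniformly in $\arg\lambda\in[-\pi+\varepsilon,0]\cup[0,\pi-\varepsilon]$. To extend it up to the rays $\arg\lambda=\pm\pi$, I would evaluate $a(\lambda)=\{f_r,f_l\}$ at $x=B$, where by compact support $f_r(B)=e_r(B)$, $f_{r,x}(B)=e_{r,x}(B)$, hence $a(\lambda)=\{e_r,f_l\}|_B$. Lemma \ref{lem_Fl_comp} provides the expansion of $f_l$ and $f_{l,x}$ at $x=B$ uniformly over the full range $\arg\lambda\in[-\pi,0]\cup[0,\pi]$; iterating the integral equation \eqref{IE_Fl_El} one more step and noting that the leading-order cancellation in the Wronskian $\{e_r,e_l\}|_B=-1$ produces exactly the coefficient $\int_A^B(U-u_{t_0})\,\d x$, with little-$o$ remainder inherited from the iteration.

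For Properties \ref{prlem_r0} and \ref{prlem_adu0}, the representations \eqref{fl_repr}, \eqref{fr_repr} at $x>B$ and $x<A$ respectively give the Wronskian formulas $b(\lambda)=\{e_l,f_l\}|_B$ and $\i(a_d-a_u)=\{f_r,e_r\}|_A$. Substituting the expansions $f_l=e_l(1+o(1/\sqrt\lambda))$ (from Lemma \ref{lem_Fl_comp}) and $f_r=e_r(1+o(1/\sqrt\lambda))$ (from Lemma \ref{lem_Fr_comp}) produces the announced bounds: for $b$, the leading terms in $\{e_l,f_l\}$ cancel and one is left with $o(1/\sqrt\lambda)\cdot e_l(B)^2$, and by the asymptotics of $e_l$ from the material of Section \ref{sect_Elr} we have $e_l^2\asymp \lambda^{-1/2}e^{2\theta(B,t_0;\lambda)}$, which combined with $a=1+o(1)$ yields Property \ref{prlem_r0}; the same mechanism at $x=A$ gives Property \ref{prlem_adu0}. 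Extending uniformity to $\arg\lambda\to\pm\pi$ in these two statements will be the most delicate step, since $e^{\pm 2\theta}$ becomes oscillatory on the negative real axis; for $b$ this is harmless thanks to the full-sector validity of Lemma \ref{lem_Fl_comp}, while for $a_d-a_u$ the restriction $\arg\lambda\in[-\pi+\varepsilon,\pi-\varepsilon]$ in Lemma \ref{lem_Fr_comp} is bypassed by invoking the symmetry $\overline{(a_u-a_d)(\overline\lambda)}=-(a_u-a_d)(\lambda)$ together with the entirety of $a_u,a_d$.

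Finally, Property \ref{prlem_ainr} is a scalar Riemann--Hilbert / Cauchy integral reconstruction. Having established that $a_u$ is an entire nonvanishing function of $\lambda$ with $a_u(\lambda)\to 1$ at infinity, and using on $\mathbb{R}$ the identity $|a_u(s)|^2=a_u(s)a_d(s)=1/(1-2\Im r_u(s))$, the function $\ln a_u$ is analytic in the upper half-plane, decays at infinity by Property \ref{prlem_a1}, and its real part on $\mathbb{R}$ equals $-\tfrac12\ln(1-2\Im r_u(s))$; the Cauchy/Schwarz integral formula then yields the first line of \eqref{au_ru}. Extending this representation to $\Im\lambda<0$ uses $a_u(\lambda)a_d(\lambda)=1/(1+\i(r_u(\lambda)-\overline{r_u(\overline\lambda)}))$, which is valid everywhere by entirety, giving the prefactor $1/(1+\i(r_u-\overline{r_u(\overline\lambda)}))$ that appears in the second line of \eqref{au_ru}; formula \eqref{ad_ru} follows by applying complex conjugation in view of Property \ref{prlem_symbr}. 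The main obstacle in this last step is to rigorously justify the applicability of the Cauchy formula in the class of functions at hand (entire, tending to $1$, but with possibly non-integrable modulus on $\mathbb{R}$ at the order $1/\sqrt{|s|}$); this will be handled using Property \ref{prlem_Imr0} which guarantees $\ln(1-2\Im r_u(s))\in L^1_{\mathrm{loc}}$ and integrability at infinity.
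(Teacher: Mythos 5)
Most of your plan coincides with the paper's own argument: items \ref{prlem_ri}--\ref{prlem_rudi} as algebraic consequences of \eqref{b_a_rel} restricted to $\lambda=s\in\mathbb{R}$, and item \ref{prlem_ainr} via a scalar conjugation problem for $a_u$ (upper half--plane) against $1/a_d$ (lower half--plane) with jump $|a_u(s)|^2=(1-2\Im r_u(s))^{-1}$ and Sokhotsky--Plemelj. Where you genuinely diverge is in items \ref{prlem_a1}--\ref{prlem_adu0}: the paper does not iterate the Volterra equations for $f_l,f_r$ but instead writes $a$, $b$, $\i(a_d-a_u)$ as bilinear combinations of Wronskians $\{e_{l,r},h_{1,2}\}_{A,B}$, where $h_{1,2}$ are interior solutions with transformation--kernel representations \eqref{h1}--\eqref{h2x}; the kernels $R,L$ and the integral equations \eqref{RL_IE} are what later drive the refined decay of Lemma \ref{lem_r_refined} under higher smoothness. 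For the leading--order statements of the present lemma your lighter route is in principle adequate, provided you actually carry out the one extra iteration and invoke Riemann--Lebesgue to upgrade the remainder to little--$o$ (note also that the Wronskian $\{e_l,f_l\}$ is governed by $e_le_{l,x}\sim\tfrac12\e^{2\theta}$, not by $e_l^2\sim\tfrac12\lambda^{-1/2}\e^{2\theta}$, so your quoted bound for $b$ is off by a power of $\lambda^{1/2}$, fortunately in the harmless direction).

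The genuine gap is in item \ref{prlem_adu0} near $\arg\lambda=\pm\pi$. You propose to bypass the sector restriction $\arg\lambda\in[-\pi+\varepsilon,\pi-\varepsilon]$ of Lemma \ref{lem_Fr_comp} by ``symmetry together with entirety''. The symmetry $\ol{a_u(\ol\lambda)}=a_d(\lambda)$ maps the missing sector $\arg\lambda\in[\pi-\varepsilon,\pi]$ onto the sector $\arg\lambda\in[-\pi,-\pi+\varepsilon]$, which is equally excluded, so the reflection gains you nothing; and entirety alone gives no asymptotic control in a missing sector unless you add a Phragm\'en--Lindel\"of argument, which here would require bounding the order of the entire function $(a_d-a_u)\e^{2\theta(A,t_0;\cdot)}$ (it can grow like $\e^{c|\lambda|^{7/2}}$ inside that thin sector, so this is not automatic). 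The paper closes exactly this sector by a direct computation: in $\arg\lambda\in[\tfrac{5\pi}{7}+\varepsilon,\pi]$ it substitutes $e_r=\i e_{lu}-\i e_{ld}$ into the Wronskian formula \eqref{adu_h12} and tracks every exponential factor ($\e^{\pm2\theta(A)}$, $\e^{\pm2\theta(B)}$, $\e^{2(A-B)\sqrt\lambda}$) to verify that $\i(a_d-a_u)\e^{2\theta(A)}=\mathcal{O}(\lambda^{-1/2})$ there. You need an analogue of this step; without it Property \ref{prlem_adu0} is not established uniformly in $\arg\lambda\in[-\pi,\pi]$, and that uniformity is precisely what is used later for the RHP \ref{RHtn}. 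Finally, you do not address the accumulation claim in item \ref{prlem_ri} (roots of $r_u=\i$ can accumulate only along $\arg\lambda=\tfrac{\pi}{7},\tfrac{3\pi}{7},\tfrac{5\pi}{7}$); this follows from item \ref{prlem_r0}, since $|r_u|=1$ forces $\Re\theta(B,t_0;\lambda)\ge\tfrac14\ln|\lambda|+\mathcal{O}(1)$, confining large roots to shrinking neighbourhoods of the rays where $\cos\bigl(\tfrac72\arg\lambda\bigr)=0$, but it should be said.
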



\begin{remark}
 By Picard's theorem, the set of values of a non-constant entire function is either the whole complex 
 plane, or the complex plane minus a single point. Property \ref{prlem_rudi} hence says that the set of values of 
 $r_u(\lambda)-\ol{r_u(\ol{\lambda})}$ is either a constant, or $\mathbb{C}\setminus\left\{\i\right\}.$ \\If it is a constant, it is 0, since for $\lambda\in\mathbb{R}$ we have $r_u(\lambda) -\ol{r_u(\ol{\lambda})}=2\Im r_u(\lambda),$ which tends to 0 as $\lambda\to\pm\infty$ by property \ref{prlem_Imr0}. \clu{Furthermore}, if $r_u(\lambda)=r_d(\lambda)$ for all $\lambda\in\mathbb{C},$ then by formula \eqref{b_a_rel} we have $a_u(\lambda)a_d(\lambda)=a_u(\lambda)\ol{a_u(\ol{\lambda})}\equiv 1.$
 Hence, the function $$f(\lambda)=\begin{cases}a_u(\lambda),\ \Im\lambda>0,\\a_d^{-1}(\lambda),\ \Im\lambda<0
                                  \end{cases}$$
 satisfies the jump condition $\frac{f_+(\lambda)}{f_-(\lambda)}=a_ua_d=1$ for 
 $\lambda\in\mathbb{R},$ and hence $a_u(\lambda)\equiv a_d(\lambda)\equiv 1.$
 Hence, $b_u(\lambda)=b_d(\lambda).$
\end{remark}

\begin{remark}
 Lemma \ref{lem_a_prop} ensures that $a_u(\lambda)\neq 0$ for $\Im\lambda\geq 0,$ and $a_d(\lambda)\neq 0$ for $\Im\lambda\leq 0.$
Formulas \eqref{au_ru}, \eqref{ad_ru} show that $a_u(\lambda)\neq 0,$ $a_d(\lambda)\neq 0$ for all $\lambda\in\mathbb{C}.$
\end{remark}
\begin{proofof}{ \textit{of Lemma \ref{lem_b_a_prop}.}}
The symmetry property follows from the corresponding symmetry for $f_l,$ $f_r.$
\clu{Furthermore}, it follows from the representations \eqref{fl_repr}, \eqref{fr_repr} that 
$$\alpha_1=\frac{1}{W(\lambda)}W\left\{e_l, h_2\right\}|_A,\quad \alpha_2=-\frac{1}{W(\lambda)}W\left\{e_l,h_1\right\}|_A,$$
\begin{equation}\label{a_h12}a(\lambda)=\frac{1}{W(\lambda)}\left[W\left\{e_r,h_1\right\}_B\cdot W\left\{e_l,h_2\right\}_A-W\left\{e_r,h_2\right\}_B\cdot W\left\{e_l,h_1\right\}_A\right],\end{equation}
\begin{equation}\label{b_h12}b(\lambda)=\frac{1}{W(\lambda)}\left[-W\left\{e_l,h_2\right\}_A\cdot W\left\{e_l,h_1\right\}_B+W\left\{e_l,h_1\right\}_A\cdot W\left\{e_l,h_2\right\}_B\right],\end{equation}
\begin{equation}\label{adu_h12}\i(a_d-a_u)=\frac{1}{W(\lambda)}\left[W\left\{e_r,h_2\right\}_B\cdot W\left\{e_r,h_1\right\}_A-W\left\{e_r,h_1\right\}_B\cdot W\left\{e_r,h_2\right\}_A\right].\end{equation}


\begin{figure}[ht!]
 \includegraphics[width=0.8\linewidth]{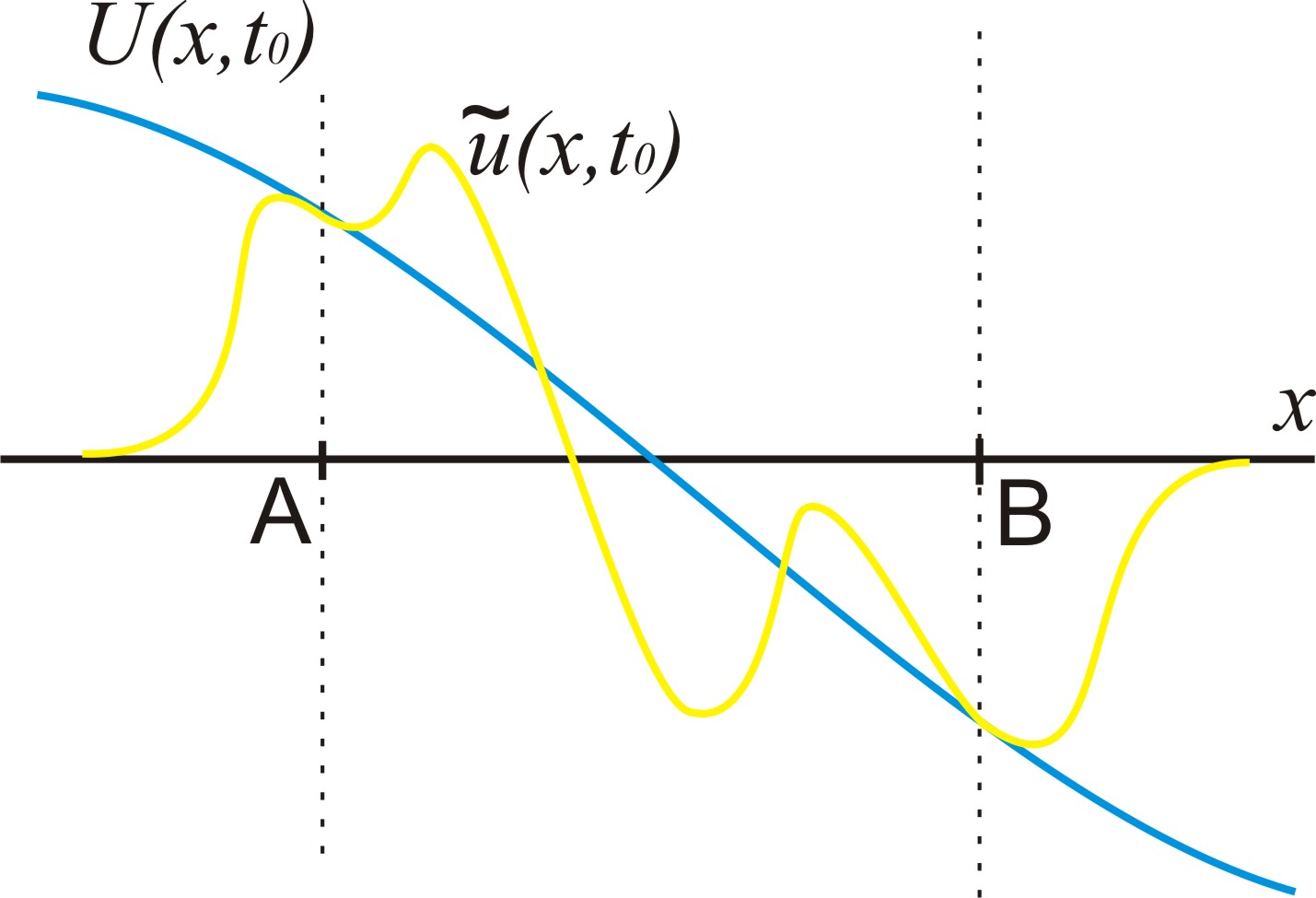}
 \caption{{{Initial function $u_{t_0}(x)$ equals $U(x,t_0)$ outside of the interval $x\in[A,B],$ and equals some function $\widetilde u_{t_0}(x)$ inside $x\in[A,B].$}}}
\label{Fig_Uutilde}
\end{figure}

We can choose $h_1, h_2$ to be normalized in such a way that they admit the
integral representations written below, 
and given enough smoothnes of $u_{t_0}(x),$ namely assuming that $u_{t_0}(x)$ is $N$ times differentiable (observe that the functions 
$R(x,y),$ $L(x,y)$ are one time more regular than function $u_{t_0}(x)$), we can develop asymptotic 
series of $h_1,$ $h_2$ for 
large $\lambda$ as follows:
\begin{enumerate}
\item 
\begin{equation}\label{h1}\hskip-2cm\begin{split}&h_1(x,t_0;\lambda)=\frac{1}{\sqrt{2}\sqrt[4]{\lambda}}\(1-\int\limits_{x}^{+\infty}R(x,y,t_0)\e^{\sqrt{\lambda}(x-y)}\d y\)\e^{-\theta(x,t_0;\lambda)}
\\
\qquad &=\frac{\e^{-\theta(x,t_0;\lambda)}}{\sqrt{2}\sqrt[4]{\lambda}}\(1-\frac{R(x,x,t_0)}{\sqrt{\lambda}}-\frac{1}{\sqrt{\lambda}}\int\limits_{x}^{+\infty}R_y(x,y,t_0)\e^{\sqrt{\lambda}(x-y)}\d y\)
\\
&=\frac{\e^{\frac{t}{3}\lambda^{3/2}-\frac{1}{105}\lambda^{7/2}}}{\sqrt{2}\sqrt[4]{\lambda}}\(\e^{-x\sqrt{\lambda}}+\sum\limits_{k=0}^N\frac{\partial_y^kR(x,y)\cdot\e^{-y\sqrt{\lambda}}}{\lambda^{\frac{k+1}{2}}}\Big|_{x}^{+\infty}-\frac{1}{\lambda^{\frac{N+1}{2}}}\int\limits_{x}^{+\infty}\partial_y^{N+1}R(x,y)\cdot\e^{-y\sqrt{\lambda}}\d y\),
\end{split}
\end{equation}
\item 
\begin{equation}\label{h1x}\hskip-2cm\begin{split}&h_{1x}(x,t_0;\lambda)=\frac{-\sqrt[4]{\lambda}}{\sqrt{2}}\(1-\frac{R(x,x)}{\sqrt{\lambda}}+\frac{1}{\sqrt{\lambda}}\int\limits_{x}^{+\infty}R_x(x,y,t_0)\e^{\sqrt{\lambda}(x-y)}\d y\)\e^{-\theta(x,t_0;\lambda)}
\\
&=\frac{-\sqrt[4]{\lambda}\e^{\frac{t}{3}\lambda^{3/2}-\frac{1}{105}\lambda^{7/2}}}{\sqrt{2}}\(\e^{-x\sqrt{\lambda}}-\frac{R(x,x)\e^{-x\sqrt{\lambda}}}{\sqrt{\lambda}}-
\sum\limits_{k=1}^N\frac{\partial_x\partial_y^{k-1}R(x,y)\cdot\e^{-y\sqrt{\lambda}}}{\lambda^{\frac{k+1}{2}}}\Big|_{x}^{+\infty}\right.
\\
&\qquad\qquad\qquad\qquad
\left.+\frac{1}{\lambda^{\frac{N+1}{2}}}\int\limits_{x}^{+\infty}\partial_x\partial_y^{N}R(x,y)\cdot\e^{-y\sqrt{\lambda}}\d y\),\end{split}\end{equation}
\item 
\begin{equation}\label{h2}\hskip-2cm\begin{split}&h_2(x,t_0;\lambda)=\frac{1}{\sqrt{2}\sqrt[4]{\lambda}}\(1+\int\limits^{x}_{-\infty}L(x,y,t_0)\e^{\sqrt{\lambda}(y-x)}\d y\)\e^{\theta(x,t_0;\lambda)}\\
\qquad &=\frac{\e^{\theta(x,t_0;\lambda)}}{\sqrt{2}\sqrt[4]{\lambda}}\(1+\frac{L(x,x,t_0)}{\sqrt{\lambda}}-\frac{1}{\sqrt{\lambda}}\int\limits^{x}_{-\infty}L_y(x,y,t_0)\e^{\sqrt{\lambda}(y-x)}\d y\)
=
\frac{\e^{-\frac{t}{3}\lambda^{3/2}+\frac{1}{105}\lambda^{7/2}}}{\sqrt{2}\sqrt[4]{\lambda}}\cdot
\\
&\cdot
\(\e^{x\sqrt{\lambda}}+\sum\limits_{k=0}^N\frac{(-1)^k\partial_y^kL(x,y)\cdot\e^{y\sqrt{\lambda}}}
{\lambda^{\frac{k+1}{2}}}\Big|^{x}_{-\infty}+\frac{(-1)^{N+1}}{\lambda^{\frac{N+1}{2}}}\int\limits^{x}_{-\infty}\partial_y^{N+1}L(x,y)\cdot\e^{y\sqrt{\lambda}}\d y\),
\end{split}\end{equation}
\item  \begin{equation}\label{h2x}\hskip-2cm\begin{split}&h_{2x}(x,t_0;\lambda)=\frac{\sqrt[4]{\lambda}}{\sqrt{2}}\(1+\frac{L(x,x)}{\sqrt{\lambda}}+\frac{1}{\sqrt{\lambda}}\int\limits^{x}_{-\infty}L_x(x,y,t_0)\e^{\sqrt{\lambda}(y-x)}\d y\)\e^{\theta(x,t_0;\lambda)},
\\
&=\frac{\sqrt[4]{\lambda}\e^{-\frac{t}{3}\lambda^{3/2}+\frac{1}{105}\lambda^{7/2}}}{\sqrt{2}}\(\e^{x\sqrt{\lambda}}+\frac{L(x,x)\e^{x\sqrt{\lambda}}}{\sqrt{\lambda}}+
\sum\limits_{k=1}^N\frac{(-1)^{k+1}\partial_x\partial_y^{k-1}L(x,y)\cdot\e^{y\sqrt{\lambda}}}{\lambda^{\frac{k+1}{2}}}\Big|^{x}_{-\infty}
\right.
\\
\hskip-5cm&\left.+\frac{(-1)^{N}}{\lambda^{\frac{N+1}{2}}}\int\limits^{x}_{-\infty}\partial_x\partial_y^{N}L(x,y)\cdot\e^{y\sqrt{\lambda}}\d y\),
\end{split}\end{equation}
\end{enumerate}
with kernels $R(x,y),$ $L(x,y)$ satisfying the following integral equations:
\begin{equation}\label{RL_IE}
\begin{split}
R(x,y)=\int\limits_{\frac{x+y}{2}}^{+\infty}\widetilde u(s,t_0)\ \d s-
2\int\limits_{\frac{x+y}{2}}^{+\infty}\d\alpha\int\limits_0^{\frac{y-x}{2}}\widetilde u(\alpha-\beta, t_0)R(\alpha-\beta, 
\alpha+\beta)\ \d\beta,
\\
L(x,y)=-\int\limits^{\frac{x+y}{2}}_{-\infty}\widetilde u(s,t_0)\ \d s-
2\int\limits^{\frac{x+y}{2}}_{-\infty}\d\alpha\int\limits_0^{\frac{x-y}{2}}\widetilde u(\alpha+\beta, t_0)L(\alpha+\beta, 
\alpha-\beta)\ \d\beta,
\end{split}
\end{equation}
where $\widetilde u_{t_0}(x)$ is a compactly supported function, which has the same regularity as $u_{t_0}(x),$ and coincides with $u_{t_0}(x)$ on the interval $x\in[A,B]:$
$$\widetilde u_{t_0}(x)=\begin{cases}
                             u_{t_0}(x), A\leq x\leq B,\\0, x<A-1 \ \ \mbox{ and }\ \  x>B+1.
                            \end{cases}
$$
Notice that $$R(x,y)=0 \quad \mbox{ for }\quad x+y>2B+2,\qquad\mbox{ and }\quad L(x,y)=0\quad \mbox{ for }x+y<2A-2.$$

Let us recall the formula of integration by parts for a function of bounded variation $f\in BV[a,b]\subset L^1[a,b],$ and a
differentiable function $\varphi.$ First of all, every function of bounded variation can be represented as a sum of an absolutely continuous function and a step function:
$$f(x)=f_{ac}(x)+f_{s}(x),$$
where $f_{ac}$ is the absolutely continuous function, $$f_{ac}(x)=\int\limits_{a}^xf_{ac}'(x) \d x,\qquad f'\in L^{1}[a,b],$$
and $f_s$ is the step function,
$$f_s(x)=_{ _{\hskip-3mm a.e.}}\sum\limits_{j}\alpha_j\chi_{[x_j,b]}(x),\qquad \sum\limits_j |\alpha_j|<\infty,$$
where $\left\{x_j\right\}_j$ is an at most countable set of points in $[a,b]$ and $\chi_{[x_j,b]}$ is the characteristic function of the segment $[x_j, b]$  (see \cite[chapter 2.3.5, exercise 7, p.94; chapter 7.2.3 p.227]{Kadets} for details).
Then the formula of integration by parts takes the form 
\begin{equation}\label{integrbypart}\int\limits_{a}^b f(x)\varphi'(x)\d x
=\sum\limits_j\alpha_j \varphi(x)\Big|_{x_j}^b+f_{ac}(x)\varphi(x)\Big|_{a}^b-\int\limits_{a}^{b}f_{ac}'(x)\varphi(x)\d x=
-\int\limits_{-\infty}^{+\infty}\widetilde\varphi(x)\d\(\chi_{[a,b]}(x)\widetilde f(x)\),\end{equation}
where $\widetilde f$ is an extension of $f$ from $[a,b]$ to $\mathbb{R},$ which itself is a function of bounded variation, and $\widetilde \varphi$  is an extension of $\varphi$ from $[a,b]$ to $\mathbb{R},$ which itself  is a differentiable function. \clu{Furthermore},
by $\d\(\chi_{[a,b]}(x)\widetilde f(x)\)$ we denoted the signed measure corresponding to the function of bounded variation $\chi_{[a,b]}(x)f(x).$

With this in mind, formulas \eqref{h1}-\eqref{h2x} make sense also for a function $u_{t_0}(x)$ which is $N-1$ time differentiable, with $u^{(N-1)}(x,t_0)$ locally of bounded variation. In this case we need to interchange the last two terms 
of order $\lambda^{\frac{N+1}{-2}}$ in \eqref{h1}-\eqref{h2x}  with the terms 
\begin{equation}\label{h1_konc}\frac{1}{\lambda^{\frac{N+1}{2}}}\int\limits_{-\infty}^{+\infty}\e^{-y\sqrt{\lambda}}\d_y\(\chi_{[x,+\infty)}(x)\partial_y^NR(x,y)\),
\frac{1}{\lambda^{\frac{N+1}{2}}}\int\limits_{-\infty}^{+\infty}\e^{-y\sqrt{\lambda}}\d_y\(\chi_{[x,+\infty)}(x)\partial_x\partial_y^{N-1}R(x,y)\),
\end{equation}
\begin{equation}\label{h2_konc}
\frac{1}{\lambda^{\frac{N+1}{2}}}\int\limits^{+\infty}_{-\infty}\e^{-y\sqrt{\lambda}}\d_y\(\chi_{(-\infty,x]}(x)\partial_y^NL(x,y)\),
\frac{1}{\lambda^{\frac{N+1}{2}}}\int\limits^{+\infty}_{-\infty}\e^{-y\sqrt{\lambda}}\d_y\(\chi_{(-\infty,x]}(x)\partial_x\partial_y^{N-1}L(x,y)\),
\end{equation}
respectively, which are also of order \clu{$\mathcal{O}(\lambda^{-\frac{N+1}{2}}).$} 

\noindent 
Hence, uniformly  w.r.t.  $\arg\lambda\in[-\pi,\pi],$ as $\lambda\to\infty,$
$$W(\lambda)\equiv W(t_0,\lambda):=W\left\{h_1,h_2\right\}\equiv \left\{h_1,h_2\right\}:=h_1h_{2x}-h_{1x}h_2=1+\mathcal{O}(\frac{1}{\sqrt{\lambda}}),$$
{\color{black}provided that $R, L$ are differentiable. Let us notice, that $R_x,  R_y, L_x, L_y $ have the same regulrity w.r.t. $x,y$ as $u_{t_0}(x).$}


\noindent It was shown in Lemma \ref{lem_a_prop} that asymptotics \eqref{a_asymp} are valid outside of a cone around the negative real axis. Hence, it is enough to
study the behavior of $a(\lambda)$ in the \clu{sector} around $\mathbb{R}_-.$ To this end we rewrite the expression \eqref{a_h12} for $\arg\lambda\in [\frac{5\pi}{7}+\varepsilon, \pi]$ using 
\eqref{Eluldr_relation}:
$$a_u(\lambda)=\frac{1}{W(\lambda)}\Big[\i\underbrace{\left\{e_{lu},h_2\right\}_A}_{\e^{2\theta(A)} \ol{_\mathcal{O}}(\frac{1}{\sqrt{\lambda}})}\cdot\underbrace{\left\{e_{lu},h_1\right\}_B}_{-1+\mathcal{O}(\frac{1}{\sqrt{\lambda}})}-\underbrace{\i\left\{e_{lu},h_1\right\}_A\cdot \left\{e_{lu},h_2\right\}_B}_{\e^{2\theta(B)}\ol{_\mathcal{O}}(\frac{1}{\sqrt{\lambda}})}
$$
$$\qquad\qquad\qquad-\i\underbrace{\left\{e_{lu},h_2\right\}_A}_{\e^{2\theta(A)}\ol{_\mathcal{O}}(\frac{1}{\sqrt{\lambda}})}\cdot \underbrace{\left\{e_{ld},h_1\right\}_B}_{\e^{-2\theta(B)}\ol{_\mathcal{O}}(\frac{1}{\sqrt{\lambda}})}+\underbrace{\i \left\{e_{lu},h_1\right\}_A\cdot \left\{e_{ld},h_2\right\}_B}_{1+\mathcal{O}(\frac{1}{\sqrt{\lambda}})}\Big],$$
and hence, since for $C\in\left\{A, B\right\},$
$$\left\{e_{lu},h_2\right\}_C=\e^{2\theta(C,t_0;\lambda)}\ol{_\mathcal{O}}(\frac{1}{\sqrt{\lambda}})\  \mbox{ and }
\  \left\{e_{lu},h_1\right\}_C=-1+\frac{R(C,C)-b_1(C)}{\sqrt{\lambda}}+\ol{_\mathcal{O}}(\frac{1}{\sqrt{\lambda}})\ \mbox{ for }\ \Im\lambda\geq 0,$$
and for $\arg\lambda\in[\frac{5\pi}{7}+\varepsilon, \pi]$
$$\left\{e_{ld},h_1\right\}_C=\e^{-2\theta(B,t_0;\lambda)}\ol{_\mathcal{O}}(\frac{1}{\sqrt{\lambda}})\ \mbox{ and }\
\left\{-\i e_{ld},h_2\right\}_C=1+\frac{L(C,C)-b_1(C)}{\sqrt{\lambda}}+\ol{_\mathcal{O}}(\frac{1}{\sqrt{\lambda}}) ,$$
and the function $$\e^{2\theta(A,t_0;\lambda)-2\theta(A,t_0;\lambda)}=\e^{2(A-B)\sqrt{\lambda}}$$ is bounded, 
we obtain that 
$$a_u(\lambda)W(\lambda)=1+\frac{b_1(A)-b_1(B)+L(B,B)-R(A,A)}{\sqrt{\lambda}}+\mathcal{O}(\frac{1}{\sqrt{\lambda}})$$ 
for $\arg\lambda\in[\frac{5\pi+\varepsilon}{7},\pi],$ 
and 
$$W(\lambda)=1+\frac{L(x,x)-R(x,x)}{\sqrt{\lambda}}+\ol{_\mathcal{O}}(\frac{1}{\sqrt{\lambda}})\ \mbox{ for }\arg\lambda\in[-\pi,\pi],$$
and hence, recalling property \eqref{U_a1b1} ($\partial_x b_{1}=-U$) and equations \eqref{RL_IE}, we obtain
$$a(\lambda)=1+\frac{1}{\sqrt{\lambda}}\int\limits_{A}^B(U(x,t_0)-u_{t_0}(x)\d x)+\ol{_\mathcal{O}}(\frac{1}{\sqrt{\lambda}}),$$
which proves the second statement of the lemma.
\clu{Furthermore}, from \eqref{b_h12}, the above asymptotics, and the second statement of the lemma we obtain the third statement of the lemma.

To obtain the fourth statement, for $\Im\lambda\in[-\pi+\varepsilon,0]\cup[0, \pi-\varepsilon]$ we use 
formula \eqref{adu_h12} together with the asymptotics
$$\left\{e_r, h_2\right\}_C=1+\mathcal{O}(\frac{1}{\sqrt{\lambda}}),\quad \left\{e_r, h_1\right\}_C=\mathcal{O}(\frac{1}{\sqrt{\lambda}})\e^{-2\theta(C, t_0; \lambda)},\qquad \Im\lambda\in[-\pi+\varepsilon,0]\cup[0, \pi-\varepsilon],$$
which give
$$(a_d-a_u)=\mathcal{O}(\frac{1}{\sqrt{\lambda}})\e^{-2\theta(A, t_0;\lambda)}+\mathcal{O}(\frac{1}{\sqrt{\lambda}})\e^{-2\theta(B, t_0;\lambda)}=
\mathcal{O}(\frac{1}{\sqrt{\lambda}})\e^{-2\theta(A,t_0;\lambda)}.$$
For $\arg\lambda\in[\frac{5\pi}{7}+\varepsilon, \pi]$ we substitute $e_r=\i e_{lu}-\i e_{ld}$ in formula \eqref{adu_h12}, which
gives
$$\i(a_d-a_u)=\mathcal{O}(\frac{1}{\sqrt{\lambda}})+\e^{2\theta(A)}\mathcal{O}(\frac{1}{\sqrt{\lambda}})+
\e^{-2\theta(A)}\mathcal{O}(\frac{1}{\sqrt{\lambda}})+\e^{2\theta(B)}\mathcal{O}(\frac{1}{\sqrt{\lambda}})+
\e^{-2\theta(B)}\mathcal{O}(\frac{1}{\sqrt{\lambda}})+$$
$$+\e^{2(A-B)\sqrt{\lambda}}\mathcal{O}(\frac{1}{\lambda})+\e^{2(B-A)\sqrt{\lambda}}\mathcal{O}(\frac{1}{\lambda}),
$$
and
\[\begin{split}&\i(a_d-a_u)\e^{2\theta(A, t_0; \lambda)}=\e^{2\theta(A)} \mathcal{O}(\frac{1}{\sqrt{\lambda}})+\e^{4\theta(A)}\mathcal{O}(\frac{1}{\sqrt{\lambda}})+
\mathcal{O}(\frac{1}{\sqrt{\lambda}})+\e^{2\theta(A)+2\theta(B)}\mathcal{O}(\frac{1}{\sqrt{\lambda}})+
\\
&+
\e^{2(A-B)\sqrt{\lambda}}\mathcal{O}(\frac{1}{\sqrt{\lambda}})+
\e^{2\theta(A)}\e^{2(A-B)\sqrt{\lambda}}\mathcal{O}(\frac{1}{\lambda})+\e^{2\theta(B)}\mathcal{O}(\frac{1}{\lambda})=
\mathcal{O}(\frac{1}{\sqrt{\lambda}})\quad \mbox{for }\quad \arg\lambda\in[\frac{5\pi}{7}+\varepsilon, \pi].
\end{split}\] 
The fifth statement that $r_u(\lambda)\neq \i$ for $\lambda\in\mathbb{R}$ follows from relation $\eqref{b_a_rel}$
$$r_u(\lambda)-r_d(\lambda)=\i(1-\frac{1}{a_ua_d}).$$
Indeed, by Lemma \ref{lem_a_prop}, $$a_u(\lambda)\neq 0 \mbox{ for }\Im\lambda\geq 0\quad \mbox{ and }\quad a_d(\lambda)\neq 0 \mbox{ for }\Im\lambda\leq 0.$$ Hence, assuming that $r_u(\lambda^*)=\i$ for some real $\lambda^*,$ 
and hence, by symmetry $\ol{r_d(\ol{\lambda})}=r_u(\lambda)$ also $r_d(\lambda^*)=-\i,$ one obtains 
$$2=1-\frac{1}{a_u(\lambda^*)a_d(\lambda^*)},$$ which cannot be true since  $a_u(\lambda^*)a_d(\lambda^*)=|a_u(\lambda^*)|^2 \geq 0.$
The \clu{remaining part} of property 5 follows from the asymptotics of $r(\lambda)$ described in property 3.

Properties 6, 7, 8 follow from \eqref{b_a_rel}, applied to a real $\lambda=s\in\mathbb{R}.$
Indeed, we have 
$$r_u(s)-r_d(s)=\i \hskip-0.5mm \(\hskip-1mm 1 \hskip-0.5mm - \hskip-0.5mm \frac{1}{a_u(s)a_d(s)}\)\Longrightarrow r_u(s)-\ol{r_u(s)}=\hskip-1mm \i \hskip-0.5mm \(\hskip-1mm 1 \hskip-0.5mm - \hskip-0.5mm \frac{1}{a_u(s)\ol{a_u(s)}}\)
\Longrightarrow2\Im r_u(s)=1-\frac{1}{|a_u(s)|^2}.$$
Hence, together with the asymptotics of $a(\lambda)$ from property 2, this gives us properties 6, 7, 8. To obtain property 9, we consider 
  the scalar conjugation problem for the function
$$f(\lambda)=\begin{cases}
   a_u(\lambda), \Im\lambda>0,\\
   \frac{1}{a_d(\lambda)}, \Im\lambda<0,
  \end{cases},\qquad \frac{f_+}{f_-}=|a_u(s)|^2=\frac{1}{1-2\Im r_u(s)},
$$
and then use the Sokhotsky-Plemelj formula.
This finishes the proof of  Lemma \ref{lem_b_a_prop}.
\end{proofof}

We think of $r(\lambda)\equiv \dfrac{b(\lambda)}{a(\lambda)}$ as of a reflection coefficient, and hence we would 
expect that the rate of vanishing of $r(\lambda)$ as $\lambda\to\infty$ is related to the smoothness of the 
initial function $u_{t_0}(x).$ The following lemma shows that this is indeed the case.
\begin{lem}\label{lem_r_refined}{\textbf{Refined decay of $r(\lambda).$}}
Let $t_0\in\mathbb{R}$ and $u_{t_0}(x)$ be equal to $U(x,t_0)$ outside of an interval $x\in[A,B].$ Then if 
\begin{enumerate}
\item $u_{t_0}(x)-U(x,t_0)\in BV_{loc}$ is a function of bounded variation, then
$$a(\lambda)=
1+\frac{1}{\sqrt{\lambda}}\int\limits_{A}^{B}(U(x,t_0)-u_{t_0}(x))\d x+\mathcal{O}(\frac{1}{\lambda}),
\quad a_d-a_u=\mathcal{O}(\frac{1}{\lambda})\e^{-2\theta(A)}, \quad \lambda\to\infty,
$$ 
$$r(\lambda)=\mathcal{O}\(\frac{1}{\lambda}\)\e^{2\theta(B,t_0;\lambda)},\ \lambda\to\infty,\qquad \Im r_u(s)=\mathcal{O}(\frac{1}{s}),\ s\to\pm\infty.$$
\item $u_{t_0}(x)$ is $N$ times differentiable ($N=0,1,2,\ldots$), and $u^{(N)}(x,t_0)-U^{(N)}(x,t_0)$ is a function of bounded variation, then  $$r(\lambda)=\mathcal{O}\(\lambda^{-\frac{N}{2}-1}\)\e^{2\theta(B,t_0;\lambda)},\quad \lambda\to\infty.$$
\end{enumerate}
\end{lem}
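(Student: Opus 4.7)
The strategy is to refine the proof of Lemma \ref{lem_b_a_prop} by pushing the asymptotic expansions of the auxiliary solutions $h_1, h_2$ (and their $x$-derivatives) one step further using an additional integration by parts, then substituting the refined expansions into the Wronskian formulas \eqref{a_h12}, \eqref{b_h12}, \eqref{adu_h12}. Under the Part 2 assumption $u_{t_0}\in C^N$ with $u_{t_0}^{(N)}\in BV_{loc}$, the integral equations \eqref{RL_IE} imply that the kernels $R(x,y), L(x,y)$ are $C^{N+1}$ jointly, with $(N{+}1)$-th derivatives locally of bounded variation. Iterating the integration-by-parts procedure behind \eqref{h1}-\eqref{h2x} a total of $N+2$ times, with the last step implemented via the BV formula \eqref{integrbypart}, yields expansions of the form
\begin{equation*}
h_j(x,t_0;\lambda) = \frac{e^{\mp\theta(x,t_0;\lambda)}}{\sqrt{2}\,\lambda^{1/4}}\(1+\sum_{k=1}^{N+1}\frac{q_k^{(j)}(x)}{\lambda^{k/2}}\) + \mathcal{O}\(\lambda^{-\frac{N+2}{2}}e^{\mp\theta(x,t_0;\lambda)}\),
\end{equation*}
and analogous expansions for $h_{j,x}$; the coefficients $q_k^{(j)}(x)$ are the universal WKB coefficients associated to the Schrödinger equation $-\psi''+(2u_{t_0}-\lambda)\psi=0$ and depend polynomially on $u_{t_0}(x)$ and its derivatives of order at most $k-2$.

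Next, since $u_{t_0}\in C^N(\mathbb{R})$ and coincides with $U(\cdot,t_0)$ outside $[A,B]$, we have the endpoint matching $u_{t_0}^{(k)}(C)=U^{(k)}(C,t_0)$ for $C\in\{A,B\}$ and $0\leq k\leq N$. Because the WKB recursion is universal in the coefficients of the ODE, the numbers $q_k^{(j)}(C)$ agree, for $k\leq N+1$, with the corresponding WKB coefficients appearing in the higher-order analogue of \eqref{asserE_dev} for $e_r,e_{lu},e_{ld}$ at $C$. This produces the refined boundary estimates
\begin{equation*}
\{e_r,h_2\}_B=1+\mathcal{O}(\lambda^{-\frac{N+2}{2}}),\qquad \{e_r,h_1\}_B=\mathcal{O}(\lambda^{-\frac{N+2}{2}})\,e^{-2\theta(B,t_0;\lambda)},
\end{equation*}
together with $W(\lambda)=1+\mathcal{O}(\lambda^{-\frac{N+2}{2}})$ and the analogous Wronskians at $A$ for $\{e_{lu},h_j\}_A$ (and, via \eqref{Eluldr_relation}, for $\{e_{ld},h_j\}_A$). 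Plugged into \eqref{b_h12}, the two summands share the same leading asymptotics up to order $\lambda^{-(N+1)/2}$ and cancel, yielding $b(\lambda)=\mathcal{O}(\lambda^{-\frac{N+2}{2}})\,e^{2\theta(B,t_0;\lambda)}$ on any ray where the exponentials remain controlled; the sectors near $\arg\lambda=\pm\pi$ are treated by re-expressing $e_l,e_r$ through \eqref{Eluldr_relation} so that only the bounded factor $e^{\pm 2(A-B)\sqrt{\lambda}}$ on $\arg\lambda=\pm\pi$ survives. Since $a(\lambda)\to 1$ and is nonzero by Lemma \ref{lem_a_prop}, $r=b/a$ obeys the same bound, establishing Part 2. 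Part 1 is the special case $N=0$: the same argument, now using only the single extra integration by parts provided by BV regularity of $u_{t_0}-U$, gives the refined asymptotics of $a$ and of $a_d-a_u$ from \eqref{a_h12}, \eqref{adu_h12}, while $\Im r_u(s)=\mathcal{O}(s^{-1})$ follows from $2\Im r_u=1-|a_u|^{-2}$ as in the derivation of properties \ref{prlem_Imr0}-\ref{prlem_rudi} of Lemma \ref{lem_b_a_prop}.

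The principal obstacle is verifying the universal matching of the WKB coefficients $q_k^{(j)}(C)$ with the corresponding coefficients of the Jost solutions $e_r,e_{lu},e_{ld}$ for $k\leq N+1$. Although both families arise from the same linear ODE and hence obey the same formal recursion, the Jost solutions are only piecewise analytic on $\mathbb{C}\setminus\Sigma$ and combine differently in each sector via \eqref{Eluldr_from_E}; careful bookkeeping is required, in particular near the rays $\arg\lambda=\pm\pi$, to ensure that the cancellation in \eqref{b_h12} is not spoiled by exponentially growing factors of $e^{\pm 2\theta(A)}$ or $e^{\pm 2\theta(B)}$. Once this matching is in place, the cancellation of leading terms in \eqref{b_h12} up to order $\lambda^{-(N+1)/2}$ is automatic and the claimed decay of $r(\lambda)$ follows.
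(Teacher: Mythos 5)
Your proposal follows essentially the same route as the paper: expand $h_1,h_2$ via the transmutation kernels \eqref{h1}--\eqref{h2x}, with the last integration by parts supplied by the BV regularity of $u_{t_0}^{(N)}$ through \eqref{integrbypart}, observe that the expansion coefficients obey the same WKB recursion as those of the unperturbed Jost solutions, and use the endpoint matching of $u_{t_0}$ and $U(\cdot,t_0)$ to force cancellation in the Wronskians. The paper resolves what you call the ``principal obstacle'' exactly as you anticipate: at orders $\lambda^{-1}$ and $\lambda^{-3/2}$ the cancellations \eqref{term1}, \eqref{term32} are checked by hand from the diagonal values of $L$, and in general via the recursion $\alpha_{j,x}=-u\alpha_{j-1}-\tfrac12\alpha_{j-1,xx}$ of \eqref{alpha_der}, using that $\widetilde u-U$ and its derivatives vanish at the endpoints.

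One point to correct: in \eqref{b_h12} the two summands carry different exponential weights, $\e^{2\theta(A)}$ and $\e^{2\theta(B)}$ respectively, so they cannot ``share the same leading asymptotics and cancel'' against each other as you assert. The smallness of $b(\lambda)$ comes from each Wronskian $\left\{e_l,h_2\right\}_C$ being individually of order $\lambda^{-(N+2)/2}\e^{2\theta(C)}$ --- which is precisely the content of the individual boundary estimates you state (in their $e_r,h_1$ incarnation) --- after which \eqref{b_h12} yields the claimed bound with no further cancellation, using only that $\e^{2\theta(A)-2\theta(B)}=\e^{2(A-B)\sqrt{\lambda}}$ is bounded on the principal branch. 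As written, that sentence describes a mechanism that would fail; the rest of your argument does not actually rely on it.
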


\begin{proofof}{ of Lemma \ref{lem_r_refined}.}
For the parts of function $u_{t_0}(x)$ outside of the interval $x\in[A,B],$ where it equals $U(x,t_0),$ we have Jost solutions 
$E_r, E_l,$ properties of which are described in Lemma \ref{lemma_E_rl}, Section \ref{sect_Elr}. In order to use the machinery of Jost solutions also in the interval $x\in[A,B],$ we take a compactly supported function $\widetilde u_{t_0}(x),$ which on the interval $x\in[A,B]$ coincides with $u_{t_0}(x)$ (see Figure \ref{Fig_Uutilde}).

Function $b(\lambda)$ has representation \eqref{b_h12}. Now we need to develop an asymptotic series for all the ingredients in the above formula.
For the functions $h_1,$ $h_{1x},$ $h_2,$ $h_{2x}$ we use formulas \eqref{h1}-\eqref{h2x}, and for $e_l,$ $e_{lx}$ formula \eqref{asserE}.

For $u_{t_0}(x)\in BV_{loc}$ we can take $N=1$ in formulas \eqref{h1}-\eqref{h2x} with remainder terms \eqref{h1_konc}, \eqref{h2_konc}, to obtain
$$\left\{e_l, h_1\right\}_C=-1-\dfrac{b_1(C)-R(C,C)}{\sqrt{\lambda}}+\mathcal{O}(\dfrac{1}{\lambda}),$$
with $C=A$ or $C=B.$ Let us develop asymptotics for the term $\left\{e_l, h_2\right\}_C.$
For the sake of clarity we precede the general case $N$ by the cases $N=0,1,2.$ We have 
\begin{equation}\label{elh2}\begin{split}&\left\{e_l, h_2\right\}_C\hskip-0.5mm\cdot \hskip-0.5mm 2\e^{-2C\sqrt{\lambda}+\frac{2t_0}{3}\lambda^{3/2}-\frac{2}{105}\lambda^{7/2}}\hskip-1mm=\hskip-1mm
\(
\hskip-1mm
1
\hskip-.5mm
+
\hskip-.5mm\dfrac{L}{\sqrt{\lambda}}
\hskip-.5mm
+
\hskip-.5mm
\dfrac{L_x}{\lambda}
\hskip-.5mm
-
\hskip-.5mm
\dfrac{L_{xy}}{\lambda^{3/2}}
\hskip-.5mm+
\hskip-.5mm
\mathcal{O}(\dfrac{1}{\lambda^2})\hskip-.5mm
\)\hskip-1mm
\(\hskip-.5mm1\hskip-.5mm+\hskip-.5mm
\dfrac{b_1}{\lambda}
\hskip-.5mm
+\hskip-.5mm
\dfrac{a_1}{\lambda}
\hskip-.5mm
+\hskip-.5mm
\dfrac{b_2}{\lambda^{3/2}}\hskip-.5mm+\hskip-.5mm\mathcal{O}(\dfrac{1}{\lambda^2})\hskip-1mm\)\hskip-.5mm
\\&
-\(1+\dfrac{L}{\sqrt{\lambda}}-\dfrac{L_y}{\lambda}+\dfrac{L_{yy}}{\lambda^{3/2}}+\mathcal{O}(\dfrac{1}{\lambda^2})\)\(1+\dfrac{b_1}{\lambda}+\dfrac{b_1^2+d_1}{\lambda}+\dfrac{b_1a_1+c_1}{\lambda^{3/2}}+\mathcal{O}(\dfrac{1}{\lambda^2})\)\Big|_{y=x=C}
\\&
=\hskip-1mm\dfrac{a_1\hskip-0.5mm-\hskip-0.5mm b_1^2\hskip-0.5mm-\hskip-0.5mmd_1\hskip-0.5mm+\hskip-0.5mmL_x\hskip-0.5mm+\hskip-0.5mmL_y}{\lambda}\hskip-0.5mm+\hskip-0.5mm\dfrac{-\hskip-0.5mmL_{xy}\hskip-0.5mm-\hskip-0.5mmL_{yy}\hskip-0.5mm+\hskip-0.5mm b_1\hskip-0.5mm(\hskip-0.5mm L_x\hskip-0.5mm +\hskip-0.5mm L_y\hskip-0.5mm) \hskip-0.5mm+ \hskip-0.5mm (\hskip-0.5mm a_1 \hskip-0.5mm-\hskip-0.5mm d_1 \hskip-0.5mm-\hskip-0.5mm b_1^2) L\hskip-0.5mm+\hskip-0.5mmb_2\hskip-0.5mm-\hskip-0.5mmb_1\hskip-0.5mm a_1 \hskip-0.5mm- \hskip-0.5mm c_1}{\lambda^{3/2}}\Big|_{y=x=C}\hskip-0.5mm + \hskip-0.5mm \mathcal{O}(\hskip-0.5mm \lambda^{-2}),
\end{split}\end{equation}
where we assumed that $u_{t_0}(x)$ is twice differentiable and $u^{(2)}(x,t_0)\in BV_{loc}.$ If the function $u_{t_0}(x)$ is only 1 time differentiable with $u'(x,t_0)\in BV_{loc},$ then, in formula \eqref{elh2} we would have to
reduce the \clu{expansion} by the last element,
and if $u_{t_0}(x)$ is just locally a function of bounded variation, we reduce it by the last two elements.

Hence, for $u_{t_0}(x)\in BV_{loc}$ we already have that 
$$\left\{e_l, h_2\right\}_C\cdot 2\e^{-2C\sqrt{\lambda}+\frac{2t_0}{3}\lambda^{3/2}-\frac{2}{105}\lambda^{7/2}}=\mathcal{O}(\frac{1}{\lambda}),$$
and hence $$b(\lambda)=\mathcal{O}(\frac{1}{\lambda})\e^{2B\sqrt{\lambda}-\frac{2t_0}{3}\lambda^{3/2}+\frac{2}{105}\lambda^{7/2}}.$$
For $u_{t_0}(x)$ differentiable with $u'(x,t_0)\in BV_{loc}$ we need to check that \begin{equation}\label{term1}\(a_1-b_1^2-d_1+L_x+L_y\)\Big|_{y=x=C}=0,\end{equation}
and for 
$u_{t_0}(x)$ 2 times differentiable with $u''(x,t_0)\in BV_{loc}$ we need to check that \begin{equation}\label{term32}\(-L_{xy}-L_{yy}+b_1(L_x+L_y)+(a_1-d_1-b_1^2)L+b_2-b_1a_1-c_1\)\Big|_{y=x=C}=0.\end{equation}

\noindent From \eqref{RL_IE}, setting $L(x,y)=:H(\frac{x+y}{2},\frac{x-y}{2}),$ \clu{ and differentiating, and then coming back from $H$ to $L,$}
 we obtain 
$$L(x,x)=-\int\limits_{-\infty}^x\widetilde u(s,t_0)\d s,\qquad \mbox{ for }\quad u\in L^1_{loc},$$
$$
L_x+L_y\Big|_{y=x}=-\widetilde u_{t_0}(x),\qquad \mbox{ for }\quad u\in C,
$$
$$L_{xy}+L_{yy}\Big|_{y=x}=-\frac12\widetilde u'(x,t_0)-\widetilde u_{t_0}(x)\int\limits_{-\infty}^{x}\widetilde u(s,t_0)\d s,
\qquad \mbox{ for }\quad u\in C^1.$$
On the other hand, from \eqref{U_a1b1} and $d_1=-a_1$ we obtain that 
$$2a_1-b_1^2=U,\quad b_2-b_1a_1-c_1=\frac{-U_x}{2}+U b_1,$$
and since $\widetilde u(C,t_0)=U(C,t_0)$ for $C=A,B$ and continuous $u,$ and $\widetilde u'(C,t_0)=U'(C,t_0)$ for $C=A,B$ and continuously differentiable $u,$ we see that \eqref{term1} and \eqref{term32} holds, and hence we obtain the statements of the lemma.

The general case can be proven as follows:
denote 
$$f(x)=h_1(-x,t_0;\lambda)\cdot \sqrt{2}\cdot \sqrt[4]{\lambda} \cdot \e^{\frac{-t_0}{3}\lambda^{3/2}+\frac{1}{105}\lambda^{7/2}},
\quad 
g(x)=e_l(-x,t_0;\lambda)\cdot \sqrt{2}\cdot \sqrt[4]{\lambda}\cdot \e^{\frac{-t_0}{3}\lambda^{3/2}+\frac{1}{105}\lambda^{7/2}}.$$
If $u_{t_0}(x)$ is $N$ times differentiable with $u^{(N)}(x,t_0)\in BV_{loc},$ then
$$f(x)=\e^{x\sqrt{\lambda}}\(\sum\limits_{j=0}^{N}\frac{\alpha_j}{\lambda^{j/2}}+\mathcal{O}(\lambda^{-N/2-1})\),\quad f'(x)=\sqrt{\lambda}\e^{x\sqrt{\lambda}}\(\sum\limits_{j=0}^{N}\frac{\alpha_j+\alpha_{j-1,x}}{\lambda^{j/2}}+
+\mathcal{O}(\lambda^{-N/2-1})\),
$$
$$
f''(x)=\lambda\e^{x\sqrt{\lambda}}\(\sum\limits_{j=0}^{N}\frac{\alpha_j+2\alpha_{j-1,x}+\alpha_{j-2,xx}}{\lambda^{j/2}}+\mathcal{O}(\lambda^{-N/2-1})\),$$
and similar formulas, with $\alpha_j(x)$ substituted by $\beta_j(x),$ hold for $g(x).$ 
%
We put here $\alpha_0=\beta_0=1, \alpha_{-j}=\beta_{-j}=0, j\geq 1.$
Substituting the above expansions into
$$f_{xx}+2\widetilde u(x)f=\lambda f,\quad g_{xx}+2 U(x)g=\lambda g,$$
one obtains
 \begin{equation}\label{alpha_der}\alpha_{j,x}=-u\alpha_{j-1}-\frac12\alpha_{j-1,xx},\quad \beta_{j,x}=-u\beta_{j-1}-\frac12\beta_{j-1,xx},\qquad j=1,\ldots,N.\end{equation}
Now, the term of order $\e^{2x\sqrt{\lambda}}\lambda^{-k/2}$ in the Wronskian
$$\sqrt{\lambda}\(f(x)g'(x)-f'(x)g(x)\)$$ is equal to
$$\sum\limits_{j=0}^k\alpha_{k-j}\(\beta_j+\beta_{j-1,x}\)-\sum\limits_{j=0}^k\beta_{k-j}\(\alpha_j+\alpha_{j-1,x}\),$$
and substituting subsequently expressions \eqref{alpha_der} instead of $\alpha_{j,x},$ $\beta_{j,x},$ we find that the above term is equal to 0, since the function $\widetilde u (x)-U(x)$ and its first $N-1$ derivatives vanish at the point $x=A.$
\end{proofof}

\subsection{Example: compactly supported perturbation with a constant in the middle.}\label{sect_constant}
For the initial function 
$$u(x, t_0)=\begin{cases}c,\ \  A<x<B,\\
U(x,t_0),\ \  x<A, x>B,\end{cases}$$
where $c\in\mathbb{R}$ is a constant, the functions $h_1, h_2$ have the explicit form
$$h_1(x)=\alpha\widehat{h_1}(x)+\beta \widehat{h_2}(x),\qquad h_2(x)=\gamma\widehat{h_1}(x)+\delta \widehat{h_2}(x),$$
where
$$\widehat{h_1}=\frac{1}{\sqrt{2}\sqrt[4]{\lambda-2c}}\e^{-x\sqrt{\lambda-2c}+\frac{t}{3}\lambda^{3/2}-\frac{1}{105}\lambda^{7/2}},\quad \widehat{h_2}=\frac{1}{\sqrt{2}\sqrt[4]{\lambda-2c}}\e^{x\sqrt{\lambda-2c}-\frac{t}{3}\lambda^{3/2}+\frac{1}{105}\lambda^{7/2}},$$
and 
$$\begin{cases}\alpha=\frac12\(\sqrt[4]{\frac{\lambda-2c}{\lambda}}+\sqrt[4]{\frac{\lambda}{\lambda-2c}}\)\e^{B(\sqrt{\lambda-2c}-\sqrt{\lambda})},\\\beta=\frac12\(\sqrt[4]{\frac{\lambda-2c}{\lambda}}-\sqrt[4]{\frac{\lambda}{\lambda-2c}}\)\e^{-B(\sqrt{\lambda-2c}+\sqrt{\lambda})},\end{cases}
\qquad
\begin{cases}\gamma=\frac12\(\sqrt[4]{\frac{\lambda-2c}{\lambda}}-\sqrt[4]{\frac{\lambda}{\lambda-2c}}\)\e^{A(\sqrt{\lambda-2c}+\sqrt{\lambda})},\\\delta=\frac12\(\sqrt[4]{\frac{\lambda-2c}{\lambda}}+\sqrt[4]{\frac{\lambda}{\lambda-2c}}\)\e^{-A(\sqrt{\lambda-2c}-\sqrt{\lambda})}.\end{cases}
$$
\clu{Furthermore}, in the formulas for $a(\lambda),$ $b(\lambda)$ in Section \ref{sect_b} we can \clu{replace} $h_1, h_2$ everywhere with $\widehat{h}_1, \widehat{h}_2,$ and we can write the large $\lambda$ asymptotics of $a,$ $b$ in a more explicit way. Namely, \clu{($a_1,$ $b_1$ below are defined in asymptotic expansion \eqref{asserEhat}),} 
$$\hskip0cm a(\lambda)=1+\frac{\(c(A-B)+b_1(A)-b_1(B)\)}{\sqrt{\lambda}}+\frac{\(c(A-B)+b_1(A)-b_1(B)\)^2}{2\lambda}+\mathcal{O}(\lambda^{-3/2}),$$ $\hfill\arg\lambda\in(-\pi+\varepsilon,-0]\cup[+0,\pi-\varepsilon),$

\medskip

$$a(\lambda)=1+\frac{\(c(A-B)+b_1(A)-b_1(B)\)}{\sqrt{\lambda}}+\frac{\(c(A-B)+b_1(A)-b_1(B)\)^2}{2\lambda}+\mathcal{O}(\lambda^{-3/2})+$$
$$\hskip0cm+\underbrace{\i\left\{\e^{2\theta(A)}\(\frac{\left[c-2a_1(A)+b_1^2(A)\right]}{2\lambda}+\mathcal{O}(\lambda^{-3/2})\)-\e^{2\theta(B)}\(\frac{\left[c-2a_1(B)+b_1^2(B)\right]}{2\lambda}+\mathcal{O}(\lambda^{-3/2})\)\right\}}_{\textrm{exponentially small term, which becomes oscillatory of order }\mathcal{O}\(\lambda^{-1}\) \textrm{ at }\arg\lambda=\pi-0},$$ $\hfill\arg\lambda\in[\frac{5\pi}{7}+\varepsilon,\pi-0],$

\medskip

$$a(\lambda)=1+\frac{\(c(A-B)+b_1(A)-b_1(B)\)}{\sqrt{\lambda}}+\frac{\(c(A-B)+b_1(A)-b_1(B)\)^2}{2\lambda}+\mathcal{O}(\lambda^{-3/2})-$$
$$\hskip-0cm\underbrace{-\i\left\{\e^{2\theta(A)}\(\frac{\left[c-2a_1(A)+b_1^2(A)\right]}{2\lambda}+\mathcal{O}(\lambda^{-3/2})\)-\e^{2\theta(B)}\(\frac{\left[c-2a_1(B)+b_1^2(B)\right]}{2\lambda}+\mathcal{O}(\lambda^{-3/2})\)\right\}}_{\textrm{exponentially small term, which becomes oscillatory of order }\mathcal{O}\(\lambda^{-1}\) \textrm{ at }\arg\lambda=-\pi+0},$$ $\hfill\arg\lambda\in[-\pi+0, -\pi+\varepsilon)],$

\medskip

$$a_u(\lambda)=1-\frac{(A-B)c+b_1(A)-b_1(B)}{\sqrt{\lambda}}+\frac{\((A-B)c+b_1(A)-b_1(B)\)^2}{2\lambda}+\mathcal{O}(\lambda^{-3/2})+$$
$$\hskip-0cm\underbrace{\frac{-\i}{2\lambda}\(\e^{-2\theta(B)}(c-2a_1(B)+b_1^2(B)+\mathcal{O}(\lambda^{-3/2}))-\e^{-2\theta(A)}(c-2a_1(A)+b_1^2(A)+\mathcal{O}(\lambda^{-3/2}))\)}_{\textrm{leading term}},$$ $\hfill\arg\lambda\in[-\pi+0,\frac{-5\pi}{7}-\varepsilon],$

\medskip

$$\bullet\ \Omega_{I}\quad  a_u(\lambda)=1+\dfrac{c(A-B)+b_1(A)-b_1(B)}{\sqrt{\lambda}}+\frac{\(c(A-B)+b_1(A)-b_1(B)\)^2}{2\lambda}+\mathcal{O}(\lambda^{-3/2}),$$ $\hfill\arg\lambda\in[0,\pi-\varepsilon],$

$\bullet\ \Omega_{IV}\quad a_u(\lambda)=1+\dfrac{c(A-B)+b_1(A)-b_1(B)}{\sqrt{\lambda}}+\dfrac{\(c(A-B)+b_1(A)-b_1(B)\)^2}{2\lambda}+\mathcal{O}(\lambda^{-3/2})+$
$$\hskip-0cm +\frac{-\i}{2\lambda}\left[\e^{-2\theta(B)}\(c-2a_1(B)+b_1^2(B)+\mathcal{O}(\lambda^{-1/2})\)-\e^{-2\theta(A)}\(c-2a_1(A)+b_1^2(A)+\mathcal{O}(\lambda^{-1/2})\)\right],$$ $\hfill\arg\lambda\in[-\pi+\varepsilon,0],$

$$\bullet\ \Omega_{II}\quad a_u(\lambda)=1+\frac{c(A-B)+b_1(A)-b_1(B)}{\sqrt{\lambda}}+\frac{\(c(A-B)+b_1(A)-b_1(B)\)^2}{2\lambda}+\mathcal{O}(\lambda^{-3/2})+$$
$$\hskip-0cm +\frac{-\i}{2\lambda}\left[\e^{2\theta(B)}\(c-2a_1(B)+b_1^2(B)+\mathcal{O}(\lambda^{-1/2})\)-\e^{2\theta(A)}\(c-2a_1(A)+b_1^2(A)+\mathcal{O}(\lambda^{-1/2})\)\right],$$ $\hfill\arg\lambda\in[\frac{5\pi}{7}+\varepsilon,\pi],$

$$\bullet \ \Omega_{III}\quad a_u(\lambda)=1-\frac{c(A-B)+b_1(A)-b_1(B)}{\sqrt{\lambda}}+\frac{\(c(A-B)+b_1(A)-b_1(B)\)^2}{2\lambda}+\mathcal{O}(\lambda^{-3/2})+$$
$$+\frac{-\i}{2\lambda}\left[\e^{-2\theta(B)}\(c-2a_1(B)+b_1^2(B)\)-\e^{-2\theta(A)}\(c-2a_1(A)+b_1^2(A)\)\right],
\arg\lambda\in[-\pi,\frac{-5\pi}{7}-\varepsilon].$$
\medskip
Asymptotics for $a_d(\lambda)$ in all the sectors of the complex plane $\lambda$ follows from the asymptotics of $a_u(\lambda)$ by the formula $a_d(\lambda)=\ol{a_u(\ol{\lambda})}.$
\clu{Furthermore},
$$b(\lambda)=\(\frac{-c+2a_1(A)-b_1^2(A)}{2\lambda}+\mathcal{O}(\lambda^{-3/2})\)\e^{2\theta(A)}+
\(\frac{c-2a_1(B)+b_1^2(B)}{2\lambda}+\mathcal{O}(\lambda^{-3/2})\)\e^{2\theta(B)}$$
$$=\e^{2\theta(B)}\(\frac{c-2a_1(B)+b_1^2(B)}{2\lambda}+\mathcal{O}(\lambda^{-3/2})-\underbrace{\e^{2(A-B)\sqrt{\lambda}}\left[\frac{c-2a_1(A)+b_1^2(A)}{2\lambda}+\mathcal{O}(\lambda^{\frac{-3}{2}})\right]}_{\hskip-1.5cm\textrm{exponentially small term which becomes oscillatory for }\arg\lambda=\pm\pi}\)$$
$$\hskip10cm \arg\lambda\in[-\pi+0,-0]\cup[+0,\pi-0],$$

$$\hskip-0cm b_u(\lambda)=\frac{\e^{-2\theta(B)}\left[c-2a_1(B)+b_1^2(B)+\mathcal{O}(\lambda^{-1/2})\right]-\overbrace{\e^{-2\theta(A)}\left[c-2a_1(A)+b_1^2(A)+\mathcal{O}(\lambda^{-1/2})\right]}^{\textrm{main part}}}{2\lambda},$$ $\hfill\arg\lambda\in[-\pi+0,\frac{-5\pi}{7}-\varepsilon],$

$$\hskip-0cm b_d(\lambda)=\frac{\e^{-2\theta(B)}\left[c-2a_1(B)+b_1^2(B)+\mathcal{O}(\lambda^{-1/2})\right]-\overbrace{\e^{-2\theta(A)}\left[c-2a_1(A)+b_1^2(A)+\mathcal{O}(\lambda^{-1/2})\right]}^{\textrm{main part}}}{2\lambda},$$ $\hfill\arg\lambda\in[\frac{5\pi}{7}+\varepsilon, \pi-0].$

\begin{remark}The functions $a_u(\lambda),$ $a_d(\lambda) $ are entire, i.e. they do not have any jump on the real axis. This is not obvious since their representation \eqref{a_h12} involves $h_{1,2},$ which do have jumps 
across some part of the real axis.
\end{remark}

{
\begin{remark} Consider the KdV equation \eqref{KdV} (with reverse time, i.e. instead of $t\geq t_0$ we take $\clu{t\leq t_0<0}$)
$$u_t(x,t)+u(x,t)u_x(x,t)+u_{xxx}(x,t)=0, \quad \clu{t\leq t_0<0}$$ with the initial datum of the form
$$u_{t_0}(x)=\begin{cases}\frac{x}{t\clu{_0}},\ x<A, \ x>B, \\ c, A<x<B,\end{cases}$$
where $A<B,c, \clu{t_0<0}$ are real constants.
Based on the unperturbed Jost solutions $e^{Ai}_{l,r}$ associated with the function $\frac{x}{t}$ and 
constructed in subsection \ref{sect_Airy}, one can construct the Jost solutions $f^{Ai}_{l,r}$ associated with the 
function $u_{t_0}(x),$ and then the corresponding spectral functions $a^{Ai}(\lambda),$ $b^{Ai}(\lambda),$ 
$r^{Ai}(\lambda).$
Then the function $a^{Ai}(\lambda)$ does not vanish nowhere, but if we take 
the constant $c$ big enough, then $a^{Ai}(\lambda)$ takes some values which are very close to $0$ (``quasi-spectrum").
\end{remark}}

\section{Construction of Riemann-Hilbert problems}\label{sect_RH}
In order to construct a solution $u(x,t)$ to the KdV equation, which at the time $t=t_0$ is equal to the given initial function $u(x, t_0),$ 
our strategy is to construct a solution to a Riemann-Hilbert problem out of the Jost solutions $F_l(x,t_0;\lambda),$ $F_r(x,t_0;\lambda),$ in such a way that this RH problem 
makes sense also for $t\neq t_0.$ 
There are several ways to do this.
In this section the initial function $u_{t_0}(x)$ is a compactly supported perturbation of $U(x,t_0),$ i.e. $u(x, t_0)$ 
satisfies the conditions of Lemma \ref{lem_com_sup}. The functions  $a(\lambda),b(\lambda),r(\lambda)$ are the 
spectral functions associated with the initial function $u_{t_0}(x).$

\subsection{RH problem appropriate for $t>t_0, x\in\mathbb{R}$ and $t=t_0,$ $x>B.$}\label{sect_RHPtp}
{\color{black}
Let us notice that  relation \eqref{b_a_rel}
$$\frac{b_u}{a_u}-\frac{b_d}{a_d}=\frac{\i(a_ua_d-1)}{a_ua_d}$$
together with scattering relation \eqref{scat_rel} 
$$\frac{F_{lu}}{a_u}-\frac{F_{ld}}{a_d}=\frac{-i}{a_ua_d}F_r$$
imply
$$\frac{b_u}{a_u}F_r-\frac{b_d}{a_d}F_r=\frac{\i(a_ua_d-1)}{a_ua_d}F_r$$
Hence, substracting the two latter formulas and multiplying them by $\i,$ we get 
$$\frac{\i\(F_{lu}-b_u F_r\)}{a_u}-\frac{\i\(F_{ld}-b_d F_r\)}{a_d}=F_r.$$

\noindent Now we are ready to define the piece-wise analytic in $\lambda$ matrix-valued function
$$P(x,t_0;\lambda):=
\begin{cases}
\(\dsfrac{F_{lu}-b_uF_r}{a_u}\e^{-\theta}, \quad \(F_r-\frac{\i(F_{lu}-b_uF_r)}{a_u}\)\e^{\theta}\), \Omega_{II}, & 
\quad \(\dsfrac{1}{a_u}F_{lu}\e^{-\theta}, F_r\e^{\theta}\), \Omega_I,\\\\
\(\dsfrac{F_{ld}-b_dF_r}{a_d}\e^{-\theta},\quad \(F_r+\frac{\i(F_{ld}-b_dF_r)}{a_d}\)\e^{\theta}\), \Omega_{III},&
 \quad \(\dsfrac{1}{a_d}F_{ld}\e^{-\theta}, F_r\e^{\theta}\), \Omega_{IV},
\end{cases}$$
where we denoted 
$$\Omega_{II}=\left\{\lambda:\ \ \arg\lambda\in(\frac{6\pi}{7},\pi)\right\},\ \ 
\Omega_{I}=\left\{\lambda:\ \ \arg\lambda\in(0,\frac{6\pi}{7})\right\},$$
$$\Omega_{III}=\left\{\lambda:\ \ \arg\lambda\in(-\pi,\frac{-6\pi}{7})\right\},\ \ 
\Omega_{IV}=\left\{\lambda:\ \ \arg\lambda\in(\frac{-6\pi}{7},0)\right\},$$

\noindent This matrix-valued function $P(x,t_0, \lambda)$ satisfies the following RH problem at the time $t=t_0$ for $x>B:$
\begin{RH}\label{RHtp}{\textbf{(appropriate for $t>t_0, x\in\mathbb{R}$ or $t=t_0$ and $x>B.$)}}
To find a $2\times2$ matrix-valued function $P(x,t;\lambda),$ which
\begin{itemize} 
\item is analytic in $\lambda\in\mathbb{C}\setminus\Sigma,$ where $\Sigma$ is as in \eqref{Sigma},
\item has the following jump $P_+=P_-J_P$ across $\Sigma:$
$$J_P=\begin{pmatrix}1 & 0 \\ \dsfrac{-i \e^{-2\theta}}{a_ua_d}& 1\end{pmatrix},\gamma_0,
\qquad J_P=\begin{pmatrix}0&-\i\\-\i&0\end{pmatrix},\rho,$$
$$J_P=\begin{pmatrix}1+\overbrace{\dsfrac{\i b_u}{a_u}}^{\textrm{small due to } b_u} & \overbrace{i\e^{2\theta}}^{\textrm{exp decay due to exp}}\\\\\underbrace{\dsfrac{b_u \e^{-2\theta}}{a_u}}_{\textrm{decay due to } b_u} & 1\end{pmatrix},\gamma_3,
\quad J_P=\begin{pmatrix}1 & \i\e^{2\theta}\\\\ \dsfrac{-b_d \e^{-2\theta}}{a_d} & 1-\dsfrac{\i b_d}{a_d} \end{pmatrix}, \gamma_{-3},$$
\noindent where we denoted $$\gamma_0=(0,+\infty),\ \ \gamma_3=(\e^{6\pi\i/7}\infty,0),\ \ \gamma_{-3}=(\e^{-6\pi\i/7}\infty,0),\ \ \rho=(-\infty,0),$$
\item has the following asymptotics as $\lambda\to\infty,$ which is uniform w.r.t. $\arg\lambda\in[-\pi,\pi]:$
$$P=\frac{1}{\sqrt{2}}\lambda^{-\sigma_3/4}\begin{pmatrix}1 & 1\\1&-1\end{pmatrix}\(I+b\sigma_3\frac{1}{\sqrt{\lambda}}+\mathcal{O}(\lambda^{-1})\),$$
where $b=b(x,t)$ is some scalar (which is not fixed, but is introduced in order to fix the form of the asymptotics).

\end{itemize}
\end{RH}

\begin{figure}[ht]
\center
\begin{tikzpicture}
\node at (-9,3) {$P_+=P_-J_{P}:$};

\draw 
[decoration={markings, mark=at position  0.5 with {\arrow{<}}}, postaction={decorate}]
(0,0) -- (-5.40581, -2.6033);
\draw [decoration={markings, mark=at position  0.5 with {\arrow{>}}}, postaction={decorate}] 
(-5.40581, 2.6033) -- (0,0);

\draw[decoration={markings, mark=at position  0.75 with {\arrow{>}}}, postaction={decorate}]
[decoration={markings, mark=at position  0.25 with {\arrow{>}}}, postaction={decorate}](-9,0) -- (4,0);

\node at (.6,1) {\color{blue}$\(\dsfrac{F_{lu}\e^{-\theta}}{a_u},F_r \e^{\theta}\)$};
\node at (.6,-1) {\color{blue}$\(\dsfrac{F_{ld}\e^{-\theta}}{a_d},F_r \e^{\theta}\)$};
\node at (-5.5,0.6) {\color{blue}$\(\overbrace{\dsfrac{F_{lu}- b_u F_r}{a_u}}, \overbrace{F_r\(1+\frac{\i b_u}{a_u}\)-\frac{\i F_{lu}}{a_u}}\)$};
\node at (-6.6,1.4) {\color{blue}$\cdot \e^{-\theta}$};
\node at (-4.6,1.4) {\color{blue}$\cdot \e^{\theta}$};
\node at (-5,-0.6) {\color{blue}$\(\underbrace{\dsfrac{F_{ld}-b_d F_r}{a_d}},\underbrace{F_r\(1-\frac{\i b_d}{a_d}\)+\frac{\i F_{ld}}{a_d}}\)$};
\node at (-6.6,-1.4) {\color{blue}$\cdot \e^{-\theta}$};
\node at (-4.2,-1.4) {\color{blue}$\cdot \e^{\theta}$};
\node at (-5.8,2.8) {$\frac{6\pi}{7}$};
\draw (-5.8, 2.8) circle [radius=9pt];
\node at (-5.8,-2.8) {$\frac{-6\pi}{7}$};
\draw (-5.8,-2.8) circle [radius=9pt];

\node at (3,.8){$\begin{pmatrix}1 & 0 \\ \dsfrac{-\i \e^{-2\theta}}{a_ua_d} & 1\end{pmatrix}$};
\node at (-1.8,-2.6){$\begin{pmatrix}1 & \i \e^{2\theta}\\\\ \dsfrac{-b_d \e^{-2\theta}}{a_d} & 1-\dsfrac{\i b_d}{a_d}\end{pmatrix}$};
\node at (-2.,2.5){$\begin{pmatrix}1+\dsfrac{\i b_u}{a_u} & \i \e^{2\theta}\\\\ \dsfrac{b_u \e^{-2\theta}}{a_u} & 1\end{pmatrix}$};
\node at (-9.2,0.5){$\begin{pmatrix}0 & -\i \\ -\i & 0\end{pmatrix}$};

\end{tikzpicture}
\end{figure}
\vskip1cm

\noindent For this RH problem to be meaningful the jumps must vanish as $\lambda\to\infty.$ Notice that, by Lemma \ref{lem_b_a_prop},$$r_u\e^{-2\theta(x,t;\lambda)}=\mathcal{O}(\frac{1}{\sqrt{\lambda}})\cdot\e^{\frac{2(t-t_0)}{3}\lambda^{3/2}+2(B-x)\lambda^{1/2}},$$
\noindent and hence this is indeed the case for $t>t_0, x\in\mathbb{R}$ and for $t=t_0, x>B.$

Together with RHP \ref{RHtp} we can consider another one, for the function $\widehat{P}(x,t;\lambda),$ with the same analyticity 
and jump condition, but with different asymptotics as $\lambda\to\infty:$
\begin{RH}\label{RHtphat}
Find a function $\widehat{P}(x,t;\lambda),$ with analyticity and jump as in RH \ref{RHtp}, and with the asymptotic condition altered:
 \begin{itemize}
  \item asymptotics as $\lambda\to\infty$
  $$\widehat{P}(x,t;\lambda)=\(I+\mathcal{O}(\frac{1}{\lambda})\)\dfrac{\lambda^{-\sigma_3/4}}{\sqrt{2}}
  \begin{pmatrix}
   1&1\\1&-1
  \end{pmatrix}.
$$
 \end{itemize}
\end{RH}

\subsection{RH problem appropriate for $t<t_0, x\in\mathbb{R}$ and $t=t_0, x<A.$}
Another way to construct a RH problem is as follows.
Define a piece-wise analytic matrix-valued function
$$N=\begin{cases}
\(\dsfrac{F_{lu}}{a_u}\e^{-\theta},\quad \(F_r-\i a_d F_{lu}\)\e^{\theta}\), \Omega_{II}, \quad &\(\dsfrac{1}{a_u}F_{lu}\e^{-\theta}, F_r\e^{\theta}\), \Omega_{I},\\\\
\(\dsfrac{F_{ld}}{a_d}\e^{-\theta},\quad \(F_r+\i a_u F_{ld}\)\e^{\theta}\), \Omega_{III}, \quad &\(\dsfrac{1}{a_d}F_{ld}\e^{-\theta}, F_r\e^{\theta}\), \Omega_{IV}.
\end{cases}$$

This matrix-valued function $N(\lambda)$ satisfies the following RH problem at the time $t=t_0$ for $x<A:$
\begin{RH}\label{RHtn}
{\textbf{(appropriate for $t<t_0, x\in\mathbb{R}$ or $t=t_0$ and $x<A.)$}}To find a $2\times2$ matrix-valued function $N(x,t;\lambda),$ which
\begin{itemize} 
\item is analytic in $\lambda\in\mathbb{C}\setminus\Sigma,$ where $\Sigma$ is as in \eqref{Sigma},
\item has the following jump $N_+=N_-J_N$ across $\Sigma:$
$$J_N=\begin{pmatrix}1 & 0 \\ \dsfrac{-i \e^{-2\theta}}{a_ua_d}& 1\end{pmatrix},\gamma_0,
\qquad J_N=\begin{pmatrix}0&-\i a_ua_d\\\frac{-\i}{a_ua_d}&0\end{pmatrix},\rho$$
$$J_N=\begin{pmatrix}1 & \overbrace{i a_ua_d\e^{2\theta}}^{\textrm{exp decay due to exp}}\\\\ 0 & 1\end{pmatrix},\gamma_3,
\quad J_N=\begin{pmatrix}1 & \i a_ua_d\e^{2\theta} \\ 0 & 1\end{pmatrix}, \gamma_{-3}.$$
\item has the following asymptics as $\lambda\to\infty,$ which is uniform w.r.t. $\arg\lambda\in[-\pi,\pi]:$
$$N=\frac{1}{\sqrt{2}}\lambda^{-\sigma_3/4}\begin{pmatrix}1 & 1\\1&-1\end{pmatrix}\(I+b\sigma_3\frac{1}{\sqrt{\lambda}}+\mathcal{O}(\lambda^{-1})\),$$
where $b=b(x,t)$ is some scalar (which is not fixed, but introduced in order to fix the form of the asymptotics).
\end{itemize}
\end{RH}
\noindent 
\noindent A condition for this RH problem \ref{RHtp} to be meaningful is that the jumps vanish as $\lambda\to\infty.$  Notice that by Lemma \ref{lem_b_a_prop} $$(a_d-a_u)\e^{2\theta(x,t;\lambda)}=\mathcal{O}(\frac{1}{\sqrt{\lambda}})\cdot\e^{\frac{2(t_0-t)}{3}\lambda^{3/2}+2(x-A)\lambda^{1/2}},$$
\noindent and hence this RH problem \ref{RHtn} is good for $t<t_0$ and for $t=t_0, x<A.$

\begin{figure}[ht]
\center
\begin{tikzpicture}
\node at (-8.8,2.8) {$N_+=N_-J_{N}:$};

\draw 
[decoration={markings, mark=at position  0.5 with {\arrow{<}}}, postaction={decorate}]
(0,0) -- (-5.40581, -2.6033);
\draw [decoration={markings, mark=at position  0.5 with {\arrow{>}}}, postaction={decorate}] 
(-5.40581, 2.6033) -- (0,0);

\draw[decoration={markings, mark=at position  0.75 with {\arrow{>}}}, postaction={decorate}]
[decoration={markings, mark=at position  0.25 with {\arrow{>}}}, postaction={decorate}](-9,0) -- (4,0);

\node at (.6,1) {\color{blue}$\(\dsfrac{F_{lu}\e^{-\theta}}{a_u},F_r \e^{\theta}\)$};
\node at (.6,-1) {\color{blue}$\(\dsfrac{F_{ld}\e^{-\theta}}{a_d},F_r \e^{\theta}\)$};
\node at (-5.5,0.6) {\color{blue}$\(\overbrace{\dsfrac{F_{lu}}{a_u}}, \overbrace{F_r-\i a_d F_{lu}}\)$};
\node at (-6.1,1.4) {\color{blue}$\cdot \e^{-\theta}$};
\node at (-5.1,1.4) {\color{blue}$\cdot \e^{\theta}$};
\node at (-5,-0.6) {\color{blue}$\(\underbrace{\dsfrac{F_{ld}}{a_d}},\underbrace{F_r+\i a_u F_{ld}}\)$};
\node at (-5.8,-1.4) {\color{blue}$\cdot \e^{-\theta}$};
\node at (-4.2,-1.4) {\color{blue}$\cdot \e^{\theta}$};
\node at (-5.8,2.8) {$\frac{6\pi}{7}$};
\draw (-5.8, 2.8) circle [radius=9pt];
\node at (-5.8,-2.8) {$\frac{-6\pi}{7}$};
\draw (-5.8,-2.8) circle [radius=9pt];

\node at (3,.8){$\begin{pmatrix}1 & 0 \\ \dsfrac{-\i \e^{-2\theta}}{a_ua_d} & 1\end{pmatrix}$};
\node at (-1.8,-2.6){$\begin{pmatrix}1 & \i a_ua_d \e^{2\theta}\\\\ 0 & 1\end{pmatrix}$};
\node at (-2.,2.5){$\begin{pmatrix}1 & \i a_ua_d\e^{2\theta}\\\\ 0 & 1\end{pmatrix}$};
\node at (-8.6,0.5){$\begin{pmatrix}0 & -\i a_ua_d \\ \frac{-\i}{a_ua_d} & 0\end{pmatrix}$};

\end{tikzpicture}
\end{figure}

Together with the RHP \ref{RHtn} we can consider another one, for the function 
$\widehat{N}(x,t;\lambda),$ with the same analyticity 
and jump condition, but with different asymptotics as $\lambda\to\infty:$
\begin{RH}\label{RHtnhat}
To find a function $\widehat{N}(x,t;\lambda),$ with analyticity and jump conditions as in RH \ref{RHtn}, and with the asymptotic condition altered:
 \begin{itemize}
  \item asymptotics as $\lambda\to\infty$
  $$\widehat{N}(x,t;\lambda)=\(I+\mathcal{O}(\frac{1}{\lambda})\)\dfrac{\lambda^{-\sigma_3/4}}{\sqrt{2}}
  \begin{pmatrix}
   1&1\\1&-1
  \end{pmatrix}.
$$
 \end{itemize}
\end{RH}
}

\subsection{RH problem appropriate for all $t\in\mathbb{R},$ $x\in\mathbb{R}.$}
A third way to construct a RH problem is to define a piece-wise {\textit{meromorphic}} in $\lambda$ matrix-valued function
$$
\mathbb{F}(x,t_0;\lambda)=
\begin{cases}
 \begin{pmatrix}
  \frac{F_{lu}}{a_u}, & F_{r}-\frac{\i}{a_u+\i b_u} F_{lu} 
 \end{pmatrix}, \Omega_{II},
\qquad
\begin{pmatrix}
 \frac{F_{lu}}{a_u}, & F_{r}
\end{pmatrix}, \Omega_I,
\\
 \begin{pmatrix}
  \frac{F_{ld}}{a_d}, & F_{r}+\frac{\i}{a_d-\i b_d} F_{ld} 
 \end{pmatrix}, \Omega_{III},
\qquad
\begin{pmatrix}
 \frac{F_{ld}}{a_d}, & F_{r}
\end{pmatrix}, \Omega_{IV},
\end{cases}
$$
By Lemma \ref{lem_b_a_prop}, there is at most a finite number of zeros of $a_u+\i b_u$ in the region $\Omega_{II},$ moreover, those zeros do not lie on the real axis. In case if some zeros fall on the border $\gamma_3$ between $\Omega_{II}$ and $\Omega_{I},$ we will locally deform a bit the line $\gamma_3,$ so that $\gamma_3$ would be  free of zeros of $a_u+\i b_u.$
Symmetrically, we will move $\gamma_{-3}.$ We keep the same notations for the deformed rays $\gamma_3, \gamma_{-3}.$

Function $\mathbb{F}(x, t_0; \lambda)$ solves the following RHP \ref{RHF} at the time $t=t_0:$

\begin{RH}\label{RHF} {\textbf{(Appropriate for all real $t$ and $x.$)}}
To find a $2\times2$ matrix-valued function $\mathbb{F}(x,t;\lambda),$ which
\begin{enumerate}
 \item is analytic in $\lambda\in\mathbb{C}\setminus\Sigma,$
where $\Sigma$ is as in \eqref{Sigma},
\item has the following jump $\mathbb{F}_+=\mathbb{F}_-J_{\mathbb{F}}$ across $\Sigma:$
$$J_{\mathbb{F}}=\begin{cases}
                  \begin{pmatrix}
                   \frac{-\i r_u}{1-\i r_d} & \frac{-\i}{(1+\i r_u)(1-\i r_d)} \\ \frac{-\i}{a_ua_d} & \frac{\i r_d}{1+\i r_u}
                  \end{pmatrix},\ \lambda\in\rho,
\qquad
\begin{pmatrix}
 1 & 0 \\ \frac{-\i}{a_ua_d} & 1
\end{pmatrix}, \ \lambda\in\gamma_0,
\\\\
\qquad
\begin{pmatrix}
 1 & \frac{\i}{1+\i r_u} \\ 0 & 1
\end{pmatrix}, \ \lambda\in\gamma_3,
\qquad 
\begin{pmatrix}
 1 & \frac{\i}{1-\i r_d} \\ 0 & 1
\end{pmatrix}, \ \lambda\in\gamma_{-3},
\end{cases},
$$
\item \label{pole_cond} has the following pole conditions at the roots of $r_u=\i,$ $r_d=-\i$: 
\\for $\lambda^*\in II,$ $\Im\lambda^*>0$ such that $r_u(\lambda^*)=\i,$
$$\mathbb{F}_{[2]}(\lambda)+\frac{\i}{1+\i r_u(\lambda)}\mathbb{F}_{[1]}=\mathcal{O}(1)\quad \mbox{ and }\quad  \mathbb{F}_{[1]}=\mathcal{O}(1) \quad \mbox{ for }\quad \lambda\to\lambda^*,$$
$$\mathbb{F}_{[2]}(\lambda)-\frac{\i}{1-\i r_d(\lambda)}\mathbb{F}_{[1]}=\mathcal{O}(1) \quad \mbox{ and }\quad \mathbb{F}_{[1]}=\mathcal{O}(1) \quad \mbox{ for }\quad\lambda\to\ol{\lambda^*},$$
\item has the following asymptotics as $\lambda\to\infty,$ which is uniform w.r.t. $\arg\lambda\in[-\pi,0]\cup[0,\pi]:$
$$\mathbb{F}(x,t;\lambda)=\frac{1}{\sqrt{2}}\lambda^{-\sigma_3/4}\begin{pmatrix}1&1\\1&-1\end{pmatrix}\(I+\widetilde b\sigma_3\frac{1}{\sqrt{\lambda}}
+{\color{black} {\ol{_\mathcal{O}}}(\frac{1}{\sqrt\lambda})}\) \e^{\theta(x, t; \lambda)\sigma_3},$$
where $\widetilde b=\widetilde b(x,t)$ is some scalar, which is not fixed, but introduced in order to fix the form of the asymptotics.
\end{enumerate}
\end{RH}

Together with RH problem \ref{RHF} for the function $\mathbb{F}(x,t;\lambda)$ we will also consider another one for the function 
$\widehat{\mathbb{F}}(x,t;\lambda),$ which has the same analyticity, pole, jump conditions as $\mathbb{F}(x,t;\lambda),$ but asymptotic condition as $\lambda\to\infty$ is replaced with 
 \begin{enumerate}
  \item [{\textit{4a.}}] Asymptotics as $\lambda\to\infty:$ 
  $$\widehat{\mathbb{F}}(\lambda)=\(I+{\ol{_\mathcal{O}}}(1)\)\dfrac{\lambda^{-\sigma_3/4}}{\sqrt{2}}
  \begin{pmatrix}
   1&1\\1&-1
  \end{pmatrix}\e^{\theta\sigma_3}.$$
 \end{enumerate}

\begin{figure}[ht]
\center
\begin{tikzpicture}
\node at (-9,3) {{\color{blue}$M=\widehat{\mathbb{F}}\e^{-\theta\sigma_3},\ M_+=M_-J_M:$}};

\draw 
[decoration={markings, mark=at position  0.5 with {\arrow{<}}}, postaction={decorate}]
(0,0) -- (-5.40581, -2.6033);
\draw [decoration={markings, mark=at position  0.4 with {\arrow{>}}}, postaction={decorate}] 
(-5.40581, 2.6033) -- (0,0);

\draw[decoration={markings, mark=at position  0.85 with {\arrow{>}}}, postaction={decorate}]
[decoration={markings, mark=at position  0.3 with {\arrow{>}}}, postaction={decorate}](-11,0) -- (3.5,0);

\node at (.1,1) {\color{blue}$\(\dsfrac{F_{lu}\e^{-\theta}}{a_u},F_r \e^{\theta}\)$};
\node at (.6,-1) {\color{blue}$\(\dsfrac{F_{ld}\e^{-\theta}}{a_d},F_r \e^{\theta}\)$};
\node at (-4.5,0.6) {\color{blue}$\(\overbrace{\dsfrac{F_{lu}}{a_u}}, \overbrace{F_r-\frac{\i}	{a_u+\i b_u} F_{lu}}\)$};
\node at (-5.6,1.4) {\color{blue}$\cdot \e^{-\theta}$};
\node at (-4.1,1.4) {\color{blue}$\cdot \e^{\theta}$};
\node at (-5,-0.8) {\color{blue}$\(\underbrace{\dsfrac{F_{ld}}{a_d}},\underbrace{F_r+\frac{\i }{a_d-\i b_d}F_{ld}}\)$};
\node at (-6.1,-1.6) {\color{blue}$\cdot \e^{-\theta}$};
\node at (-4.2,-1.6) {\color{blue}$\cdot \e^{\theta}$};
\node at (-5.8,2.8) {$\frac{6\pi}{7}$};
\draw (-5.8, 2.8) circle [radius=9pt];
\node at (-5.8,-2.8) {$\frac{-6\pi}{7}$};
\draw (-5.8,-2.8) circle [radius=9pt];

\node at (2.5,.8){$\begin{pmatrix}1 & 0 \\ \dsfrac{-\i \e^{-2\theta}}{a_ua_d} & 1\end{pmatrix}$};
\node at (-1.8,-2.6){$\begin{pmatrix}1 & \dsfrac{\i \e^{2\theta}}{1-\i r_d}\\\\ 0 & 1\end{pmatrix}$};
\node at (-2.,2.5){$\begin{pmatrix}1 & \dsfrac{\i \e^{2\theta}}{1+\i r_u}\\\\ 0 & 1\end{pmatrix}$};
\node at (-8.9,0.8){$\begin{pmatrix}\frac{-\i r_u \e^{\theta_--\theta_+}}{1-\i r_d} & \frac{-\i}{(1+\i r_u)(1-\i r_d)} \\\\ \frac{-\i}{a_ua_d} & \frac{\i r_d\ \e^{\theta_+-\theta_-}}{1+\i r_u}\end{pmatrix}$};

\end{tikzpicture}
\end{figure}

\noindent Let us mention that there are 2 ways to rewrite meromorphic RHP \ref{RHF} into a regular one.

\vskip5mm	\noindent \textbf{Regular RH problem.} The first one, is to redefine $\mathbb{F}$ in small neighborhoods of the points $\lambda^*\in\Omega_{II}$ 
with $r_u(\lambda^*)=\i,$ and points $\ol{\lambda^*}\in\Omega_{III}$ with $r_d(\ol{\lambda^*})=-\i:$
$$\mathbb{F}_{reg}(x,t;\lambda)=\begin{pmatrix}
                    \mathbb{F}_{[1]}(\lambda), & \mathbb{F}_{[2]}(\lambda)+\frac{\i}{1+\i r_u(\lambda)}\mathbb{F}_{[1]}(\lambda)
                   \end{pmatrix}, |\lambda-\lambda^*|<\varepsilon, \lambda^*\in\Omega_{II}, r_u(\lambda^*)=\i, 
$$
$$\mathbb{F}_{reg}(x,t;\lambda)=\begin{pmatrix}
                    \mathbb{F}_{[1]}(\lambda), & \mathbb{F}_{[2]}(\lambda)-\frac{\i}{1-\i r_d(\lambda)}\mathbb{F}_{[1]}(\lambda)
                   \end{pmatrix}, |\lambda-\ol{\lambda^*}|<\varepsilon, \lambda^*\in\Omega_{II}, r_u(\lambda^*)=\i, 
$$
$$\mathbb{F}_{reg}(x,t;\lambda)=\mathbb{F}(x,t;\lambda)\quad\mbox{elsewhere}.$$
The function $\mathbb{F}_{reg}(x,t;\lambda)$ is regular at $\lambda^*,$ $\ol{\lambda^*},$ and solves RH problem \ref{RHF} with pole conditions replaced by additional jumps across circles $\mathcal{C}_j,$
$\ol{\mathcal{C}_j}$ around the points $\lambda^*,$ $\ol{\lambda^*},$ oriented counter-clock-wise:
\begin{enumerate}\item[{\textit{ 3a.}}] $ \mathbb{F}_{reg,+}=\mathbb{F}_{reg,-}
\begin{pmatrix}
 1 & \frac{\i}{1+\i r_u(\lambda)} \\0&1
\end{pmatrix}, \mathcal{C}_j,\qquad
\mathbb{F}_{reg,+}=\mathbb{F}_{reg,-}
\begin{pmatrix}
 1 & \frac{-\i}{1-\i r_d(\lambda)} \\0&1
\end{pmatrix}, \ol{\mathcal{C}_j}.
$\end{enumerate}
RHP for $\widehat{\mathbb{F}}_{reg}$ is the same as for $\mathbb{F}_{reg},$ but with the asymptotic condition replaced with
 \begin{enumerate}
  \item [{\textit{4a.}}] Asymptotics as $\lambda\to\infty:$ 
  $$\widehat{\mathbb{F}}_{reg}(\lambda)=\(I+{\ol{_\mathcal{O}}}(1)\)\dfrac{\lambda^{-\sigma_3/4}}{\sqrt{2}}
  \begin{pmatrix}
   1&1\\1&-1
  \end{pmatrix}\e^{\theta\sigma_3}.$$
 \end{enumerate}

\noindent {\textbf{Shifted RH problem.}}
Another way is to move the intersection point of the contour $\Sigma$ from $\lambda=0$ to some point $\lambda=\lambda_0<<0.$ Indeed, as follows from Lemma \ref{lem_b_a_prop}, roots of $r_u(\lambda)=\i$ can accumulate only along the rays $\arg\lambda=\frac{5\pi}{7}, \frac{3\pi}{7}, \frac{\pi}{7},$
and hence, if we move the intersection point of the contour $\Sigma$ from $0$ to some $\lambda_0<<0,$ and denote such a contour
by $\lambda_0+\Sigma,$ corresponding domains by $\lambda_0+\Sigma_{I, II, III, IV},$ and rays by $\lambda_0+\gamma_{0,3,-3}, \lambda+\rho,$
then for large enough negative $\lambda_0$ the region $\lambda_0+\Omega_{II}$ will not contain any roots of $r_u=\i.$

We call $\mathbb{F}_{\lambda_0}$ the function, obtained from $\mathbb{F}$ by such a shift.
It solves RHP \ref{RHF} with contour $\Sigma$ changed to $\lambda_0+\Sigma,$ and asymptotics
\begin{enumerate}
 \item [{\textit{4.}}] $$\mathbb{F}_{\lambda_0}(x,t;\lambda)=\frac{1}{\sqrt{2}}(\lambda-\lambda_0)^{-\sigma_3/4}\begin{pmatrix}1&1\\1&-1\end{pmatrix}
 \(I+\frac{\widehat b\sigma_3}{\sqrt{\lambda-\lambda_0}}+{\color{red}\color{black} {\ol{_\mathcal{O}}}(\frac{1}{\sqrt{\lambda-\lambda_0}})} \)
 \e^{\theta_{\lambda_0}(x,t;\lambda-\lambda_0)},$$
as $\lambda\to\infty,$ where \begin{equation}\label{thetalambda0}\theta_{\lambda_0}(x,t;\lambda)=\frac{1}{105}(\lambda-\lambda_0)^{\frac72}+
 \frac{\lambda_0}{30}(\lambda-\lambda_0)^{\frac52}
 +\(\frac{\lambda_0^2}{24}-\frac{t}{3}\)(\lambda-\lambda_0)^{\frac32}+\(\frac{\lambda_0^3}{48}-\frac{t\lambda_0}{2}+x\)(\lambda-\lambda_0)^{\frac12}\end{equation}
and the scalar $\widehat b=\widehat b(x,t)$ is not fixed, but determines the form of the asymptotics.
\end{enumerate}
\begin{remark}
 The function $\theta_{\lambda_0}(x,t;\lambda)$ is connected with $\theta(x,t;\lambda)$ in the following way:
 $$\theta_{\lambda_0}(x,t;\lambda) = \theta(x,t;\lambda)+\frac{h_1}{\sqrt{\lambda}}+\mathcal{O}(\lambda^{-3/2}),$$
 where \begin{equation}\label{g1} g_1=\frac{-\lambda_0^4}{384}+\frac{t\lambda_0^2}{8}-\frac{x\lambda_0}{2}.\end{equation}
 \clu{Furthermore}, quantities $\widetilde b$ and $\widehat b$ in the asymptotics for $\mathbb{F}$ and $\mathbb{F}_{\lambda_0}$
 are related as 
 $$\widehat b=\widetilde b-g_1.$$
 \end{remark}

The RHP for $\widehat{\mathbb{F}}_{\lambda_0}(x,t;\lambda)$ is the same as for $\mathbb{F}_{\lambda_0},$ but with the asymptotic condition 4. 
replaced with
 \begin{enumerate}
  \item [{\textit{4a.}}] Asymptotics as $\lambda\to\infty:$ 
  $$\widehat{\mathbb{F}}_{\lambda_0}(\lambda)=\(I+{\ol{_\mathcal{O}}}(1)\)\dfrac{
  (\lambda-\lambda_0)^{-\sigma_3/4}}{\sqrt{2}}
  \begin{pmatrix}
   1&1\\1&-1
  \end{pmatrix}\e^{\theta_{\lambda_0}\sigma_3}.$$
 \end{enumerate}

The functions $\widehat{\mathbb{F}}_{\lambda_0}$ and $\widehat{\mathbb{F}}$ are related in the following way:
for those $\lambda$ not lying between $\gamma_{\pm 3}$ and $\gamma_{\pm3}+\lambda_0,$
\begin{equation}\label{FFlambda0}\widehat{\mathbb{F}}_{\lambda_0}(x,t;\lambda)=\begin{pmatrix}1 & 0 \\ g_1 & 1\end{pmatrix}
\widehat{\mathbb{F}}(x,t;\lambda).\end{equation}



\section{Existence of solution to the RH problems}\label{sect_RHP_exist}
RHPs constructed in Section \ref{sect_RH} make sense not only for spectral 
functions $a(\lambda), b(\lambda), r(\lambda)$ associated with a compactly supported 
perturbation $u_{t_0}(x)$ of $U(x,t_0),$ but also for a wider range of functions $a(\lambda), b(\lambda), r(\lambda).$ 
We list below the properties of functions $r(\lambda),$ $b(\lambda),$ $a(\lambda),$ which are 
sufficient to consider the RH problems from Section \ref{sect_RH}, and in what follows we 
do not associate $r(\lambda), b(\lambda), a(\lambda)$ with  initial function 
$u_{t_0}(x),$ but only assume that they are three arbitrary functions, that satisfy properties 
\ref{prop_abr} (the decay property for $r(\lambda),$ listed in Properties \ref{prop_abr}, corresponds to the initial function $u_{t_0}(x),$ which is locally a function of bounded variation, and not just locally integrable).

\begin{Properties}\label{prop_abr} 
{\textbf{of spectral functions $a(\lambda),$ $r(\lambda),$ $b(\lambda)=a(\lambda)r(\lambda).$}}
\\ Define the set $\mathcal{M}$ of functions 
$\,r_u: \mathbb{R}\to\mathbb{C}\,$
 that satisfy the following properties:

\begin{enumerate}
\item $r_u$ can be extended to an entire function,
 \item $\Im r_u(s)={\color{black}\mathcal{O}(s^{-1})}\quad\mbox{ for }\ \  s\in\mathbb{R},\ \ s\to\pm\infty,$
 \item $\Im r_u(s)<\frac12$ for $s\in\mathbb{R},$
 \item $r_u(\lambda)-\ol{r_u(\ol{\lambda})}\neq\i$ for all $\lambda\in\mathbb{C}.$
\end{enumerate}
Define $r_d(\lambda):=\ol{r_u(\ol{\lambda})},$
and $a_u(\lambda),$ $a_d(\lambda)$ by formulas \eqref{au_ru}, \eqref{ad_ru}, set 
$$b_u(\lambda)=a_u(\lambda) r_u(\lambda),\quad b_d(\lambda)=a_d(\lambda) r_d(\lambda),$$ 
and define functions $a(\lambda), b(\lambda), r(\lambda)$ by formulas \eqref{a_auad}, \eqref{b_bubd}, \eqref{r_rurd}.
Then obviously, by the Sokhotsky-Plemelji formula, $a_u(\lambda), a_d(\lambda), b_u(\lambda), b_d(\lambda)$ are entire functions 
satisfying the symmetry condition
$$a_u(\lambda)=\ol{a_d(\ol{\lambda})},\quad b_u(\lambda)=\ol{b_d(\ol{\lambda})},
\quad
r_u(\lambda)=\ol{r_d(\ol{\lambda})},$$ and multiplying \eqref{au_ru} and \eqref{ad_ru} we see that  relation \eqref{b_a_rel} holds.
\\\clu{Furthermore}, the defining properties of the set $\mathcal{M}$ are: 
there exist real $t_0, A<B$ such that
 \begin{enumerate}\setcounter{enumi}{4}
 \item $r_u(\lambda)={\color{black}\mathcal{O}(\frac{1}{\lambda})}\e^{2\theta(B,t_0,\lambda)}\ \ $ uniformly w.r.t. 
 $\arg\lambda\in[0,\pi],$
 \item $a_d-a_u={\color{black}\mathcal{O}(\frac{1}{\lambda})}\e^{-2\theta(A,t_0;\lambda)}\ \ $ uniformly w.r.t. 
 $\arg\lambda\in[-\pi,0]\cup[0,\pi],$
\end{enumerate}
\end{Properties}
\begin{remark}
Lemmas \ref{lem_a_prop}, \ref{lem_b_a_prop}, \ref{lem_r_refined} show, that the spectral function  $a,b,r$ associated with compactly supported perturbation $u_{t_0}(x)\in BV_{loc}$ of $U(x,t_0),$ satisfy properties \ref{prop_abr}, and hence $r_u$ belongs to $\mathcal{M}.$
\end{remark}

\begin{theorem}\label{teor_ExistF}
 Let functions $a(\lambda), b(\lambda), r(\lambda)$ satisfy properties \ref{prop_abr}. Then the RHP \ref{RHF} for the function $\widehat{\mathbb{F}}(x,t;\lambda)$ with asymptotic condition [4] replaced by [4a] has a unique solution.
\end{theorem}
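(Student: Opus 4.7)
The plan is to reduce the meromorphic RHP \ref{RHFhat} to a standard singular integral equation and then combine Fredholm index theory with a vanishing lemma. First I would regularize the pole conditions: since by property 5 of \ref{prop_abr} the equation $r_u(\lambda)=\i$ has only finitely many roots $\lambda_j$, and they lie strictly off the real axis and (by the construction of $\gamma_3^{\mathbb{F}}$) off the contour, each Schlesinger-type pole condition can be replaced, as in the construction of $\mathbb{F}_{reg}$ described in Section \ref{sect_RH}, by an additional jump on a small circle surrounding $\lambda_j$ (respectively its conjugate). This yields an equivalent analytic RHP on an enlarged contour $\Sigma_G$.

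Next I would ``undress'' the oscillations and the $\lambda^{-\sigma_3/4}$ factor by the model solution $\widehat{\mathbb{E}}$ of RHP \ref{RHEhat}, whose existence and smoothness are granted by Theorem 2.2. Setting $G:=\widehat{\mathbb{F}}_{reg}\,\widehat{\mathbb{E}}^{-1}$, one computes the transformed jump $J_G = \widehat{\mathbb{E}}_{-}J_{\mathbb{F}}J_{\mathbb{E}}^{-1}\widehat{\mathbb{E}}_{-}^{-1}$. Using properties 5 and 6 of \ref{prop_abr} ($r_u(\lambda)=\mathcal{O}(\lambda^{-1})\e^{2\theta(B,t_0;\lambda)}$ on $\gamma_3^{\mathbb{F}}$ and $a_d-a_u=\mathcal{O}(\lambda^{-1})\e^{-2\theta(A,t_0;\lambda)}$ on $\rho$) together with the full asymptotic expansion \eqref{asserEhat} of $\widehat{\mathbb{E}}$, one verifies that $G\to I$ as $\lambda\to\infty$ uniformly and that $J_G-I\in L^{\infty}\cap L^2(\Sigma_G)$ with $\det J_G=1$. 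Then the RHP for $G$ is equivalent to the Beals--Coifman singular integral equation $(I-\mathcal{C}_{J_G})\mu=I$ on $\Sigma_G$, and general Fredholm theory yields that $I-\mathcal{C}_{J_G}$ has index zero, so existence reduces to uniqueness.

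For uniqueness, which is the main technical obstacle, I would apply a Zhou-type vanishing lemma. Suppose $G_0$ solves the associated homogeneous problem with $G_0\to 0$ at infinity; the symmetries $\overline{r_u(\overline{\lambda})}=r_d(\lambda)$ and $\overline{a_u(\overline{\lambda})}=a_d(\lambda)$ guarantee the self-conjugation relation $\overline{J_{\mathbb{F}}(\overline{\lambda})}^{\,T}=J_{\mathbb{F}}(\lambda)^{-1}$ on $\rho$, which allows the Hermitian combination $H(\lambda):=G_{0}(\lambda)\,\overline{G_{0}(\overline{\lambda})}^{\,T}$ to be continued analytically across $\rho$. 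On the remaining rays the jump of $H$ is triangular with the positive scalar factor $1-2\Im r_u>0$ on its diagonal (property 3 of \ref{prop_abr}). A contour-deformation/$L^2$-pairing argument then identifies $\int G_{0,-}(\,\cdot\,)\cdot(\text{positive matrix})\cdot G_{0,-}(\,\cdot\,)^{*}$ with zero, forcing $G_{0,-}\equiv 0$ on the rays and hence $G_0\equiv 0$ throughout the plane. Property 4 of \ref{prop_abr} ($r_u-\overline{r_u(\overline{\,\cdot\,})}\neq \i$) is used here to rule out degeneracies of the off-diagonal factor $1+\i r_u$ away from $\mathbb{R}$.

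The hardest step is precisely this vanishing lemma: the non-triangular jump on $\rho$ couples all four matrix entries, so one must simultaneously dispose of the ``reflection'' part (which cancels by the symmetry-induced analytic continuation across $\rho$) and of the ``triangular'' parts on the rays $\gamma_0,\gamma_{\pm 3}^{\mathbb{F}}$. Their compatibility rests on the algebraic identity $r_u-r_d=\i\bigl(1-(a_ua_d)^{-1}\bigr)$ encoded in the reconstruction formulas \eqref{au_ru}--\eqref{ad_ru} (which is automatic under \ref{prop_abr}), so that the contributions from $\rho$ and from the rays combine consistently. Once uniqueness is secured, Fredholm alternative produces the solution of the RHP for $G$, and reversing the transformations of Steps 1 and 2 delivers the unique $\widehat{\mathbb{F}}(x,t;\lambda)$ solving RHP \ref{RHFhat}.
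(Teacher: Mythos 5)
Your overall architecture (remove the poles, normalize at infinity, rewrite as a Beals--Coifman singular integral equation, prove Fredholmness and index zero, then a vanishing lemma) is the same as the paper's, and the first three stages are essentially sound variations: the paper removes the poles by shifting the self-intersection point to $\lambda_0\ll 0$ rather than by adding small circles, normalizes with the Airy-built parametrix $Y$ rather than with $\widehat{\mathbb{E}}$ itself, and adds a junction-regularization at the intersection point, but these are interchangeable technical choices. The genuine gap is in the vanishing lemma, which is exactly the step you (correctly) flag as hardest. First, the Hermitian combination $H(\lambda)=G_0(\lambda)\overline{G_0(\overline{\lambda})}^{\,T}$ does \emph{not} continue analytically across $\rho$: on $\rho$ one has $H_+=G_{0,-}J\,\overline{G_{0,-}}^{\,T}$ and $H_-=G_{0,-}\overline{J}^{\,T}\overline{G_{0,-}}^{\,T}$, so continuation would require $J=\overline{J}^{\,T}$ there; but on $\rho$ the $(1,2)$ and $(2,1)$ entries of the relevant jump are $\frac{-\i}{|1+\i r_u|^2}$ and $\frac{-\i}{|a_u|^2}$, whose imaginary parts have the \emph{same} sign, so the jump is not Hermitian and your cancellation ``across $\rho$'' fails. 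The paper avoids this by going the other way: it absorbs the ray jumps into a piecewise redefinition $A(\mu)$ so that the only remaining jump is on $\mathbb{R}$, where the Hermitian part is $J_A+\overline{J_A}^{\,T}=\mathrm{diag}\bigl(0,\,2/(a_ua_d)\bigr)$.

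Second, and more importantly, that Hermitian part is only positive \emph{semi}-definite, so the $L^2$-pairing kills only one column of $A_-$ (and, via the jump relation, one column of $A_+$); it cannot force $G_{0,-}\equiv 0$ by itself, contrary to what your sketch asserts. The surviving entries assemble into an entire function $f(\mu)$ with uniform asymptotics $\mathcal{O}(\mu^{-3/4})\e^{\theta_{\lambda_0}(\mu)}$, which grows exponentially in some sectors, so Liouville does not apply; the paper disposes of it with Proposition \ref{prop_vanish}, i.e.\ the substitutions $\zeta=\sqrt{\mu}$, $z=\zeta^{7/8}$ followed by Carlson's theorem. Your proposal has no analogue of this step, and without it the kernel of $I-\mathcal{C}_{J_G}$ is not shown to be trivial, so the Fredholm alternative cannot be invoked to produce the solution. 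To repair the argument you would need to (i) replace the symmetric continuation across $\rho$ by a transformation that concentrates all jumps on $\mathbb{R}$ with a positive semi-definite Hermitian part, and (ii) add the Phragm\'en--Lindel\"of/Carlson argument that eliminates the entries not controlled by positivity.
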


\begin{proof}
The uniqueness part of the theorem is obvious.
For quite general RHPs, the scheme how to prove their solvability was introduced by \cite{Zhou89}. 
For some particular cases, the scheme was realised with all the necessary details by 
\cite[Theorems 5.3, 5.6, p.1387--1406, steps 1,2,3]{DKMVZ},
for RHP \ref{RHFhat} in the particular case 
$r\equiv 0, a\equiv 1$ by \cite{Claeys Vanlessen}, and in other situation by \cite[section 2.3, p.18]{IKO}.
The main distinction of our case is that the jump matrix $J_L$ (defined in Preparatory Step 3 below) does not 
converge exponentially fast to the identity as $\lambda\to\infty.$ Indeed, $J_L-I$ is exponentially small as 
$\mu\to\infty$ along $\gamma_{0},$ $\gamma_{3},$ $\gamma_{-3},$ but not across $\rho.$

The general scheme from \cite{Zhou89} consists of 3 steps: after reformulating the RHP as a singular integral equation 
(SIE),
Step 1 is to prove that the corresponding singular integral operator is Fredholm. Step 2 is to show 
that the index of that operator is $0.$ Step 3 is to show that the kernel of the corresponding operator is 0. This proves the invertibility of 
the operator.
Before applying this scheme, we need to make 3 more preparatory steps: Preparatory	 Step 1 is to get rid of poles, Preparatory Step 2 is 
to make identity asymptotics at infinity, and Preparatory Step 3 is to make jump equals $I$ at all the junctions of the contour.


\textbf{Preparatory Step 1: Get rid of poles.} To this end we consider the shifted RHP for the 
$\widehat{\mathbb{F}}_{\lambda_0}$
with large enough negative $\lambda_0.$ 
Denote $$\mu=\lambda-\lambda_0.$$
RHP \ref{RHFhat} for $\widehat{\mathbb{F}}$ and the one for $\widehat{\mathbb{F}}_{\lambda_0}$ 
are equivalent to each other.
Indeed, for $\lambda$ not lying between $\gamma_{\pm 3}$ and $\gamma_{\pm 3}+\lambda_0,$ they are related by formula
\eqref{FFlambda0}. 

\textbf{Preparatory Step 2: To make identity asymptotics at infinity.}
To this end we use a slight modification of the function $Z(.)$ defined in \eqref{Z}, in which we change 
the rays from $\arg z=\pm\frac{2\pi}{3}$ to $\arg\mu=\pm\frac{6\pi}{7},$ but keep the same notation $Z(.).$

Now define
\begin{equation}\label{Y}Y(\mu)=Z(\mu)\e^{-\frac23\mu^{2/3}\sigma_3}, \ \ \mbox{ hence }\ \ 
Y(\mu)=\begin{pmatrix}
            1+\mathcal{O}(\mu^{-3}) & \mathcal{O}(\mu^{-2}) \\ \mathcal{O}(\mu^{-1}) & 1+\mathcal{O}(\mu^{-3})
           \end{pmatrix}\mu^{-\sigma_3/4} \begin{pmatrix}1&1\\1&-1\end{pmatrix}, \mu\to\infty,
\end{equation}
and define
$$M(x,t;\mu)=\widehat{\mathbb{F}}_{\lambda_0}(x,t; \lambda)\e^{-\theta_{\lambda_0}(x,t;\mu)}Y^{-1}(\mu).$$
The function $M(\mu)=M(x,t; \mu)$ solves the following RHP:

\begin{RH}\label{RHM}
To find a $2\times2$ matrix-valued function $M(x,t;\mu),$ which
\begin{enumerate}
\item is analytic in $\mu\in\mathbb{C}\setminus\Sigma,$ 
where $\Sigma=\mathbb{R}\cup\gamma_3\cup\gamma_{-3};$ 
\item has the following jump $M_+(\mu)=M_-(\mu)J_{M}(\mu)$ across $\Sigma:$ $J_M=$\\
$\begin{cases}
                  Y_-\begin{pmatrix}
                   \frac{1}{(1+\i r_u)(1-\i r_d)} & \frac{r_u\ \e^{-2\theta_{\lambda_0}^+}}{1-\i r_d} \\\\ 
                   \frac{-r_d\ \e^{-2\theta_{\lambda_0}^-}}{1+\i r_u} & \frac{1}{a_ua_d}
                  \end{pmatrix}Y_-^{-1}, \mu\in\rho,
\
Y_-\begin{pmatrix}
 1 & 0 \\ \frac{-\i\e^{-2\theta_{\lambda_0}}}{a_ua_d}+\i\e^{-\frac43\mu^{\frac32}} & 1
\end{pmatrix}Y_-^{-1}, \ \mu\in\gamma_0,
\\\\
\
Y_-\begin{pmatrix}
 1 & \frac{\i\e^{2\theta_{\lambda_0}}}{1+\i r_u} -\i\e^{\frac43\mu^{3/2}} \\ 0 & 1
\end{pmatrix}Y_-^{-1}, \ \mu\in\gamma_3,
\qquad 
Y_-\begin{pmatrix}
 1 & \frac{\i\e^{2\theta_{\lambda_0}}}{1-\i r_d} -\i\e^{\frac43\mu^{3/2}} \\ 0 & 1
\end{pmatrix}Y_-^{-1}, \ \mu\in\gamma_{-3},
\end{cases}
$
\\where $r_u=r_u(\lambda)=r_u(\mu+\lambda_0),$ and the same is for $r_d, a_u, a_d,$
and $\theta_{\lambda_0}=\theta_{\lambda_0}(\mu);$
\item has the following asymptotics as $\mu\to\infty,$ which is uniform w.r.t. $\arg\mu\in[-\pi,0]\cup[0,\pi]:$
$$M(x,t;\mu)=I+{\ol{_\mathcal{O}}}(1).$$
\end{enumerate}
\end{RH}
\textbf{Preparatory Step 3: make the jump at the junction $\mu=0$ equals $I.$}
This can be done due to the identity product 
$$J_M|_{\gamma_{-3}}(0)\cdot J_M|_{\rho}(0)\cdot J_M|_{\gamma_{3}}(0)\cdot J_M^{-1}|_{\gamma_{0}}(0)=I.$$
Indeed, pick up 4 points $\mu_j\notin\Omega_j,$ $j=I,II,III,IV,$ and define
$$L(x,t;\mu)=M(x,t;\mu)G,\qquad G=G_j, \mu\in\Omega_j, \qquad G_j=I+\frac{-\mu_j}{\mu-\mu_j}(B_j(x,t)-I),$$
where the matrices $B_j(x,t)$ are to be determined.
The jumps for the function $L$ are: $L_+=L_-J_L,$ where $J_L=G_-^{-1}J_MG_+,$ and
$$J_L|_{\gamma_3}=G_{II}^{-1}J_M|_{\gamma_3}G_I,\ \ 
J_L|_{\gamma_{-3}}=G_{IV}^{-1}J_M|_{\gamma_3}G_{III},\ \ 
J_L|_{\gamma_0}=G_{IV}^{-1}J_M|_{\gamma_0}G_I,\ \ 
J_L|_{\rho}=G_{III}^{-1}J_M|_{\rho}G_{II}.$$
We can take 
$$B_I=I,\ \ B_{II}=J_M|_{\gamma_3}(0),\ \ B_{III}=J_M|_{\rho}(0)J_M|_{\gamma_3}(0),\ \ 
B_{IV}=J_M|_{\gamma_{-3}}(0)\cdot J_M|_{\rho}(0)\cdot J_M|_{\gamma_3}(0),$$
and the jump matrix $J_L$ equals $I$ at the origin $\mu=0$ on every ray $\gamma_{\pm3},$ $\gamma_0,$ $\rho.$

\noindent \textbf{Reformulation of RHP for $L$ as a SIE.}

The RH problem for $L$ is equivalent to the 
following singular integral equation (SIE)
\begin{equation}\label{SIE} L_-=I+\mathcal{C}_-(L_-(J_L-I)),\quad \mbox{or}\quad \left[Id-\mathcal{C}_-(.(J_L-I))\right]\circ (L_--I)=\mathcal{C}_-((J_L-I))\end{equation}
where $\mathcal{C}_{\pm}$ are the Cauchy operators acting in $L_p(\Sigma),$ $p>1,$ which for a H\"{o}lder continuous function $f$ act as 
$$\mathcal{C}_{\pm}f(\lambda)=\frac{1}{2\pi\i}\int\limits_{\Sigma}\frac{f(s)\ \d s}{(s-\lambda)_{\pm}}=\pm\frac{f(\lambda)}{2}+\frac{1}{2\pi\i}
p.v.\int
\limits_{\Sigma}\frac{f(s)\ \d s}{s-\lambda}.$$
Since $J_L-I\in L_{\infty}(\Sigma),$ then $\mathcal{C}_{J_L}:=\mathcal{C}_-(.(J_L-I))$ is an operator acting in 
$L_p(\Sigma).$
Define also the (Hilbert) operator $$H f(\lambda):=\frac{1}{2\pi\i}p.v.\int\limits_{\Sigma}\dfrac{f(s) \d s}{s-\lambda}.$$
By Sokhotsky--Plemelj formula, $$\mathcal{C}_+=\frac12 Id+H,\quad \mathcal{C}_-=-\frac12 Id+H.$$

\noindent {\color{black} For the fact that $\mathcal{C}_{\pm}$ are operators acting in $L_2(\Sigma)$ or $L_p(\Sigma)$ for contours with self-intersections we refer to \cite[section 4.4, p.137, Theorem 4.15]{BK97}, see also \cite[section 2.2]{L18}.}

\noindent If $L_-\in I+L_p(\Sigma)$ (i.e. $L_--I\in L_p(\Sigma)$) is the solution of \eqref{SIE}, then the function 
$$L=I+\mathcal{C}(L_-(J_L-I)),\quad \mbox{ where } (\mathcal{C}f)(\mu):=\frac{1}{2\pi\i}
\int\limits_{\Sigma}\frac{f(s)\ \d s}{s-\mu},$$ 
is the solution of RHP \ref{RHM}.
Indeed, using the Sokhotsky-Plemelji formula $\mathcal{C}_+-\mathcal{C}_-=I,$ one obtains
$$L_+=I+\mathcal{C}_+(L_-(J_L-I))=I+L_-(J_L-I)+\mathcal{C}_-(L_-(J_L-I))=L_-(J_L-I)+L_-=L_-J_L.$$
This solution $L$ is to be understood in the $L_p$ sense (see \cite[p. 1388]{DKMVZ} for more details).
However, using local analyticity of the jump matrix $J_L,$ one can obtain that $L$ is 
a solution of the RHP also in the sense of continuous boundary values. 

In order to prove that equation \eqref{SIE} has a solution $L_-\in I+L_p(\Sigma),$ i.e. $L_--I\in L_p(\Sigma),$ 
following the approach from 
\cite{Zhou89}, \cite[pp.1387--1395]{DKMVZ}, 
it suffices to show that the operator
$$\mathcal{C}_{J_L}=\mathcal{C}_-(.(J_L-I))$$ is invertible in $L_p(\Sigma).$
This consists of 3 steps:
step 1 is to show that $I-\mathcal{C}_{J_L}$ is a 
Fredholm operator, step 2 is to show that the index of $I-\mathcal{C}_{J_L}$ is 0, 
and step 3 is to show that the kernel of $I-\mathcal{C}_{J_L}$ is $\left\{0\right\}.$

\textbf{Step 1.} To show that $I-\mathcal{C}_{J_L}$ 
is a Fredholm operator, one can show that it has a pseudo-inverse, 
i.e. there exists an operator $O$ such that $O(I-\mathcal{C}_{J_L})-I$ and $(I-\mathcal{C}_{J_L}))O-I$ 
are compact operators (see, for instance, \cite[Prop 3.3.11, p.109--110]{Pedersen}).
Let us take $$O=I-\mathcal{C}_{J_L^{-1}}:=I-\mathcal{C}_-(.(J_L^{-1}-I)).$$
Then similarly as in \cite[Step 1, p.1389]{DKMVZ} one can show that
$$(I-\mathcal{C}_{J_L})(I-\mathcal{C}_{J_L^{-1}})f=f+\mathcal{C}_-\left[\mathcal{C}_+(f\widetilde w)w\right],
\quad \mbox{where}\quad w=J_L-I,\ \widetilde w=J_L^{-1}-I$$
and 
$$(I-\mathcal{C}_{J_L^{-1}})(I-\mathcal{C}_{J_L})f=f+\mathcal{C}_-\left[\mathcal{C}_+(f w)\widetilde w\right].$$
Since both the operators $(I-\mathcal{C}_{J_L^{-1}})(I-\mathcal{C}_{J_L}),$ 
$(I-\mathcal{C}_{J_L})(I-\mathcal{C}_{J_L^{-1}})$
are of the same form, it suffices to prove compactness for one of it. To prove that the operator 
$$
K:f\mapsto \mathcal{C}_-\left[\mathcal{C}_+(f\widetilde w)w\right]
$$
is compact,
we follow the approach of \cite[p. 1400-1401]{DKMVZ}, i.e. approximate the continuous function  $w$ 
(it is continuous at the origin because of our Preparatory Step 3) by rational 
functions $w_{\varepsilon}$, $$ || w_L-w_{\varepsilon} ||_{L^{\infty}}<\varepsilon$$ for any positive $\varepsilon.$
Then $$K_{\varepsilon}\to K \mbox{ as } \varepsilon\to 0,$$ where $K_{\varepsilon}$ is defined by almost the same formula as 
$K,$ but with $w$ replaced with $w_{\varepsilon}.$ Hence, it is enough to show that $K_{\varepsilon}$ is compact for every $\varepsilon.$
If $$w_{\varepsilon}(\mu)=\sum\limits_{\nu}\dfrac{\alpha_{\nu}}{\mu-\mu_{\nu}},$$ then in the same way as in \cite[p. 1401]{DKMVZ} one shows that for a weakly convergent to $0$ 
sequence $f_n\in L_2(\Sigma_L),$ the sequence
$$(K_{\varepsilon}f_n)(\mu)=\sum\alpha_{\nu}\dfrac{\mathcal{C}[f_n \widetilde w_{L}](\mu_{\nu})}{\mu_{\nu}-\mu}$$
strongly converges to 0.



\textbf{Step 2.} The proof that the index of $I-\mathcal{C}_{J_M}$ is 0 is the same as in \cite[p.1390]{DKMVZ}, and is based on the fact that 
$I-s \mathcal{C}_{J_M}$ is Fredholm for all scalar $s$, which can be proved in the 
same way as in Step 1, and then using 
continuity of index and the fact that the identity operator has index 0.

\textbf{Step 3.}
To prove that $\ker(I-\mathcal{C}_{J_M})=0,$ we take any element from the kernel, i.e.
$L_{0,-}\in L_2(\Sigma)$
such that 
$$
L_{0,-}-\mathcal{C}_-[L_{0,-}(J_L-I)]=0,
$$
and we will show that $L_{0,-}=0.$ Indeed, define the function $$L_{0}(\mu)=\mathcal{C}[L_{0,-}(J_L-I)](\mu)=\frac{1}{2\pi\i}\int\limits_{\Sigma_L}\frac{L_{0,-}(s)(J_L(s)-I)\ \d s}{s-\mu}.$$
It satisfies the following RH problem:
\begin{enumerate}
 \item $L_{0}(\mu)$ is analytic in $\mu\in\mathbb{C}\setminus\Sigma_L;$
 \item $L_{0,+}(\mu)=L_{0,-}(\mu)J_L(\mu),$ $\mu\in\Sigma,$
 \item $L_{0}(\mu)=\mathcal{O}(\mu^{-1}),$ $\mu\to\infty.$
\end{enumerate}

The function $L_0(\mu)$ satisfies the above RH problem both in $L^2$ sense and, using local 
analyticity of the jump matrices, as in \cite[p. 1402, Proposition 5.7]{DKMVZ}, 
in continuous sense.

So, $L_0$ solves the RH problem with the same jumps as $L,$ but with zero asymptotics at infinity. We need to show that $L_0\equiv 0.$

Making all the transformations which led us from $\widehat{\mathbb F}_{\lambda_0}(x,t;\mu)$ to $L$ 
in the reverse order, starting from $L_0,$ we come to the following RH problem for a function 
$\widehat{\mathbb{F}}_{\lambda_0,0}(x,t;\mu):$

\begin{RH}\begin{enumerate}
 \item [1,2.] Analyticity, jumps are as in RHP for $\widehat{\mathbb F}_{\lambda_0}(x,t;\mu)$.
 \item [3b.] Asymptotics at infinity: $$\widehat{\mathbb{F}}_{\lambda_0, 0}(\mu)=
 \mathcal{O}(\frac{1}{\mu})\mu^{-\sigma_3/4}\begin{pmatrix}1 & 1 \\1 & -1\end{pmatrix}\e^{\theta_{\lambda_0}\sigma_3}.$$
\end{enumerate}\end{RH}
We thus need to show that $\widehat{\mathbb{F}}_{\lambda_0, 0}(\mu)\equiv 0.$

To prove the latter, 
define the matrix
$$
A(\mu)=\widehat{\mathbb{F}}_{\lambda_0,0}\e^{-\theta_{\lambda_0}\sigma_3}
\begin{cases}
I, & \quad  \arg(\mu)\in(\frac{-6\pi}{7},0),\\
\begin{pmatrix}
 1 & \frac{-\i \e^{2\theta}}{1-\i r_d}\\0&1
\end{pmatrix}, & \quad \arg(\mu)\in(-\pi, \frac{-6\pi}{7}),\\
\begin{pmatrix}
 1 & \frac{\i \e^{2\theta}}{1+\i r_u}\\0&1
\end{pmatrix}\begin{pmatrix}
 0& \i\\\i&0
\end{pmatrix}, & \quad \arg(\mu)\in(\frac{6\pi}{7},\pi),\\
\begin{pmatrix}
 0& \i\\\i&0
\end{pmatrix}, &\quad  \arg(\mu)\in(0, \frac{6\pi}{7},0).\\
\end{cases}
$$


\clu{Furthermore}, the function $A(\mu)$ has asymptotics $A(\mu)=\mathcal{O}(\mu^{-3/4})$ as $\mu\to\infty,$
and jumps only on the real axis, reading
$$A_+(\mu)=A_-(\mu)J_A(\mu),\quad J_A(\mu)=\begin{cases}
                                                            \begin{pmatrix}
                                                             0 & \i \e^{\theta_{\lambda_0,-}-\theta_{\lambda_0,+}}\\ 
                                                             \i \e^{\theta_{\lambda_0,+}-\theta_{\lambda_0,-}} 
                                                             & \frac{1}{a_ua_d}
                                                            \end{pmatrix},\mu<0,
\\
                                                            \begin{pmatrix}
                                                             0 & \i \\ \i  & \frac{1}{a_ua_d}
                                                            \end{pmatrix},\mu>0.
                                                           \end{cases}.
$$
Now, integrating $A(\mu)\ol{A^T(\ol{\mu})}$ over $\mathbb{R}+\i 0,$ and adding the result to its Hermite conjugate, 
we obtain
$$0=\int\limits_{\mathbb{R}}A_+(\mu)\ol{A^T_+(\ol{\mu})}\d\mu=\int\limits_{\mathbb{R}}A(\mu-\i 0)J_A(\mu)\ol{A^T(\mu-\i 0)}\d \mu,$$
$$0=\int\limits_{\mathbb{R}}A(\mu-\i 0)\ol{J^T_A(\mu)}\ol{A^T(\mu-\i 0)}\d \mu,$$
since $J_A(\mu)+\ol{J^T_A(\mu)}=\begin{pmatrix}
                                         0&0\\0&\frac{2}{a_ua_d}
                                        \end{pmatrix},
$
and $a_u(\mu)a_d(\mu)>0$ for $\mu\in\mathbb{R},$ we conclude that 
the second column of $A_-$ is identically 0. Then, from the jump relations for $A$ one concludes that also the first column of $A_+$ is identically 0.
Furthermore,  for the rest of the elements in $A$ we obtain, using the jump conditions,
that 
$$\begin{cases}A_{12}^+=\i A_{11}^-,\ \mu\in(0, +\infty),\\
A_{12}^+=\i A_{11}^-\e^{\theta_{\lambda_0,-}-\theta_{\lambda_0,+}},\ \mu\in(-\infty, 0).
\end{cases}
$$
and $A_{12}^+=\mathcal{O}(\mu^{-3/4}),$ $A_{11}^-=\mathcal{O}(\mu^{-3/4}).$
Hence, the function $$f(\mu):=\begin{cases}
                            A_{12}^+ \e^{\theta_{\lambda_0}},\Im\mu>0,\\\i A_{11}^-\e^{\theta_{\lambda_0}}, \Im \mu<0
                           \end{cases}
$$
is entire and has uniform asymptotics (compare with the function $\mathcal{E}$ defined by formula \eqref{E_function})
\begin{equation}\label{f_vanishing}f(\mu)=\mathcal{O}(\mu^{-3/4})\e^{\theta_{\lambda_0}(\mu)},\quad \mu\to\infty.\end{equation}
\begin{prop}\label{prop_vanish}
An entire function $f(\mu),$ which has the asymptotics \eqref{f_vanishing} uniformly w.r.t. $\arg\mu\in[-\pi,\pi],$ equals 0 identically.
\end{prop}
\begin{proofof}{\textit{of Proposition \ref{prop_vanish}}}
The proof  is very similar to the one in \cite[p.18]{IKO}, \cite[p.1395]{DKMVZ}, but for the convenience of the reader we give it here.
First we recall the following Carlson theorem, which is a variant of the maximum modulus principle:
\begin{theorem}\label{theor_Carlson} \cite[p.236]{RS}
Let $b(z)$ be a function, holomorphic in $\Re z>0$ and continuous up to $\Re z\geq 0.$ Let $|b(z)|\leq M \e^{A|z|}$ for $\Re z\geq 0$ and $|b(\i y)|\leq M \e^{-B|y|}$ for $y\in\mathbb{R}.$ Then $b(z)\equiv 0.$
\end{theorem}
Now we define the function $h(\mu)=f(\mu)\e^{-\theta_{\lambda_0}(\mu)},$ which is discontinuous  across the \clu{half-line} $\mu\in(-\infty,0],$ where it has the jump 
\begin{equation}\label{jump_h}h(\mu+\i 0)=h(\mu-\i 0)\e^{2\theta_{\lambda_0}(\mu-\i 0)},
\end{equation} 
and define the new variable $$\zeta=\sqrt{\mu},$$ and the new function 
$$\widetilde{h}(\zeta)=\begin{cases}h(\zeta^2), \Re\zeta>0, \\ h(\zeta^2)\e^{-\theta_{\lambda_0}(\zeta^2)}, \Re \zeta<0.\end{cases}$$
Despite the uniform definition for $\Re\zeta<0,$ the function $\widetilde h(\zeta)$ is discontinuous across the \clu{half-line} $\zeta\in(-\infty,0],$ but because of the jump \eqref{jump_h}, it is continuous across the imaginary line $\zeta\in\mathbb{R}.$
Now we introduce the variable $$z=\zeta^{7/8},\quad \zeta=z^{8/7},$$ with the standard cut across $\zeta\in(-\infty,0],$ $z\in(-\infty,0],$ and consider the function
$$\widehat h(z)=\widetilde h(z^{8/7}).$$
The function $\widehat h(z)$ is continuous and bounded in $\Re z\geq 0,$ analytic in $\Re z>0,$ and for $z\in\i\mathbb{R}$ it has the super exponential decay (with some positive $c, C>0$)
$$|h(z)|\leq C \e^{-c z^8}.$$
Hence, by the Carlson's Theorem \ref{theor_Carlson}, $h(z)\equiv0$ for $\Re z\geq 0,$ and hence $f(\mu)\equiv 0$ for any $\mu\in\mathbb{C}.$
\end{proofof}

This finishes the proof that $\widehat{F}_{\lambda_0, 0}(\mu)$ is identically 0, and hence that $L_0(\mu)$ is identically 0, and thus that RHP 
\ref{RHF} for the function $\widehat{\mathbb{F}}(x,t;\lambda)$ with asymptotic condition [4] replaced with [4a] is solvable in the $L_2$ sense. Then using analyticity of the jump matrices, we can argue that this solution is indeed a solution in the continuous sense.
\end{proof}

\begin{corollary}\label{cor_Cinfty}
If a function $r_u(s)$ vanishes as $\mathcal{O}(|s|^{-1-N/2}), s\to-\infty,$ then the function $F(x,t;\lambda)$ is $N$ times differentiable in $x,$ and $\lfloor N/3 \rfloor$ times differentiable in $t.$
\end{corollary}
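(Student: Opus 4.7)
\noindent\textbf{Proof proposal for Corollary \ref{cor_Cinfty}.} The plan is to differentiate the singular integral equation \eqref{SIE} $L_- = I + \mathcal{C}_-(L_-(J_L-I))$ formally in the parameters $x,t$, and verify that the equations obtained for $\partial_x^j L_-$ and $\partial_t^k L_-$ remain uniquely solvable in $L^p(\Sigma)$. The whole $(x,t)$-dependence of $J_L$ enters through $\theta_{\lambda_0}(x,t;\mu)$, see \eqref{thetalambda0}, and a short computation gives
$$\partial_x\theta_{\lambda_0}(x,t;\mu)=(\lambda-\lambda_0)^{1/2},\qquad \partial_t\theta_{\lambda_0}(x,t;\mu)=-\tfrac{1}{3}(\lambda-\lambda_0)^{3/2},$$
so each $x$-derivative of $J_L-I$ brings down a factor $(\lambda-\lambda_0)^{1/2}$ and each $t$-derivative a factor $(\lambda-\lambda_0)^{3/2}$.

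First I would check that on the rays $\gamma_0,\gamma_3,\gamma_{-3}$ the corresponding entries of $J_L-I$ carry an honest exponentially decaying factor $\e^{\pm 2\theta_{\lambda_0}}$ (its sign agrees with the signature table in Figure \ref{Fig_Im_g_lambda}), so any polynomial weight in $\mu$ is absorbed and the derivatives of the contribution from $\gamma_0\cup\gamma_3\cup\gamma_{-3}$ lie in every $L^p$ uniformly in $(x,t)$ on compact sets. The only delicate contribution is the jump across $\rho$, on which $\theta_{\lambda_0}^{\pm}$ is purely imaginary and the off-diagonal entries of $J_L-I$ are (up to multiplication by smooth bounded conjugators $Y_\pm$) of the form $r_u(\lambda)\e^{-2\theta_{\lambda_0}^+}$ and $r_d(\lambda)\e^{-2\theta_{\lambda_0}^-}$. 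Differentiating $j$ times in $x$ (resp.\ $k$ times in $t$) produces factors at worst $(\lambda-\lambda_0)^{j/2}$ (resp.\ $(\lambda-\lambda_0)^{3k/2}$); in view of the hypothesis $r_u(s)=\mathcal{O}(|s|^{-1-N/2})$ and the symmetry $r_d=\overline{r_u(\bar\cdot)}$, this keeps the relevant entries in $L^p(\rho)$ for any $j\le N$ and any $k\le\lfloor N/3\rfloor$.

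Once the regularity of $J_L$ and its derivatives is secured, I would set up the inductive scheme $(I-\mathcal{C}_{J_L})\partial_x L_- = \mathcal{C}_-(L_-\,\partial_x J_L)$ and its higher-order analogues. Theorem \ref{teor_ExistF} already shows that $I-\mathcal{C}_{J_L}$ is boundedly invertible on $L^p(\Sigma)$; the same Fredholm/index/vanishing-lemma argument applies verbatim with $(x,t)$ as parameters, so the operator $(I-\mathcal{C}_{J_L})^{-1}$ depends continuously, and in fact differentiably, on $(x,t)$ by a standard resolvent identity. Feeding each $\partial_x^{j-1}L_-$ (which is already known to be in $L^p$) into the right-hand side and inverting $I-\mathcal{C}_{J_L}$ yields $\partial_x^{j}L_- \in L^p$ for $j\le N$, and the analogous bookkeeping for $\partial_t^{k}L_-$ gives the $t$-regularity up to $k=\lfloor N/3\rfloor$. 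The reconstruction $\widehat{\mathbb F}(x,t;\lambda) = (I+\mathcal{C}(L_-(J_L-I)))\cdot Y(\mu)\e^{\theta_{\lambda_0}\sigma_3}$ (times the $G_j$-conjugation performed in Preparatory Step 3) then shows that $\widehat{\mathbb F}(x,t;\lambda)$ inherits the same number of derivatives in $x$ and $t$.

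The main obstacle I anticipate is technical rather than conceptual: one must justify that the formally differentiated SIE's can be read off by bringing $\partial_x,\partial_t$ inside the Cauchy operator, because $J_L-I$ only lies in $L^\infty\cap L^p$ and not in a space with pointwise control at infinity along $\rho$. The safest way is to approximate $r_u$ by Schwartz-class functions $r_u^{(n)}$ satisfying the same decay bounds, solve the SIE for these approximants (where interchanging derivatives and integrals is trivially legal), and then pass to the limit using the uniform bound on $\|(I-\mathcal{C}_{J_L})^{-1}\|$ together with dominated convergence on $\rho$ provided by the decay hypothesis $r_u(s)=\mathcal{O}(|s|^{-1-N/2})$. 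Once this limiting procedure is carried out for each order $j\le N$ and $k\le\lfloor N/3\rfloor$, the claimed $(x,t)$-regularity of $\widehat{\mathbb F}$ follows.
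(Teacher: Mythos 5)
Your proposal is correct and follows essentially the same route as the paper: differentiate the singular integral equation in $x$ and $t$, observe that each $x$-derivative of the jump matrix contributes a factor $\lambda^{1/2}$ and each $t$-derivative a factor $\lambda^{3/2}$ which the hypothesis $r_u(s)=\mathcal{O}(|s|^{-1-N/2})$ absorbs on $\rho$, and then reuse the already-established invertibility of $Id-\mathcal{C}_-(\cdot(J-I))$ to solve the differentiated equation with its new right-hand side. The paper's argument is just a terser version of yours (it does not spell out the approximation step you add at the end), so no further comparison is needed.
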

\begin{proof}
Every time that we differentiate the jump matrix $J_M$  w.r.t.  $x,$ we gain a factor $\sqrt{\lambda}$ under the \clu{symbol} of the integral. 
Every time that we differentiate $J_M$  w.r.t.  $t,$ we gain $\lambda^{3/2}.$ Thus, in order to have convergent integrals in \eqref{SIE}, we need to have stronger vanishing of $r_u(s), s\to-\infty.$

Let us consider differentiation w.r.t. $x$ and $N\geq 1.$ Then the derivative  w.r.t.  $x$ of the jump matrix $J_x$ is still in $L_2(\Sigma_M),$ 
and differentiating equation \eqref{SIE}  w.r.t.  $x,$ we obtain
\begin{equation}\label{SIE_x}
M_{-,x}=\mathcal{C}_-(M_{-,x}(J_M-I))+\mathcal{C}_-(M_-J_{M,x}), \ \ \mbox{ or }\ \  [Id-\mathcal{C}_-(.(J_M-I))]\circ(M_{-,x})=\mathcal{C}_-(M_-J_{M,x})\end{equation}
and this equation is of the same kind as \eqref{SIE}, just with a different right-hand-side. Since the operator $Id-\mathcal{C}_-(.(J_M-I))$ in \eqref{SIE} is invertible, the same is true for \eqref{SIE_x}.
\end{proof}

\begin{corollary}
\begin{enumerate}
\item RHP \ref{RHtp} has a unique solution $\widehat{P}(x,t;\lambda)$ for $t>t_0, x\in\mathbb{R}$ and for $t=t_0, x\geq B,$
\item RHP \ref{RHtn} has a unique solution $\widehat{N}(x,t;\lambda)$ for $t<t_0, x\in\mathbb{R}$ and for $t=t_0, x\leq A,$ 
\item Further, for $t>t_0, x\in\mathbb{R}$ and for $t=t_0, x>B$ the function $\widehat{P}(x,t;\lambda)$ admits
the following full asymptotic expansion:
$$\widehat P(x,t;\lambda)=\frac{1}{\sqrt{2}}\(I+\sum\limits_{j=1}^{\infty}P_j(x,t)\lambda^{-j}+\mathcal{O}(\lambda^{-\infty})\)\lambda^{-\sigma_3/4}\begin{pmatrix}1&1\\1&-1\end{pmatrix},$$
and for $t<t_0, x\in\mathbb{R}$ and for $t=t_0, x<A$ the function $\widehat{N}(x,t;\lambda)$ admits
the following full asymptotic expansion:
$$\widehat N(x,t;\lambda)F^{\sigma_3}(\lambda)=\frac{1}{\sqrt{2}}\(I+\sum\limits_{j=1}^{\infty}N_j(x,t)\lambda^{-j}+\mathcal{O}(\lambda^{-\infty})\)\lambda^{-\sigma_3/4}\begin{pmatrix}1&1\\1&-1\end{pmatrix},$$
where the scalar function $F$ is given by
\begin{equation}\label{F}F(\lambda)=\exp\left\{-\dfrac{\sqrt{\lambda}}{2\pi}\int\limits_{-\infty}^{0}\dfrac{\ln|a_u(s)|^2\ \d s}{(s-\lambda)\ \sqrt{|s|}}\right\}.\end{equation}
\item $\widehat P(x,t;\lambda),$ $\widehat N(x,t;\lambda)$ are smooth (infinitely many times differentiable)  in $x,t.$
\item the functions
\begin{equation}\label{PN_PNhat}
 \begin{split}
P(x,t;\lambda)=\begin{pmatrix}1&0\\(P_1(x,t;\lambda))_{12} & 1\end{pmatrix}\widehat{P}(x,t;\lambda),  
\\
N(x,t;\lambda)=\begin{pmatrix}1&0\\(N_1(x,t;\lambda))_{12} & 1\end{pmatrix}\widehat{N}(x,t;\lambda)
 \end{split}
\end{equation}
solve RHPs \ref{RHtp}, \ref{RHtn}, respectively.
\end{enumerate}
\end{corollary}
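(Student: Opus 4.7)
The plan is to establish Parts 1--2 by replaying the six-step machinery used in the proof of Theorem \ref{teor_ExistF} for the modified contours and jump matrices, then to extract the full asymptotic expansions in Parts 3--4 by exploiting the exponential decay of $J_P-I, J_N-I$ along $\gamma_0,\gamma_{\pm3}$ together with an explicit scalar Wiener--Hopf conjugation on $\rho$. Part 5 is the direct transcription of the argument given in Corollary \ref{cor_Cinfty}, and Part 6 is an algebraic matching of the $\lambda^{-1/2}$ coefficient in the asymptotic prescription of RHP \ref{RHtp} (respectively \ref{RHtn}) against the first coefficient $P_1(x,t)$ (respectively $N_1(x,t)$) produced in Part 3 (respectively 4).

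For Parts 1--2, the essential point is that in the admissible $(x,t)$ region every jump matrix in RHP \ref{RHtp} is exponentially close to $I$ along $\gamma_0, \gamma_3, \gamma_{-3}$, the only non-decaying piece being the constant involution $\bigl(\begin{smallmatrix}0&-\i\\-\i&0\end{smallmatrix}\bigr)$ on $\rho$. The decay on $\gamma_3$ follows from the property $r_u=\mathcal{O}(\lambda^{-1})\e^{2\theta(B,t_0;\lambda)}$ of Lemma \ref{lem_b_a_prop} together with the identity
\[
2\bigl(\theta(B,t_0;\lambda)-\theta(x,t;\lambda)\bigr)=\tfrac{2(t-t_0)}{3}\lambda^{3/2}+2(B-x)\lambda^{1/2},
\]
whose real part is strictly negative on $\arg\lambda=6\pi/7$ precisely in the regime $t>t_0$ or $(t=t_0,x>B)$. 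A symmetric use of the estimate $a_d-a_u=\mathcal{O}(\lambda^{-1})\e^{-2\theta(A,t_0;\lambda)}$ handles RHP \ref{RHtn}. One then runs verbatim the scheme of Theorem \ref{teor_ExistF}: the pole-removing shift is unnecessary here (no poles in the spectral data under the assumptions of Properties \ref{prop_abr}), the Airy model $Y$ of \eqref{Y} produces identity asymptotics at infinity, a constant dressing matrix $G_j$ enforces continuity of the jump at the junction $\lambda=0$, and the resulting singular operator $\mathrm{Id}-\mathcal{C}_{J_L}$ is Fredholm of index zero with trivial kernel by the Carlson-type Proposition \ref{prop_vanish}.

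Parts 3--4 follow by observing that the jumps on $\gamma_0, \gamma_{\pm 3}$ contribute only $\mathcal{O}(\lambda^{-\infty})$ to $\widehat{P}, \widehat{N}$ via standard small-norm estimates, so only $\rho$ contributes polynomial corrections. For $\widehat{P}$ the $\rho$-jump is the constant involution, and the Airy-type model analysis carried out in Section \ref{sect_U} furnishes the desired integer-power expansion. For $\widehat{N}$ the $\rho$-jump involves $a_ua_d=|a_u|^2$; one removes this factor by conjugating with $F^{\sigma_3}$, where $F$ defined in \eqref{F} solves the scalar problem $F_+F_-=|a_u|^2$ on $\rho$ with $F\to 1$ at infinity (a direct Sokhotsky--Plemelj verification, exploiting the jump $\sqrt{\lambda}_+-\sqrt{\lambda}_-=2\i\sqrt{|\lambda|}$ across $\rho$). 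The conjugated function $\widehat{N}F^{\sigma_3}$ then satisfies a RHP whose $\rho$-jump is the constant involution, and the same expansion argument applies.

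For Part 5, differentiating the singular integral equation for $\widehat{P}$ or $\widehat{N}F^{\sigma_3}$ in $x$ or $t$ introduces factors of $\sqrt{\lambda}$ or $\lambda^{3/2}$ into the right-hand side, which are absorbed by the exponential decay of the jumps on $\gamma_0,\gamma_{\pm3}$ and the fact that on $\rho$ the jump matrix is $\lambda$-independent (for $\widehat{P}$) or becomes so after scalar conjugation (for $\widehat{N}$); this is exactly the mechanism of Corollary \ref{cor_Cinfty}, and iteration delivers $C^\infty$ regularity in $(x,t)$. For Part 6, left multiplication by a constant-in-$\lambda$ matrix preserves every jump, so it suffices to match the leading $\widetilde{b}(x,t)\sigma_3/\sqrt{\lambda}$ term demanded by the asymptotics of RHP \ref{RHtp}. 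A direct computation with $\widehat{P}=\tfrac{1}{\sqrt{2}}(I+P_1(x,t)/\lambda+\cdots)\lambda^{-\sigma_3/4}(\cdots)$ gives $\widetilde{b}(x,t)=(P_1)_{12}(x,t)$, proving \eqref{PN_PNhat}. The main technical obstacle is in Part 4: verifying that $F$ from \eqref{F} admits a \emph{uniform} full asymptotic expansion in integer powers of $1/\lambda$ in every direction (rather than only in half-integer powers coming naively from the $\sqrt{\lambda}$ prefactor), which requires repeated integration by parts in the defining integral combined with the identity $|a_u(s)|^2=1/(1-2\Im r_u(s))$ from Lemma \ref{lem_b_a_prop} to cancel the spurious half-integer terms.
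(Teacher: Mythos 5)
Your Parts 3--6 follow the paper's own route: conjugation by the Airy parametrix $Y$ to reduce to a small-norm problem with exponentially decaying jumps off $\rho$, the scalar Wiener--Hopf factor $F$ with $F_+F_-=a_ua_d$ to normalize the $\rho$-jump of $\widehat N$, the differentiated singular integral equation as in Corollary \ref{cor_Cinfty} for smoothness, and the matching of the $\lambda^{-1/2}$ coefficient for \eqref{PN_PNhat} (your computation $\widetilde b=(P_1)_{12}$ is exactly what the paper means by ``direct computations''). Your worry about $F$ needing its own integer-power expansion is a red herring: the expansion is asserted for the product $\widehat N F^{\sigma_3}=M_N\,Y$, and only $M_N$ needs to be expanded, which it is by the small-norm argument.

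Where you genuinely diverge from the paper is Parts 1--2. The paper does \emph{not} re-run the Zhou/Fredholm machinery for RHPs \ref{RHtp} and \ref{RHtn}; it writes $\widehat P$ and $\widehat N$ \emph{explicitly} as $\widehat{\mathbb F}\,\e^{-\theta\sigma_3}$ times sectorwise triangular factors (formula \eqref{P_F}), so existence is inherited from Theorem \ref{teor_ExistF}, and the only things to check are that these factors cancel the pole conditions of $\widehat{\mathbb F}$ and that the off-diagonal insertions $r_u\e^{-2\theta}$ and $b_d\e^{2\theta}/(1+\i r_u)$ stay bounded precisely in the regimes $t>t_0$ (resp.\ $t<t_0$) or $t=t_0,\ x\gtrless B,A$. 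That construction is both shorter and gives for free the algebraic link between $\widehat P$, $\widehat N$ and $\widehat{\mathbb F}$ that the paper uses later. Your from-scratch route could in principle be made to work, but the claim that the scheme of Theorem \ref{teor_ExistF} runs ``verbatim'' hides a real gap: the vanishing lemma there (Step 3) relies on folding the contour onto $\mathbb R$ and on the positivity of $J_A+\ol{J_A^T}$, which is checked for the specific triangular jumps of $\widehat{\mathbb F}$; the jumps of $J_P$ on $\gamma_{\pm 3}$ are full matrices such as $\bigl(\begin{smallmatrix}1+\i r_u & \i\e^{2\theta}\\ r_u\e^{-2\theta} & 1\end{smallmatrix}\bigr)$, so you must first re-factor them into absorbable triangular pieces and then re-verify the symmetrization and positivity on $\mathbb R$ --- none of which you do. If you intend to keep your route, supply that step; otherwise adopt the explicit dressing \eqref{P_F}, which makes the corollary a genuine corollary.
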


\begin{proof}$[1.,2.]$
Let us define the functions $\widehat P,$ $\widehat N$ by the following formulas:
\begin{equation}\label{P_F}\widehat P(x,t;\lambda) = \widehat{\mathbb{F}}\e^{-\theta\sigma_3}
\begin{cases}
\begin{pmatrix}
\frac{1}{1+\i r_u} & 0 \\ -r_u\e^{-2\theta} & 1+\i r_u
\end{pmatrix},  \Omega_{II},
\\
\begin{pmatrix}
\frac{1}{1-\i r_d} & 0 \\ -r_d\e^{-2\theta} & 1-\i r_d
\end{pmatrix},  \Omega_{III},
\\ I ,\quad \mbox{ elsewhere, }
\end{cases}
\widehat N(x,t;\lambda) = \widehat{\mathbb{F}}\e^{-\theta\sigma_3}
\begin{cases}
\begin{pmatrix}
a_u & \frac{b_d\e^{2\theta}}{1+i r_u} \\ 0 & \frac{1}{a_u}
\end{pmatrix}, \Omega_{II},
\\
\begin{pmatrix}
a_d & \frac{b_u\e^{2\theta}}{1-i r_d} \\ 0 & \frac{1}{a_d}
\end{pmatrix}, \Omega_{III},
\\ I ,\quad \mbox{ elsewhere.}
\end{cases}\end{equation}
It is straightforward to check that the function $\widehat{P}$ \eqref{P_F} satisfies RHP \ref{RHtp}, and the function 
$\widehat{N}$ satisfies RHP \ref{RHtn}, and that all the possible poles of $\widehat{\mathbb{F}}$ disappear for $\widehat P,$ $\widehat N.$ One needs to be careful however that the asymptotics of $\widehat P$ as $\lambda\to\infty$ are not spoiled by the term
$$r_u(\lambda)\e^{-2\theta(x,t;\lambda)}=\mathcal{O}(\frac{1}{\lambda})\e^{2\theta(B,t_0;\lambda)-2\theta(x,t;\lambda)}
=\mathcal{O}(\frac{1}{\lambda})\e^{\frac{2(t-t_0)}{3}\lambda^{3/2}+2(B-x)\lambda^{1/2}},$$
which is bounded in $\Omega_{II}$ for $t>t_0, x\in\mathbb{R}$ and for $t=t_0, x\geq B,$ and 
 that the asymptotics of $\widehat N$ as  $\lambda\to\infty$ are not spoiled by the term
$$\frac{b_d\e^{2\theta}}{1+\i r_u}=\(-\i a_d+\dfrac{\i}{a_u+\i b_u}\)\e^{2\theta}=\mathcal{O}(\frac{1}{\lambda})\e^{2\theta(x,t;\lambda)-2\theta(A,t_0;\lambda)}=
\mathcal{O}(\frac{1}{\lambda})\e^{\frac{2(t_0-t)}{3}\lambda^{3/2}+2(x-A)\lambda^{1/2}},$$
which is bounded for $\lambda\in\Omega_{II}$ for $t<t_0, x\in\mathbb{R}$ and for $t=t_0, x\leq A.$


$[3.]$ The statement concerning the full asymptotic expansion is similar to \cite[lemma 2.3, (ii), p. 1168]{Claeys Vanlessen}. Now we proceed to the details.
\noindent Let us observe that the function $$M_{P}(x,t;\lambda)=\widehat{P}(x,t;\lambda)Y^{-1}(\lambda),$$ where $Y(\lambda)$ is defined in \eqref{Y}, solves the following RH problem:
\begin{RH}\label{RHMP}
To find a $2\times2$ matrix-valued function $M_P(x,t;\lambda),$ which
\begin{enumerate}
\item is analytic in $\lambda\in\mathbb{C}\setminus\Sigma,$ $\Sigma=\gamma_0\cup\gamma_r\cup\gamma_{-3}\cup\rho,$
\item has the following jump $M_{P,+}=M_{P,-}J_{M_P}$ accros $\Sigma:$
$$J_{M_P}=\begin{cases}
                  I, \ \lambda\in\rho,
\qquad
Y_-\begin{pmatrix}1 & 0 \\ \frac{-\i \e^{-2\theta}}{a_ua_d}+\i\e^{-\frac43\lambda^{3/2}} & 1\end{pmatrix}
Y_-^{-1}, \ \lambda\in\gamma_0,
\\\\
Y_-\begin{pmatrix}1+\i r_u & \i\e^{2\theta} \\ r_u\e^{-2\theta} & 1\end{pmatrix}
\begin{pmatrix} 1 & -\i \e^{\frac43\lambda^{3/2}} \\0&1\end{pmatrix}
Y_-^{-1}, \ \lambda\in\gamma_3,
\\\\
Y_-\begin{pmatrix}1 & \i\e^{2\theta} \\ -r_d\e^{-2\theta} & 1-\i r_d\end{pmatrix}
\begin{pmatrix} 1 & -\i \e^{\frac43\lambda^{3/2}} \\0&1\end{pmatrix}
Y_-^{-1}, \ \lambda\in\gamma_{-3},
\end{cases}
$$
\item has the following asymptotics as $\lambda\to\infty,$ which is uniform w.r.t. $\arg\lambda\in[-\pi, \pi]:$
$$M_P(x,t;\lambda)=I+\ol{_\mathcal{O}}(1).$$
\end{enumerate}
\end{RH}
We see that for $t>t_0, x\in\mathbb{R}$ and for $t=t_0,$ $x>B$ the jumps for $M_P$ are exponentially close to $I$ on the infinite parts of the contour $\Sigma,$ and hence from the SIE of the type \eqref{SIE}, which now reads as
$$M_{P,-}=I+\mathcal{C}_{\Sigma,-}\(M_{P,-}(J_{M_P}-I)\),$$
and from the representation of $M_P$ in terms of $M_{P,-}$
$$M_{P}=I+\mathcal{C}_{\Sigma}\(M_{P,-}(J_{M_P}-I)\)$$
we obtain that $M_P$ possesses the full asymptotic expansion
$$M_{P}=I+\sum\limits_{j=1}^{\infty} M_{P,j}\lambda^{-j}+\mathcal{O}(\lambda^{-\infty}).$$


\noindent Coming to $\widehat N(x,t;\lambda),$ we observe 
that the jump matrix for $\widehat N$ on $\lambda\in\rho$ is not exactly the same as the one of $Y$ \eqref{Y}, and we hence first need to transform it to such a jump. In order to do this, we make a transformation
$$\widehat{N}^{(1)}(x,t;\lambda)=N(x,t;\lambda)F^{\sigma_3}(\lambda),
\qquad \widehat{N}^{(1)}_+=\widehat{N}^{(1)}_- J^{(1)}_{N},\quad 
J^{(1)}_{N}=F_-^{-\sigma_3}(\lambda)J_{N}F_+^{\sigma_3},$$
where the scalar function $F$ is analytic in $\lambda\in\mathbb{C}\setminus (-\infty,0]$ solves the scalar conjugation problem
$$F_+F_-=a_ua_d,\lambda\in(-\infty, 0),$$
has asymptotics $F\to 1$ as $\lambda\to\infty,$
and hence can be found explicitly by formula \eqref{F}.
Let us observe that 
$\quad F(\lambda)=$ 
$$\exp\hskip-1mm\left\{\hskip-1mm\frac{1}{2\pi \sqrt{\lambda}}\hskip-1mm\(\int\limits_{-\infty}^{0}\hskip-0.5mm
\frac{\ln|a_u(s)|^2 \d s}{\sqrt{|s|}}
\hskip-0.5mm
-
\hskip-1mm
\int\limits_{-\infty}^{0}
\hskip-1.5mm
\frac{s\ln|a_u(s)|^2 \d s}{(s-\lambda)\sqrt{|s|}}\)\right\}
\hskip-0.5mm
=\hskip-0.5mm
\exp\hskip-0.5mm
\left\{
\hskip-0.5mm
\frac{1}{2\pi \sqrt{\lambda}}
\hskip-0.5mm
\(\int\limits_{-\infty}^{0}\hskip-0.5mm
\frac{\ln|a_u(s)|^2 \d s}{\sqrt{|s|}}+\ol{_\mathcal{O}}(1)\)\right\}.$$
The jump matrix for $\widehat{N}^{(1)}$ on $\rho$ now equals $$\begin{pmatrix}0&-\i\\-\i&0\end{pmatrix},$$
and now we observe that the function $$M_{N}(x,t;\lambda)=\widehat{N}(x,t;\lambda)F^{\sigma_3}(\lambda)Y^{-1}(\lambda)$$ (with $Y(\lambda)$ defined by \eqref{Y}) solves the following RH problem:
\begin{RH}\label{RHMN}
To find a $2\times2$ matrix-valued function $M_N(x,t;\lambda),$ which
\begin{enumerate}
\item is analytic in $\lambda\in\mathbb{C}\setminus\Sigma,$ $\Sigma=\gamma_0\cup\gamma_r\cup\gamma_{-3}\cup\rho,$
\item has the following jump $M_{N,+}=M_{N,-}J_{M_N}$ accros $\Sigma:$ $\qquad J_{M_N}=$
$$=\begin{cases}
                  I, \ \lambda\in\rho,
\qquad
Y_-\begin{pmatrix}
 1 & 0 \\ \frac{-\i F^2(\lambda)\e^{-2\theta}}{a_ua_d}+\i\e^{-\frac43\lambda^{3/2}} & 1
\end{pmatrix}Y_-^{-1}, \ \lambda\in\gamma_0,
\\\\
Y_-\begin{pmatrix}
 1 & \frac{\i a_u a_d \e^{2\theta}}{F^2(\lambda)} -\i\e^{\frac43\lambda^{3/2}} \\ 0 & 1
\end{pmatrix}Y_-^{-1}, \ \lambda\in\gamma_3,
\quad 
Y_-\begin{pmatrix}
 1 & \frac{\i a_u a_d \e^{2\theta}}{F^2(\lambda)} -\i\e^{\frac43\lambda^{3/2}} \\ 0 & 1
\end{pmatrix}Y_-^{-1}, \ \lambda\in\gamma_{-3},
\end{cases}
$$
\item has the following asymptotics as $\lambda\to\infty,$ which is uniform w.r.t. $\arg\lambda\in[-\pi,0]\cup[0,\pi]:$
$$M_N(x,t;\lambda)=I+\ol{_\mathcal{O}}(1).$$
\end{enumerate}
\end{RH}
We see that the jumps for $M_N$ are exponentially close to $I$ on the infinite parts of the contour $\Sigma,$ and hence from the SIE of the type \eqref{SIE}, which now reads as
$$M_{N,-}=I+\mathcal{C}_{\Sigma,-}\(M_{N,-}(J_{M_N}-I)\),$$
and from the representation of $M_N$ in terms of $M_{N,-}$
$$M_{N}=I+\mathcal{C}_{\Sigma}\(M_{N,-}(J_{M_N}-I)\)$$
we obtain that $M_N$ possesses the full asymptotic expansion
$$M_{N}=I+\sum\limits_{j=1}^{\infty} M_{N,j}\lambda^{-j}+\mathcal{O}(\lambda^{-\infty}).$$

$[4.]$ To show that $\widehat P,$ $\widehat N$ are infinitely many times differentiable with respect to $x,t,$ one proceeds in the same way as in the proof of Theorem \ref{teor_ExistF}. Since the jumps of $M_P,$ $M_N$
are exponentially close to $I$ in the infinite parts of the contour $\Sigma,$ we can differentiate the corresponding SIE as many times as we wish.
Hence, the arguments of Corollary \ref{cor_Cinfty} can be repeated infinitely many times.

$[5.]$ We need only to check the fact that the functions $P(x,t;\lambda),$ $N(x,t;\lambda)$ indeed have asymptotics prescribed by RHPs \ref{RHtp}, \ref{RHtn}, and this can be done by direct computations.
\end{proof}

\subsection{Reconstruction of $u(x,t)$ in terms of $\widehat{\mathbb{F}}.$}

\begin{theorem}\label{teor_u_F}
Let $a_1^{\mathbb{F}},$ $b_1^{\mathbb{F}}$ be the corresponding scalar coefficients in the expansion 
$$\widehat{\mathbb{F}}(x, t; \lambda) = \(I+\begin{pmatrix}a_1^{\mathbb{F}}(x, t)\frac{1}{\lambda}+\ol{_\mathcal{O}}\(\frac{1}{\lambda}\) & b_1^{\mathbb{F}}(x, t)\frac{1}{\lambda}+\ol{_\mathcal{O}}\(\frac{1}{\lambda}\) 
\\ \ol{_\mathcal{O}}(1) & \ol{_\mathcal{O}}(1)\end{pmatrix}\)\frac{1}{\sqrt{2}}\lambda^{-\sigma_3/4}\e^{\theta\sigma_3}.$$
Denote $$u(x,t):=2a_1^{\mathbb{F}}-b_1^{\mathbb{F}}.$$
Then $u(x,t)$ solves the KdV equation \eqref{KdV}.
\end{theorem}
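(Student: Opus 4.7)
The plan is to adapt the dressing-method argument used in Section \ref{sect_U} for $\mathbb{E}$ and $U(x,t)$ (equations \eqref{x-eq_t-eq_lambda-eq}--\eqref{UfrakVfrak}) to the function $\widehat{\mathbb{F}}$ solving RHP \ref{RHFhat}. The key structural observation is that the jump matrix $J_{\mathbb{F}}$ of RHP \ref{RHFhat} and the scalar coefficients $\frac{\i}{1+\i r_u(\lambda)}$, $-\frac{\i}{1-\i r_d(\lambda)}$ appearing in the pole conditions depend on $(x,t)$ only through the conjugating exponential $\e^{\theta(x,t;\lambda)\sigma_3}$, while the normalization at $\lambda\to\infty$ (up to this exponential) is $(x,t)$-independent. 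Consequently, the logarithmic derivatives
\[
\mathfrak{U}^{\mathbb{F}}(x,t;\lambda):=\widehat{\mathbb{F}}_x\,\widehat{\mathbb{F}}^{-1},\qquad \mathfrak{V}^{\mathbb{F}}(x,t;\lambda):=\widehat{\mathbb{F}}_t\,\widehat{\mathbb{F}}^{-1}
\]
have no jumps across $\Sigma_{\mathbb{F}}$, and differentiability of $\widehat{\mathbb{F}}$ in $x,t$ to any order is guaranteed by the analogue of Corollary \ref{cor_Cinfty}, using the rapid decay of $r_u$ given by property $6^{(\infty)}$ of Theorem \ref{teor_main}.

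First I would check that $\mathfrak{U}^{\mathbb{F}}$ and $\mathfrak{V}^{\mathbb{F}}$ are entire functions of $\lambda$. The potential singularities at $\lambda_j,\ol{\lambda_j}$ must be shown to be removable. Since the residue constraints $\widehat{\mathbb{F}}_{[2]}+\frac{\i}{1+\i r_u(\lambda)}\widehat{\mathbb{F}}_{[1]}=\mathcal{O}(1)$ and $\widehat{\mathbb{F}}_{[1]}=\mathcal{O}(1)$ in condition \ref{pole_cond} have coefficients independent of $(x,t)$, the same constraints also hold for $\widehat{\mathbb{F}}_x$ and $\widehat{\mathbb{F}}_t$; computing $\widehat{\mathbb{F}}^{-1}=\frac{1}{\det\widehat{\mathbb{F}}}\mathrm{adj}\,\widehat{\mathbb{F}}$ and unpacking $\mathfrak{U}^{\mathbb{F}},\mathfrak{V}^{\mathbb{F}}$ near each $\lambda_j$ then exhibits a cancellation of the simple pole in numerator against that of $\det\widehat{\mathbb{F}}$.

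Next, I would control growth at infinity. Differentiating the large-$\lambda$ expansion of $\widehat{\mathbb{F}}$ given in the theorem statement, and using $\theta_x=\lambda^{1/2}$ and $\theta_t=-\frac{1}{3}\lambda^{3/2}$, one sees that $\mathfrak{U}^{\mathbb{F}}$ grows at most linearly and $\mathfrak{V}^{\mathbb{F}}$ at most quadratically in $\lambda$. Combined with the entirety just established, Liouville's theorem forces them to be polynomials of degrees $1$ and $2$ respectively. Matching coefficients term-by-term against the expansion of $\widehat{\mathbb{F}}$ — exactly the calculation producing \eqref{UfrakVfrak}--\eqref{U_a1b1} — identifies
\[
\mathfrak{U}^{\mathbb{F}}=\begin{pmatrix}0&1\\ \lambda-2u(x,t)&0\end{pmatrix},\quad \mathfrak{V}^{\mathbb{F}}=\begin{pmatrix} \frac{u_x}{6} & \frac{-\lambda-u}{3}\\ \frac{-\lambda^2}{3}+\frac{u\lambda}{3}+\frac{4u^2+u_{xx}}{6} & \frac{-u_x}{6}\end{pmatrix},
\]
where the scalar $u(x,t)$ extracted from the $(2,1)$-entry of $\mathfrak{U}^{\mathbb{F}}$ is precisely the combination $2a_1^{\mathbb{F}}-b_1^{\mathbb{F}}$ defined in the theorem (the analogue of \eqref{U_a1b1}). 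Finally, the compatibility condition for the overdetermined linear system $\widehat{\mathbb{F}}_x=\mathfrak{U}^{\mathbb{F}}\widehat{\mathbb{F}}$, $\widehat{\mathbb{F}}_t=\mathfrak{V}^{\mathbb{F}}\widehat{\mathbb{F}}$ is the zero-curvature equation $\mathfrak{U}^{\mathbb{F}}_t-\mathfrak{V}^{\mathbb{F}}_x+[\mathfrak{U}^{\mathbb{F}},\mathfrak{V}^{\mathbb{F}}]=0$, whose $(2,1)$-entry is exactly the KdV equation \eqref{KdV} for $u(x,t)$.

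The hard part will be the entirety step: rigorously showing the cancellation of the apparent simple poles at $\lambda_j,\ol{\lambda_j}$ in $\mathfrak{U}^{\mathbb{F}}$ and $\mathfrak{V}^{\mathbb{F}}$. This requires expanding $\widehat{\mathbb{F}}$ to subleading order near each pole and carefully exploiting the two-sided nature of the constraint in condition \ref{pole_cond} (namely that both $\widehat{\mathbb{F}}_{[1]}$ is bounded and a specific linear combination of the columns is bounded). A secondary technical point is justifying term-by-term $x$- and $t$-differentiation of the $\ol{_\mathcal{O}}(\frac{1}{\lambda})$ expansion; this can be handled by differentiating the singular integral equation \eqref{SIE} as in the proof of Corollary \ref{cor_Cinfty} and using again the super-rapid decay $r_u(\lambda)=\mathcal{O}(\lambda^{-N/2-1})\e^{2\theta(B,t_0;\lambda)}$ for all $N$.
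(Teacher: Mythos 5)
Your proposal follows essentially the same route as the paper: the Zakharov--Shabat dressing argument, using the $(x,t)$-independence of the jump data to conclude that $\widehat{\mathbb{F}}_x\widehat{\mathbb{F}}^{-1}$ and $\widehat{\mathbb{F}}_t\widehat{\mathbb{F}}^{-1}$ are jump-free, then Liouville, coefficient matching against the expansion as in \eqref{UfrakVfrak}--\eqref{U_a1b1}, and the zero-curvature compatibility condition yielding KdV, with the differentiability of the expansion justified by differentiating the singular integral equation for RHP \ref{RHM} exactly as in Corollary \ref{cor_Cinfty}. The only substantive difference is cosmetic: the paper sidesteps the removability of the apparent poles at $\lambda_j,\ol{\lambda_j}$ by working with the shifted, pole-free formulation (RHP \ref{RHM} via $\widehat{\mathbb{F}}_{\lambda_0}$), whereas you argue the cancellation directly from the $(x,t)$-independence of the pole conditions \ref{pole_cond}; both are fine, and note that the reconstruction formula should read $u=2a_1^{\mathbb{F}}-(b_1^{\mathbb{F}})^2$ as in \eqref{U_a1b1} and \eqref{u_F}.
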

\begin{proof}
The proof goes along the well-known scheme by Zakharov--Shabat. Let $\widehat{\mathbb{F}}$ admit an asymptotic expansion of the form \eqref{asserEhat}, with coefficients $a_j, b_j, c_j, d_j,$ changed with $a_j^{\mathbb{F}}, b_j^{\mathbb{F}}, c_j^{\mathbb{F}}, d_j^{\mathbb{F}},$ and J=3. Suppose that we can differentiate this expansion 1 time w.r.t. $x, t.$
Define $$\mathbb{F}(x,t;\lambda):=\begin{pmatrix}1 & 0 \\ b_1^{\mathbb{F}}(x,t) & 1\end{pmatrix}\widehat{\mathbb{F}}(x,t;\lambda).$$
Notice that the jumps for the function $\widehat{\mathbb{F}}$ do not depend on $x,t,$ hence the ratio 
$$(\widehat{\mathbb{F}})_{x}(\widehat{\mathbb{F}})^{-1},\quad (\widehat{\mathbb{F}})_{t}(\widehat{\mathbb{F}})^{-1}$$
are analytic functions (do not have jumps). By the Liouville theorem, from the asymptotics at $\lambda\to\infty$ we find that
$$\mathbb{F}_{P,x} \mathbb{F}_P^{-1}=:\begin{pmatrix}0 & 1 \\ \lambda-2u_P(x,t) & 0\end{pmatrix},
\quad \mathbb{F}_{P,t} \mathbb{F}_P^{-1}=\begin{pmatrix}\dfrac{u_{P,x}}{6} & \dfrac{\lambda+u_P(x,t)}{-3} \\\\ -\frac{\lambda^2}{3}+\frac{u_P(x,t)\lambda}{3}+\frac{4u_P^2(x,t)+u_{P,xx}}{6}
& \dfrac{-u_{P,x}}{6} \end{pmatrix},$$
where $$u_P(x,t)=2a_1^{\mathbb{F}}(x,t)-(b_1^{\mathbb{F}})^2(x,t)=-\partial_x(b_1^{\mathbb{F}}(x,t)).$$
The consistency condition for the two above differential equations gives us that $u^{\mathbb{F}}(x,t)$ satisfies the KdV equation \eqref{KdV}.

The only delicate moment here is the possibility of expansion of $\widehat{\mathbb{F}}$ at $\lambda\to\infty,$ and the differentiability of $\mathbb{F}$ and its expansion. To address this issue, let us consider the RHP \ref{RHM}.
Its solution can be obtained by the formula 
\[M=\mathbf{1}+\mathcal{C}(M_-\cdot(J_M-I)),\]
where $\ M_--\mathbf{1}\in L_2$ is the solution of the singular integral equation
\[\left[Id-\mathcal{C}_-(.(J_M-\mathbf{1})\right]\circ (M_--\mathbf{1})=
\mathcal{C}_-(J_M-I).\]
Furthermore, we can write the derivative of $M$ as
\[M_x =\mathcal{C}(M_{-,x}\cdot(J_M-\mathbf{1})) + 
\mathcal{C}(M_{-}\cdot J_{M,x} ), \]
where $M_{-,x}$ is the solution of the singular integral equation
\[\left[Id-\mathcal{C}_-(.(J_M-\mathbf{1})\right]\circ M_{-,x}=
\mathcal{C}_-(M_{-}\cdot J_{M,x}).\]
We see that SIE for $M_-$ and for $M_{-,x}$ have the same operator, and hence the derivative $M_{-,x}\in L_2$ exists provided that the r.h.s. is in $L_2,$ i.e. $J_{M,x}$ vanishes sufficiently fast as $\lambda\to-\infty.$ Now, every time we differentiate $J_M$ w.r.t. $x,$ we gain $\lambda^{1/2}$ under the symbol of the integral with the integration path $\lambda =-\infty$ to $\lambda=0,$ and each time we differentiate w.r.t. $t,$ we gain $\lambda^{3/2}.$ 
Furthermore, the rate of convergence of the jump matrix $J_M$ to the identity matrix $\mathbf{1}$ as $\lambda\to-\infty$ depend on the rate of convergence of $r_u(\lambda)$ to 0 as $\lambda\to-\infty.$
In particular, with $r_u(\lambda)=\mathcal{O}(\lambda^{-N})$ for any $N$, we can differentiate $M$ infinitely many times w.r.t. $x,t.$

\end{proof}

\subsection{Characterization of compactly supported perturbations}
\begin{theorem}\label{teor_charact}
The function $u(x,t)$ defined above in Theorem \ref{teor_u_F} possesses the following property: 
$$u(x,t_0)=U(x,t_0)\quad \mbox{ for } x<A \quad \mbox{ and } x>B.$$
\end{theorem}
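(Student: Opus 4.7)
The plan is to compare $\widehat{\mathbb F}(x,t_0;\lambda)$ with the unperturbed solution $\widehat{\mathbb E}(x,t_0;\lambda)$ via the auxiliary RHPs for $\widehat P$ (in the region $x>B$) and $\widehat N$ (in the region $x<A$) that were analyzed just above. I treat $x>B$ in detail; $x<A$ is symmetric. For such $(x,t_0)$, introduce
\[
G(x;\lambda)\,:=\,\widehat P(x,t_0;\lambda)\cdot\bigl[\widehat{\mathbb E}(x,t_0;\lambda)\,\e^{-\theta(x,t_0;\lambda)\sigma_3}\bigr]^{-1}.
\]
Both factors share the normalization $\frac{1}{\sqrt{2}}\lambda^{-\sigma_3/4}\left(\begin{smallmatrix}1&1\\1&-1\end{smallmatrix}\right)(I+\mathcal{O}(\lambda^{-1}))$ at infinity, so $G(x;\lambda)\to I$ as $\lambda\to\infty$, and $G$ is analytic in $\mathbb{C}\setminus\Sigma$.

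The crucial check is that the jumps $J_G$ are exponentially close to $I$. On $\rho$ the two jump matrices coincide and equal $\left(\begin{smallmatrix}0&-\i\\-\i&0\end{smallmatrix}\right)$, so $J_G\equiv I$ there. On $\gamma_0$ the two jumps differ only by the factor $a_ua_d$ in the $(2,1)$ entry; both entries carry the super-exponentially small $\e^{-2\theta}\sim\e^{-2\lambda^{7/2}/105}$ (since $\Re\theta\to+\infty$ on $\gamma_0$), so $J_G-I$ is super-exponentially small there. On $\gamma_3$ (and analogously $\gamma_{-3}$) the jump of $\widehat P$ is $\left(\begin{smallmatrix}1+\i r_u& \i\e^{2\theta}\\ r_u\e^{-2\theta}& 1\end{smallmatrix}\right)$, while the jump of $\widehat{\mathbb E}\e^{-\theta\sigma_3}$ is $\left(\begin{smallmatrix}1& \i\e^{2\theta}\\ 0&1\end{smallmatrix}\right)$. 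Since the $\lambda^{7/2}$ and $\lambda^{3/2}$ contributions to $\theta$ are spatially independent, they cancel in $\theta(B,t_0;\lambda)-\theta(x,t_0;\lambda)=(B-x)\sqrt\lambda$, whence Property \ref{prop_abr}.5 yields
\[
r_u(\lambda)\,\e^{-2\theta(x,t_0;\lambda)}\,=\,\mathcal{O}(\lambda^{-1})\,\e^{2(B-x)\sqrt\lambda},
\]
which is exponentially small on $\gamma_{\pm3}$ because $\Re\sqrt\lambda>0$ there and $x>B$. Moreover $r_u$ itself is super-exponentially small on $\gamma_{\pm3}$, since $\Re\theta(B,t_0;\lambda)\to-\infty$ along these rays. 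Altogether $J_G-I$ is exponentially small in $\lambda$, uniformly on $\Sigma$.

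A small-norm Riemann-Hilbert argument (carried out as in the proof of Theorem \ref{teor_ExistF}) then gives $G(x;\lambda)=I+\mathcal{O}(\lambda^{-N})$ as $\lambda\to\infty$ for every $N\geq 1$. Restricting to the sector $\Omega_I$, where the correction matrix in definition \eqref{P_F} reduces to the identity, we have $\widehat{\mathbb F}(x,t_0;\lambda)=G(x;\lambda)\,\widehat{\mathbb E}(x,t_0;\lambda)$, so the full asymptotic expansions of $\widehat{\mathbb F}$ and $\widehat{\mathbb E}$ at $\lambda=\infty$ agree term by term. Matching the $\lambda^{-1}$ coefficient and applying formula \eqref{U_a1b1} to both sides gives $u(x,t_0)=2a_1^{\mathbb F}(x,t_0)-(b_1^{\mathbb F}(x,t_0))^2=2a_1(x,t_0)-b_1^2(x,t_0)=U(x,t_0)$. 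The case $x<A$ is handled in the same way using $\widehat N$ together with the scalar outer factor $F(\lambda)$ from \eqref{F} that regularizes the jump on $\rho$; Property \ref{prop_abr}.6 then supplies $(a_d-a_u)\,\e^{2\theta(x,t_0;\lambda)}=\mathcal{O}(\lambda^{-1})\,\e^{2(x-A)\sqrt\lambda}$, exponentially small on $\gamma_{\pm3}$ as soon as $x<A$. The hard part will be making the small-norm bound uniform near the self-intersection point $\lambda=0$, where the exponential decay rate $\e^{-c(x-B)\sqrt{|\lambda|}}$ degenerates; one copes either with a local contour deformation away from the origin, or by directly checking that the relevant Cauchy-type operator remains close (in the appropriate $L^p$ sense) to the one for the undeformed problem, in the spirit of the treatment near the moving turning point $\lambda_0$ in Lemma \ref{lemma_E_rl}.
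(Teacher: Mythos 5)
Your identification of the two decaying quantities $r_u(\lambda)\e^{-2\theta(x,t_0;\lambda)}=\mathcal{O}(\lambda^{-1})\e^{2(B-x)\sqrt\lambda}$ for $x>B$ and $(a_d-a_u)\e^{2\theta(x,t_0;\lambda)}=\mathcal{O}(\lambda^{-1})\e^{2(x-A)\sqrt\lambda}$ for $x<A$ is exactly the right mechanism, and these are indeed the quantities the paper exploits. But the step where you pass from ``$J_G-I$ is exponentially small \emph{as} $\lambda\to\infty$'' to ``a small-norm argument gives $G=I+\mathcal{O}(\lambda^{-N})$ for every $N$'' is a genuine gap, for two reasons. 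First, there is no small parameter: $J_G-I$ decays along the rays at infinity, but at finite $\lambda$ the entries $r_u(\lambda)$, $\frac{1}{a_ua_d}-1$, $r_u\e^{-2\theta}$ are simply $\mathcal{O}(1)$, so $\|J_G-I\|_{L^\infty(\Sigma)}$ is not small and the small-norm machinery does not apply (the problem near $\lambda=0$ that you flag is only the most visible symptom of this). Second, even if the jump \emph{were} uniformly of size $\varepsilon$, the standard representation $G=I+\mathcal{C}\bigl[G_-(J_G-I)\bigr]$ only yields an asymptotic expansion $G=I+\sum_kG_k\lambda^{-k}$ with coefficients $G_k=-\frac{1}{2\pi\i}\int s^{k-1}G_-(s)(J_G(s)-I)\d s=\mathcal{O}(\varepsilon)$, not zero; so you would conclude $u(x,t_0)-U(x,t_0)=\mathcal{O}(\varepsilon)$ rather than the exact equality the theorem asserts. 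Approximate closeness of jumps can never deliver an identity.

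The paper's proof closes this gap by making the comparison \emph{exact} rather than approximate: it sets $\Lambda=\mathbb{F}\e^{-\theta\sigma_3}$ and multiplies on the right by explicit piecewise-triangular matrices (built from $r_u\e^{-2\theta}$, $r_d\e^{-2\theta}$ in the case $x>B$, and from $a_u$, $a_d$, $(a_d-a_u)\e^{2\theta}$ in the case $x<A$), chosen so that — using the algebraic relation $r_u-r_d=\i\(1-\frac{1}{a_ua_d}\)$ — the transformed function has \emph{exactly} the jumps of RHP \ref{RHE} on all four rays. The hypotheses $t=t_0$, $x>B$ (resp.\ $x<A$) enter only to guarantee that these explicit factors tend to $I$ at $\lambda=\infty$ and therefore do not destroy the normalization \eqref{eq_3a_E_asymp}; uniqueness of RHP \ref{RHE} then gives $\Lambda^{(1)}\e^{\theta\sigma_3}=\mathbb{E}(x,t_0;\lambda)$ identically, whence $u(x,t_0)=U(x,t_0)$. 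If you want to keep your quotient formulation, you would need to choose the right-hand factors so that $J_G\equiv I$ exactly and then argue $G\equiv I$ by Liouville — which is precisely the paper's computation in different clothing.
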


\begin{remark}It is not a complete characterization, since we do not show that $u(x,t_0)$ obtained from the solution of the RH problem \ref{RHF} is a function of bounded variation for $t=t_0,$ $A\leq x\leq B.$
\end{remark}

\begin{proof}
Let us consider RHP \ref{RHF} with functions $r_u, r_d, a_u, a_d$ satisfying Properties \ref{prop_abr}. 
First we divide $\mathbb{F}$ by $\e^{\theta\sigma_3}:$
$$\Lambda=\mathbb{F}\cdot \e^{-\theta\sigma_3}.$$
Now we consider 2 cases.
\\\noindent\textbf{Case $t=t_0,$ $x>B.$}
 
We apply the transformation
$$\Lambda^{(1)}=\Lambda\cdot\begin{cases}
                             \begin{pmatrix}
                              \frac{1}{1+\i r_u} & 0 \\ -r_u\e^{-2\theta} & 1+\i r_u
                             \end{pmatrix}, \Omega_{II},
				  \qquad
				 \begin{pmatrix}   
				  1 & 0\\-r_u\e^{-2\theta} & 1
				 \end{pmatrix}, \Omega_{I},
\\
\begin{pmatrix}
 \frac{1}{1-\i r_d} & 0\\ -r_d\e^{-2\theta} & 1-\i r_d
\end{pmatrix}, \Omega_{III},\qquad
\begin{pmatrix}
 1 & 0 \\ -r_d\e^{-2\theta} & 1
\end{pmatrix}, \Omega_{IV}.
                            \end{cases}
$$
Function $\Lambda^{(1)}$ is regular at the points $\lambda^*\in\Omega_{II}$ with $r_u(\lambda^*)=\i,$ and the 
jumps of $\Lambda^{(1)}$ are exactly the same as those of $\mathbb{E}$ (RHP \ref{RHE}).
The only issue is the asymptotics as $\lambda\to\infty.$
Since $$r_u(\lambda)\e^{-2\theta(x,t;\lambda)}=\mathcal{O}(\frac{1}{\sqrt{\lambda}})
\cdot\e^{\frac{2(t-t_0)}{3}\lambda^{3/2}+2(B-x)\lambda^{1/2}},$$
the asymptotics for $\Lambda^{(1)}(x,t_0;\lambda)\e^{\theta(x,t_0;\lambda)\sigma_3}$ 
is still of the form \eqref{eq_3a_E_asymp} only when $t=t_0,$ $x>B,$ and hence
$$\Lambda^{(1)}(x,t_0;\lambda)\e^{\theta(x,t_0;\lambda)\sigma_3}=\mathbb{E}(x,t_0;\lambda), \quad x>B.$$

\noindent\textbf{Case $t=t_0,$ $x<A.$}

In this case we apply the transformation
$$\Lambda^{(2)}=\Lambda\cdot
\begin{cases}
 \begin{pmatrix}
  a_u & \(-\i a_d+\frac{\i}{a_u+\i b_u}=\frac{a_ub_d}{a_u+\i b_u}\)\e^{2\theta}\\ 0 & \frac{1}{a_u}
 \end{pmatrix}, \Omega_{II}, \qquad
\begin{pmatrix}
 a_u & -\i(a_d-a_u)\e^{2\theta}\\0&\frac{1}{a_u}
\end{pmatrix}, \Omega_{I},
\\
\begin{pmatrix}
 a_d & \(\i a_u-\frac{\i}{a_d-\i b_d}=\frac{a_db_u}{a_d-\i b_d}\)\e^{2\theta}\\ 0 & \frac{1}{a_d}
\end{pmatrix}, \Omega_{III},
\qquad
\begin{pmatrix}
 a_d & -\i (a_d-a_u)\e^{2\theta}\\0& \frac{1}{a_d}
\end{pmatrix}, \Omega_{IV}.
\end{cases}
$$
Function $\Lambda^{(2)}$ is regular at the points $\lambda^*\in\Omega_{II}$ with $r_u(\lambda^*)=\i,$ and the  
jumps for $\Lambda^{(2)}$ are again exactly the same as for $\mathbb{E}$ (RHP \ref{RHE}).
As to the asymptotics of $\Lambda^{(2)}$ as $\lambda\to\infty,$ we notice that
$$(a_d(\lambda)-a_u(\lambda))\e^{2\theta(x,t;\lambda)}=\mathcal{O}(\frac{1}{\sqrt{\lambda}})\e^{\frac{2(t_0-t)}{3}\lambda^{3/2}+2(x-A)\lambda^{1/2}},$$
and hence we do not spoil the asymptotics when $t=t_0,$ $x<A,$ and hence 
$$\Lambda^{(2)}(x,t_0;\lambda)\e^{\theta(x,t_0;\lambda)\sigma_3}=\mathbb{E}(x,t_0;\lambda)\quad \mbox{ for }\quad x<A.$$
\end{proof}

\subsection{Uniqueness of solution of the Cauchy problem for KdV equation}
\begin{lem}\label{uniqueness_KdV}
Suppose that $v_1(x,t)$ and $v_2(x,t)$ are two solutions of the KdV equation
$$v_t+vv_x+\frac{1}{12}v_{xxx}=0$$
with the same initial data $$v_1(x,t_0)=v_2(x,t_0).$$
Assume that $v_j(x,t)$ is 3 times differentiable in $x$ and 1 time differentiable in $t$ for $t>t_0,$ and that $v_j(x,t_0)$ is 3 times differentiable in $x.$
Suppose also that, writing $\omega(x,t)=v_1(x,t)-v_2(x,t),$
\begin{itemize}
\item $\forall t\geq t_0\quad  \omega(x,t)\to 0,\ \omega\omega_{xx}(x,t)\to 0,\ \omega_x(x,t)\to 0\quad $ as $\quad x\to\pm\infty,$
\item $\forall t\geq t_0 \quad \exists\int\limits_{-\infty}^{+\infty}\omega^2(x,t)\d x<\infty,$
\item $\forall T>t_0\quad \exists\sup\limits_{T\geq t\geq t_0}\sup\limits_{x\in\mathbb{R}}|v_{1,x}(x,t)|=M(T)<\infty.$
\end{itemize}
Then $v_1(x,t)\equiv v_2(x,t)$ for all $t\geq t_0.$
\end{lem}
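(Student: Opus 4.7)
\begin{Proof}[Plan]
The plan is to use the standard energy method: form the $L^2$ difference $E(t):=\int_{\mathbb{R}}\omega^2(x,t)\,dx$, prove a differential inequality of Gronwall type $\dfrac{d}{dt}E(t)\le C(T)\,E(t)$ on every interval $[t_0,T]$, and conclude from $E(t_0)=0$ that $E(t)\equiv0$, hence $\omega\equiv 0$. The hypothesis $\int_{\mathbb{R}}\omega^2(x,t)\,dx<\infty$ ensures $E$ is well defined, and the initial data agree, so $E(t_0)=0$.

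First I would subtract the two KdV equations to obtain
\[\omega_t+(v_1v_{1,x}-v_2v_{2,x})+\tfrac{1}{12}\omega_{xxx}=0,\]
multiply by $2\omega$, and integrate in $x$. The dispersive contribution is disposed of by two integrations by parts,
\[\int_{\mathbb{R}}\omega\omega_{xxx}\,dx=\bigl[\omega\omega_{xx}\bigr]_{-\infty}^{+\infty}-\tfrac12\bigl[\omega_x^{2}\bigr]_{-\infty}^{+\infty}=0,\]
where both boundary terms vanish thanks to the hypotheses $\omega\omega_{xx}\to0$ and $\omega_x\to0$ as $x\to\pm\infty$. This reduces the identity to
\[E'(t)=-2\int_{\mathbb{R}}\omega\,(v_1v_{1,x}-v_2v_{2,x})\,dx.\]

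The main obstacle is the nonlinear term, because the assumed pointwise bound is on $v_{1,x}$ only, not on $v_{2,x}$; a naive symmetric integration by parts would leave an uncontrolled $\int(v_{1,x}+v_{2,x})\omega^2$ on the right-hand side. The trick I would use is the \emph{asymmetric} decomposition
\[v_1v_{1,x}-v_2v_{2,x}=v_2\,\omega_x+v_{1,x}\,\omega,\]
followed by the substitution $v_2=v_1-\omega$. Then
\[-2\int \omega\,v_2\omega_x\,dx=-\int v_2(\omega^{2})_x\,dx=\int v_{2,x}\omega^{2}\,dx=\int v_{1,x}\omega^{2}\,dx-\int\omega_x\omega^{2}\,dx,\]
where each integration by parts is justified by the decay of $\omega$ at infinity, and the last integral equals $-\tfrac{1}{3}[\omega^{3}]_{-\infty}^{+\infty}=0$. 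Combining with the remaining term $-2\int v_{1,x}\omega^{2}\,dx$ yields the clean identity
\[E'(t)=-\int_{\mathbb{R}}v_{1,x}(x,t)\,\omega^{2}(x,t)\,dx.\]

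Finally, for any $T>t_0$ the hypothesis provides $|v_{1,x}(x,t)|\le M(T)$ on $[t_0,T]\times\mathbb{R}$, so $|E'(t)|\le M(T)\,E(t)$ on that slab. Gronwall's inequality together with $E(t_0)=0$ gives $E(t)\equiv0$ on $[t_0,T]$, and since $T$ is arbitrary, $\omega\equiv0$ on $[t_0,\infty)$, i.e.\ $v_1\equiv v_2$. The only delicate points are the a priori justification of the various boundary integrations (which follow from the prescribed differentiability plus the decay hypotheses on $\omega,\omega_x,\omega\omega_{xx}$) and the selection of the asymmetric splitting that transfers all the derivative loss onto $v_{1,x}$, which is the only quantity we actually control.
\end{Proof}
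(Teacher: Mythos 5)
Your proof is correct and is essentially the paper's own argument: both multiply the equation for $\omega=v_1-v_2$ by $\omega$, kill the dispersive term via two integrations by parts using $\omega\omega_{xx},\omega_x\to0$, reduce the nonlinear term to a multiple of $\int v_{1,x}\omega^2$ bounded by $M(T)\int\omega^2$, and conclude by Gronwall from $E(t_0)=0$. The only (shared) subtlety is that the integration by parts $-\int v_2(\omega^2)_x\,dx=\int v_{2,x}\omega^2\,dx$ silently drops a boundary term involving the unbounded $v_j$ times $\omega^2$; the paper's proof makes the same implicit assumption, so this is not a gap relative to it.
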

\begin{proof}
The proof mimicks the one from \cite{Jakovleva}.
Suppose that $v_1(x,t)$ and $v_2(x,t)$ are 2 solutions of KdV. Then their difference $\omega(x,t)=v_1(x,t)-v_2(x,t)$ satisfies the following equation:
$$\omega_t+\frac{1}{12}\omega_{xxx}-\omega\omega_x+2\omega v_{1x}=0.$$

Multiplying the above expression by $\omega$ and integrating it from $A$ to $B,$ we obtain
$$\dsfrac{\d}{\d t}\int\limits_{A}^{B}\frac{\omega^2}{2}\d x+\int\limits_{A}^{B}2\omega^2 v_{1x}\d x+\left.\(\frac{1}{12}\omega_{xx}\omega-\frac{1}{24}\omega_x^2-\frac13\omega^3\)\right|_A^B=0.$$
Now we take the limit $A\to-\infty,$ $B\to+\infty,$ which is possible due to our assumptions on $\omega,$ $v_1,$ and we get 
$$\dsfrac{\d}{\d t}\int\limits_{-\infty}^{\infty}\frac{\omega^2}{2}\d x+\int\limits_{-\infty}^{\infty}2\omega^2 v_{1x}\d x=0.$$
Denote $$q(t)=\int\limits_{-\infty}^{\infty}\frac{\omega^2}{2}\d x,\qquad M(t)=4\sup\limits_{x\in\mathbb{R}} |v_{1,x}(x,t)|\leq M(T),$$
then
$$\dot{q}(t)\leq M(T)q(t),\quad \Rightarrow \dsfrac{\d}{\d t}\(\e^{-M(T)\cdot t}q(t)\)\leq 0\quad \Rightarrow q(t)\leq q(t_0)\e^{M(T)\cdot (t-t_0)},$$
and since $q(t_0)=0,$ then $q(t)=0$ for all $t\geq t_0,$ hence $\omega(x,t)=0$ for all $x\in\mathbb{R},$ $t\geq t_0.$
\end{proof}

\begin{remark}
We do not make any assumptions on the decay at $x\to\pm\infty$ of $v_1(x,t).$ For example, we may take $v_1(x,t)=U(x,t)$ or $v_1(x,t)=\frac{-x}{t}.$ In the first case the derivative $v_{1,x}$ is decaying, in the second case it is bounded.
\end{remark}
\begin{remark}
The discontinuous initial datum $$v_1(x,t_0)=\begin{cases}U(x,t),\ x<A,\\c,\ A<x<B, \\ U(x,t),\ x>B,\end{cases}\qquad \textrm{ or }\qquad
v_1(x,t_0)=\begin{cases}\dsfrac{-x}{t},\ x<A,\\c,\ A<x<B, \\ \frac{-x}{t},\ x>B,\end{cases}$$
where $c$ is a constant, is not within the scope of the lemma.
\end{remark}

\color{black}

\begin{appendices}

\section{Derivation of the inverse Fourier Transform and the Parseval identity.}
\begin{theorem}
For any compactly supported locally integrable function $f(y),$ the following inverse Fourier transform relation is hold:
\[
\frac{1}{2\pi}\int\limits_{-\infty}^{+\infty} e_r(x,t;\lambda)\int\limits_{-\infty}^{+\infty}f(y)e_r(y,t;\lambda)\d\lambda = f(x).
\]
\end{theorem}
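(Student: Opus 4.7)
The plan is to realize this identity as the spectral resolution of the formally self-adjoint Schr\"odinger operator $L=\partial_x^2+2U(x,t)$, exploiting two key features of the Jost solutions: $e_r(\cdot;\lambda)$ is entire in $\lambda$, while the jump of $e_l(\cdot;\lambda)$ across $\lambda\in\mathbb{R}$ is expressed directly through $e_r$ itself by the relation $e_{lu}-e_{ld}=-i\,e_r$ extracted from \eqref{Eluldr_relation}.

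First I would build the resolvent kernel of $L-\lambda$ from the Jost solutions. Since $e_r$ decays at $+\infty$ for every $\lambda\in\mathbb{C}$, $e_l$ decays at $-\infty$ for $\Im\lambda\neq 0$, and $W\{e_l,e_r\}=-1$ by Lemma \ref{lemma_E_rl}(4), the standard variation-of-constants matching at $x=y$ yields
$$G(x,y;\lambda)=-e_r\bigl(\max(x,y);\lambda\bigr)\,e_l\bigl(\min(x,y);\lambda\bigr),\qquad (L-\lambda)G=\delta(x-y),$$
so that $G$ is the integral kernel of $(L-\lambda)^{-1}$. Next I would invoke Stone's formula for the self-adjoint operator $L$, whose continuous spectrum is all of $\mathbb{R}$ (the subquadratic growth $U=O(|x|^{1/3})$ at both ends puts $L$ in the limit-point case, and Lemma \ref{lem_a_prop} rules out embedded eigenvalues). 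For compactly supported $f$ this gives
$$f(x)=\frac{1}{2\pi i}\int_{-\infty}^{+\infty}\int_{-\infty}^{+\infty}\bigl[G(x,y;\lambda+i0)-G(x,y;\lambda-i0)\bigr]\,f(y)\,dy\,d\lambda.$$

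Third, I would evaluate the jump. Since $e_r$ is entire and only the $e_l$-factor in $G$ jumps across $\mathbb{R}$, the identity $e_{lu}-e_{ld}=-i\,e_r$ from \eqref{Eluldr_relation} implies
$$G(x,y;\lambda+i0)-G(x,y;\lambda-i0)=-e_r(\max(x,y);\lambda)\cdot(-i)\,e_r(\min(x,y);\lambda)=i\,e_r(x;\lambda)\,e_r(y;\lambda),$$
for $\lambda\in\mathbb{R}$. Substituting this into the Stone formula produces precisely
$$f(x)=\frac{1}{2\pi}\int_{-\infty}^{+\infty}e_r(x,t;\lambda)\int_{-\infty}^{+\infty}f(y)\,e_r(y,t;\lambda)\,dy\,d\lambda,$$
which is the claimed relation.

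The main technical obstacle is not the algebraic manipulations, which are straightforward given \eqref{Eluldr_relation}, but the analytic justification of Stone's formula in this unbounded-potential setting: one needs to verify that the spectral projection $E(\mathbb{R})$ is the identity and to collapse a large counterclockwise contour encircling the spectrum down to $\mathbb{R}\pm i0$ with negligible contribution from the outer arc. The compact support of $f$ confines the $y$-integration to a bounded interval, so only estimates on $G(x,y;\lambda)$ with bounded $y$ and $|\lambda|\to\infty$ off the real axis are needed; these should follow from the WKB-type large-$x$ asymptotics of $E_r,E_l$ in Lemma \ref{lemma_E_rl} together with the large-$\lambda$ expansion \eqref{asserE_dev}, supplemented by a steepest-descent analysis near the turning point $\lambda=\lambda_0(x,t)$ where the exponential separation of $e_l$ and $e_r$ degenerates.
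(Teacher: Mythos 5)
Your proposal is correct in outline and rests on exactly the same two pillars as the paper's argument: the resolvent kernel built from the Jost pair, $G(x,y;\lambda)=-e_r(\max(x,y);\lambda)\,e_l(\min(x,y);\lambda)$ with $W\{e_l,e_r\}=-1$, and the jump relation $e_{lu}-e_{ld}=-\i e_r$ which turns the discontinuity of $G$ across $\mathbb{R}$ into $\i\,e_r(x;\lambda)e_r(y;\lambda)$. Indeed, the paper's auxiliary function $\Phi(x,\lambda)=\varphi(x,\lambda)\int_{-\infty}^x f\psi+\psi(x,\lambda)\int_x^{+\infty}f\varphi$ is nothing but $-(L-\lambda)^{-1}f$ in disguise. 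Where you diverge is in how the contour is collapsed onto the spectrum: you invoke the abstract Stone formula for a self-adjoint realization of $L=\partial_x^2+2U$, whereas the paper, following Titchmarsh and Levitan, never appeals to the spectral theorem and instead integrates $\Phi$ explicitly over two contours in the $k=\i\sqrt{\lambda}$ plane (a horizontal segment at height $M$ closed at height $N$, versus a contour hugging $\rho^\pm$ and $\gamma_0^\pm$) and matches the two results, tracking the $\arctan(N/M)$ contributions that cancel in the limit. Your route buys brevity and conceptual clarity, but be aware that the step you flag as "the main technical obstacle" is not a side issue — it is the entire content of the paper's proof. To make your argument complete you would still need to establish essential self-adjointness of $L$ (limit-point at $-\infty$ for a potential behaving like $-|x|^{1/3}$ is fine, but should be said), that the spectral measure is supported on all of $\mathbb{R}$ with no singular part so that $E(\mathbb{R})=I$, and the decay of $(L-\lambda)^{-1}f$ on the large arcs; the last of these is delicate precisely because the asymptotics of $e_r$ are not uniform up to the negative real $\lambda$-axis, which is why the paper substitutes $\varphi=\pm\i(\psi-\ol{\psi})$ near $\rho^\pm$ before estimating. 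So: same mechanism, genuinely different packaging, with the paper trading abstraction for a self-contained (and longer) computation that your version would have to reproduce inside the verification of Stone's formula.
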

\begin{proof}

We follow the well-known ideas from [Titchmarsh(1960)],
[Levitan(1987)]. 
%
%
%
\noindent For a given fixed $t,$ denote 
\[\varphi(x,\lambda) = e_{r}(x,t;\lambda), \quad \psi(x,\lambda) = e_{l}(x,t;\lambda).\]
Furthemore, 
for a given function $f(x),$ define the function
\[\Phi(x,\lambda) = \varphi(x,\lambda)\int\limits^{x}_{-\infty}f(y)\psi(y,\lambda)dy +\psi(x,\lambda)\int_x^{+\infty}f(y)\varphi(y,\lambda)dy.\]
We will integrate it over some contour in the variable
$k=i\sqrt{\lambda}$ from $-N+iM$ to $N+iM,$ $0<M<N,$ in two different ways: in the first case the contour
$C$ consists of intervals $[-N+iM, -N+iN]$, $[-N+iN, N+iN]$,
$[N+iN,N+iM].$ The second contour ${C}_2$ is the union of the intervals
$-\rho_{N,M}^{-}:=[-N+iM,-0]$,
$\gamma_{0,N}^{+}:=[-0,iN-0]$,$-\gamma_{0,N}^{-}:=[iN+0,+0]$,
$\rho_{N,M}^{+}:=[+0,N+iM].$

We take $N$, $M$ to be large, with the constant relation
$\frac{M}{N}<1.$ We notice that $$d\lambda = -2kdk,$$
\begin{equation}\label{phipsi_asymp_1}\varphi(x,\lambda)\psi(y,\lambda)=\frac{e^{\sqrt{\lambda}(-x+y)}}{2\sqrt{\lambda}}(1+\mathrm{o}(1))=
\frac{e^{ik(x-y)}}{-2ik}(1+\mathrm{o}(1)).
\end{equation}
\begin{equation}\label{phipsi_asymp_2}\psi(x,\lambda)\varphi(y,\lambda)=\frac{e^{\sqrt{\lambda}(x-y)}}{2\sqrt{\lambda}}(1+\mathrm{o}(1))=
\frac{e^{-ik(x-y)}}{-2ik}(1+\mathrm{o}(1)),\end{equation}
\begin{equation}\nonumber\int\limits_{C}\Phi(x,\lambda)d\lambda=\int\limits_{C}\Phi(x,\lambda)d\lambda=
\int\limits_{C}\left(\varphi(x,\lambda)\int\limits^{x}_{-\infty}f(y)\psi(y,\lambda)dy+\psi(x,\lambda)\int_x^{+\infty}f(y)\varphi(y,\lambda)dy\right)d\lambda,
\end{equation}
%
Here we can switch the order of integration (for example, we can
take $f$ with a compact support), and then we can use asymptotics
(\ref{phipsi_asymp_1})-(\ref{phipsi_asymp_2}) (the issue whether
those asymptotics are uniform  w.r.t.  $y$ does not arise
here, since again we can take $f$ with compact support).
Then the integral becomes
\[\begin{split}&\int\limits_{-\infty}^x dy f(y)\int_{C}\varphi(x,\lambda)\psi(y,\lambda)d\lambda+\int\limits^{+\infty}_x dy f(y)\int_{C}\psi(x,\lambda)\varphi(y,\lambda)d\lambda=
\\
&=\int\limits_{-\infty}^x dy f(y)\int_{C}(-i)e^{ik (x-y)}(1+\mathrm{o}(1))d k+\int\limits^{+\infty}_x dy f(y)\int_{C}(-i)e^{ik(y-x)}(1+\mathrm{o}(1))d\lambda=
\\
&
=\int\limits_{-\infty}^{+\infty}(-i)f(y)\frac{e^{i|x-y|(N+iM)}-e^{i|x-y|(-N+iM)}}{i|x-y|}(1+\mathrm{o}(1))dy=
\\&
=\int\limits_{-\infty}^{+\infty}(-i)f(y)\frac{e^{-M|x-y|}2i\sin(|x-y|N)}{i|x-y|}(1+\mathrm{o}(1))dy=
\\
&=-2i\int\limits_{-\infty}^{+\infty}f(y)\frac{e^{-M|x-y|}\sin(|x-y|N)}{|x-y|}(1+\mathrm{o}(1))dy.\end{split}\]
Now we make the change of variable $y=x+\frac{\tilde y}{N}.$ Then
the integral becomes
\[=-2i\int\limits_{-\infty}^{+\infty}f\left(x+\frac{\tilde y}{N}\right)\frac{e^{-\frac{M}{N}|\tilde y|}\sin(|\tilde y|)}{|\tilde y|}(1+\mathrm{o}(1))d\tilde y\]
%
When $N\to\infty$, $\frac{M}{N}=const,$ the above integral tends
to
\begin{equation}\label{Pars_contrib_C}=-2if(x)\int\limits_{-\infty}^{+\infty}\frac{e^{-\frac{M}{N}|\tilde y|}\sin|\tilde y|}{|\tilde y|}d\tilde y=-4i\arctan\frac{N}{M}\ f(x).\end{equation}

\noindent\textbf{Integrating over $C_2.$ }Next we integrate over ${C}_2$. We split the contribution of
integration over $\gamma_{0,N}^+, -\gamma_{0,N}^-$, and over
$\rho_{N,M}^+$, $-\rho_{N,M}^-.$
The integral over $\gamma_{0,N}^+, -\gamma_{0,N}^-$ is
\[\begin{split}&\int\limits_{\gamma_{0,N}^+}\varphi(x,\lambda)\psi(y,\lambda)d\lambda + \int\limits_{-\gamma_{0,N}^-}\varphi(x,\lambda)\psi(y,\lambda)d\lambda = 
\int\limits_{\gamma_{0,N}}(\varphi(x,\lambda)\psi_+(y,\lambda)-\varphi(x,\lambda)\psi_-(y,\lambda))d\lambda =
\\&=\int\limits_{\gamma_{0,N}}(\varphi(x,\lambda)\psi_+(y,\lambda)-\varphi(x,\lambda)\psi_-(y,\lambda))d\lambda
=
\int\limits_{\gamma_{0,N}}(\varphi(x,\lambda)(\psi_+(y,\lambda)-\psi_-(y,\lambda))d\lambda 
\\&=
-i\int\limits_{\gamma_{0,N}}\varphi(x,\lambda)\varphi(y,\lambda)d\lambda
\
\begin{array}{ccc}\longrightarrow\\N\to\infty\end{array}\ 
-i\int\limits_{\gamma_{0}}\varphi(x,\lambda)\varphi(y,\lambda)d\lambda.
\end{split}\] Hence, the contributuion from the integrals over
$\gamma_{0,N}^+, -\gamma_{0,N}^-$ is
\[-i\int\limits_{\gamma_{0}}\varphi(x,\lambda)\int\limits_{-\infty}^{+\infty}f(y)\varphi(y,\lambda)dy d\lambda.\]
The integrals over $\rho_{N,M}^+$ $\rho_{N,M}^+$ are
\[\hskip-4cm\int_{-\infty}^xdyf(y)\int\limits_{-\rho_{N,M}^-}\varphi(x,\lambda)\psi(y,\lambda)d\lambda+
\int_{-\infty}^xdyf(y)\int\limits_{\rho_{N,M}^+}\varphi(x,\lambda)\psi(y,\lambda)d\lambda
+\]\[+\int^{+\infty}_xdyf(y)\int\limits_{-\rho_{N,M}^-}\psi(x,\lambda)\varphi(y,\lambda)d\lambda+
\int^{+\infty}_xdyf(y)\int\limits_{\rho_{N,M}^+}\psi(x,\lambda)\varphi(y,\lambda)d\lambda.\]
Since the asymptotics for $\varphi$ is not uniform up to
$\rho^{\pm}$, we will use the identities
$$\varphi(\lambda) = i(\psi(\lambda)-\ol{\psi(\ol{\lambda})}),\ \Im\lambda>0,\qquad \varphi(\lambda) = -i(\psi(\lambda)-\ol{\psi(\ol{\lambda})}),\ \Im\lambda<0.$$
Hence the above sum of integrals can be written in the form
  \begin{equation}\begin{split}&\int\limits_{-\infty}^xdyf(y)\hskip-2mm\int\limits_{-\rho_{N,M}^-}i(\psi(x,\lambda)-\ol{\psi(x,\ol\lambda)})\psi(y,\lambda)d\lambda+
\int\limits_{-\infty}^xdyf(y)\hskip-2mm\int\limits_{\rho_{N,M}^+}(-i)(\psi(x,\lambda)-\ol{\psi(x,\ol\lambda)})\psi(y,\lambda)d\lambda
+\\&
+\int\limits^{+\infty}_xdyf(y)\hskip-2mm\int\limits_{-\rho_{N,M}^-}i(\psi(y,\lambda)-\ol{\psi(y,\ol\lambda)})\psi(x,\lambda)d\lambda+
\int\limits^{+\infty}_xdyf(y)\hskip-2mm\int\limits_{\rho_{N,M}^+}(-i)(\psi(y,\lambda)-\ol{\psi(y,\ol\lambda)})\psi(x,\lambda)d\lambda.\end{split}\label{sum_Parseval_aux}\end{equation}
Now we notice that due to the decay properties at infinity, we have
\[\int\limits_{-\rho_{N,M}^{-}}\psi(x,\lambda)\psi(y,\lambda)d\lambda=\int\limits_{-\rho^{-}}\psi(x,\lambda)\psi(y,\lambda)d\lambda,
\qquad\int\limits_{\rho_{N,M}^{+}}\psi(x,\lambda)\psi(y,\lambda)d\lambda=\int\limits_{\rho^{+}}\psi(x,\lambda)\psi(y,\lambda)d\lambda.\]
Then the input from the terms containing
$\psi(x,\lambda)\psi(y,\lambda)$ is
\[\begin{split}&\int\limits_{-\infty}^x dy f(y)\int\limits_{-\rho^-}i\psi(x,\lambda)\psi(y,\lambda)d\lambda
+ \int\limits^{\infty}_x dy
f(y)\int\limits_{-\rho^-}i\psi(x,\lambda)\psi(y,\lambda)d\lambda +
\\&
\int\limits_{-\infty}^xdyf(y)\int\limits_{\rho^+}-i\psi(x,\lambda)\psi(y,\lambda)d\lambda+
\int\limits^{\infty}_xdyf(y)\int\limits_{\rho^+}-i\psi(x,\lambda)\psi(y,\lambda)d\lambda=
\\
&
=\int\limits_{-\infty}^{\infty} dy f(y)\int\limits_{\rho^-}(-i)\psi(x,\lambda)\psi(y,\lambda)d\lambda+
\int\limits_{-\infty}^{\infty} dy
f(y)\int\limits_{\rho^+}(-i)\psi(x,\lambda)\psi(y,\lambda)d\lambda.\end{split}\]
Now, since $\psi(x,\lambda-i0)=\ol{\psi(x,\lambda+i0)}$ for
$\lambda\in\mathbb{R},$ the above expression becomes
\[=-i\int\limits_{-\infty}^{\infty} dy f(y)\int\limits_{\rho^-}(\psi(x,\lambda)\psi(y,\lambda)+\ol{\psi(x,\lambda)}\ol{\psi(y,\lambda)})
d\lambda.\]
Integrals in (\ref{sum_Parseval_aux}), which contain
$\ol{\psi(x,\ol{\lambda})}\psi(y,\lambda),\ $ $
\ol{\psi(y,\ol{\lambda})}\psi(x,\lambda),$ can be treated in the
following way. Consider, for example,
\begin{equation}\label{sum_Parseval_aux_2}-i\int\limits_{-\infty}^x\hskip-2mmdyf(y)\hskip-3mm\int\limits_{-\rho_{N,M}^-}\hskip-3mm\ol{\psi(x,\ol{\lambda})}\psi(y,\lambda)d\lambda=
-i\int\limits_{-\infty}^xdyf(y)\left(\int\limits_{-\rho_{N,0}^-}\hskip-2mm\ol{\psi(x,\ol{\lambda})}\psi(y,\lambda)d\lambda
-\hskip-2mm\int\limits_{-N}^{-N+iM}\hskip-2mm\ol{\psi(x,\ol{\lambda})}\psi(y,\lambda)d\lambda
\right).
\end{equation}
Here the first integral is already in the form which is OK for us,
and to compute the second integral, we use the asymptotics
$$\psi(x,\lambda)=\frac{\exp\left\{\frac{1}{105}\lambda^{7/2}-\frac{t}{3}\lambda^{3/2}+x{\lambda^{1/2}}\right\}}{\sqrt{2}\sqrt[4]{\lambda}}(1+o(1)),
\quad\lambda\in\mathbb{C}\setminus\mathbb{R},$$
$$\ol{\psi(x,\ol{\lambda})}=i\,\mathrm{sign}(\Im\lambda)\frac{\exp\left\{\frac{-1}{105}\lambda^{7/2}+\frac{t}{3}\lambda^{3/2}-x{\lambda^{1/2}}\right\}}{\sqrt{2}\sqrt[4]{\lambda}},
\quad\arg\lambda\in(\pi-\varepsilon,\pi-0]\cup[-\pi+0,-\pi+\varepsilon),$$

Hence, for $x>y$ we have
\[\begin{split}&\int\limits_{-N}^{-N+iM}\ol{\psi(x,\ol{\lambda})}\psi(y,\lambda)d\lambda=
\int\limits_{-N}^{-N+iM}\frac{ie^{(y-x)\sqrt{\lambda}}}{2\sqrt{\lambda}}(1+\mathrm{o}(1))d\lambda=
\int\limits_{-N}^{-N+iM}\frac{ie^{-i(y-x)k}}{-2ik}(-2k)dk=
\\&=\int\limits_{-N}^{-N+iM}e^{ik(x-y)}dk
=\frac{e^{-i(x-y)N}\(e^{-M(x-y)}-1\)}{i(x-y)}(1+\mathrm{o}(1)),\end{split}\]
then the second sum in the integral (\ref{sum_Parseval_aux_2})
becomes
\[i\int\limits_{-\infty}^xdyf(y)\frac{e^{-i(x-y)N}\(e^{-M(x-y)}-1\)}{i(x-y)},\]
and after change of variables $ y=x+\frac{\tilde y}{N}$ we get
\[\begin{split}-\int\limits_{-\infty}^0d\tilde y\ f\(x+\frac{\tilde y}{N}\)\frac{e^{i\tilde y}\(e^{\frac{M}{N}\tilde y}-1\)}{\tilde y}&\to
-f(x)\int_{-\infty}^0\frac{e^{i\tilde y}\(e^{\frac{M}{N}\tilde
y}-1\)}{\tilde y} d\tilde y=
\\&=f(x)\left[i\(\frac{\pi}{2}-\arctan\frac{N}{M}\)-\frac{1}{2}\ln\(\frac{M^2}{N^2}+1\)\right]\end{split}\]
Hence, that  term tends to 
\[\begin{split}-i\int\limits_{-\infty}^xdyf(y)\int\limits_{-\rho_{N,M}^-}\ol{\psi(x,\ol{\lambda})}\psi(y,\lambda)d\lambda&\to
-i\int\limits_{-\infty}^xdyf(y)\int\limits_{-\rho_{N,0}^-}\ol{\psi(x,\ol{\lambda})}\psi(y,\lambda)d\lambda
\\&+f(x)\left[i\(\frac{\pi}{2}-\arctan\frac{N}{M}\)-\frac{1}{2}\ln\(\frac{M^2}{N^2}+1\)\right].
\end{split}\]
Similarly,
\[\begin{split}\int\limits_{x}^{+\infty}dyf(y)\int\limits_{-\rho_{N,M}^-}(-i)\ol{\psi(y,\ol{\lambda})}\psi(x,\lambda)d\lambda\to&
\int\limits_{x}^{+\infty}dyf(y)\int\limits_{-\rho_{N,0}^-}(-i)\ol{\psi(y,\ol{\lambda})}\psi(x,\lambda)d\lambda
\\&+f(x)\left[i\left(\frac{\pi}{2}-\arctan\frac{N}{M}\right)-\frac{1}{2}\ln\(\frac{M^2}{N^2}+1\)\right],
\end{split}\]
\[\begin{split}\int\limits^{x}_{-\infty}dyf(y)\int\limits_{\rho_{N,M}^+}i\ol{\psi(x,\ol{\lambda})}\psi(y,\lambda)d\lambda&\to
\int\limits^{x}_{-\infty}dyf(y)\int\limits_{\rho_{N,0}^+}i\ol{\psi(x,\ol{\lambda})}\psi(y,\lambda)d\lambda
\\&+f(x)\left[i\left(\frac{\pi}{2}-\arctan\frac{N}{M}\right)+\frac{1}{2}\ln\(\frac{M^2}{N^2}+1\)\right],\end{split}\]
\[\begin{split}\int\limits_{x}^{+\infty}dyf(y)\int\limits_{\rho_{N,M}^+}i\,\ol{\psi(y,\ol{\lambda})}\psi(x,\lambda)d\lambda&\to
\int\limits_{x}^{+\infty}dyf(y)\int\limits_{\rho_{N,0}^+}i\,\ol{\psi(y,\ol{\lambda})}\psi(x,\lambda)d\lambda
+\\&f(x)\left[i\left(\frac{\pi}{2}-\arctan\frac{N}{M}\right)+\frac{1}{2}\ln\(\frac{M^2}{N^2}+1\)\right].\end{split}\]
Summing up the last four expressions, and taking into account the
property $\psi(x,\lambda-i0)=\ol{\psi(x,\lambda+i0)},$
$\lambda\in\mathbb{R},$ we can combine the integrals in the 2$^{nd}$
and the 3$^{rd}$ lines, and in the first and the fourth lines. We
get
\[-i\int\limits_{-\infty}^{\infty}dyf(y)\int\limits_{-\rho_{N,0}^-}\(\ol{\psi(x,\ol{\lambda})}\psi(y,\lambda)+
\ol{\psi(y,\ol{\lambda})}\psi(x,\lambda)\)d\lambda+f(x)i\(2\pi-4\arctan\frac{N}{M}\).\]
Summing up, the integral
$\int\limits_{\mathcal{C}_2}\Phi(x,t;\lambda)d\lambda$ equals
\[\begin{split}&f(x)i\(2\pi-4\arctan\frac{N}{M}\)-i\int\limits_{\gamma_0}\varphi(x,\lambda)\int\limits_{-\infty}^{+\infty}f(y)\varphi(y,\lambda)d\lambda-
\\
&-i\int\limits_{-\infty}^{+\infty}f(y)\int\limits_{\rho^-}\(\psi(x,\lambda)\psi(y,\lambda)+\ol{\psi(x,\lambda)} \ol{\psi(y,\lambda)}-
\ol{\psi(x,{\lambda})}\psi(y,\lambda)-
\ol{\psi(y,{\lambda})}\psi(x,\lambda)\)d\lambda,\end{split}\] and, using the
relation $\varphi(\lambda)=i(\psi(\lambda)-\ol{\psi(\lambda)}),$
$\lambda\in\rho_-,$ we get
\[\hskip-0cmf(x)i\(2\pi-4\arctan\frac{N}{M}\)-i\int\limits_{\gamma_0}\varphi(x,\lambda)\int\limits_{-\infty}^{+\infty}f(y)\varphi(y,\lambda)d\lambda+
i\int\limits_{-\infty}^{+\infty}f(y)\int\limits_{\rho^-}\varphi(x,\lambda)\varphi(y,\lambda)d\lambda,\]
and, since $\rho^-$ has the direction from $0$ to $-\infty,$ the
last expression is equal to
\begin{equation}\label{Pars_contrib_C2}\hskip-3cmf(x)i\(2\pi-4\arctan\frac{N}{M}\)-i\int\limits_{-\infty}^{\infty}\varphi(x,\lambda)\int\limits_{-\infty}^{+\infty}f(y)\varphi(y,\lambda)d\lambda.
\end{equation}
Comparing $\int\limits_C\Phi(x,t;\lambda)d\lambda$
(\ref{Pars_contrib_C}) and
$\int\limits_{\mathcal{C}_2}\Phi(x,t;\lambda)d\lambda$
(\ref{Pars_contrib_C2}),
 we come to conclusion that the following inverse Fourier transform formula holds true:
\begin{equation}\label{Inverse_Fourier_identity}
 f(x)=\frac{1}{2\pi}\int\limits_{-\infty}^{\infty}\varphi(x,\lambda)\int\limits_{-\infty}^{+\infty}f(y)\varphi(y,\lambda)dyd\lambda.
\end{equation}
\end{proof}

\begin{remark} Let us notice, that when we wrote integrals of
$\int\limits_{x}^{+\infty}\psi(y,\lambda)$, we must take $f(y)=0$,
$y\to\infty$, but in the final formula
(\ref{Inverse_Fourier_identity}) we don't have this restriction.

\noindent In the symbolic form (\ref{Inverse_Fourier_identity}) can be
written as
\[ \delta(x-y)=\frac{1}{2\pi}\int\limits_{-\infty}^{\infty}\varphi(x,\lambda)\varphi(y,\lambda)d\lambda.\]
Multiplying (\ref{Inverse_Fourier_identity}) by a function
$\ol{g(x)}$ and integrating over $\mathbb{R}dx,$ we come to the
Parseval identity
\begin{equation}\label{Parseval_identity}
 \int\limits_{-\infty}^{+\infty}f(x)\ol{g(x)}dx=\frac{1}{2\pi}\int\limits_{-\infty}^{\infty}d\lambda\int\limits_{-\infty}^{\infty}\ol{g(x)}\varphi(x,\lambda)dx
\int\limits_{-\infty}^{+\infty}f(y)\varphi(y,\lambda)dy .
\end{equation}
\end{remark}

\begin{remark}The analogy with
(\ref{Inverse_Fourier_identity}) with the potential $x$ instead of
$U(x,t)$ is
$$\delta(x-y)=\frac{1}{2\pi}\int\limits_{-\infty}^{+\infty} \sqrt{2\pi}Ai(x+\lambda)\ \sqrt{2\pi}Ai(y+\lambda)d\lambda.$$
\end{remark}

\end{appendices}

\textbf{Acknowledgments.}
A.M. would like to thank Alberto Maspero and Pieter Roffelsen for reading a version of this manuscript and giving useful suggestions, and also to Tom Claeys, who pointed out at the reference \cite[p.18]{IKO}, where a statement similar to Proposition \ref{prop_vanish} is proven.
A.M. acknowledges the support of the H2020-MSCA-RISE-2017 PROJECT No. 778010 IPADEGAN.


\end{document}